\documentclass[preprint, 12pt,3p, proof, authoryear]{elsarticle}
\makeatletter
\def\ps@pprintTitle{%
 \let\@oddhead\@empty
 \let\@evenhead\@empty
 \def\@oddfoot{}%
 \let\@evenfoot\@oddfoot}
\makeatother

\usepackage{tikz}
\usetikzlibrary{patterns}
\usepackage{longtable, booktabs,multirow,lipsum,array}
\usepackage[authoryear, round]{natbib}
\usepackage{amssymb}
\usepackage{amsmath}
\usepackage{subfig}
\usepackage{enumerate}
\usepackage{mwe}
\usepackage[toc, header]{appendix}
\usepackage{mathtools}
\usepackage{graphicx}
\usepackage{soul}
\usepackage{hyperref}
\usepackage{amsfonts}
\usepackage{graphicx}
\usepackage[utf8]{inputenc}
\usepackage[all]{xy}
\usepackage{csquotes}
\usepackage{ulem}
\usepackage{amsmath}
\usepackage{amsthm}
\usepackage{import}
\usepackage{adjustbox}
\usepackage{caption}
\captionsetup[figure]{font=small,skip=0pt}
\usepackage{pdflscape}
\usepackage{tikzpeople}
\usepackage{wasysym}
\usepackage{setspace}
\usepackage{booktabs}
\usepackage{arydshln}
\usepackage{comment}
\usepackage{mathbbol}
\usepackage[flushleft]{threeparttable}
\usepackage{algorithm,algpseudocode}
\usepackage{multibib}
\usepackage{titlesec}
\newcites{Main}{References}
\newcites{SM}{Appendix References}
\setlength{\bibsep}{0.0pt}

\setlength{\dashlinedash}{0.2pt}
\setlength{\dashlinegap}{4.5pt}
\setlength{\arrayrulewidth}{0.2pt}

\setlength{\bibsep}{0.0pt}

\newcommand\independent{\protect\mathpalette{\protect\independenT}{\perp}}
\def\independenT#1#2{\mathrel{\rlap{$#1#2$}\mkern2mu{#1#2}}}

\newtheorem{lemma}{Lemma}
\newtheorem{proposition}{Proposition}

\newtheorem{theorem}{Theorem}
\newtheorem{corollary}{Corollary}
\newtheorem{definition}{Definition}

\newtheorem*{lemma*}{Lemma}
\newtheorem*{proposition*}{Proposition}
\newtheorem*{remark*}{Remark}
\newtheorem*{theorem*}{Theorem}
\newtheorem*{corollary*}{Corollary}
\newtheorem*{definition*}{Definition}
\newtheorem*{assumption*}{Assumption}

\usetikzlibrary{shapes,decorations.pathreplacing,arrows,calc,arrows.meta,fit,positioning}
\tikzset{
    -Latex,auto,node distance =1 cm and 1 cm,semithick,
    state/.style ={ellipse, draw, minimum width = 0.7 cm},
    point/.style = {circle, draw, inner sep=0.04cm,fill,node contents={}},
    el/.style = {inner sep=2pt, align=left, sloped},
    >={Stealth}
}

\usepackage{titletoc}

\newcommand\DoToC{%
  \startcontents
  \printcontents{}{1}{\textbf{ }\vskip3pt\hrule\vskip5pt}
  \vskip3pt\hrule\vskip5pt
}

\linespread{1}
\begin{document}

\begin{frontmatter}

\title{Optimal regimes with limited resources}

\author[]{Aaron L. Sarvet  \corref{cor1}}
\author[]{Julien D. Laurendeau}
\author[]{Mats J. Stensrud}

\address[]{Department of Mathematics, Ecole Polytechnique Fédérale de Lausanne, Switzerland}

\cortext[cor1]{\textbf{Contact information for corresponding author:}\\
Aaron L. Sarvet, Department of Mathematics, Ecole Polytechnique Fédérale de Lausanne, Switzerland. \url{aaron.sarvet@epfl.ch}}

\begin{abstract}
Policy-makers are often faced with the task of distributing a limited supply of resources. To support decision-making in these settings, statisticians are confronted with two challenges: estimands are defined by allocation strategies that are functions of features of all individuals in a cluster; and relatedly the observed data are neither independent nor identically distributed when individuals compete for resources. Existing statistical approaches are inadequate because they ignore at least one of these core features. As a solution, we develop theory for a general policy class of dynamic regimes for clustered data, covering existing results in classical and interference settings as special cases. We cover policy-relevant estimands and articulate realistic conditions compatible with resource-limited observed data. We derive identification and inference results for settings with a finite number of individuals in a cluster, where the observed dataset is viewed as a single draw from a super-population of clusters.  We also consider asymptotic estimands when the number of individuals in a cluster is allowed to grow; under explicit conditions, we recover previous results, thereby clarifying when the use of existing methods is permitted. Our general results lay the foundation for future research on dynamic regimes for clustered data, including the longitudinal cluster setting.    
\end{abstract}

\end{frontmatter}

\date{May 2022}

\newpage

\doublespacing

\section{Introduction}
\label{sec: Introduction}

Consider the canonical average treatment effect (ATE) $\Psi \coloneqq \mathbb{E}[Y_i^{a=1} - Y_i^{a=0}]$, which is the prevailing target of inference in applied and methodological studies. The ATE is the expected difference between outcomes for a single individual $i$ under an assignment to treatment ($a=1$) and control ($a=0$). Similarly, many other common estimands are defined by a single individual's outcome under different treatments. For example, existing definitions of the optimal dynamic regime $g^{\mathbf{opt}}$ are based on a function that maps an individual's pre-treatment features to a single treatment value, see e.g. \citetMain{murphy2003optimal,robins2004optimal}. In practice, the values of $\Psi$ and $g^{\mathbf{opt}}$ are used to guide policies not just for one individual, but for whole populations. The tacit premise is that each individual is assumed to be exchangeable with any other, and that their potential outcomes are drawn from a common distribution characterized by $\Psi$ and $g^{\mathbf{opt}}$.  

However, in many real-world settings, this premise is immediately violated; policy makers must use non-iid data, e.g., because of causal connections between individuals in a cluster. Relatedly, they have interest in interventions that depend on the features of \textit{all} individuals in a cluster.  For example, many real-world policies involve regimes for \textit{prioritizing} individuals and are familiarly called waiting-list rules, triaging rules or prioritization mechanisms,  among other terms. These rules are frequently used in national health care systems across the world \citepMain{lakshmi2013application, januleviciute2013impact, mullen2003prioritising,  hadorn1997new, feek2000rationing}. Today such rules are used for the provision of primary care \citepMain{breton2017assessing}, non-emergency surgeries \citepMain{australian2015national, powers2023managing, curtis2010waiting, de2007western, valente2021new, solans2013developing}, organ transplantation \citepMain{OPTNSRTR20172019}, palliative care \citepMain{russell2020triaging}, and diagnostic testing with X-rays, MRIs, or endoscopies \citepMain{luigi2013oecd}, among other scarce resources. To be explicit, consider the widely cited New York Ventilator Guidelines for ventilator allocation, which were specified by \citetMain{nyc2015ventilator} in preparation for a possible influenza pandemic. These guidelines describe regimes for allocating ventilators to a cluster of individuals according to their Sequential Organ Failure Assessment (SOFA) scores. 

Existing results on dynamic regimes \citepMain{orellana2010dynamic,  van2005history, murphy2001cpprg} ostensibly provide a framework to account for resource limitations in causal inference. Recently, this framework has been used to study optimal resource-constrained regimes in statistics \citepMain{luedtke2016optimal, qiu2020optimal, luedtke2022individualized, canigliaestimating, zhao2023positivity, xu2023optimal, imai2023experimental}, econometrics \citepMain{bhattacharya2012inferring,kitagawa2018should, mbakop2021model, athey2021policy, sun2021treatment, pellatt2022pac, adusumilli2019dynamically, liu2022policy},  and machine learning, e.g., \citepMain{zhou2024optimal, luedtke2019asymptotically, badanidiyuru2018bandits, tran2012knapsack}.
For example, \citetMain{luedtke2016optimal} and \citetMain{qiu2020optimal} described semiparametric efficient estimators for the optimal \textit{individualized} dynamic regime (hereby, IR) subject to fixed constraints. Later, \citetMain{athey2021policy} described an efficient estimation strategy using regret bounds \citepMain{manski2004statistical}. \citetMain{sun2021treatment} generalize the notion of a resource constraint to one of a fixed budget, to accomodate settings where individual-specific costs of treatment randomly vary. \citetMain{canigliaestimating} parameterized a marginal structural model \citepMain{robins2000marginal} and subsequently optimized the parameters defining an IR under the resource constraint. 

We argue that these IR proposals are ill-posed for the typical resource-limited setting because they are restricted to individualized policies and iid data. There is indeed a growing literature that goes beyond such a restricted framework, which allows for interference \citepMain{hudgens2008toward, sobel2006randomized} or more generally causally connected individuals \citepMain{van2014causal} or social networks \citepMain{ogburn2020causal, manski2013identification}. One active strand of this literature concerns inference on classical estimands defined by individualized policies (e.g., \citetMain{savje2021average}). A second active strand concerns inference on estimands commonly called ``spillover'' or indirect effects \citepMain{hu2022average, forastiere2021identification, vanderweele2011effect}. Arguably these strands are related: either dependencies between individuals are regarded explicitly as a nuisance that complicates inference, or interest lies in quantifying the magnitude of this nuisance. In contrast, very few works explicitly consider dynamic regimes in limited resource settings. The literature on design-based inference is perhaps one exception; their parameters are defined by regimes corresponding to treatment allocation plans of experimental designs, which subsume, for example, Bernoulli designs, complete randomization designs, and paired-randomization designs, see \citetSM{savje2021average} for a review. However these works have been formulated exclusively for experimental data, and their experimental designs only correspond to static stochastic regimes. In the context of observational data, \citetSM{van2014causal} and \citetSM{ogburn2020causal} consider dynamic regimes in a network setting. Similar to the exposure mappings of \citetMain{aronow2017estimating} and \citetMain{savje2024causal} in experimental settings, \citetSM{van2014causal} and \citetSM{ogburn2020causal}'s regimes are individualized in the sense that each individual's treatment is restricted to be a fixed function of at most a randomizer, their own covariates, and the covariates of a bounded number of network ties. 
Collectively, these works are not directly relevant to our setting. First, we will consider observational data where the design, i.e., assignment of treatment, is unknown; thus the proposed methodologies of the designed-based literature are infeasible and their estimands are of questionable practical relevance. Second, we will evaluate regimes that are dynamic with respect to covariates of \textit{all} individuals in a cluster;  returning to the ventilator allocation example \citepMain{nyc2015ventilator}, whether or not an individual receives a ventilator unit in an ICU during a pandemic is not merely a function of that individual's own SOFA score, but of the SOFA scores of a cluster of patients in, say, the hospital.   Such regimes have, to our knowledge, not been considered in the interference literature.

As a solution, we work at the intersection of these two fast-growing literatures: to  consider  dynamic and optimal regimes in limited resource settings, we ground our work in frameworks traditionally applied in interference and network settings. Specifically, we articulate assumptions that both encode characteristic features of the resource-limited setting and permit articulation of estimands corresponding to real-world questions. Insodoing, we elaborate a general policy class of dynamic regimes for clustered data, covering special cases considered by the existing literature on interference \citepMain{savje2021average, van2014causal, ogburn2020causal}.  We further derive identification and estimation results for settings with either a finite or a near-infinite number of individuals in a cluster. Our elaborated results extend to longitudinal settings wherein limited resources frequently appear, and thereby we set the stage for future research. 

As an illustrative case study, we consider the estimand from \citetMain{qiu2020optimal} and \citetMain{luedtke2016optimal}, who aimed to learn an optimal regime, dynamic with respect to covariates $L$, subject to investigator-specified constraints on the marginal probability of treatment. While their proposed estimand and model violate characteristic features of the resource-limited setting,  we locate their statistical parameter as a large population limiting case of a functional identifying a more appropriate causal estimand in an elaborated model.  Thereby, we justify the use of existing statistical procedures for a more realistic setting and also clarify when the use of such procedures would fail.


\section{Dynamic regimes under resource limitations} \label{sec: elab}

\subsection{Data structure}
Consider an observed data setting where baseline clinical features $L_i$, binary treatments $A_i$, and outcome $Y_i$ are observed for $i=1,\dots,n$ individuals. In an iid setting, each vector $O_i\equiv \{L_i, A_i, Y_i\}$ is viewed as an independent draw from a common law $P$ with support $\mathcal{L} \times \{0,1\} \times \mathbb{R}$.\footnote{We consider $L_{i}$ discrete, in order to simplify notation, but the results can be generalized to continuous covariates.} We use superscripts to denote counterfactual (potential outcome) variables under a treatment regime. For example, $Y_{i}^{g}$ is the counterfactual outcome that would occur under regime $g$ for individual $i$. We use plus symbols ($+$) to distinguish \textit{natural} values \citepMain{robins2004effects} of counterfactual intervention variables under a regime  ($A_i^g$) from \textit{assigned} values of such variables ($A_{i}^{g+}$). When we consider iid settings we drop individual indices. Except when emphasis is needed, we omit subscripts that indicate when a function depends on a particular law, say, $P$. 

\subsection{Conventional individualized regimes}
An IR can be defined as a mapping ($g$) of an individual $i$'s characteristics $l$ and a randomizer $\delta \in [0,1]$ to a single treatment level, $a$. We refer to such regimes as \textit{individualized}, as they are a function solely of a single individual's features. Suppose $O^F_i\equiv \{O_i^g \mid g \in \Pi\}$ collects an individual $i$'s counterfactuals across all IRs $g$ in an unconstrained policy class $\Pi$, and that $P^F$ denotes the joint distribution over these variables. To reduce notation, we omit superscripts $F$ distinguishing $P$ vs. $P^F$, except as needed.

A given IR results in only a proportion of a target population receiving treatment, $\mathbb{E}[A^{g+}] \in [0,1]$.  Thus, a policy maker can consider a class of regimes defined to satisfy the constraint, $\Pi(\kappa)\coloneqq \{g\in \Pi \mid \mathbb{E}[A^{g+}] \leq \kappa\}$, and find the regime $g^{\mathbf{opt}}$ therein that optimizes some expected utility:  $g^{\mathbf{opt}} \coloneqq   \underset{g \in \Pi(\kappa) }{\arg \max } \  \mathbb{E}[Y^g]$. 
As reviewed in \citetMain{qiu2020optimal}, determination of $g^{\mathbf{opt}}$ admits a closed form solution \citepMain{dantzig1957discrete} up to an equivalence class defined by the counterfactual propensities,  $q^g(l)\coloneqq \mathbb{E}[A^{g+} \mid L=l]$. Letting $\Delta(l) \coloneqq  \mathbb{E}[Y^{a=1} - Y^{a=0} \mid L=l]$, and $\eta \coloneqq \max\Big\{\inf\{ c \in \mathbb{R} \mid P(\Delta(L) > c) \leq \kappa\}, 0\Big\}$, then $g^{\mathbf{opt}}$ is chosen as a regime such that
\begin{align*}
    q^{\mathbf{opt}}(l) = 
        \begin{cases}
            \frac{\kappa - P(\Delta(L) > \eta)}
                 {P(\Delta(L) = \eta)} 
                & \text{ if } \Delta(l) = \eta \text{ and } \Delta(l) > 0,\\
           I(\Delta(l) > \eta) & \text{ otherwise. } 
        \end{cases}
\end{align*}
Heuristically, $g^{\mathbf{opt}}$ assigns treatment to all individuals who would experience the greatest and strictly-positive conditional ATEs (CATEs) and who collectively constitute a proportion of the population less than or equal to the resource constraint $\kappa$. If this proportion is less than $\kappa$, then treatment is randomly assigned to the remaining group with the largest strictly-positive CATE so that $q^{\mathbf{opt}}$ exactly satisfies the constraint in expectation.

\subsection{Why iid data models are inadequate in settings with resource limitations}

The constrained optimization approach assumes iid observed data structures. However, if existing data arise from a setting where resources are already limited, as will often be the case, then iid models will immediately be violated. Consider, for example, the causal directed acyclic graph (DAG) in Figure \ref{fig: CFRLM}. Throughout we will use such graphs to depict Finest Fully Randomized Causally Interpreted Structured Tree Graph (FFRCISTG) models of \citetMain{robins1986new}, a nonparameteric structural equation model generalizing that of \citetMain{pearl2009causality}. The sub-DAG  involving only the nodes $O_j\equiv\{L_j, A_j, Y_j\}$  canonically represents causal relations between an individual's variables for a setting without unmeasured confounding, which is often used to justify identification of causal parameters based on graphical criteria. Although rarely done in practice, DAGs can depict relations between variables in a sample, $\{O_1,\dots, O_n\}$ \citepMain{ogburn2014causal}. Under an iid observed data model, the sub-DAG involving only the black arrows would be a valid causal graph, wherein the graphical motif for individual $j$ is repeated for all $n$ individuals, and there are no causal connections between individuals. However, if a resource limitation is operative in the observed data, then this iid sub-DAG will not faithfully represent the causal structure: if one individual uses or occupies a scarce treatment, such as a ventilator unit in an ICU, then this treatment is not available to an hitherto untreated individual. We can graphically depict these causal connections  with the additional red arrows . 
This causal structure contradicts independence between an arbitrary pair of individuals $\{O_i, O_j\}$; minimally, there is a direct causal path between one individual's covariates and another individual's outcome, mediated by the resource limitation, for example, 
$L_1\rightarrow A_1 \textcolor{red}{\boldsymbol{\rightarrow}} A_j \rightarrow Y_j$. 
These causal connections are related to the concept of interference between individuals, see for example \citetMain{sobel2006randomized, hudgens2008toward, savje2021average}, and we provide additional remarks in Section \ref{sec: ID}. Other inter-individual causal connections will also commonly arise in resource-limited settings; for example, a treatment provider will often consider the clinical features of other individuals in, say, an emergency room or military field hospital, when deciding whether to provide a scarce treatment unit. Such causal connections are depicted by the additional orange arrows in Figure \ref{fig: CFRLM}.

\begin{figure}[h] 
\centering
\begin{tikzpicture}[inner sep=0.3, outer sep=0.3, scale=.5]
        \node (Aj)    at  (4   *3, 0  ) {$A_j$};
        \node (A1)    at  (4   *3, 4   ) {$A_{1}$};
        \node (An)    at  (4   *3,-4   ) {$A_{n}$};

        \node (Lj)    at  (1   *3, 0  ) {$L_j$};
        \node (L1)    at  (1   *3, 4   ) {$L_{1}$};
        \node (Ln)    at  (1   *3,-4   ) {$L_{n}$};

        \node (Yj)    at  (7   *3, 0)    {$Y_j$};
        \node (Y1)    at  (7   *3, 4   ) {$Y_{1}$};
        \node (Yn)    at  (7   *3,-4   ) {$Y_{n}$};

        \node (dots)    at  (1   *3, 2  )  {$\vdots$};
        \node (dots)    at  (1   *3,-2   ) {$\vdots$};
        \node (dots)    at  (4   *3, 2  )  {$\vdots$};
        \node (dots)    at  (4   *3,-2   ) {$\vdots$};        
        \node (dots)    at  (7   *3, 2  )  {$\vdots$};
        \node (dots)    at  (7   *3,-2   ) {$\vdots$};

\begin{scope}

 
        \path (Lj)          edge[bend right=15]  (Yj);
        \path (Lj)          edge[              ]  (Aj);
        \path (Aj)          edge  (Yj); 
        
        \path (L1)          edge[bend right=15 ]  (Y1);
        \path (L1)          edge                (A1);
        \path (A1)          edge                   (Y1);
        
        \path (Ln)          edge[bend right=15 ]  (Yn);
        \path (Ln)          edge                  (An);
        \path (An)          edge                   (Yn); 
        
        \path (A1)     edge[bend left=25, color=red, opacity=0.8 ]  (Aj);
        \path (A1)     edge[bend left=45, color=red, opacity=0.8]   (An);
        \path (Aj)     edge[bend left=25, color=red, opacity=0.8]   (An); 

        \path (L1)     edge[ color=orange, opacity=0.8]   (Aj);
        \path (L1)     edge[ color=orange, opacity=0.8]   (An);
        \path (Lj)     edge[ color=orange, opacity=0.8]   (A1);
        \path (Lj)     edge[ color=orange, opacity=0.8]   (An);
        \path (Ln)     edge[ color=orange, opacity=0.8]   (A1);
        \path (Ln)     edge[ color=orange, opacity=0.8]   (Aj);
 \end{scope}       
        
\end{tikzpicture}

\caption{DAG for a cluster of size $n$. 
} 
\label{fig: CFRLM}
\end{figure}
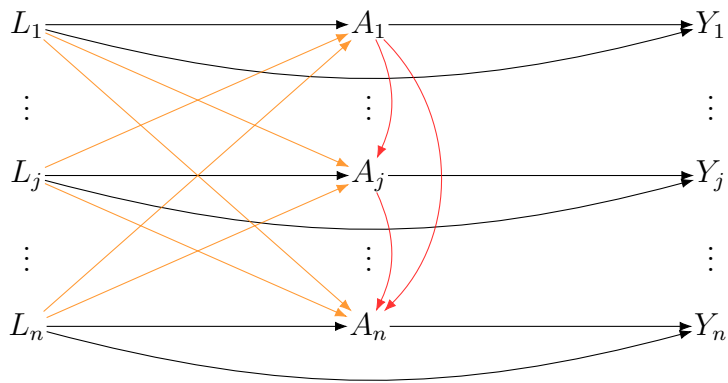

\subsection{A new policy-class for resource-limited settings}
\label{sec:policyclass}

We use bold font to denote a set of variables from $i=1,..,n$, for example,  $\mathbf{O}_n \equiv \{O_1,\dots, O_n\}$ denotes the factual data for all $n$ individuals in a cluster comprising an observed data sample, and let $\mathbb{P}_n$ denote its distribution. Estimands defined by $g\in\Pi(\kappa)$ are \textit{individualized} in the sense that a treatment provider implementing this regime for a particular individual $i$ need \textit{only} consider the clinical features of that particular individual, $L_i$. Instead, we consider regimes $G_n$ that are functions mapping all $n$ individuals' characteristics $\mathbf{L}_n$ and a randomizer $\delta \in [0,1]$ to a vector of $n$ treatments, $\mathbf{A}_n \in \{0, 1\}^n$. These randomizers can, for example, constitute lottery results or random-number-generators used to break ties under equipoise. We refer to a regime $G_n$ as a \textit{cluster} dynamic regime  (hereby CR), as they are a function of \textit{all} individuals' features in the cluster, and we let $\Pi_n$ be the set of all possible CRs, analogous to $\Pi$ for IRs.  We let $\mathbf{O}_n^F\coloneqq \{\mathbf{O}^{G_n} \mid G_n \in \Pi_n\}$ collect all $n$ individuals' counterfactuals under all possible CRs $G_n$, and we let $\mathbb{P}_n^F$ denote its joint distribution.

In contrast to $\Pi(\kappa)$, we specifically consider the policy class $$\Pi^F_n(\kappa_n) \equiv \Big\{G_n: \lVert G_n(\mathbf{l}_n, \delta) \rVert \leq \kappa_n  \text{ for all} \ \mathbf{l}_n, \delta \Big\},$$
and its subclass
$$\Pi_n(\kappa_n) \equiv \Big\{G_n: \lVert G_n(\mathbf{l}_n, \delta) \rVert = \kappa_n  \text{ for all} \ \mathbf{l}_n, \delta \Big\},$$
where $ \kappa_n \in \{0,1, \dots, n\}$ represents the number of treatment units available.   In words,  $\Pi^F_n(\kappa_n)$ is the class of CRs where no more than $\kappa_n$ individuals receive treatment and $\Pi_n(\kappa_n)$ is the sub-class where exactly $\kappa_n$ receive treatment.  To simplify notation, we focus on the class $\Pi_n(\kappa_n)$ and defer relaxations and extensions to Appendix \ref{appsec: gencdtrs}.

A feature of $\Pi^F_n(\kappa_n)$ and $\Pi_n(\kappa_n)$ is that they are not dependent on the law $\mathbb{P}_n^F$. In contrast, $\Pi_{P^F}(\kappa) \coloneqq \Pi(\kappa)$ crucially depends on the law $P^F$; the class of IRs guaranteed to satisfy a constraint $\kappa$ across all laws contains only the regimes that randomly assign treatment with some probability less than or equal to $\kappa$, including the trivial deterministic regime that assigns $a=0$ to all individuals. In Appendix \ref{appsec: misc}, we elaborate on this law dependence, and its corresponding consequences for robustness and transportability.

\subsubsection{Formulating cluster regimes based on rankings} \label{sec: rank}

In practice, regimes for limited resource settings are often formulated as rules for assigning a rank ordering to a finite cluster of individuals, and this rank ordering determining the assignment of the finite number of treatment units. To accord with this formulation, we define an unmeasured variable $\mathbf{R}_n$ that denotes this rank ordering. We then consider a elaborated CR $G^*_n\equiv\{G^*_{n, \mathbf{R}}, G^*_{n, \mathbf{A}}\}$ that involves both an intervention on $\mathbf{R}_n$ and $\mathbf{A}_n$ via $G^*_{n, \mathbf{R}}$, and $G^*_{n, \mathbf{A}}$, respectively. The assigned ranks $\mathbf{R}_n^{G_n^*+}$ determine the order in which individuals are assigned treatment, $\mathbf{A}_n^{G_n^*+}$, which proceeds until all $\kappa_n$ treatment units are allocated. 
A regime that assigned treatment whenever a resource was available, $A_{i}^{G^*_{n}+} = I(\kappa_n\geq R_i^{G_n^*+})$, would guarantee exact utilization of the available treatments. Thus each $G_n \in \Pi_n(\kappa_n)$ may be characterized by a rank-function (or prioritization rule) $G^*_{n, \mathbf{R}}$. When policy-makers and subject matter researchers reason about candidate regimes in terms of such prioritization rules, then this property may facilitate sound translation of policy questions into appropriate analyses.

\section{Identifying resource-limited estimands} \label{sec: ID}

\subsection{Structural conditions}\label{subsec: dimred}

We argued that iid statistical models for $\mathbb{P}_n$ are inadequate for capturing the defining features of resource-limited settings. 
Here we describe a statistical model for $\mathbb{P}_n$ that is coherent for these settings, and demonstrate that relevant estimands are identifiable and estimable under this model using available data. We do so by first articulating a general unrestricted non-parametric structural equation model (SEM) \citep{pearl2009causality, richardson2013single}. Under an unrestricted SEM, data are generated through recursive evaluation of the following structural functions for all $i=1,\dots,n$:
\begin{align*}
    L_{i}        & = f_{L_{i}}        (Pa(L_{i})        , \epsilon_{L_{i}}), \\
    A_{i}        & = f_{A_{i}}         (Pa(A_{ i})        , \epsilon_{A_{i}}), \\
    Y_{i}        & = f_{Y_{i}}        (Pa(Y_{i})        , \epsilon_{Y_{i}}). 
\end{align*}
For example, $A_i$ is determined by the structural function $f_{A_i}$ of variables in the parent set $Pa(A_i)$, and an error term, $\epsilon_{A_i}$. We suppose that elements of $\mathbf{A}_n$ are generated subsequent to $\mathbf{L}_n$ and prior to $\mathbf{Y}_n$. Formally, we let $\mathbf{f}_n \equiv \{f_{L_i}, f_{A_i}, f_{Y_i}\mid i\in\{1,\dots,n\}\}$ denote the set of all relevant structural equations, and $\boldsymbol{\epsilon}_n \equiv \{\epsilon_i\mid i\in\{1,\dots,n\}\}$ denote the set of all error terms following a joint distribution $\mathbb{P}^{\epsilon}_n$, with $\epsilon_{i} \coloneqq \{\epsilon_{L_i}, \epsilon_{A_i}, \epsilon_{Y_i}\}$. Following \citet{pearl2009causality}, a particular causal system $\mathbb{S}_n$ is then taken to be a pairing of a set of structural equations and a distribution for their associated error terms, $\mathbb{S}_n \equiv \{\mathbf{f}_n, \mathbb{P}^{\epsilon}_n$\}.

Unlike SEMs used in settings with iid observations \citepMain{pearl2009causality, richardson2013single}, the above structural equations allow the inclusion of variables encoding information on individual $j\ne i$ in the structural equations for some $V_{i}$. For example, this general SEM allows that $L_{j}\subseteq Pa(A_{i})$, that is, it allows the covariates for individual $j$ to have direct effects on the treatment of individual $i$. Furthermore the structural equations are themselves indexed by $i$, reflecting that this general SEM does not assume any invariance between structural equations for $V_{j}$ and $V_{i}$. For example,  $V_{j}$ might differ from $V_{i}$ even if $Pa(V_{j})=Pa(V_{i})$ and $ \epsilon_{V_{j}}=\epsilon_{V_{i}}$. 

We then  define a statistical model $\mathcal{M}_n$, i.e. a collection of laws $\mathbb{P}_n$, through restrictions on the SEM, i.e. through restrictions on $\mathbf{f}_n$ and $\mathbb{P}^{\epsilon}_n$. A particular causal system $\mathbb{S}_n$ induces a joint distribution for $\mathbf{O}^F_n$, that is $\mathbb{P}_n \equiv \mathbb{P}_n(\mathbb{S}_n)$. Thus, if we take $\mathcal{S}_n$ to be a set of causal systems, then its corresponding induced statistical model is defined as $\mathcal{M}_n\equiv \{\mathbb{P}_n(\mathbb{S}_n) \mid \mathbb{S}_n\in\mathcal{S}_n\}$.

To identify causal parameters of interest, we introduce two types of conditions. The first type are weak structural conditions for the consistent estimation of parameters of $\mathbf{O}_n$. These conditions, labelled $A1-A3$ below, are a strict subset of those that are implicit whenever data are described as iid, but they must be made explicit in settings where iid models are unsatisfactory. 

\begin{itemize}    
    \item [\textit{A1}.] \textbf{Conditional noninterference:} For all $i$,  $Pa(L_{i}) \cup Pa(Y_{i}) \subseteq O_i$ 
and $\epsilon_{Y_{i}}\independent \boldsymbol{\epsilon}_n \setminus \epsilon_i$. Furthermore, $\{\epsilon_{L_j}: j\in \{1,\dots,n\}\}$ are mutually independent.
\item [\textit{A2}.] \textbf{Structural invariance:} For all $i\neq j$, $f_{L_{i}} = f_{L_{j}}$ and $f_{Y_{i}} = f_{Y_{j}}.$
\item [\textit{A3}.] \textbf{Conditional identically-distributed errors:} For all $i \neq j$, and for all $pa(y)$, 
        $\Big(\epsilon_{L_{i}} \Big) \sim \Big(\epsilon_{L_{j}}\Big)$, and
        $\Big(\epsilon_{Y_{i}} \Big| Pa(Y_{i})=pa(y)\Big) \sim \Big(\epsilon_{Y_{j}} \Big| Pa(Y_{j})=pa(y)\Big)$.
\end{itemize}

The second type permits identification of causal parameters of  $\mathbf{O}_n^F$ solely in terms of parameters of $\mathbf{O}_n$. This condition, labelled $B$,  is isomorphic to those classically assumed for identification of individual-level parameters in the statistical causal inference literature. 

\begin{itemize}    
    \item [\textit{B}.] \textbf{No individual-level confounding of outcomes:} For all $i$,  $Y_i^{a_i} \independent A_i \mid L_i$.
\end{itemize}

We denote the statistical model for $\mathbb{P}_n$ that is induced by an NPSEM under conditions \textit{A} and \textit{B} by $\mathcal{M}_n^{AB}$.

\subsection{Remarks on conditions}\label{subsec: IDcondRemarks}

We illustrate the structural implications of Condition \textit{A1} with DAGs in Figure \ref{fig: CondIll1} of Appendix \ref{appsec: misc}. Condition \textit{A1} will in general be violated by a causal effect of any variable for individual $j$, $O_j$, on the covariates or outcome for individual $i$, $\{L_i, Y_i\}$, except via paths intersected by the treatment of individual $i$, $A_i$. For example, \textit{A1} does not preclude an effect depicted by the path $L_j \rightarrow A_i \rightarrow Y_i$: such an effect would be expected in a limited resource setting. However, a direct path $A_j \rightarrow Y_i$ would violate condition  \textit{A1}. 
Furthermore, Condition \textit{A1} will in general be violated by unmeasured common causes of $Y_i$ and any element in $O_j$, or any unmeasured common cause of $L_i$ and $L_j$. This is encoded by restrictions on the dependence structure of the error terms for $Y_i$ and $L_i$ with those error terms for individual $j$. However, Condition \textit{A1} leaves unrestricted the causal and error-dependence structure between the treatment variables of all individuals in the cluster, $\mathbf{A}_n$, and between the covariates $\mathbf{L}_n$ and those treatments. Thus, extensive causal connections between individuals are permitted under this condition.  Conditions \textit{A2} and \textit{A3} assert that covariate and outcome structural equations are invariant in $i$ and that the conditional distributions of their error terms are identical across individuals. Together, Conditions \textit{A1}-\textit{A3} provide a structure and regularity to the joint distribution of $\mathbf{O}_n$ that is nevertheless markedly weaker than standard iid models.

We graphically illustrate the structural implications of Condition \textit{B} with the DAGs in Figure \ref{fig: CondIll2}  of Appendix \ref{appsec: misc}. In the context of an SEM following Conditions \textit{A1}-\textit{A3}, Condition \textit{B} additionally precludes classical within-individual confounding structures, and also places additional restrictions on the causal structure between the covariates $\mathbf{L}_n$ and treatments $\mathbf{A}_n$. In particular, when there is an unmeasured common cause of $L_i$ and $Y_i$, unmeasured common causes of $L_i$ and $A_j$ will result in a variant of the classical M-bias causal structure \citepMain{greenland1999causal}. In this case, the M-bias arises because of the causal connections between treatment units $A_j$ and $A_i$, characteristic of the resource-limited setting. 

By definition, $Y_i^{a_i}$ is the potential outcome of individual $i$ under an intervention \textit{solely} on that individual's own treatment (and no other individuals). Therefore, Condition \textit{B} is amenable to interrogation on conventional graphs constructed for a single individual, for example the single-world intervention graphs (SWIG) of \citetMain{richardson2013single}, wherein this condition might be contradicted by graphical structures using standard rules for d-separation \citepMain{pearl2009causality}, see Figure \ref{fig: CondIll3} for an example. However, under Condition \textit{A1} we might immediately reformulate Condition \textit{B} in terms of a counterfactual outcome under intervention to \textit{all} individuals' treatments, $Y_i^{\mathbf{a}_n} \independent A_i \mid L_i$. This follows from the restrictions on the structural equations for $Y_i$, whereby it follows that $Y^{a_i}_i= Y_i^{\mathbf{a}_n}$ for all $i$, due to the exclusion of $\mathbf{a}_n\setminus a_i$ or any of their descendants besides $A_i$ from the arguments of $f_{Y_i}$. In this sense, counterfactuals $Y_i^{a_i}$ generated by such an SEM satisfy the stable-unit treatment value assumption of \citetMain{rubin1990comment}. However, other counterfactuals from this model, defined by interventions on other variables, will in general not. For example, an intervention on the patients' rank-ordering for treatment $\mathbf{R}_n$, without subsequent intervention on patients' treatment assignment would not satisfy SUTVA, i.e./ $Y^{r_i}_i \neq Y_i^{\mathbf{r}_n}$. While we do not consider further such interventions, they are the focus of ongoing work.

\subsection{Identification results}\label{subsec: IDres}

Under $\mathcal{M}^{AB}_n$, covariates and outcomes enjoy a conditional iid property, see Appendix \ref{appsec: proofs} for a formal result. For example, conditional on the measured past, the observed outcomes are independent and the probability that any two individuals with the same covariate history $a, l$ will have the particular outcome value $y$ is described by a common measure. We let $Q_{Y}(y \mid a, l) :=  \mathbb{P}_n(Y_i=y \mid L_i=l, A_i=a)$ denote this common measure  and likewise define  $Q_{L}(l) :=  \mathbb{P}_n(L_i=l)$.  
Thus far we have not discussed any restrictions on the relations between the parameters $q_i(a \mid l) \coloneqq \mathbb{P}_n(A_i=a \mid L_i=l)$, across individuals $i=1,..,n$.  Likewise, $q^*_i(a \mid l) \coloneqq \mathbb{P}_n(A^{G_n+}_i=a \mid L_i=l)$ is similarly unrestricted. There are no guarantees that $q_i=q_j$ for an arbitrary $\mathbb{P}_n$ in $\mathcal{M}^{AB}_n$ or that $q^*_i=q^*_j$ for some arbitrary $G_n\in\Pi_n(\kappa_n)$.  In Section \ref{sec: estim}, we discuss conditions under which we can construct consistent estimators of individual-level parameters $Q_L$ and $Q_Y$ from $\mathbf{O}_n$. When such is the case, we will say we have identified a causal parameter whenever we express it as a known function of at most parameters of $Q_L$ and $Q_Y$.

Analogous to classical approaches in causal inference \citepMain{richardson2013single}, we define an \textit{individual-level} g-formula density function $f_i^{G_n}$, with respect to an individual $i$ and $CR$ $G_n$, 
\begin{align}
      f_{i}^{G_n}(o) = & Q_{Y}(y \mid a, l) q^{*}_{i}(a \mid l)  Q_L(l), \label{eq: patlevgform}
\end{align}
which will appear in our identification results. We emphasize that $f_{i}^{G_n}$ is distinguished from a classical g-formula in that it will generally differ for each individual $i$, via the intervention density $q^{*}_{i}$. 

We also define lower-dimensional summaries of $\mathbf{O}_n$, corresponding to the frequency distributions of the values of $O_i$ in the cluster. First, let $\mathbb{L}_{n}$ be the random vector indicating the numbers of individuals in the cluster with each of the possible values $l \in \mathcal{L}$. Second, let $\mathbb{L}_{n}(l)$  denote the one-dimensional random variable indicating the number of individuals in the cluster with specific covariate value $l$. 
Then $\mathbb{L}_{n}$ represents the random marginal frequency table for the cluster with fixed $n$, and $\mathbb{L}_{n}(l)$ represents the random value of one of its cells corresponding to value $l$. Likewise, $\mathbb{A}_{n}$ and $\mathbb{A}_{n}(a)$ have analogous interpretations. Elaborating this notational convention, let $B_i \coloneqq \{A_i, L_i\}$ so that $\mathbb{B}_{n}$ is the random contingency table for joint frequencies of treatment and covariates in the cluster, which we may consider conditional on some or all of its margins, for example, conditional on $\mathbb{L}_{n}= \mathbb{l}_n$. As short hand, let $\mathbb{Q}_{Y, n}$ be the conditional density function of $\mathbb{O}_n$ given $\mathbb{B}_n$ and let $\mathbb{Q}_{L, n}$ be the marginal density of $\mathbb{L}_n$. Additionally, let $\mathbb{q}^*_n$ denote the conditional density of $\mathbb{B}_{n}^{G_n+}$ given $\mathbb{L}_n$. See Appendix \ref{appsec: proofs} for identities of these compositional parameters in terms of $Q_L$ and $Q_Y$ under $\mathcal{M}^{AB}_n$. 

We thus also define the \textit{compositional} g-formula density function  $\mathbb{f}^{G_n}$ as 
\begin{align} \label{eq: compgform}
   \mathbb{f}^{G_n}(\mathbb{o}_{n}) \coloneqq \mathbb{Q}_{Y,n}(\mathbb{o}_{n} \mid \mathbb{b}_{n})
                \mathbb{q}^*_n(\mathbb{b}_{n} \mid  \mathbb{l}_{n})\mathbb{Q}_{L,n}(\mathbb{l}_{n}).
\end{align}
We refer to $\mathbb{f}^{G_n}$ as the compositional g-formula, because it is isomorphic to the individual-level g-formula, except with functions for compositional parameters in place of individual-level parameters. 

The following theorem links the \textit{individual-level} and \textit{compositional} g-formulae with causal parameters under $G_n$, where we consider arbitrary real-valued functions $h$ and $h'$ of $\mathbb{O}_n$, and $O_i$, respectively.

\begin{theorem} \label{theorem: CDTRID}
Consider a law $\mathbb{P}_n\in\mathcal{M}^{AB}_n$  
and a regime $G_n \in \Pi_n(\kappa_n)$. 
Then   $$\mathbb{E}[ h(\mathbb{O}_n^{G_n+})] =  \sum\limits_{\mathbb{o}_{n}}h(\mathbb{o}_n)  \mathbb{f}^{G_n}(\mathbb{o}_{n}) \ 
 \text{and} \ \mathbb{E}[ h'(O_i^{G_n+})] =  \sum\limits_{o}h'(o)f_{i}^{G_n}(o),$$ 
whenever the right hand sides of the equations are well-defined. 
\end{theorem}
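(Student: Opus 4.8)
The plan is to read both identities as g-formula statements and to prove them by factoring the counterfactual law of $\mathbf{O}_n^{G_n+}$ into a covariate component, an assigned-treatment component, and an outcome component, then matching each factor to the densities appearing in $f_i^{G_n}$ and $\mathbb{f}^{G_n}$. I would begin with the individual-level identity, since the compositional one is, as the authors stress, isomorphic. The first observation is that $G_n$ intervenes only on $\mathbf{A}_n$ (as a function of $\mathbf{L}_n$ and the exogenous randomizer $\delta$), so covariates are untouched, $L_i^{G_n+}=L_i$, and by the exclusion restriction baked into Condition \textit{A1} (the reformulation $Y_i^{a_i}=Y_i^{\mathbf{a}_n}$ noted in Section \ref{subsec: IDcondRemarks}) the counterfactual outcome reacts only to the individual's own assigned level, $Y_i^{G_n+}=Y_i^{A_i^{G_n+}}$. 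Writing $Y_i^{A_i^{G_n+}}=\sum_{a\in\{0,1\}}I(A_i^{G_n+}=a)Y_i^{a}$, I would condition on $L_i$ and $A_i^{G_n+}$ to obtain $\mathbb{E}[h'(O_i^{G_n+})]=\sum_{a,l}\mathbb{E}[h'(l,a,Y_i^{a})\mid L_i=l,A_i^{G_n+}=a]\,q^*_i(a\mid l)\,Q_L(l)$, where $Q_L(l)=\mathbb{P}_n(L_i=l)$ is the common marginal guaranteed by Conditions \textit{A2}--\textit{A3}.

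The remaining task is to show the inner conditional expectation equals $\sum_y h'(l,a,y)Q_Y(y\mid a,l)$, which I would do in two moves. First, because $A_i^{G_n+}$ is a deterministic function of $\mathbf{L}_n$ and the exogenous $\delta$ while $Y_i^{a}=f_{Y_i}(L_i,a,\epsilon_{Y_i})$, the potential outcome is independent of the assigned treatment given the covariates, so conditioning on $A_i^{G_n+}=a$ may be dropped. Second, consistency together with Condition \textit{B} in its reformulated guise $Y_i^{\mathbf{a}_n}\independent A_i\mid L_i$ lets me replace the counterfactual outcome law by the factual one, $\mathbb{E}[h'(l,a,Y_i^{a})\mid L_i=l]=\mathbb{E}[h'(l,a,Y_i)\mid L_i=l,A_i=a]=\sum_y h'(l,a,y)Q_Y(y\mid a,l)$, the last equality invoking the common conditional measure $Q_Y$ from the conditional-iid property (the appendix lemma flowing from \textit{A1}--\textit{A3}). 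Substituting back collapses the sum to $\sum_o h'(o)f_i^{G_n}(o)$.

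For the compositional identity I would note that $h(\mathbb{O}_n^{G_n+})$ is a permutation-invariant function of $\mathbf{O}_n^{G_n+}$ depending only on its frequency table, and factor the counterfactual law of the table by conditioning successively on $\mathbb{L}_n$ and then $\mathbb{B}_n^{G_n+}$: since a realization $\mathbb{o}_n$ determines its margins $\mathbb{b}_n$ and $\mathbb{l}_n$, one gets $\mathbb{P}_n(\mathbb{O}_n^{G_n+}=\mathbb{o}_n)=\mathbb{P}_n(\mathbb{O}_n^{G_n+}=\mathbb{o}_n\mid\mathbb{B}_n^{G_n+}=\mathbb{b}_n)\,\mathbb{q}^*_n(\mathbb{b}_n\mid\mathbb{l}_n)\,\mathbb{Q}_{L,n}(\mathbb{l}_n)$, the last two factors holding by the definitions of $\mathbb{q}^*_n$ and $\mathbb{Q}_{L,n}$. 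The only non-definitional step is the table-level analogue of the individual identification, $\mathbb{P}_n(\mathbb{O}_n^{G_n+}=\mathbb{o}_n\mid\mathbb{B}_n^{G_n+}=\mathbb{b}_n)=\mathbb{Q}_{Y,n}(\mathbb{o}_n\mid\mathbb{b}_n)$, i.e. that the counterfactual outcome table given the counterfactual treatment--covariate table coincides with the factual one; this again rests on consistency, Condition \textit{B}, and the conditional independence of outcomes given $\mathbb{B}_n$ supplied by \textit{A1}. Summing $h(\mathbb{o}_n)$ against this factorization yields the compositional g-formula.

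The step I expect to be the main obstacle is the dimension reduction in the first move: the assignment $A_i^{G_n+}$ genuinely depends on the entire covariate vector $\mathbf{L}_n$, so conditioning on $A_i^{G_n+}=a$ forces one to condition on information about $\mathbf{L}_n$ beyond $L_i$, and one must argue that this extra information is irrelevant to $Y_i^{a}$ once $L_i$ is fixed. Establishing $Y_i^{a}\independent A_i^{G_n+}\mid L_i$ therefore requires more than the marginal statements in \textit{A1} read naively; it is exactly where the conditional-iid lemma must be invoked, so I would make sure that lemma delivers $\epsilon_{Y_i}\independent\{\epsilon_{L_j}\}_{j\neq i}\mid L_i$ (equivalently, that $(L_i,Y_i^{a})$ is jointly independent of the other individuals' covariates) rather than only the pairwise independences, since the latter do not by themselves close the argument. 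The parallel delicacy at the table level is that $\mathbb{q}^*_n$ is the counterfactual assigned-treatment table law while $\mathbb{Q}_{Y,n}$ and $\mathbb{Q}_{L,n}$ are factual, so one must verify that the regime's determinism in $(\mathbf{L}_n,\delta)$ is what permits reading these off the observed data.
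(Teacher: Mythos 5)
Your proposal is correct and takes essentially the same route as the paper's own proof: the same factorization $Q_Y(y\mid a,l)\,q^*_i(a\mid l)\,Q_L(l)$ (and the coarsening factorization $\mathbb{Q}_{Y,n}\,\mathbb{q}^*_n\,\mathbb{Q}_{L,n}$ at the table level), with the crux being the kernel identity $\mathbb{P}_n^F(Y_i^{G_n}=y \mid A_i^{G_n+}=a, L_i=l)=Q_Y(y\mid a,l)$, which the paper's Lemma in Appendix \ref{appsec: proofs} proves exactly as you describe---writing $A_i^{G_n+}$ as a function of $(\epsilon_{\mathbf{L}_{-i}},\delta)$ once $L_i=l$ is fixed, dropping that conditioning via the error structure of Condition \textit{A1} and exogeneity of $\delta$, then invoking Condition \textit{B}, consistency, and the conditional-iid lemma. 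The joint-versus-pairwise independence subtlety you flag at the end is precisely the step the paper relies on implicitly in that lemma (its fourth equality), where $\epsilon_{Y_i}\independent\boldsymbol{\epsilon}_n\setminus\epsilon_i$ must be read as joint independence from the full collection of other individuals' errors, so your caution mirrors rather than departs from the paper's argument.
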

A proof of Theorem \ref{theorem: CDTRID} is provided in Appendix \ref{appsec: proofs}. We also provide a suitable positivity condition for the identifying functionals in Theorem \ref{theorem: CDTRID}. Let $\overline{q}_n(a \mid l)  \coloneqq \frac{1}{n}\sum\limits_{i=1}^nq_i(a\mid l)$ denote the average conditional density function of $A_i$ given $L_i$ over the indices $i$, and likewise define $\overline{q}^*_n$. Then, these functionals will be well-defined when the following condition holds:

\begin{itemize}
    \item [$C0.$] \textbf{Weak finite-cluster positivity} If $\overline{q}^*_n(a \mid l)Q_L(l) > 0 $ then $\overline{q}_n(a \mid l)>0,$ for all $a,l.$
\end{itemize}
In words, whenever a treatment and covariate history $a, l$ has positive probability under $G_n$ for at least one individual $i$, then there must exist at least one individual $j$, possibly different from $i$, that has a positive probability of receiving that treatment level $a$ given $l$ under the observed data law. 

Analogous positivity conditions are typically considered in the causal inference literature for g-formula identification of parameters defined by IRs under iid models, see Expression (62) in \citetMain{richardson2013single}. In these settings, positivity implies the existence of an asymptotically consistent estimator under mild regularity conditions. However, stronger conditions will be necessary for the existence of such an estimator when a single draw from $\mathbb{P}_n$ comprises the observed data.  

\subsubsection{$L$-rank-preserving regimes}

Theorem \ref{theorem: CDTRID} provides an identification result for an arbitrary CR in $\Pi_n(\kappa_n)$. As noted in Section \ref{sec: rank}, any such regime can be characterized by a function $G^*_{n, \mathbf{R}} : \mathcal{L}^n \times [0,1] \rightarrow S(\{1,\dots, n\})$.

As an example, consider a pair of clusters with $n=4$ and $\kappa_n=1$ and $\mathcal{L}\equiv\{0,1\}$, where in the first cluster  $\mathbf{L}_n=(1, 1, 0, 0)$ and in the second cluster $\mathbf{L}_n=(0, 0, 1, 1)$. Then consider a regime $G_n$ such that for both such clusters, the rank orderings of $\mathbf{r}_n=(1,2,3,4)$ and $\mathbf{r}_n=(2,1,3,4)$ are assigned with equal probability. Each of these clusters has the same composition: $\mathbb{L}_n=(2,2)$. However, in the first cluster, individuals with $L_i=1$ each have a probability of 0.5 of treatment assignment and those with $L_i=0$ have 0 probability of treatment assignment, whereas in the second cluster these probabilities are exchanged.

In contrast to the preceding example, realistic policies will often correspond to regimes that assign treatment in a way that is invariant across clusters with the same composition. To formalize this class of regimes, we then consider a subclass of policies, $ \Pi^d_n(\kappa_n, L_i, \Lambda)\subset \Pi_n(\kappa_n)$, defined as follows:

\begin{definition}[$L$-rank-preserving CRs]\label{def: LRPregimes}
The class of CRs $\Pi^d_n(\kappa_n, L_i, \Lambda)$ is the subset of regimes $G_n$ in  $\Pi_n(\kappa_n)$ such that for all $i, j$ and all $\mathbb{P}_n$ the following properties hold:
\begin{enumerate}
    \item [$(1)$] There exists a function $\Lambda:  \mathcal{L}_{i} \rightarrow \mathbb{R}$ where $\Lambda(L_{i}) > \Lambda(L_{j})$ implies $ A_i^{G_n+} \geq A_j^{G_n+}$, $\mathbb{P}_n$-almost surely; and
    \item [($2)$] $\mathbb{P}_n(A^{G_n+}_i = 1 \mid L_i=l, \mathbb{L}_n=\mathbb{l}_n) = p^{G_n}(l, \mathbb{l}_n)\in [0,1]$ for all $l, \mathbb{l}_n$.
    
\end{enumerate}
\end{definition}

$L$-rank-preserving CRs constitute an important class of regimes. For example, the prioritization rules deployed in national health care systems are of this class \citepMain{lakshmi2013application, januleviciute2013impact, mullen2003prioritising,  hadorn1997new, feek2000rationing}, as they have the feature of sorting individuals into coarsened rank groups via some function $\Lambda$, i.e., property $(1)$. Furthermore, $L$-rank-preserving CRs capture some notion of equity that is not in general satisfied by CRs, as illustrated by the previous example; alternatively, $L$-rank-preserving CRs ensure that individuals with the same covariate value $l$ will have the same probability of receiving treatment, $q^*_i=q^*_j$, for all $i,j$. Finally, the optimal CR will be an $L$-rank-preserving CR under $\mathcal{M}^{AB}_n$. We provide a formal result in Section \ref{sec: estim}.

We now present an additional identification result for the expectations of $\overline{Y}^{{G}_n}_n \coloneqq \frac{1}{n}\sum\limits_{y} y\mathbb{Y}_n^{{G}_n}(y)$, the cluster-average potential outcome. The resulting identification formula has a unique formulation since $\overline{Y}^{{G}_n}_n$ can be re-expressed as a simple linear combination of individual outcomes, $\frac{1}{n}\sum\limits Y_i^{{G}_n}$.

 \begin{proposition}
     \label{theorem: YbarID}
    Consider a law $\mathbb{P}_n \in\mathcal{M}^{AB}_n$ and $G_n \in \Pi^d_n(\kappa_n, L_i, \Lambda)$. Then $$\mathbb{E}[ \overline{Y}^{{G}_n}_n] =  \sum\limits_{o\in\mathcal{O}_i}yf_{i}^{G_n}(o).$$
\end{proposition}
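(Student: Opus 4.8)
The plan is to reduce the cluster-average statement to the individual-level identification already furnished by Theorem \ref{theorem: CDTRID}, and then to exploit the defining symmetry of $L$-rank-preserving regimes to collapse the average over individuals into a single representative term. First I would use the linear re-expression $\overline{Y}^{G_n}_n = \frac{1}{n}\sum_{i=1}^n Y_i^{G_n}$ noted just before the statement, take expectations, and invoke linearity to obtain $\mathbb{E}[\overline{Y}^{G_n}_n] = \frac{1}{n}\sum_{i=1}^n \mathbb{E}[Y_i^{G_n}]$. This initial reduction is precisely what the ``unique formulation'' remark is pointing at: the cluster-average functional is linear in the individual counterfactuals, so no genuinely compositional argument is needed.

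Next I would apply the individual-level conclusion of Theorem \ref{theorem: CDTRID} term by term. Taking the arbitrary function $h'$ to be the projection $h'(o) = y$ onto the outcome coordinate of $o = (l,a,y)$, and noting that the outcome coordinate of $O_i^{G_n+}$ is the counterfactual $Y_i^{G_n}$ (the $+$ distinguishes only the intervention variable $A$), the theorem gives $\mathbb{E}[Y_i^{G_n}] = \sum_{o} y\, f_i^{G_n}(o)$ for each $i$. This is legitimate since $G_n \in \Pi^d_n(\kappa_n, L_i, \Lambda) \subset \Pi_n(\kappa_n)$ and $\mathbb{P}_n \in \mathcal{M}^{AB}_n$. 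Substituting yields $\mathbb{E}[\overline{Y}^{G_n}_n] = \frac{1}{n}\sum_{i=1}^n \sum_o y\, f_i^{G_n}(o)$.

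The crux is to show that $f_i^{G_n}$ does not depend on $i$, so that each summand is identical and the average equals any one of them. Since $f_i^{G_n}(o) = Q_Y(y\mid a,l)\, q^*_i(a\mid l)\, Q_L(l)$ and only $q^*_i$ carries an index, it suffices to prove $q^*_i = q^*_j$ for all $i,j$. I would write $q^*_i(1\mid l) = \sum_{\mathbb{l}_n} p^{G_n}(l,\mathbb{l}_n)\, \mathbb{P}_n(\mathbb{L}_n=\mathbb{l}_n \mid L_i = l)$, using property $(2)$ of Definition \ref{def: LRPregimes} and noting that $p^{G_n}(l,\mathbb{l}_n)$ is free of $i$ by hypothesis. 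The remaining factor $\mathbb{P}_n(\mathbb{L}_n=\mathbb{l}_n \mid L_i=l)$ is also free of $i$ because under $\mathcal{M}^{AB}_n$ the covariates $L_1,\dots,L_n$ are mutually independent and identically distributed by Conditions \textit{A1}–\textit{A3}, hence exchangeable, so the conditional law of the frequency table given one coordinate is common to all individuals. This gives $q^*_i = q^*_j$ and therefore $f_i^{G_n} = f_j^{G_n}$, as asserted in the discussion following Definition \ref{def: LRPregimes}. With the summand now index-free, the average $\frac{1}{n}\sum_i$ collapses to a single term $\sum_{o} y\, f_i^{G_n}(o)$ for arbitrary $i$, which is the claimed identity.

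I expect the main obstacle to be exactly this crux step, and specifically the care needed in the exchangeability argument: property $(2)$ alone controls only the treatment probability given the \emph{full} frequency table $\mathbb{L}_n$, not the marginal probability given $L_i$, so the equality $q^*_i = q^*_j$ is not immediate from the definition but requires the conditional-iid structure of the covariates to render the marginalization over $\mathbb{L}_n$ index-free. The formal conditional-iid property for covariates referenced in Appendix \ref{appsec: proofs} is what licenses this step, and I would cite it rather than re-derive it here.
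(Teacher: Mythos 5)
Your proposal is correct and matches the paper's own argument: the paper proves the statement (in the more general form of Proposition S\ref{theorem: YbarIDrelax}, for stochastic $L$-equitable regimes, of which $\Pi^d_n(\kappa_n, L_i, \Lambda)$ is a subclass via property $(2)$ of Definition \ref{def: LRPregimes}) by exactly your chain --- linearity of expectation, Theorem \ref{theorem: CDTRID} applied with $h'(o)=y$ term by term, and the invariance $q^*_j = q^*_i$ collapsing the average. Your justification of that invariance, marginalizing $p^{G_n}(l,\mathbb{l}_n)$ over $\mathbb{Q}_{L,n\mid l}$ and using the iid covariate structure from Lemma S\ref{lemma: ciid} to make the conditional law of $\mathbb{L}_n$ given $L_i=l$ index-free, is precisely the identity the paper records in Appendix \ref{appsubsec: intdens}, spelled out in slightly more detail than the paper's one-line appeal.
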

In subsequent sections, we study the asymptotic properties of these functionals  as the cluster size $n$ grows and find correspondences with identification approaches in the constrained optimization literature. We also consider a more general version of Proposition \ref{theorem: YbarID} in Appendix \ref{appsec: gencdtrs}, for a larger class of CRs defined solely by property (2) of Definition \ref{def: LRPregimes}.

\subsubsection{Identifying intervention densities}

Theorem \ref{theorem: CDTRID} and Proposition \ref{theorem: YbarID} do not by themselves constitute identification results because $q_n^*$ and $\mathbb{q}^*_n$ are by definition counterfactual densities. Under classical IRs, this distinction is trivial because $q_i^*$ would be known  and invariant to $\mathbb{P}_n$, following immediately by definition of the regime. In contrast, the $q_i^*$ and $\mathbb{q}_n^*$ of CRs are not immediately apparent. 
In Appendix \ref{appsec: gencdtrs}, we review identification of $\mathbb{q}^*_n$ and $q_n^*$ under general regimes $G_n$ and unrestricted non-parametric models, and illustrate simplifications under regimes in special classes. We provide a simple closed form expression for $q^*_{i}$ under $L$-rank preserving regimes, $\Pi^d_n(\kappa_n, L_i, \Lambda)$. This allows specification of $G_n$ in terms of ${\Lambda}$, which as we argued, characterizes many regimes of practical interest. 
 
\section{Large-cluster estimands} \label{sec: largecluster}

Finite cluster estimands are, in general, identified by functionals that depend on a fixed cluster size, say $n^*$. 
In this section, we articulate ``large-cluster'' estimands defined as the limits of particular sequences of finite-cluster estimands indexed by cluster size $n$. We will show that these large-cluster estimands can correspond to parameters defined by the IRs already considered in the extant literature, and indeed can be identified by identical functionals of observed data. 

\subsection{Asymptotic preliminaries}

We define an arbitrary joint distribution of $\{\mathbf{O}^F_1, \mathbf{O}^F_2, \mathbf{O}^F_3, \dots \}$, the joint vector of covariates, treatments, and outcomes for clusters of increasing size $n$. We denote this distribution by  $\mathbb{P}_{0}$, which we call an \textit{asymptotic law}. We let $\mathcal{M}_{0}^U$ denote the set of all such laws that can be factorized as $\mathbb{P}_{0} = \mathbb{P}_{1}\mathbb{P}_{2}\mathbb{P}_{3}\cdots$, and we locate $\mathbb{P}_n$ as a factor of this joint law, $\mathbb{P}_n \equiv \mathbb{P}_n(\mathbb{P}_0)$. 

For a given asymptotic law $\mathbb{P}_{0} \in \mathcal{M}_{0}^U$, we consider the asymptotic properties of sequences of estimands indexed by $n$. 
To facilitate consideration of these sequences, we restrict the space $\mathcal{M}_{0}^U$ via the following conditions, where the primitive is the unrestricted space of joint causal systems $\mathcal{S}_0 \equiv \mathcal{S}_1\mathcal{S}_2\mathcal{S}_3\cdots$. We let $\mathbb{P}_n^{\epsilon_{L_i}}$ and $\mathbb{P}_n^{\epsilon_{Y_i \mid pa(y)}}$ denote the marginal law of $\epsilon_{L_i}$ and the conditional law of $\epsilon_{Y_i}$ given $Pa(Y_i)=pa(y)$ for a cluster of size $n$, respectively. 
\begin{itemize}
    \item [$D1.$] $\mathbb{P}_n \in \mathcal{M}_n^{AB}$, for all $n\in\mathbb{N}^+.$
    \item [$D2.$] $\{f_{L_i}, \mathbb{P}_n^{\epsilon_{L_i}}, f_{Y_i}, \mathbb{P}_n^{\epsilon_{Y_i \mid pa(y)}}\} \equiv \{f_{L_i}, \mathbb{P}_{n'}^{\epsilon_{L_i}}, f_{Y_i}, \mathbb{P}_{n'}^{\epsilon_{Y_i \mid pa(y)}}\}$ for all  $n, n' \in\mathbb{N}^+.$
\end{itemize}
Condition $D1$ implies that each marginal law $\mathbb{P}_n$ of $\mathbb{P}_0$ is induced by a causal system $\mathbb{S}_n$ that satisfies the conditions of $\mathcal{M}_n^{AB}$. Condition $D2$ states that the structural equations for $L_i$ and $Y_i$ are invariant, as are the marginal and conditional distributions of their error terms, respectively, across $\mathbb{S}_1, \mathbb{S}_2, \mathbb{S}_3,\dots$. 
A consequence of conditions  $D1$ and $D2$ is that, for a given $\mathbb{P}_0$,  the single functions $Q_L$ and $Q_Y$ will describe the marginal and conditional densities of $L_i$ and $Y_i$, respectively, for all $i$, $n$. We henceforth take limits with respect to such an asymptotic law $\mathbb{P}_0$.

\subsection{Defining and identifying large-cluster estimands}

We define an arbitrary sequence of CRs, which we denote by $\{G_1, G_2, G_3, \dots\}$. We denote this sequence by  $\mathbf{G}_{0}$, which we call an \textit{asymptotic regime}. We consider asymptotic regimes $\mathbf{G}_{0}$ that meets the following conditions:

\begin{itemize}
    \item [$E1.$] For each $n$, $G_n\in \Pi^{d}_n(\kappa_n, L_i, \Lambda)$.
    \item [$E2.$] For each $n$, $\kappa_n = \lfloor n \times \kappa^* \rfloor$, with $\kappa^* \in [0,1].$
\end{itemize}

An asymptotic regime following condition $E$ is a sequence of $L$-rank-preserving CRs with a fixed $\Lambda$, and resource constraints such that $\frac{\kappa_n}{n}$ converges to a fixed proportion $\kappa^*$. We discuss relaxations of condition $E1$ in Appendix \ref{appsec: gencdtrs}. A class of asymptotic regimes following these conditions is of interest because it is parsimoniously specified by two parameters, $\Lambda$ and $\kappa^*$, which in many settings will sufficiently characterize an investigator's regime of interest. All asymptotic regimes $\mathbf{G}_0$ following conditions $E1$ and $E2$ with respect to the same $\Lambda$ and $\kappa^*$ will be equivalent under an asymptotic law $\mathbb{P}_0$ following conditions $D$, in the sense that they share a common limit of an arbitrary individual's intervention density,  $q^{*}_{0}(a \mid l) \coloneqq \underset{n\to \infty}{\lim}  q^*_{i}(a, l)$ for all $i$.

We define the following \textit{large-cluster} g-formula density function $f^{\mathbf{G}_0}$, 
\begin{align}
      f^{\mathbf{G}_0}(o) \coloneqq &
        Q_{Y}(y\mid a,l) q^{*}_{0}(a \mid l)Q_L(l). \label{eq: largepopgformA}
\end{align}
The following theorem identifies the large-cluster expected average potential outcome $$\mathbb{E}[ \overline{Y}_0^{\mathbf{G}_0}] \coloneqq \underset{n\to \infty}{\lim}\mathbb{E}_{\mathbb{P}_n}[ \overline{Y}^{{G}_n}_n].$$

\begin{theorem} \label{theorem: YbarIDlarge}
    Consider an asymptotic law $\mathbb{P}_0$ and an asymptotic regime $\mathbf{G}_0$ following conditions $D$ and $E$, respectively. Then, $\mathbb{E}[\overline{Y}^{\mathbf{G}_0}_0] = \sum\limits_{o\in\mathcal{O}_i}yf^{\mathbf{G}_0}(o)$ whenever the right-hand side is well-defined. 
\end{theorem}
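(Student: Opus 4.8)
The plan is to reduce the large-cluster statement to the finite-cluster identification result of Proposition \ref{theorem: YbarID} and then pass to the limit, exploiting the invariances furnished by Condition $D$ and the convergence of the intervention density guaranteed by Condition $E$. The whole argument is essentially a limit-interchange wrapped around an already-proven finite-$n$ identity.

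First I would fix $n$ and observe that, by Condition $D1$, $\mathbb{P}_n \in \mathcal{M}^{AB}_n$, while Condition $E1$ gives $G_n \in \Pi^d_n(\kappa_n, L_i, \Lambda)$. Proposition \ref{theorem: YbarID} therefore applies and, using the factorization \eqref{eq: patlevgform} of the individual-level g-formula, yields
\begin{align*}
\mathbb{E}_{\mathbb{P}_n}[\overline{Y}^{G_n}_n] = \sum_{o\in\mathcal{O}_i} y\, f_i^{G_n}(o) = \sum_{l}\sum_{a} q^*_i(a\mid l)\, Q_L(l) \sum_{y} y\, Q_Y(y\mid a,l).
\end{align*}
Under the $L$-rank-preserving property, $q^*_i = q^*_j$ for all $i,j$, so the choice of index is immaterial, and by Conditions $D1$–$D2$ the single functions $Q_L$ and $Q_Y$ describe the relevant densities for every $i$ and every $n$. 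Writing $m(a,l) := \sum_{y} y\, Q_Y(y\mid a,l)$ for the conditional mean of $Y$, the display collapses to the finite sum $\sum_{a,l} q^*_i(a\mid l) Q_L(l)\, m(a,l)$ over the finite support $\{0,1\}\times\mathcal{L}$.

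Next I would take $n\to\infty$. The only factor carrying dependence on $n$ is the counterfactual propensity $q^*_i(a\mid l)$, which by the remark preceding the theorem converges to the common limit $q^*_0(a\mid l)$ for every $i$. Since the remaining summation ranges over the finite set $\{0,1\}\times\mathcal{L}$ and each $m(a,l)$ is fixed in $n$, the limit passes term-by-term through the sum, giving
\begin{align*}
\mathbb{E}[\overline{Y}^{\mathbf{G}_0}_0] = \lim_{n\to\infty}\sum_{a,l} q^*_i(a\mid l) Q_L(l)\, m(a,l) = \sum_{a,l} q^*_0(a\mid l) Q_L(l)\, m(a,l) = \sum_{o\in\mathcal{O}_i} y\, f^{\mathbf{G}_0}(o),
\end{align*}
where the last step re-expands $m(a,l)$ and recognizes the large-cluster g-formula \eqref{eq: largepopgformA}.

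The main obstacle is the interchange of limit and summation, which looks delicate only because the sum over $y$ is in general an integral over $\mathbb{R}$. The key simplification is that $q^*_i(a\mid l)$ is constant in $y$, so the $y$-integral isolates cleanly into $m(a,l)$ before the limit is taken; the clause ``whenever the right-hand side is well-defined'' guarantees $m(a,l)$ is finite, after which only a finite $(a,l)$-sum remains and the interchange is immediate, requiring no dominated-convergence machinery. The genuine analytic content therefore lies entirely in the convergence $q^*_i \to q^*_0$, which I would cite from the treatment of intervention densities in Appendix \ref{appsec: gencdtrs}; it is the stabilization of an individual's counterfactual treatment probability as the cluster grows, under the fixed $\Lambda$ and $\kappa_n/n \to \kappa^*$ of Condition $E$, rather than the limit bookkeeping, that underwrites the theorem.
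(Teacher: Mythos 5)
Your skeleton coincides with the paper's own proof route: fix $n$, invoke $D1$ and $E1$ so that Proposition \ref{theorem: YbarID} (in the paper, its generalization Proposition \ref{theorem: YbarIDrelax} together with Theorem \ref{theorem: CDTRID}) yields $\mathbb{E}_{\mathbb{P}_n}[\overline{Y}^{G_n}_n]=\sum_{a,l}q^*_i(a\mid l)Q_L(l)m(a,l)$; use $D2$ to hold $Q_L$ and $Q_Y$ fixed in $n$; and pass the limit through the sum. Your handling of the interchange is sound and slightly more explicit than the paper's: isolating the $y$-integral into $m(a,l)$ before taking limits, and noting that only a discrete $(a,l)$-sum remains, is exactly the right observation, and the well-definedness clause covers finiteness of $m(a,l)$.

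The gap is in the one step you declined to prove: the convergence $q^*_i(a\mid l)\to q^*_0(a\mid l)$. You propose to cite it from Appendix \ref{appsec: gencdtrs}, but that appendix contains only the finite-$n$, \emph{fixed-composition} closed form $q^*_{i,\mathbb{l}_n}$ (Proposition \ref{prop: intdens_Lrankpres}) and the mixture identity $q^*_i(1\mid l)=\sum_{\mathbb{l}_n}q^*_{i,\mathbb{l}_n}(1\mid l)\,\mathbb{Q}_{L,n\mid l}(\mathbb{l}_n)$; it contains no asymptotic statement. The remark in the main text preceding the theorem does assert a common limit $q^*_0$, but that assertion is substantiated only by the proof of this very theorem (proved jointly with Proposition \ref{lemma: largeint}), so leaning on it is circular. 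In the paper, this convergence \emph{is} the substance of the proof: using Lemma \ref{lemma: ciid} and condition $D$, $\mathbb{L}_n(l)/n\to Q_L(l)$, so the mixture over compositions concentrates on a sequence $\mathbb{l}^0_n$ with $\mathbb{l}^0_n/n$ near $Q_L$; then, from the closed form, the marginal rank threshold satisfies $\omega_{\mathbb{l}^0_n}\to\omega_0$ and the fractional allocation probability at the threshold group converges to $\big(\kappa^*-\mathbb{P}_0(\Lambda(L_i)>\omega_0)\big)/\mathbb{P}_0(\Lambda(L_i)=\omega_0)$, where condition $E2$ supplies $\kappa_n/n\to\kappa^*$. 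None of this is automatic: $q^*_{i,\mathbb{l}_n}$ depends on $\mathbb{l}_n$ discontinuously through $\omega_{\mathbb{l}_n}$, so one must actually argue that concentration of the composition forces convergence of the threshold and of the allocation fraction. As written, your proposal establishes the limit bookkeeping around the theorem but omits its genuine analytic content, which you correctly identify but then defer to a citation that does not exist.
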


As in Section \ref{sec: ID} we provide a suitable positivity condition for the identifying functional in Theorem \ref{theorem: YbarIDlarge} to be well-defined. Define ${q}_0(a\mid l) \coloneqq \underset{n\to \infty}{\lim} \overline{q}_n(a \mid l)$ with respect to an asymptotic law $\mathbb{P}_0$. Then, this functional will be well-defined when the following condition holds:

\begin{itemize}
    \item [$C0^*.$] \textbf{Weak large-cluster positivity} ${q}_0$ is well-defined and if $q^*_{0}(a \mid l)Q_L(l) > 0$ then ${q}_0(a \mid l)>0$, for all $a,l$.
\end{itemize}

\subsection{Identifying intervention densities}

As with the finite-cluster parameters, the equality of Theorem \ref{theorem: YbarIDlarge} does not by itself constitute an identification result because $q_0^*$ is by definition a counterfactual density. To identify $q_0^*$, we leverage the rank-based formulations of CRs in $\Pi^{d}_n(\kappa_n, L_i, \Lambda)$, where we let $\mathbb{P}_0(f(L_i))$ denote $\sum\limits_{\mathcal{L}}I(f(l))Q(l)$, with $f(l)$ a logical proposition.

\begin{proposition} \label{lemma: largeint}
Consider an asymptotic law $\mathbb{P}_0$ and an asymptotic regime $\mathbf{G}_0$ following conditions $D$ and $E$, respectively. Then, 
\begin{align}
         q^*_{0}(1 \mid l)  
             & =  \begin{cases}
                     \frac{\kappa^* -  \mathbb{P}_0(\Lambda(L_i) > \omega_{0})}{\mathbb{P}_0(\Lambda(L_i) = \omega_{0})}   & : \Lambda(l)= \omega_{0}, \\
                     I\Big(\Lambda(l) > \omega_{0}\Big) & : \text{otherwise},
                \end{cases} \label{eq: gstar_large}  
\end{align}
where $\omega_{0} \coloneqq \textup{inf}\bigg\{ c: \mathbb{P}_0(\Lambda(L_i)>c) \leq \kappa^* \bigg\}.$
\end{proposition}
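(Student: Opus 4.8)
The plan is to identify the limiting intervention density $q^*_0(1 \mid l)$ by characterizing, for each covariate value $l$, the limiting probability that an individual with $L_i = l$ receives treatment under an $L$-rank-preserving regime as $n \to \infty$. The key structural fact, from Definition \ref{def: LRPregimes}, is that treatment is allocated in decreasing order of the priority score $\Lambda(L_i)$: an individual is treated if and only if its rank (by $\Lambda$) is among the top $\kappa_n = \lfloor n \kappa^* \rfloor$ in the cluster, with ties at a common $\Lambda$-value broken randomly. So the entire problem reduces to understanding how the empirical distribution of $\Lambda$-scores in a size-$n$ cluster behaves asymptotically relative to the budget fraction $\kappa^*$.

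\textbf{Main steps.} First I would fix $l$ and partition the analysis according to the sign of $\Lambda(l) - \omega_0$, where $\omega_0 = \inf\{c : \mathbb{P}_0(\Lambda(L_i) > c) \le \kappa^*\}$ is the limiting cutoff. Under conditions $D1$--$D2$, the covariates $L_1, \dots, L_n$ are drawn so that each has marginal law $Q_L$ (by the conditional-iid property established earlier), hence by a law of large numbers the empirical proportion of individuals with $\Lambda(L_j) > c$ converges to $\mathbb{P}_0(\Lambda(L_i) > c)$ for each threshold $c$. Second, for the strict cases: if $\Lambda(l) > \omega_0$, then asymptotically the fraction of individuals strictly outranking such an individual, $\mathbb{P}_0(\Lambda(L_i) > \Lambda(l)) \le \mathbb{P}_0(\Lambda(L_i) > \omega_0)$, is strictly less than $\kappa^*$ (by definition of $\omega_0$ as an infimum), so such individuals are treated with limiting probability $1$; symmetrically, if $\Lambda(l) < \omega_0$, the number of strictly higher-priority individuals exhausts the budget before reaching $l$, giving limiting probability $0$. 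This yields the indicator $I(\Lambda(l) > \omega_0)$ in the second case of \eqref{eq: gstar_large}. Third, and most delicately, for the boundary case $\Lambda(l) = \omega_0$: here the budget $\kappa_n$ is split between the individuals strictly above the cutoff and a randomly chosen subset of the individuals exactly at the cutoff. The fraction above consumes $\mathbb{P}_0(\Lambda(L_i) > \omega_0)$ of the budget, leaving residual budget fraction $\kappa^* - \mathbb{P}_0(\Lambda(L_i) > \omega_0)$ to be distributed uniformly at random among the fraction $\mathbb{P}_0(\Lambda(L_i) = \omega_0)$ of tied individuals; property (2) of Definition \ref{def: LRPregimes} guarantees exchangeability within the tied group, so each tied individual's treatment probability converges to the ratio $\frac{\kappa^* - \mathbb{P}_0(\Lambda(L_i) > \omega_0)}{\mathbb{P}_0(\Lambda(L_i) = \omega_0)}$.

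\textbf{The hard part} will be making the boundary case rigorous, because it requires controlling the fluctuations of the \emph{count} of tied and above-cutoff individuals, not merely their limiting proportions. Specifically, I must show that the residual budget allocated to the tied group, expressed as a fraction, converges to the stated ratio; this involves an interchange of limit and expectation that I would justify by noting that the relevant empirical fractions are bounded in $[0,1]$ and converge almost surely (or in probability) by the law of large numbers, so dominated convergence applies to pass from the finite-$n$ allocation rule to its limit. A subtlety to flag is the correct treatment of the infimum defining $\omega_0$ versus $\eta$ in the earlier individualized display: one must confirm that $\mathbb{P}_0(\Lambda(L_i) > \omega_0) \le \kappa^*$ holds at the infimum (using right-continuity of $c \mapsto \mathbb{P}_0(\Lambda(L_i) > c)$ for the discrete $L$ assumed in the paper), ensuring the numerator of the ratio is nonnegative and the formula is well-posed. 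Given the conditional-iid structure already in hand from Section \ref{sec: ID}, the remaining work is essentially a careful order-statistics argument, and I expect the derivation of $q^*_0$ to follow the same template as the closed-form solution for $q^{\mathbf{opt}}$ reviewed in the introduction, with $\Lambda$ playing the role of $\Delta$ and $\kappa^*$ the role of $\kappa$.
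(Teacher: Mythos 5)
Your proposal is correct and follows essentially the same route as the paper's proof: the paper writes $q^*_i(1\mid l)$ as a mixture over compositions $\mathbb{l}_n$, invokes the exact finite-cluster closed form for $q^*_{i,\mathbb{l}_n}$ under $L$-rank-preserving regimes (strict-above/strict-below indicators plus the tie-group ratio, i.e.\ your three cases), and then passes to the limit along a typical sequence of compositions concentrating at $Q_L$ by the law of large numbers, with $\kappa_n/n \to \kappa^*$ from condition $E2$ — exactly your LLN-plus-bounded-convergence argument for the boundary case. Your flag about right-continuity of $c \mapsto \mathbb{P}_0(\Lambda(L_i) > c)$ ensuring the numerator is nonnegative is a sound observation consistent with the paper's discrete-covariate setting.
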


In Proposition \ref{lemma: largeint}, $\omega_{0}$ is interpretable as the almost-surely last rank group $\Lambda(l)$ treated in a large cluster under  $\mathbf{G}_0$, with respect to $\mathbb{P}_0$.
Theorem \ref{theorem: YbarIDlarge} and Proposition \ref{lemma: largeint} taken together thus identify $\mathbb{E}[\overline{Y}^{\mathbf{G}_0}_0]$ in terms of $Q_L$ and $Q_Y$. 

\subsection{Large-cluster optimal regimes}

Let $V_i=c(L_i)$ be some coarsening of $L_i$. Suppose we are interested in the optimal CR that is at most a function of $\mathbf{V}_n$, that is the regime $G_n: \{\mathcal{V}\}^n \times [0,1] \rightarrow \{0, 1\}^n$ that maximizes $\mathbb{E}[\overline{Y}_n^{G_n}]$. 
Then let $\Delta(l) \coloneqq \mathbb{E}[Y_i^{a=1} - Y_i^{a=0} \mid V_i=c(l)]$ be the conditional average effect of treatment for individual $i$. 
In a slight abuse of notation we also let $\Delta$ denote the equivalent function with domain $\mathcal{V}$. 

\begin{proposition}[Optimal CR] \label{lemma: optimallarge}
Consider a law $\mathbb{P}_n\in\mathcal{M}_n^{AB}$. The CR  $G^{\mathbf{opt}}_{n}\equiv G_n\in\Pi_n(\kappa_n)$ that maximizes $\mathbb{E}[\overline{Y}_n^{G_n}]$ is the $V$-rank-preserving CR $G_n\in\Pi_n^d(\kappa_n, V_i, \Lambda)$ characterized by $\Lambda \equiv \Delta$.

Consider an asymptotic law $\mathbb{P}_0$ following condition $D$, and an asymptotic regime $\mathbf{G}_0$ following condition $E$. If $G_{n^*}\in\mathbf{G}_0$ is such an optimal regime for $\mathbb{P}_{n^*}(\mathbb{P}_0)$ for some $n^*>1$ then $G_{n}\in\mathbf{G}_0$ is the optimal such regime for $\mathbb{P}_{n}(\mathbb{P}_0)$ for all $n$. Let $q^{\mathbf{opt}}_0$ denote the intervention density under the asymptotic regime $\mathbf{G}^{\mathbf{opt}}_{0} \equiv \Big(G^{\mathbf{opt}}_{1}, G^{\mathbf{opt}}_{2}, G^{\mathbf{opt}}_{3},\dots\Big)$. Then, $\mathbb{E}[\overline{Y}_0^{\mathbf{G}^{\mathbf{opt}}_{0}}]$ is identified as in Theorem \ref{theorem: YbarIDlarge} and its intervention density $q^{\mathbf{opt}}_0$ is identified as in Proposition \ref{lemma: largeint} where we take $\Lambda = \Delta_0\coloneqq \Delta$.

\end{proposition}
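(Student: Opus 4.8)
The plan is to prove the three claims of Proposition~\ref{lemma: optimallarge} in sequence, leveraging the finite-cluster and large-cluster identification machinery already established. For the first claim (finite-cluster optimality), I would start from Proposition~\ref{theorem: YbarID}, which gives $\mathbb{E}[\overline{Y}^{G_n}_n] = \sum_{o}yf_i^{G_n}(o)$ for any $L$-rank-preserving regime, where the only regime-dependent ingredient is the intervention density $q^*_i(a\mid l)$. Writing the objective out as $\sum_l Q_L(l)\big(q^*_i(1\mid l)\,\mathbb{E}[Y_i^{a=1}\mid V_i=c(l)] + q^*_i(0\mid l)\,\mathbb{E}[Y_i^{a=0}\mid V_i=c(l)]\big)$ and regrouping, the problem of maximizing expected average outcome subject to $\lVert G_n\rVert = \kappa_n$ reduces to allocating the $\kappa_n$ treatment units so as to maximize the total realized conditional effect $\Delta(l)$. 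This is exactly the discrete knapsack/fractional-allocation argument already invoked for the classical estimand via \citetMain{dantzig1957discrete}: the optimum treats, in order, the individuals with the largest $\Delta$ values. The key observation is that prioritizing by $\Delta$ is precisely a $V$-rank-preserving rule with $\Lambda \equiv \Delta$ (property (1) of Definition~\ref{def: LRPregimes}), and that averaging over clusters preserves this structure, so the optimizer lies in $\Pi_n^d(\kappa_n, V_i, \Lambda)$.

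For the second claim (scale-invariance of optimality across $n$), the argument is that the optimal rank-function $\Lambda \equiv \Delta$ does not depend on $n$: under conditions $D1$--$D2$, the functions $Q_L$ and $Q_Y$, and hence $\Delta$, are common to all $n$. Consequently the \emph{prioritization rule} that sorts by $\Delta$ is identical for every cluster size; only the cutoff $\kappa_n$ changes. Since an asymptotic regime $\mathbf{G}_0$ following condition $E$ is determined by a single $\Lambda$ shared across all $n$, if $G_{n^*}$ is the $\Delta$-sorting rule for one $n^*>1$, then every $G_n\in\mathbf{G}_0$ is also a $\Delta$-sorting rule, hence optimal for $\mathbb{P}_n(\mathbb{P}_0)$ by the first claim. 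I would state this as: the map from regimes to their induced $\Lambda$ is the relevant invariant, and condition $E1$ forces a fixed $\Lambda$ along the sequence.

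For the third claim (large-cluster identification), I would simply compose the first two claims with the earlier asymptotic results. Having established that $\mathbf{G}^{\mathbf{opt}}_0$ is an asymptotic regime following conditions $E1$ and $E2$ with $\Lambda = \Delta$ (a fixed function under $D$) and some $\kappa^*$, the hypotheses of Theorem~\ref{theorem: YbarIDlarge} and Proposition~\ref{lemma: largeint} are met verbatim. Thus $\mathbb{E}[\overline{Y}_0^{\mathbf{G}^{\mathbf{opt}}_0}]$ is identified by the large-cluster g-formula $\sum_{o}yf^{\mathbf{G}_0}(o)$, and its intervention density $q^{\mathbf{opt}}_0$ is given by Equation~\eqref{eq: gstar_large} with $\Lambda$ replaced by $\Delta_0 \coloneqq \Delta$. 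No new limiting computation is needed; the work is in verifying that the optimal regime satisfies the structural hypotheses $E$.

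I expect the main obstacle to be the first claim, specifically making the knapsack optimality argument rigorous over the \emph{distribution} of clusters rather than a single realized cluster. Because $\mathbb{L}_n$ is random, the optimal allocation must be optimal in expectation, and one must argue that a rule which is pointwise optimal for each realized composition $\mathbb{l}_n$ is optimal on average, and moreover that such a pointwise-optimal rule can be chosen to satisfy property~(2) of Definition~\ref{def: LRPregimes} (equal treatment probabilities within covariate strata). The subtlety is handling ties at the boundary rank group $\Lambda(l) = \omega_0$: the fractional/randomized assignment used to exactly exhaust $\kappa_n$ must be applied symmetrically across tied individuals to remain $L$-rank-preserving, mirroring the randomization in the definition of $q^{\mathbf{opt}}$ in the classical case. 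Establishing that this symmetric randomized tie-break both attains the optimum and stays within $\Pi_n^d$ is the crux; the remaining claims follow largely by bookkeeping and invocation of the prior results.
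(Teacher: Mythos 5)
Your proposal is correct and follows essentially the same route as the paper's own proof, which likewise reduces the finite-cluster problem to the fractional knapsack problem of Dantzig (1957) with rewards $\Delta(L_i)$, derives cross-$n$ optimality from the fact that conditions $D1$--$D2$ fix $Q_L$, $Q_Y$, and hence $\Delta$ across cluster sizes, and obtains the large-cluster identification by substituting $\Delta_0$ for $\Lambda$ in Theorem \ref{theorem: YbarIDlarge} and Proposition \ref{lemma: largeint}. The composition-averaging and symmetric tie-breaking subtleties you flag are sound elaborations of details the paper leaves implicit in its citation of the knapsack result, not a divergence in approach.
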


Proposition \ref{lemma: optimallarge} illustrates that the observed data parameter identifying $\mathbb{E}[\overline{Y}_0^{\mathbf{G}^{\mathbf{opt}}_{0}}]$ is isomorphic to that parameter studied by \citetMain{luedtke2016optimal}. We emphasize that Proposition \ref{lemma: optimallarge} considers the optimal regime in $\Pi_n(\kappa_n)$. In general, this CR will not necessarily be the optimal regime in the unrestricted class of $\kappa_n$-constrained CRs, $\Pi_n(\kappa_n)$, except in special cases, for example if we restrict model $\mathcal{M}^{AB}_n$ to only include laws for which $\Delta(l)$ is  positive for all $l$. 
We provide a straight-forward extension of Proposition \ref{lemma: optimallarge}, for the unrestricted class of regimes $\Pi^F_n(\kappa_n)$ in Appendix \ref{appsec: misc}.  
\section{Inference} \label{sec: estim}

\subsection{Asymptotics for inference on limited resource estimands}

In iid settings, estimators are defined as a function of $n$ independent observations from a common law $P$. Correspondingly, asymptotic properties of these estimators are defined with respect to sequences of such observations \citepMain{rosenblatt1956central}. In limited resource settings,  estimators are defined as a function of a \textit{single} draw from a some law $\mathbb{P}_n$. As in the empirical processe literature \citepMain{van2000asymptotic}, we consider asymptotic properties in the context of sequences of such laws $\mathbb{P}_n$, which we consolidate in terms of an asymptotic law $\mathbb{P}_0$.  
Let the sets of random variables $\{\tilde{Q}_{Y, n}, \tilde{q}_n, \tilde{Q}_{L,n}\}$ denote the (conditional) empirical mean analogues of $\{Q_Y, q_i, Q_L\}$, for example $\tilde{Q}_{Y, n}(y \mid a, l)  \coloneqq \frac{\mathbb{O}_n(y, a, l)}{\mathbb{B}_n(a, l)}$. 

Unusual challenges arise under non-iid models in the development of estimators with desirable asymptotic properties. For example, we argued in Theorem \ref{theorem: CDTRID} that many cluster causal parameters are identified  under $\mathcal{M}_n^{AB}$ and a suitable positivity condition. The identification functional is expressed entirely in terms of $Q_Y$ and $Q_L$, and investigator-known quantities. We show in Appendix \ref{appsec: proofs} that the sets $\{L_i\in\mathbf{L}_n\}$ and $\{Y_i\in\mathbf{Y}_n \mid  A_i=a, L_i=l\}$ are each correspondingly mutually independent and identically distributed under $\mathcal{M}_n^{AB}$. Thus, an apparently obvious estimator of $f^{G_n}_i(o)$, would consist of plug-in estimators of its component factors $\{Q_L, Q_Y\}$ using their single-cluster empirical mean analogues $\{\tilde{Q}_{Y, n}, \tilde{Q}_{L,n}\}$.  However, such an estimator will not in general be asymptotically consistent (in a sense subsequently defined). 

In Appendix \ref{appsec: estim} we give an example of an asymptotic law following conditions $D$ and weak cluster positivity $C0$, under which $\tilde{Q}_{Y, n}$ \textit{does not} converge to $Q_Y$. Heuristically, the observed asymptotic law  $\mathbb{P}_0$ may result in the convergence of $\overline{q}_n$, but, for each $n$, the causal system $\mathbb{S}_n$ generating $\mathbb{P}_n$ may be characterized by a regime $G_n$ equivalent to a distribution over deterministic regimes with mass concentrated at extreme allocation strategies. Under each such regime, $\tilde{q}_n(a\mid l)$ might be close to 0 or 1 for each $(a, l)$, and so, for any given realization at any cluster size $n$, $\Big|\overline{q}_n(a\mid l) - \tilde{q}_n(a\mid l)\Big|$ will remain large. A similar phenomenon is highlighted in \citetMain{lee2021network}, in their illustration of ``spurious associations'' in network data. Consider a toy data generating mechanism in which the vectors $\mathbf{L}_n$ and $\mathbf{Y}_n$ each take the values $(1,\dots, n)$ or $(n,\dots, 1)$ as determined by independent flips of fair coins. Thus $\mathbf{L}_n \independent \mathbf{Y}_n$, but in any given realization (i.e. an estimator computed on $\mathbb{O}_n$), it will appear as if $L_i$ and $Y_i$ are perfectly correlated (negatively or positively). 

\subsection{Asymptotic consistency for finite-cluster estimands}

Consider the following revised positivity condition for a given asymptotic law  $\mathbb{P}_0$ and regime $G_{n^*}$, where we let $\overline{q}_0\coloneqq \underset{n\to \infty}{\lim}  \tilde{q}_n$.

\begin{itemize}
    \item [C1.] \textbf{Strong finite-cluster positivity}  $\mathbb{P}_0$ is such that $\overline{q}_0$ is well-defined and, for all $a, l$, if $\overline{q}^*_{n^*}(a \mid l)Q_L(l) > 0$ then $\overline{q}_0(a \mid l)>0$, $\mathbb{P}_0-$almost surely.
\end{itemize}

To define non-parametric estimators of identifying parameters, we use that under $\mathcal{M}^{AB}_{n^*}$, $\{\mathbb{Q}_{L, n^*}, \mathbb{q}^*_{n^*}, \mathbb{Q}_{Y, n^*}\}$ may be entirely expressed in terms of $\{Q_L, Q_Y\}$ and known quantities. We emphasize that we have deliberately labeled these parameters with a fixed cluster size $n^*$, indicating the size defining a particular finite-cluster estimand. Likewise define the compositional random variables $\{\tilde{\mathbb{Q}}_{L, n^*}, \tilde{\mathbb{q}^*}_{n^*}, \tilde{\mathbb{Q}}_{Y, n^*}\}$ analogously by substituting $\{\tilde{Q}_L, \tilde{Q}_Y\}$ for $\{Q_L, Q_Y\}$ wherever they appear in the corresponding parameters. Accordingly, we construct empirical analogues of compositional and individual-level g-formula parameters, $\tilde{\mathbb{f}}^{G_{n^*}}$ and $\tilde{f}_i^{G_{n^*}}$. 
In Appendix \ref{appsec: estim} we also motivate estimators based on an inverse-probability weighted (IPW) formulation of the individual-level g-formula. In the following, we use the symbol $\boldsymbol{\lim}$ to denote limits in probability with respect to a measure $\mathbb{P}_0$, as $n\to\infty$.

\begin{theorem}[Asymptotic consistency] \label{theorem: asymptcons}
    Consider $\mathbb{P}_0$ following conditions $D$ and $G_{n^*}\in \Pi_{n^{\ast}}(\kappa_{n^{\ast}})$ following $C1$. Then,
    \begin{align}
        & \boldsymbol{\lim}\sum\limits_{\mathbb{o}_{n^{\ast}}}h(\mathbb{o}_{n^{\ast}})  \tilde{\mathbb{f}}^{G_{n^{\ast}}}(\mathbb{o}_{n^*}) = \mathbb{E}[ h(\mathbb{O}_{n^{\ast}}^{G_{n^{\ast}}+})], \\
        & \boldsymbol{\lim} \sum\limits_{o}h'(o)\tilde{f}_{i}^{G_{n^*}}(o) = \mathbb{E}[ h'(O_i^{G_{n^*}+})].
    \end{align}
\end{theorem}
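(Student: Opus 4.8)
The plan is to observe that each estimator is the very same explicit functional of $(Q_L,Q_Y)$ that defines its estimand, evaluated at the empirical analogues $(\tilde Q_{L,n},\tilde Q_{Y,n})$ rather than at the truth, and then to obtain the result from continuity of that functional together with convergence of the analogues. By Theorem \ref{theorem: CDTRID} the right-hand sides equal $\sum_{\mathbb{o}_{n^*}}h(\mathbb{o}_{n^*})\mathbb{f}^{G_{n^*}}(\mathbb{o}_{n^*})$ and $\sum_{o}h'(o)f_i^{G_{n^*}}(o)$. Since, under $\mathcal M^{AB}_{n^*}$, the intervention densities $\mathbb{q}^*_{n^*}$ and $q^*_{n^*}$ and the compositional factors $\mathbb{Q}_{L,n^*},\mathbb{Q}_{Y,n^*}$ are fixed, known functions of $Q_L$, $Q_Y$ and the (known) regime, both g-formulae are the image of $(Q_L,Q_Y)$ under a single map $\Phi_{n^*}$, and the estimators are $\Phi_{n^*}(\tilde Q_{L,n},\tilde Q_{Y,n})$ by construction. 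Because $\mathcal L$ is finite and $n^*$ is fixed, $\Phi_{n^*}$ is a finite sum of products of its arguments, hence continuous at any point where the cell counts entering $\tilde Q_{Y,n}$ are bounded away from zero. The task thus reduces to showing $(\tilde Q_{L,n},\tilde Q_{Y,n})\to(Q_L,Q_Y)$ in probability under $\mathbb{P}_0$.

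For the covariate margin I would invoke the conditional-iid property proved in Appendix \ref{appsec: proofs}: under $\mathcal M^{AB}_n$ the $\{L_i\}_{i=1}^n$ are iid with law $Q_L$, and condition $D2$ ensures $Q_L$ does not vary with $n$. The weak law of large numbers then gives $\tilde Q_{L,n}(l)=\mathbb{L}_n(l)/n\to Q_L(l)$ in probability for each $l\in\mathcal L$.

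The outcome factor is the crux, and is where positivity $C1$ enters. Fix a cell $(a,l)$ that contributes to the g-formula, i.e.\ with $q^*_i(a\mid l)Q_L(l)>0$; this forces $\overline q^*_{n^*}(a\mid l)Q_L(l)>0$ because $\overline q^*_{n^*}(a\mid l)\ge q^*_i(a\mid l)/n^*$. First I would show the cell is populated without bound. Writing $\mathbb{B}_n(a,l)/n=\tilde q_n(a\mid l)\,\tilde Q_{L,n}(l)$ and using that $C1$ makes $\overline q_0(a\mid l)=\boldsymbol{\lim}\,\tilde q_n(a\mid l)$ well-defined and strictly positive, together with $\tilde Q_{L,n}(l)\to Q_L(l)>0$, I obtain $\mathbb{B}_n(a,l)\to\infty$ in probability. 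Next, conditional on the entire treatment--covariate configuration, the outcomes $\{Y_i:A_i=a,L_i=l\}$ are iid $Q_Y(\cdot\mid a,l)$ by the conditional-iid property; hence, conditional on $\mathbb{B}_n(a,l)=m\ge 1$, Chebyshev's inequality gives $\mathbb{P}_0\big(|\tilde Q_{Y,n}(y\mid a,l)-Q_Y(y\mid a,l)|>\epsilon\mid \mathbb{B}_n(a,l)=m\big)\le (4m\epsilon^2)^{-1}$. Averaging over the law of $\mathbb{B}_n(a,l)$ and using bounded convergence with $\mathbb{B}_n(a,l)\to\infty$ yields $\tilde Q_{Y,n}(\cdot\mid a,l)\to Q_Y(\cdot\mid a,l)$ in probability. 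For cells with $q^*(a\mid l)Q_L(l)=0$ no such control is needed: the outcome factor $\tilde Q_{Y,n}$ is bounded in $[0,1]$ while the corresponding intervention weight $\tilde q^*_{n^*}(a\mid l)$ (a continuous image of $\tilde Q_{L,n}$) tends to $0$, so these terms vanish regardless of the behavior of $\tilde Q_{Y,n}$ there.

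Finally I would assemble the pieces with the continuous mapping theorem: on the finitely many cells that carry nonvanishing weight, $(\tilde Q_{L,n},\tilde Q_{Y,n})\to(Q_L,Q_Y)$ in probability and the counts $\mathbb{B}_n(a,l)$ are eventually bounded away from zero, so $\Phi_{n^*}$ is continuous at the limit and $\Phi_{n^*}(\tilde Q_{L,n},\tilde Q_{Y,n})\to\Phi_{n^*}(Q_L,Q_Y)$ in probability, which are exactly the two claimed identities (the compositional and individual-level statements being handled uniformly, as both $\mathbb{f}^{G_{n^*}}$ and $f_i^{G_{n^*}}$ are continuous functionals of $(Q_L,Q_Y)$). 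I expect the main obstacle to be precisely the outcome-cell step: the number of summands averaged in $\tilde Q_{Y,n}(\cdot\mid a,l)$ is random and is generated by a treatment-assignment mechanism that is dependent across individuals, so no ordinary law of large numbers applies directly. This is the same pathology flagged before the theorem---where $\tilde q_n$ need not track $\overline q_n$---and it is why $C1$ is stated through $\overline q_0=\boldsymbol{\lim}\,\tilde q_n$ rather than through average propensities; conditioning on $\mathbb{B}_n$ is the device that turns the dependent allocation into a clean conditional-iid averaging argument.
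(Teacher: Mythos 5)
Your proposal is correct and takes essentially the same route as the paper's own proof: reduce the claim via continuity of the plug-in map (Slutsky/continuous mapping) to consistency of $(\tilde{Q}_{L,n},\tilde{Q}_{Y,n})$, obtain $\tilde{Q}_{L,n}\to Q_L$ from the conditional-iid lemma and the weak law of large numbers with $D2$ fixing $Q_L$ across $n$, and use $C1$ to force $\mathbb{B}_n(a,l)\to\infty$ in probability so that $\tilde{Q}_{Y,n}(\cdot\mid a,l)$, an empirical average of a random number of conditionally iid terms, converges on every cell with positive intervention weight. Your conditional-Chebyshev step (conditioning on $\mathbb{B}_n$, then averaging) and your explicit handling of zero-weight cells merely spell out details the paper asserts without proof; the structure of the argument is identical.
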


\subsection{Remark on the strong positivity condition}

In Appendix \ref{appsec: estim}, we give a stronger condition $C2$ (Sub-cluster positivity) on the treatment assignment mechanisms $\mathbf{f}_{0,\mathbf{A}}$ in $\mathbb{S}_0$ for a given asymptotic law $\mathbb{P}_0$, which, in conjunction with weak positivity condition $C0$ implies the strong cluster positivity condition $C1$, and which may recognizably capture observed data generating mechanisms in many settings. Heuristically, this condition involves an asymptotic law wherein, as $n$ grows, clusters may be partitioned into an increasing number of weakly dependent sub-clusters of bounded size, each with their own finite resource supply, along with mild regularity conditions on the regimes implemented within each sub-cluster. As such, this condition bears structural resemblance to the partial interference assumptions \citepSM{sobel2006randomized} more frequently studied in the statistical causal inference literature. Under sub-cluster positivity ($C2$) we further have that these empirical estimators will be regular and asymptotically linear. Under correctly specified parametric models for $Q_Y$ and $\overline{q}_0$, parametric g-formula or IPW estimators can be used for their corresponding parameters and usual bootstrap procedures for iid data can be used to obtain (1-$\alpha$)\% confidence intervals. 

\subsection{Remark on inference for large-cluster estimands}

Appendix \ref{appsec: estim} gives extensive consideration to large-cluster estimands.  Therein, we provide a strong-positivity condition analogous to $C1$. However, this condition would not in general imply the existence of a consistent plug-in estimator of the large-cluster g-formula, even under the conditions of Theorem \ref{theorem: YbarIDlarge}. This follows because $\omega_0$ is not in general a pathwise-differentiable parameter.  Pathwise derivatives are important building blocks in the theory of regular estimation \citepMain{bickel1993efficient} and pathwise differentiability is a necessary condition for the existence of locally asymptotically unbiased or regular estimators \citepMain{hirano2012impossibility}.  We therefore provide minimal additional conditions to generalize the results of Theorem \ref{theorem: asymptcons} to large-cluster estimands. Importantly, we also leverage the results of Proposition \ref{lemma: optimallarge} and show that the procedures of \citetMain{luedtke2016optimal} will inherit analogous properties when applied to a single realization from $\mathbb{P}_n$: semiparametric efficiency and root-$n$ consistency when flexible data-adaptive algorithms are used to estimate nuisance parameters. We provide an extension of these results for the general asymptotic regime $\mathbf{G}_0$, which to our knowledge has not been considered, even heuristically, by the current literature. 

Nevertheless, the regularity of large-cluster estimands will in practice be violated whenever we consider CRs that are dynamic only with respect to a small number of discrete covariates. Such CRs will be of interest if lower dimensional coarsenings are more easily measured, or simpler rules are more easily implemented, for example in intensive care triage settings. In these cases we adapt the methods of \citetMain{luedtke2016statistical} to motivate inferential procedures, inspired by theory commonly used in online learning settings. We show that these procedures allow computation of root-$n$ consistent estimators and corresponding 95\% confidence interval.

\section{Application: Intensive Care Unit Admissions}
\label{sec: icu example}

Following \citetMain{harris2015delay}, we study the effect of prompt ICU admission on 90-day survival. We use resampled data from a cohort study of 13011 patients with deteriorating health referred for assessment for ICU admission in 48 UK National Health
Service (NHS) hospitals in 2010-2011. The data contain detailed covariate information including demographic variables (age, gender), patient clinical features, including the ICNARC physiology score, the NHS NEWS score,  SOFA score, septic diagnosis, and peri-arrest, as well as contextual factors including indicators for admission during a winter month, and admission out of hours (7 p.m.–7 a.m.). 

ICU bed availability is a limited resource; there was an available ICU bed for only ~27.3\% of the patients in the data upon their arrival to the hospital. Patients arriving concurrently in a given time window will thus necessarily compete for available beds. Specialist critical care outreach teams or single ICU physicians \citepMain{keele2020stronger} will decide on a given patient's prompt ICU admission on the basis of not only that patient's own clinical features, but of those for all such patients with indications for prompt admission. On the basis of these facts, an appropriate model for the data could be represented by the DAG in Figure \ref{fig: CFRLM} thus contradicting a conventional iid model, for example assumed by \citetMain{kennedy2019survivor} and \citetMain{mcclean2022nonparametric} who analyze these same data.

We consider the following regimes: 1) $\mathbf{G}_{0}^{\mathbf{opt}_1}$,  the optimal $V$-rank-preserving CR for a large cluster, where we take $V_i\equiv L_i$, the complete set of covariates measured for each patient in the observed data; 2) $\mathbf{G}_{0}^{\Lambda_1}$, a candidate $V$-rank-preserving CR for a large cluster, where we take $V_i$ equivalent to a patient's clinical risk prediction (i.e., a simple average of their ICNARC, NEWS, and SOFA scores), and a rank function $\Lambda_1$ that prioritizes patients with higher such scores; 3) $\mathbf{G}_{0}^{\mathbf{opt}_2}$,  the optimal regime in a large cluster, where we take $V_i$ equivalent to a 6-level categorization of a patient's clinical risk prediction; 4) $\mathbf{G}_{0}^{\Lambda_2}$, a candidate $V$-rank-preserving CR for a large cluster, inspired by a proposal to prioritize ventilator allocation during a hypothetical influenza pandemic \citepMain{nyc2015ventilator} using a rank function $\Lambda_2$ based on a patients trichotomized SOFA-score; and 5) ${G}_{n^*=20}^{\Lambda_2}$, that rule for a finite cluster with $n=20$ patients. Note that $\{\mathbf{G}_{0}^{\mathbf{opt}_1}, \mathbf{G}_{0}^{\Lambda_1}, \mathbf{G}_{0}^{\mathbf{opt}_2}, \mathbf{G}_{0}^{\Lambda_2}\}$ are asymptotic regimes implemented in large clusters, whereas ${G}_{n^*=20}^{\Lambda_2}$ is a regime implemented in a finite cluster;  $\{\mathbf{G}_{0}^{\mathbf{opt}_1}, \mathbf{G}_{0}^{\mathbf{opt}_2}\}$ are unknown optimal regimes, whereas $\{ \mathbf{G}_{0}^{\Lambda_1},  \mathbf{G}_{0}^{\Lambda_2}, {G}_{n^*=20}^{\Lambda_2}\}$ are known candidate regimes based on pre-specified rank functions, $\Lambda_1$ and $\Lambda_2$; and $\{\mathbf{G}_{0}^{\mathbf{opt}_1}, \mathbf{G}_{0}^{\Lambda_1}\}$ are $V$-rank-preserving regimes where $V$ contains at least one continuous covariate, whereas  $\{\mathbf{G}_{0}^{\mathbf{opt}_2}, \mathbf{G}_{0}^{\Lambda_2}, {G}_{n^*=20}^{\Lambda_2}\}$ are regimes where $V$ is a low dimensional covariate with few categories.

For each such regime, we target the expected cluster average outcome for the corresponding large- or finite-cluster under a range of resource limitations. We also target additional finite-cluster parameters under regime ${G}_{n^*}^{\Lambda_2}$, identified by the compositional $g$-formula \eqref{eq: compgform}, including $\mathbb{P}_{n^*}\Big(\mathbb{Y}_{n^*}^{G^{\Lambda_2}_{n^*}}(1)=x\Big)$, i.e. the probability that exactly $x$ of the 20 individuals in the cluster  are alive at 90-days, for each $x\in\{0,\dots,20\}$. 

For inference on large-cluster parameters under $\{\mathbf{G}_{0}^{\mathbf{opt}_1}, \mathbf{G}_{0}^{\Lambda_1}\}$, we implement a semiparametric efficient approach adapted from \citetMain{luedtke2016optimal}, and we compute asymptotically-valid 95\% confidence intervals based on the empirical variance of the estimated efficient influence function. Alternatively, when $V$ has few categories, $\{\mathbf{G}_{0}^{\mathbf{opt}_2}, \mathbf{G}_{0}^{\Lambda_2}\}$, we adopt the online-learning approach adapted from \citetMain{luedtke2016statistical}. For these procedures, asymptotic normality of the resulting estimators is motivated by a Martingale-based central limit theorem for triangular arrays \citepMain{ganssler1978central}. 
Finally, for inference on finite-cluster parameters under ${G}_{n^*}^{\Lambda_2}$, we compute plug-in estimators of the compositional g-formula, and an IPW estimator of the individual-level g-formula, where nuisance parameters are estimated using flexible parametric models with known parametric rates. For these parameters, 95\% confidence intervals are computed based on 500 nonparametric bootstrap samples. We provide additional details of analyses in Appendix \ref{appsec: misc}. Code is available at \href{https://github.com/AaronSarvet/OptimalLimited}{GitHub}.

\begin{figure}[ht]
    \centering
    \adjustbox{trim={.0 \width} {.01\height} {0\width} {.02\height},clip}%
{
    \begin{tikzpicture}[scale=1]
        \node[anchor=north, scale=0.5] at (0,0){%
            \pgfimage{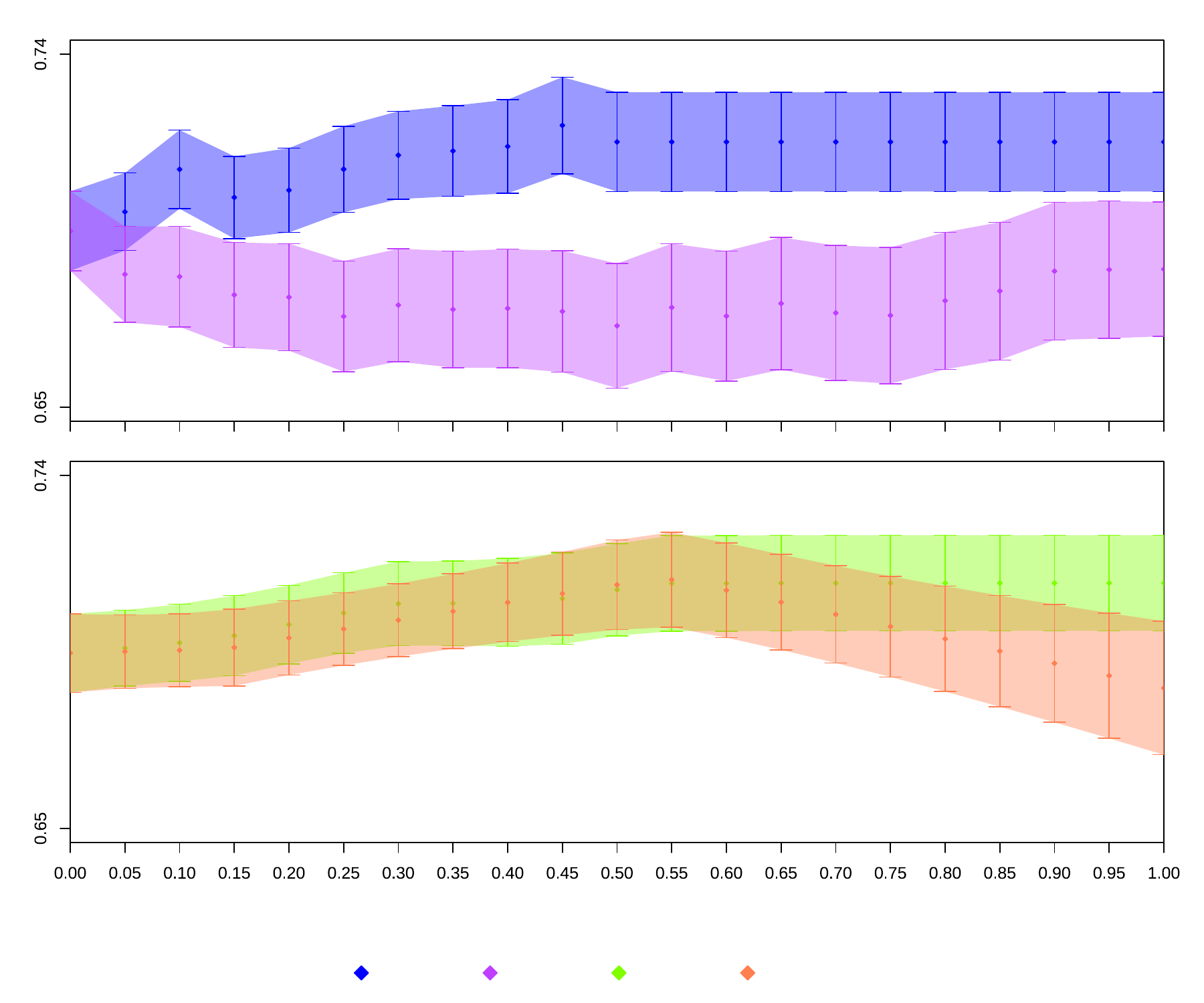}};
        \node[anchor=north] at (0.5,-11.3){$\kappa^*$};
        \node[anchor=north west] at (-3,-12.1){\tiny $\mathbf{G}_{0}^{\mathbf{opt}_1}$};
        \node[anchor=north west] at (-1.4,-12.1){\tiny $\mathbf{G}_{0}^{\Lambda_1}$};
        \node[anchor=north west] at (0.25,-12.1){\tiny $\mathbf{G}_{0}^{\mathbf{opt}_2}$};  
        \node[anchor=north west] at (1.85,-12.1){\tiny $\mathbf{G}_{0}^{\Lambda_2}$};

        \node[anchor=north west] at (-9,-5){\tiny $\widetilde{\mathbb{E}}\Big[\overline{Y}^{\mathbf{G}_0}_0\Big]$}; 
\end{tikzpicture}
}
\caption{Estimated values  (with 95\% CIs) of the expected proportion of individuals surviving to 90-days in a large cluster, under various prioritization regimes for prompt ICU admission.}
\label{fig: LargeComp}
    \end{figure}

\begin{figure}[ht]
    \centering
    \adjustbox{trim={.0 \width} {.11\height} {0\width} {.0\height},clip}%
  {
    \begin{tikzpicture}[scale=1]
        \node[anchor=north, scale=.55] at (0,0){%
            \pgfimage{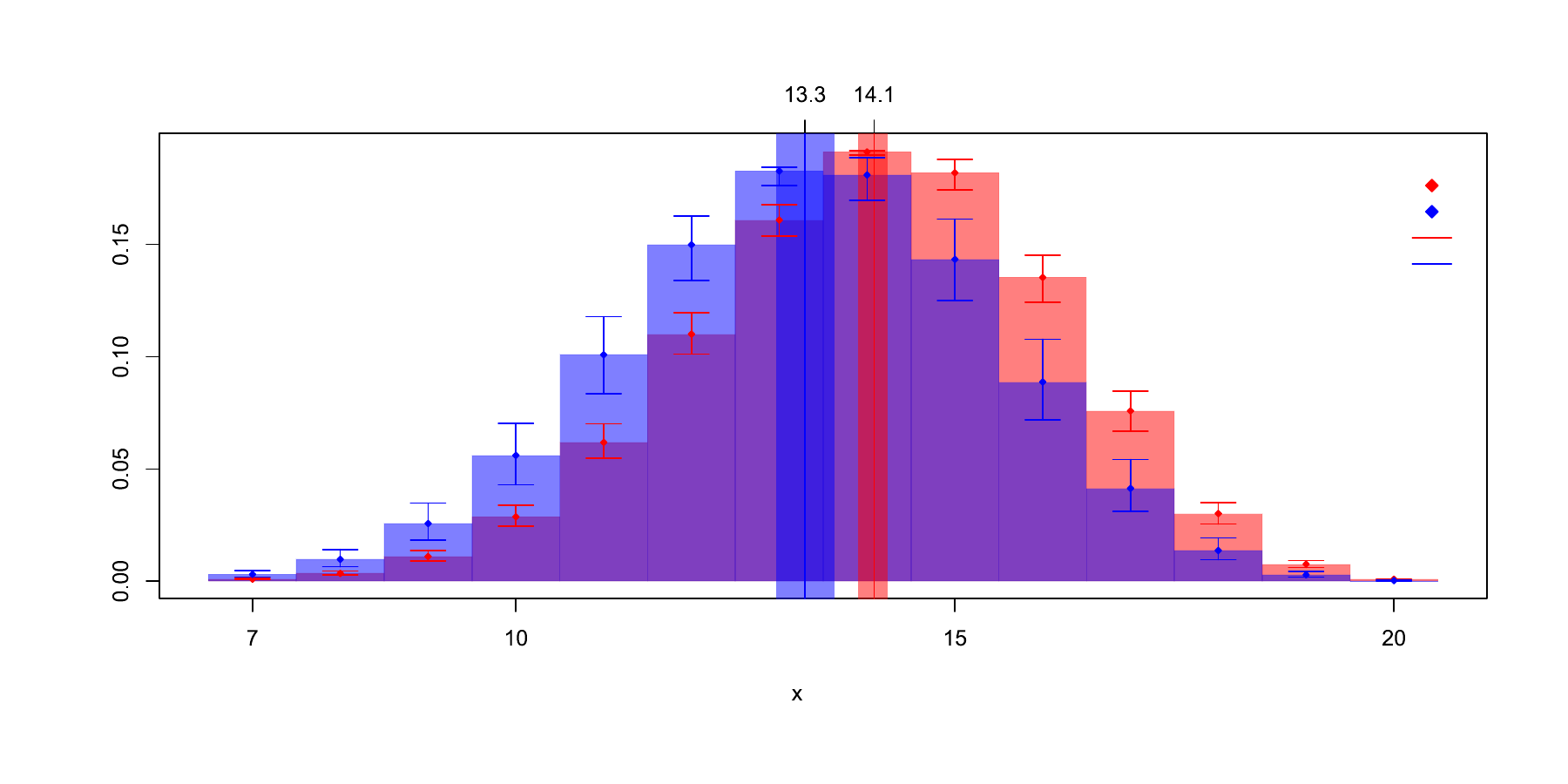}};
        \node[anchor=north west, scale=0.4] at (5.85,-1.95){$\kappa_{n^*}=5$};
        \node[anchor=north west, scale=0.4] at (5.85,-2.2){$\kappa_{n^*}=20$};
        
        \node[anchor=north west, scale=0.4] at (5.85,-2.5){$\kappa_{n^*}=5$};
        \node[anchor=north west, scale=0.4] at (5.85,-2.75){$\kappa_{n^*}=20$};

        \node [scale=0.6] at (0.5,-0.6){$\widetilde{\mathbb{E}}\Big[\overline{Y}^{G^{\Lambda_2}_{n^*}}(1)\Big]$};
        \node[scale=0.6] at (-8.5,-4){$\widetilde{\mathbb{P}}_{n^*}\Big(\mathbb{Y}_{n^*}^{G^{\Lambda_2}_{n^*}}(1)=x\Big)$};
\end{tikzpicture}}
\caption{Estimated probabilities (with 95\% CIs) that exactly $x$ individuals survive to 90-days
. Vertical lines provide the expected number of patients surviving to 90-days.}
\label{fig: finite}
    \end{figure} 

Figure \ref{fig: LargeComp} presents expected large-cluster-average 90-day survival under several large-cluster regimes. The top panel presents results using the semiparametric efficient estimators adapted from \citet{luedtke2016optimal}. Under $\mathbf{G}_{0}^{\mathbf{opt}_1}$, cluster survival is apparently maximized when at least approximately 50\% of patients can be treated, corresponding to the estimated proportion of patients with estimated positive CATEs (conditional on $V_i$) for prompt ICU admission. We illustrate the relationship between $\kappa^*$, CATEs, and the expected proportion of treated individuals in this large cluster under $\mathbf{G}_{0}^{\mathbf{opt}_1}$ in Figures \ref{fig: OptFullGridComp} and \ref{fig: OptTightGrid} in Appendix \ref{appsec: misc}. Under the candidate regime $\mathbf{G}_{0}^{\Lambda_1}$, which simply admits individuals to the ICU according to their predicted clinical risk scores, cluster survival is consistently lower under realistic resource limitations, suggesting that these risk scores may not optimally order individuals in terms of their expected benefit of prompt ICU admission, nor indicate when such admission may be harmful. The bottom panel presents results using the online-learning-based estimators adapted from \citet{luedtke2016statistical}. Results suggest similar survival between $\mathbf{G}_{0}^{\mathbf{opt}_2}$ and $\mathbf{G}_{0}^{\Lambda_2}$ up to a resource limitation of approximately 60\%. 

Figure \ref{fig: finite} presents $\mathbb{P}_{n^*}\Big(\mathbb{Y}_{n^*}^{G^{\Lambda_2}_{n^*}}(1)=x\Big)$ via histograms for settings with 5 ICU beds (red) vs. 20 ICU beds (blue), for a hypothetical cluster of size $20$. When only 5 ICU beds are available, 14.1 of the 20 patients (95\% CI: 13.9, 14.2) are expected to survive to 90 days, compared to 13.3 (13.0, 13.6) when all individuals are provided an ICU bed. Analogous results for all possible resource limitations,  $\kappa_n=0,\dots,20$, are presented in Table \ref{tab: CountTable} in Appendix \ref{appsec: misc}. Apparently worse survival among the all-treated group mirrors results for analogous parameters in a large cluster under the prioritization mechanism $\Lambda_2$, presented in the bottom panel of Figure \ref{fig: LargeComp}. Numerical results for Figure \ref{fig: LargeComp} are tabulated in Appendix \ref{appsec: misc}.

\section{Future directions}

The results in the main text and appendices are formulated generally. They serve as a foundation for future research on dynamic and optimal regimes in clustered data. The longitudinal setting will be an important extension, as many resource-limited decisions are necessarily formulated as time-varying dynamic regimes. For example, patients often have to wait for resources, like ventilators and organ transplants, to become available. Furthermore, in many healthcare settings with a fixed set of treatment units, rules for terminating the treatment (or discharge) are important. Resources are often limited in health-care crises, such as pandemics. For these settings, it will be important to clarify how to validly and efficiently fuse data from crisis and pre-crisis contexts, so as to rapidly and reliably provide statistical support for decisions during that crisis. Finally, the results should be extended for more complex interference and network structures.


\bibliographystyleMain{unsrtnat-init}
\bibliographyMain{Limited1.bib}

\clearpage

\setcounter{page}{1}
\begin{appendices}
\setcounter{table}{0}
\renewcommand{\thetable}{S\arabic{table}}
\setcounter{figure}{0}
\renewcommand{\thefigure}{S\arabic{figure}}
\setcounter{equation}{0}
\renewcommand{\theequation}{S\arabic{equation}}
\setcounter{definition}{0}
\renewcommand{\thedefinition}{S\arabic{definition}}
\setcounter{proposition}{0}
\renewcommand{\theproposition}{S\arabic{proposition}}
\setcounter{corollary}{0}
\renewcommand{\thecorollary}{S\arabic{corollary}}
\setcounter{lemma}{0}
\renewcommand{\thelemma}{S\arabic{lemma}}
\setcounter{theorem}{0}
\renewcommand{\thetheorem}{S\arabic{theorem}}
\appendixpage

\DoToC
\clearpage

\addtocontents{toc}{\protect\setcounter{tocdepth}{1}}
\section{Mapping the space of cluster dynamic regimes}
\label{appsec: gencdtrs}

The space of individualized dynamic regimes (IR) has expanded the last decades. First, average treatment effects (ATE) that were once restricted to contrasts of static interventions, for example, $\mathbb{E}_{P^F}[Y^a - Y^{a'}]$, were generalized to regimes that depended on covariates alone, $\mathbb{E}_{P^F}[Y^{g(L)} - Y^{g'(L)}]$, randomizing terms, $\mathbb{E}_{P^F}[Y^{g(L, \delta)} - Y^{g'(L,\delta)}]$, and even the natural value of treatment $\mathbb{E}_{P^F}[Y^{g(L, \delta, A)} - Y^{g'(L,\delta, A)}]$. These classification schemes were instrumental in developing a general theory of identification of ATEs under IRs, and clarified when conditions could be relaxed for sub-classes of regimes. For example, in time-varying settings sufficient and necessary exchangeability conditions for g-formula identification of expected potential outcomes under regimes $g(\emptyset)$, $g^*(\overline{L}_k^{g^*})$ and $g^{\dagger}(\overline{L}_k^{g^{\dagger}}, \overline{A}_k^{g^{\dagger}})$ are progressively stronger. See \citetSM{richardson2013single} for a review.  Elaborating the space of these regimes also equips investigators with an expanded set of tools to interrogate their substantive questions of interest. 

In contrast, very little work has been done to map the space of cluster dynamic regimes (CRs). A small body of work, from the perspective of design-based inference, elaborates on a space of stochastic regimes $G_n(\delta)$, subsuming concepts such as static cluster regimes, Bernoulli designs, complete randomization designs, paired-randomization designs, and arbitrary randomization designs, see \citetSM{savje2021average} for a review. \citetSM{van2014causal} and \citetSM{ogburn2020causal} consider dynamic treatment regimes in a network settings. However, their regimes are individualized in the sense that each individual's treatment is restricted to be a function of at most a randomizer, their own covariates, and the covariates of other individuals for whom they have ties, as encoded by a fixed adjacency matrix. In this sense, \citetSM{van2014causal} and \citetSM{ogburn2020causal} only consider regime $G_n$ for which each individual's treatment is restricted to be independent of any other individual's treatment conditional on their own covariates and those of their connections, and furthermore these individualized regimes are restricted to be the same function across all individuals in the cluster (``network'').

In this appendix we begin a general mapping of the space of CRs, with particular focus on constrained CRs, as these are the topic of this paper. 
In Section \ref{appsubsec: gencdtrs} we elaborate on the generalized class of constrained CRs in which at most (but possibly less than) $\kappa_n$ individuals can be assigned treatment. We provide a rank-and-treat formulation of this class of regimes and extend results relating stochastic and deterministic IRs to this class of CRs.
In Section \ref{appsubsec: lequit}, we introduce a restricted class of regimes defined by the property of $L$-equity. This property will be of interest to policy makers whenever non-discrimination on the basis of unmeasured features (for example, individual index $i$) is a pre-requisite for an equitable policy. We locate $L$-rank-preserving CRs (Definition \ref{def: LRPregimes}) as a subclass of these regimes, and importantly one in which the optimal $L$-equitable regime is contained for laws $\mathbb{P}_n^{F}$ under $\mathcal{M}^{AB}_n$. We also introduce subclasses of CRs intermediate between $L$-equitable regimes and $L$-rank-preserving regimes, which are useful for didactic purposes, but may be of substantive interest in their own right.
In Section \ref{appsubsec: intdens} we provide a general result for identifying the intervention densities $q^*_i$ and $\mathbb{q}^*_n$ comprising the individual-level and compositional g-formulae of expressions \eqref{eq: patlevgform} and \eqref{eq: compgform}, and illustrate simplified identification results for CRs in restricted classes.
In Section \ref{appsubsec: computation} we provide general identification results for compositional and individual level parameters defined by CRs that are a function of a subset of covariates. These results can be used to motivate estimators with significant computational advantages over those suggested by the functionals presented in the main text. 
In Section \ref{apsubsec: subclus} we posit the existence of an unmeasured variable indicating sub-cluster membership, and develop a class of sub-cluster regimes. In particular, we do so in service of articulating a class of asymptotic regimes that imply conditions sufficient for the asymptotic consistency result of Theorem \ref{theorem: asymptcons}, which we develop in Appendix \ref{appsec: estim}. Figure \ref{fig: classes} illustrates relations between the classes of CRs defined in this manuscript.

\subsection{Generalized constrained CRs} \label{appsubsec: gencdtrs}

In the main text, we focus results on regimes in the policy class $\Pi_n(\kappa_n)$ under which exactly $\kappa_n$ units are assigned treatment. In Section \ref{sec: elab} we illustrate alternative rank-and-treat formulations for these regimes, where a total rank ordering is assigned according to $G^*_{n, \mathbf{R}}$ and then treatment for a individual $i$ is assigned if their rank ordering is less than or equal to the number of treatment units $\kappa_n$. These regimes may be of interest particularly in settings where $\Delta_{P^F}(l)>0$ for all $l$, which will often be the case for target populations where treatment resources are scarce, for example a waiting list for transplants, and an ICU subpopulation with clinical indications for mechanical ventilation. However, the unknown treatment regime $\mathbf{f}_{n, \mathbf{A}}$ implemented in the observed data is unlikely to be equivalent to a regime $G_n \in \Pi_n(\kappa_n)$, because in many observed data settings, individuals or clinicians will possibly reject a treatment, even when available, and the number of treatment units actually used by the cluster may be less than the number of treatment units available. As examples: a triage system may stock-pile a random proportion of their treatment units; an individual clinician may subvert an official triaging policy in anticipation of other individuals that they would perceive to be in greater need; and individuals may reject an available treatment unit due to beliefs about inferiority to momentarily unavailable treatment alternatives, religious beliefs, or other concerns. 

The space of regimes $\Pi^F_n(\kappa_n)$ is especially important in developing realistic conditions on the components $\mathbf{f}_{0, \mathbf{A}}$ describing the observed treatment assignment mechanisms in an asymptotic law $\mathbb{P}_0$. We show in Appendix \ref{appsec: estim} that such conditions are generally necessary for establishing basic asymptotic properties of estimators (for example, asymptotic consistency), and we leverage conditions on such regimes to establish stronger asymptotic estimation results. 

\subsubsection{Rank-based formulations}\label{appsubsubsec: rankgen}

In contrast to regimes $G_n\in \Pi_n(\kappa_n)$, an arbitrary limited resource regime $G_n^*\in \Pi^F_n(\kappa_n)$ would have the following rank-and-treat formulation, where $\{\delta_{\mathbf{R}}, \delta_{A_1}, \dots, \delta_{A_n}\}$ are randomizer terms with some investigator-specified joint distribution. Let $\mathfrak{A}_i$ be the random variable denoting the subset of indices $\{1,\dots,n\}$ of those individuals who are offered treatment before individual $i$. 

\begin{align}
    R_i^{G_n^*+} \coloneqq & G^*_{n, R_i}(\mathbf{L}_n, \delta_{\mathbf{R}}) \label{eq: RplusF}\\
    A_{i}^{G^*_{n}+} \coloneqq & G^*_{n, A_i}(\mathbf{R}_n^{G_n^*+}, \mathbf{L}_n, \{A_{j}^{G^*_{n}+}: j\in\mathfrak{A}_i^{G^*_n+}\}, \delta_{A_i}) \nonumber \\ 
    = &  g_i\Big(\mathbf{L}_n,  \{A_{j}^{G^*_{n}+}: j\in\mathfrak{A}_i^{G^*_n+}\}, \delta_{\mathbf{R}}, \delta_{A_i}\Big) \times I\Big(\kappa_{n}>\sum\limits_{\mathfrak{A}_i^{G^*_n+}}A_{j}^{G^*_{n}+}\Big)
    .  \label{eq: AplusF}
\end{align}

A rank ordering, in which available treatments are offered, is determined by $G^*_{n, R_i}$, and subsequently in this order, each individual $i$ is assigned treatment according to the individual rule $g_i$ that may vary across individuals, until the resource supply is exhausted. $g_i$ represents each individual's \textit{individualized} dynamic regime, followed in the observed data, that determines their treatment \textit{if a unit is available} (i.e., $I\Big(\kappa_{n}>\sum\limits_{\mathfrak{A}_i^{G^*_n+}}A_{j}^{G^*_{n}+}\Big)=1$).

Notably, a generalized resource-limited regime admits the possibility of an assignment of strictly less than $\kappa_n$ treatment units, or even no treatment units at all, within a given realization of a cluster. Any IR $g$ in iid settings would have an analogue in this class, whereby $\kappa_n\geq n$, $g_i=g : (l, \delta_{A_i}) \mapsto g(l, \delta_{A_i})$ for all individuals, and $\{\delta_{A_1}, \dots, \delta_{A_n}\}$ are iid. In this sense, such regimes are \textit{rank-static}, but \textit{individually-dynamic}: a single set of individual ranks is assigned regardless of cluster configuration $\mathbf{L}_n$, but individuals make decisions when treatments are available according to a non-trivial function $g$. Alternatively, the general policy class $\Pi_n(\kappa_n)$ is the strict subset of $\Pi^F_n(\kappa_n)$ in which $g_i$ is the trivial function that maps to $1$ for all $i$. In this sense, regimes in $\Pi_n(\kappa_n)$ are \textit{rank-dynamic}, but \textit{individually-static}. 

An arbitrary regime in $\Pi^F_n(\kappa_n)$ is both \textit{rank-dynamic} and \textit{individually-dynamic}. Thus, we can understand this space as a product of the space of rank regimes $G^*_{n, \mathbf{R}}$ and the $n$-dimensional space of individualized treatment regimes $g_i$. More formally, let $\Pi$ represent the set of possible functions $g_i$, and $\Pi^n \equiv \{\Pi\}^n$. Then we see that there will exist a bijective mapping between $\Pi_n(\kappa_n)\times\Pi^n$ and $\Pi^F_n(\kappa_n)$ because elements $G_n\in\Pi_n(\kappa_n)$ are uniquely characterized by their rank function $G^*_{n, \mathbf{R}}$. 

These rank-based formulations of generalized constrained CRs in this manuscript will be important in future work for investigators interested in regimes that intervene on rank alone via  $G^*_{n, \mathbf{R}}$, or on two-stage regimes via $G^*_{n, \mathbf{R}}$ and then subsequently on individual's treatment acceptance (conditional on treatment availability) via $g_i$. 

\subsubsection{Stochastic and fully deterministic CRs}\label{appsubsubsec: stochanddet}

In this section we draw connections between generalized CRs in $\Pi^F_n(\kappa_n)$ and general stochastic IRs $g: (l, \delta) \mapsto g(l, \delta)$. General \textit{stochastic} IRs despite being deterministic functions of $L_i$ and $\delta$, are so-called because of their non-trivial dependence on the exogenous randomizer term $\delta$. Heuristically, they are often understood as two-stage regimes: first an individual will realize a value of $\delta\sim \text{Unif}(0,1)$, which determines a \textit{deterministic} IR $g^{D}: l \mapsto g^D(l)$ used to assign their treatment. This interpretation is apparent in the following formulation, where we let $\Pi^D$ be the set of all deterministic IRs $g^D$:

$$g(l, \delta) = \sum\limits_{\Pi^D}g^D(l)I\Big(\delta\in q^{\dagger}_{g}(g^D)\Big),$$ where $q^{\dagger}_{g}$ is a bijective function characteristic of $g$ that maps elements of $\Pi^D$ onto a partition of the unit interval. As such, we have the following simple identity for the intervention density given $l$:
$$P(A^{g+}=a \mid L=l) = \sum\limits_{\Pi^D}I\Big(a=g^D(l)\Big)q^*_g(g^D),$$
where $q^*_g$ is the function corresponding to $q^{\dagger}_{g}$ such that $q^*_g(g^D)$ evaluates to the width of the interval $q^{\dagger}_{g}(g^D)$. In the following, we extend this heuristic, and corresponding formulation, to the class of CRs $\Pi^F_n(\kappa_n)$. 

\begin{definition}[Fully-deterministic CRs]
The class of fully-deterministic CRs $\Pi^{D}_n(\kappa_n)$ is the subset of regimes $G_n$ in  $\Pi^F_n(\kappa_n)$ such that, for each $\mathbf{l}_n$,  $\mathbb{E}_{\delta}[I(G_n(\mathbf{l}_n, \delta) = \mathbf{a}_{\mathbf{l}_n})] = 1$  for some $\mathbf{a}_{\mathbf{l}_n}\in\{\mathbf{a}_n \mid \lVert \mathbf{a}_n \rVert \leq \kappa_n\}$.
\end{definition}

Fully-deterministic regimes thus are characterized by an assignment mechanism that maps a covariate vector $\mathbf{l}_n$ to a single treatment vector $\mathbf{a}_n$ and thus we omit the randomizer term $\delta$ as an argument.  

Analogous to IRs $g$, a given non-deterministic CR $G_n$, that is, any $G_n$ that depends non-trivially on $\delta$, can be expressed as a distribution over the set of deterministic CRs, that is 

$$G_n(\mathbf{l}_n, \delta) = \sum\limits_{\Pi^{D}_n(\kappa_n)}G_n^D(\mathbf{l}_n)I\big(\delta \in q^{\dagger}_{G_n}(G^D_n)\big),$$ where  $q^{\dagger}_{G_n}$ is a bijective function characterizing $G_n$ that maps elements of $\Pi^{D}_n(\kappa_n)$ onto increments forming a partition of the unit interval. As such, it follows that 

\begin{align*}
    \mathbb{P}^F_n(\mathbf{A}^{G_n+}_n=\mathbf{a}_n \mid \mathbf{L}_n=\mathbf{l}_n) 
    & = \sum\limits_{\Pi^{D}_n(\kappa_n)} I\big(\mathbf{a}_n = G^D_n(\mathbf{l}_n)\big) q^{*}_{G_n}(G^D_n),
\end{align*}
where $q^{*}_{G_n}$ is defined relative to $q^{\dagger}_{G_n}$ analogously to $q^{*}_{g}$.

\subsection{$L$-equitable CRs} \label{appsubsec: lequit}

In this section, we introduce a class of regimes under which a form of conditional stochastic exchangeability (in the de Finetti sense) is induced for individual treatment variables. We refer to such regimes as stochastic $L$-equitable CRs. We call such regimes $L$-equitable because of the impartiality with which they regard features of a cluster that depend on individual indices $i$, with respect to covariates $L_i$. Articulating such a class of regimes is important because it is defined by a feature that is arguably necessary for any regime understood as ``equitable'' or ``fair'' in the plain-language sense. Formally, such regimes are defined as follows:

\begin{definition}[Stochastic $L$-equitable CRs]

The class of stochastic $L$-equitable CRs $\Pi^*_n(\kappa_n, L_i)$ is the subset of regimes $G_n\in\Pi_n(\kappa_n)$ such that for each individual $i$, and for all laws $\mathbb{P}_n^F$, the following equality holds for each covariate composition $\mathbb{l}_n$ and covariate value $l$: $$\mathbb{P}_n^F(A^{G_n+}_i=1 \mid L_i=l, \mathbb{L}_n=\mathbb{l}_n) = p^{G_n}(l, \mathbb{l}_n)\in[0,1].$$ 
\end{definition}

In words, $\Pi^*_n(\kappa_n, L_i)$ is the class of CRs for which all individuals with the same covariate value $L_i=l$ in a cluster with composition $\mathbb{l}_n$ have the same probability of treatment assignment $p^{G_n}(l, \mathbb{l}_n)$, regardless of the specific instantiation of $\mathbf{L}_n$ or the law of the full data, $\mathbb{P}_n^F$. Under a stochastic $L$-equitable regime, the $n$ individuals' values of $L_i$ could be arbitrarily permuted without perturbing the probability that an arbitrary individual with covariate value $l$ would be assigned treatment. 

Whenever we consider an $L$-equitable regime, $q^*_{i, \mathbb{l}_n}$ will be fixed for all $i$. By consequence we have the following result generalizing Proposition \ref{theorem: YbarID} of Section \ref{sec: ID}.

 \begin{proposition} \label{theorem: YbarIDrelax}
    Consider a law $\mathbb{P}^F_n \in\mathcal{M}^{AB}_n$ and $G_n \in \Pi^*_n(\kappa_n, L_i)$. Then $\mathbb{E}_{\mathbb{P}^F_n}[ \overline{Y}^{{G}_n}_n] =  \sum\limits_{o\in\mathcal{O}_i}yf_{i}^{G_n}(o).$
\end{proposition}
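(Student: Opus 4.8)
The plan is to reduce the cluster-average identification to the individual-level identity already established in Theorem \ref{theorem: CDTRID}, and then to exploit the defining equity property of $\Pi^*_n(\kappa_n, L_i)$ to collapse the per-individual g-formulae into a single common formula. First I would use that $\overline{Y}^{G_n}_n = \frac{1}{n}\sum_{i} Y_i^{G_n}$ is linear in the individual counterfactual outcomes, so that $\mathbb{E}_{\mathbb{P}^F_n}[\overline{Y}^{G_n}_n] = \frac{1}{n}\sum_{i=1}^n \mathbb{E}[Y_i^{G_n}]$. Taking the coordinate functional $h'(o) = y$ in Theorem \ref{theorem: CDTRID}, for which $h'(O_i^{G_n+}) = Y_i^{G_n}$, gives $\mathbb{E}[Y_i^{G_n}] = \sum_{o} y\, f_i^{G_n}(o)$ whenever the right-hand side is well-defined, and hence $\mathbb{E}_{\mathbb{P}^F_n}[\overline{Y}^{G_n}_n] = \frac{1}{n}\sum_{i}\sum_{o} y\, f_i^{G_n}(o)$.

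The crux is then to show that $f_i^{G_n}$ does not depend on the index $i$. Since $f_i^{G_n}(o) = Q_Y(y\mid a,l)\, q^*_i(a\mid l)\, Q_L(l)$ and only the factor $q^*_i$ carries an index, it suffices to prove $q^*_i = q^*_j$ for all $i,j$. Conditioning on the covariate composition $\mathbb{L}_n$ and invoking the property of $\Pi^*_n(\kappa_n,L_i)$ that $\mathbb{P}^F_n(A^{G_n+}_i = 1 \mid L_i = l, \mathbb{L}_n = \mathbb{l}_n) = p^{G_n}(l,\mathbb{l}_n)$, I would write
$$q^*_i(1\mid l) = \sum_{\mathbb{l}_n} p^{G_n}(l,\mathbb{l}_n)\, \mathbb{P}_n(\mathbb{L}_n = \mathbb{l}_n \mid L_i = l).$$
The key observation is that under $\mathcal{M}^{AB}_n$ the covariates $\{L_j\}$ are mutually independent and identically distributed by Conditions \textit{A1}--\textit{A3}; writing each cell as $\mathbb{L}_n(l') = I(l'=l) + \sum_{j\neq i} I(L_j = l')$ conditional on $L_i = l$, the residual sum over $j \neq i$ is a function of $n-1$ iid draws from $Q_L$ whose law is identical for every $i$. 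Hence $\mathbb{P}_n(\mathbb{L}_n = \mathbb{l}_n \mid L_i = l)$ is free of $i$, so the displayed expression for $q^*_i(1\mid l)$ reduces to a common value $q^*(1\mid l)$, and $q^*_i(0\mid l) = 1 - q^*_i(1\mid l)$ inherits the same property.

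With $f_i^{G_n}$ thus constant in $i$, the average $\frac{1}{n}\sum_i \sum_o y\, f_i^{G_n}(o)$ collapses to $\sum_o y\, f_i^{G_n}(o)$ for any fixed $i$, which is exactly the claimed identity. I expect the main obstacle to be the rigorous justification that the conditional composition law $\mathbb{P}_n(\mathbb{L}_n = \mathbb{l}_n \mid L_i = l)$ is the same for every $i$; this exchangeability is precisely where the iid structure of the covariates under $\mathcal{M}^{AB}_n$ is indispensable, and it is the step that allows us to retain only the equity property $(2)$ while discarding the ranking property $(1)$ of the $L$-rank-preserving definition used in Proposition \ref{theorem: YbarID}.
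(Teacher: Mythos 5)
Your proof is correct and takes essentially the same route as the paper's: linearity of expectation, the individual-level identity of Theorem \ref{theorem: CDTRID} with $h'(o)=y$, and the $i$-invariance of $q^*_i$ under $\Pi^*_n(\kappa_n, L_i)$, which collapses the average. The paper's proof merely asserts that $q^*_j = q^*_i$ for all $j$; your explicit justification via $q^*_i(1\mid l)=\sum_{\mathbb{l}_n}p^{G_n}(l,\mathbb{l}_n)\,\mathbb{P}_n(\mathbb{L}_n=\mathbb{l}_n\mid L_i=l)$ together with the iid covariate structure is precisely the identity the paper records elsewhere (the decomposition in Appendix \ref{appsubsec: intdens} and the $i$-free multinomial form of $\mathbb{Q}_{L,n\mid l}$ in Corollary \ref{cor: compid}, resting on Lemma \ref{lemma: ciid}), so it fills in rather than departs from the intended argument.
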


\begin{proof}
\begin{align*}
    \mathbb{E}_{\mathbb{P}^F_n}[ \overline{Y}^{{G}_n}_n] = \frac{1}{n}\sum\limits_{j=1}^n\mathbb{E}_{\mathbb{P}^F_n}[ {Y}^{{G}_n}_j] = \frac{1}{n}\sum\limits_{j=1}^n\sum\limits_oyf_{j}^{G_n}(o) =\frac{1}{n}\sum\limits_{j=1}^n\sum\limits_oyf_{i}^{G_n}(o) = \sum\limits_oyf_{i}^{G_n}(o).
\end{align*}
The first equality follows by linearity of expectation function. The second follows from Theorem \ref{theorem: CDTRID}. The third follows because $q_j^*=q^*_i$ for all $j$ under regimes $G_n \in \Pi^*_n(\kappa_n, L_i)$. 
\end{proof}

Note that a complete randomization design is a special case of a stochastic $L$-equitable CRs in which $q^*_{i, \mathbb{l}_n}(1\mid l)=\frac{\kappa_n}{n}$ for all $l$, $\mathbb{l}_n$. In contrast, a Bernoulli design with probability  $q^*_{i, \mathbb{l}_n}(1 \mid l)=\frac{\kappa_n}{n}$ is not an $L$-equitable CR, because under such a design there would be a strictly positive probability that all individuals in the cluster receive treatment, $q^*_{i, \mathbb{l}_n}(1\mid l)$. This illustrates that the elements in $\mathbf{A}^{G_n+}_n$ are not in general independent under $L$-equitable CRs, as observed by \citetSM{savje2021average} in regards to complete randomization designs.

In the following subsection we locate the $L$-rank-preserving class of regimes $\Pi^d_n(\kappa_n, L_i, \Lambda)$ considered in the main text as a subclass of $\Pi^*_n(\kappa_n, L_i)$.  

\subsubsection{Deterministic, mixed-deterministic and $L$-rank-preserving CRs} \label{appsubsubsec: determ}

We first define a subclass of stochastic $L$-equitable regimes $\Pi^*_n(\kappa_n, L_i)$ closely-related to the  $L$-rank-preserving class of regimes $\Pi^d_n(\kappa_n, L_i, \Lambda)$ introduced in Section \ref{sec: elab}, which we refer to as deterministic CRs. 

\begin{definition}[Deterministic CRs]
The class of deterministic CRs $\Pi^{d*}_n(\kappa_n, L_i)$ is the subset of regimes $G_n$ in  $\Pi^{*}_n(\kappa_n, L_i)$ such that for each $\mathbb{l}_n$, there exists at most one $l'\in \mathcal{L}$ such that $p^{G_n}(l',\mathbb{l}_n)\not\in \{0,1\}$.
\end{definition}

Under such a regime, there will exist a set of functions $\Lambda_{\mathbb{l}_n}:  \mathcal{L}_{i} \rightarrow \mathbb{R}$ such that for each $\mathbb{l}_n$ $\Lambda_{\mathbb{l}_n}(L_{i}) > \Lambda_{\mathbb{l}_n}(L_{j}) \implies G^*_{n, R_i}(\mathbf{L}_n, \delta) > G^*_{n, R_j}(\mathbf{L}_n, \delta)$ for all  $i, j$, almost surely. Note these regimes  $\Pi^{d*}_n(\kappa_n, L_i)$ differ from those in $\Pi^{d}_n(\kappa_n, L_i, \Lambda)$ defined in Section \ref{sec: elab} in that coarsened rank functions $\Lambda_{\mathbb{l}_n}$ may vary with a cluster's composition $\mathbb{l}_n$, and are indexed as such. Thus $\Pi^{d}_n(\kappa_n, L_i, \Lambda) \subset \Pi^{d*}_n(\kappa_n, L_i)$. 

Conditional on $\mathbb{l}_n$, regimes in $\Pi^{d*}_n(\kappa_n, L_i)$ are static in the sense that the function $\Lambda_{\mathbb{l}_n}$ is fixed. We generalize $\Pi^{d*}_n(\kappa_n, L_i)$ to a class of regimes that is stochastic with respect to $\Lambda_{\mathbb{l}_n}$. Heuristically, conditional on $\mathbb{l}_n$, these regimes would be implemented by first selecting a coarsened rank function $\Lambda_{\mathbb{l}_n}$ using a pre-specified distribution over a suitable set of such functions, and then implementing the corresponding regime in $\Pi^{d*}_n(\kappa_n, L_i)$. First, we define some additional notation. Let $\mathcal{L}_{\Lambda}$ be the set of all coarsened rank functions, grouped by rank orderings of some partitioning of $\mathcal{L}$. For example, if $\mathcal{L} \equiv \{1,2,3\}$, $\mathcal{L}_{\Lambda}$ will have groups of coarsened rank functions defined by the following orderings:

\begin{align*}
    \begin{Bmatrix}
        (1,2,3),  & (2,1,3), & (3,2,1),  \\
        (1,3,2), & (2,3, 1), &(3,1,2), \\
        (\{1,2\}, 3), & (\{1,3\}, 2), & (\{2,3\}, 1), \\
        (3, \{1,2\}), & (2, \{1,3\}), & (1, \{2,3\}), \\
       &  (\{1, 2,3\}) &
    \end{Bmatrix}.
\end{align*}

The first six rank orderings correspond to rank orderings over the elements in $\mathcal{L}$. The second six rank orderings correspond to rank orderings over all possible coarsenings of $\mathcal{L}$ with only 2 elements. The final rank ordering is the trivial one in which all individuals are assigned the same rank group. Note that $\mathcal{L}_{\Lambda}$ will have a finite number of elements whenever $\mathcal{L}$ has a finite number of elements. 

Then, let $q^*_{\Lambda, \mathbb{l}_n}$ denote some distribution over the coarsened rank functions in $\mathcal{L}_{\Lambda}$, and define the following policy class:

\begin{definition}[Mixed-deterministic CRs]

The class of mixed-deterministic CRs $\Pi^{d}_n(\kappa_n, L_i, q^*_{\Lambda, \mathbb{l_n}})$ is the subset of regimes $G_n$ in $ \Pi_n(\kappa_n)$ for which $\delta_{\mathbf{R}} \equiv \{\delta_{\Lambda}, \delta'_{\mathbf{R}}\}$, and the following properties hold for some set of distributions $q^*_{\Lambda, \mathbb{l}_n}$:

\begin{itemize}
    \item[1.] For each $\mathbb{l}_n$, $G_n$ is in part characterized by $q^*_{\Lambda, \mathbb{l_n}}$ that puts mass on each $G^*_n\in\Pi^{d}_n(\kappa_n, L_i, \Lambda)$ proportional to $q^*_{\Lambda, \mathbb{l_n}}(\Lambda)$, as determined by $\delta_{\Lambda}$, and where the implemented regime $G^{D}_n\in\Pi^D_n(\kappa_n)$ is determined by $\delta'_{\mathbf{R}}$, and 
    \item[2. ] $\delta_{\mathbf{R}}'$ is independent of $\delta_{\Lambda}$.
\end{itemize}

\end{definition}

We thus have that  $\Pi^{d}_n(\kappa_n, L_i, \Lambda)$ is the subset of $\Pi^{d}_n(\kappa_n, L_i, q^*_{\Lambda, \mathbb{l_n}})$ in which $q^*_{\Lambda, \mathbb{l_n}}$ is degenerate for a single $\Lambda_{\mathbb{l}_n}\equiv\Lambda$, for all $\mathbb{l}_n$. In Appendix \ref{appsec: estim}, we consider a regime in the subclass of  $\Pi^{d}_n(\kappa_n, L_i, q^*_{\Lambda, \mathbb{l_n}})$ in which $q^*_{\Lambda, \mathbb{l_n}}=q^*_{\Lambda}$ for all $\mathbb{l}_n$, i.e., the distribution over coarsened rank functions is fixed for all compositions.

\subsubsection{$\epsilon-L$-rank-preserving CRs}\label{appsubsubsec: epsilonrankpres}

Results for large-cluster estimands in Theorem \ref{theorem: YbarIDlarge} are provided under a condition $E$ whereby the regimes in $\mathbf{G}_0$ are restricted to a class of $L$-rank-preserving regimes $\Pi_n^{d}(\kappa_n, L_i, \Lambda)$. Consideration of $L$-rank-preserving regimes simplifies our consideration of asymptotic regimes $\mathbf{G}_0$ to those characterized by a fixed coarsened rank function $\Lambda$. However, an investigator interested in a large-cluster parameter under a deterministic CR in $\Pi_n^{d*}(\kappa_n, L_i)$ may be concerned by the seeming lack of generality in the results of Theorem \ref{theorem: YbarIDlarge}. To highlight the practical weakness of restricting results to $L$-rank-preserving regimes, note that we may define a less-restricted policy class as follows:

\begin{definition}[$\epsilon-L$-rank-preserving deterministic CRs]
The class of $\epsilon-L$-rank-preserving CRs $\Pi^{d\epsilon}_n(\kappa_n, \Lambda, Q_L)$ is the subset of regimes $G_n$ in $ \Pi^{d*}_n(\kappa_n, L_i)$ for which their exists a rank-based formulation where, for each $\mathbb{l}_n$, the coarsened rank functions are invariant, $\Lambda_{\mathbb{l}_n} = \Lambda$ whenever $\sum\limits_{l\in\mathcal{L}}\Big |\frac{\mathbb{l}_n(l)}{n} - Q_L(l) \Big |  < \epsilon$.
\end{definition}

These policy classes strictly increase in size with decreasing $\epsilon$ and approach $\Pi_n^{d*}(\kappa_n, L_i)$ as $\epsilon$ approaches 0. Note that in large clusters under an asymptotic law $\mathbb{P}_0$ following condition $D$, we have that $$\lim\limits_{n\to \infty} \mathbb{P}_n\Bigg( \Big| \frac{\mathbb{L}_{n}(l)}{n} - Q_{L}(l) \Big| > \epsilon' \Bigg)=0,$$ which holds for a positive $\epsilon'$ arbitrarily close to 0. Thus, so long as we take $\epsilon$ to be greater than  $\epsilon'$ by an arbitrarily small amount, the probability of observing an $\mathbb{l}_{n}$ for which $\Lambda_{\mathbb{l}_{n}} \neq \Lambda$ will approach 0 with increasing $n$ for an arbitrarily small $\epsilon$. In other words, we may consider policy class $\Pi^{d\epsilon}_n(\kappa_n, \Lambda)$ that is only an arbitrarily smaller subset of $\Pi^{d*}_n(\kappa_n, L_i)$ in which the coarsened rank functions are locally invariant within an arbitrarily small neighborhood around $\mathbb{l}_{n}$.

\subsubsection{Practical considerations for implementing $L$-equitable regimes}

Many investigators will not in general specify interest in a specific stochastic regime. Alternatively, as has been done for stochastic IRs in the literature \citepSM{robins2004effects, munoz2012population, young2014identification, kennedy2019nonparametric, sarvet2023longitudinal}, stochastic $L$-equitable CRs may be more easily specified in terms of $q^*_{i,\mathbb{l}_{n}}$ directly via $p^{G_n}$, but we must carefully verify that such a $p^{G_n}$ is compatible with the resource limitation $\kappa_n$; it must be the case that $\sum\limits_{l\in\mathcal{L}}\mathbb{l}_n(l)p^{G_n}(l, \mathbb{l}_n) = \kappa_n$ for all $\mathbb{l}_n$. We show that for any such $p^{G_n}$, their will exist a regime $G_n$ with intervention density $q^*_{i,\mathbb{l}_{n}}(1\mid l) = p^{G_n}(l, \mathbb{l}_n)$. In the following proposition, we let $(\mathbb{b}_n)_{\mathbb{l}_n, \kappa_n}$ denote the set of compositions $\mathbb{b}_n$ compatible with $\mathbb{l}_n$ and $\kappa_n$, i.e., those for which $\sum\limits_{l}\mathbb{b}_n(1, l)=\kappa_n$ and  $\sum\limits_{a}\mathbb{b}_n(a, l)=\mathbb{l}_n$ for all $l$. We also let $(\mathbf{l}_n)_{\mathbb{b}_n}$ denote the set of configurations $\mathbf{l}_n$ compatible with $\mathbb{b}_n$, i.e. those for which $\sum\limits_{i=1}^nI\Big(\{l_i\in \mathbf{l}_n\} =l\Big)=\sum\limits_{a}\mathbb{b}_n(a, l)$ for all $l$, and we let $(G_n^D)_{\mathbb{b}_n}$ denote the set of fully deterministic regimes compatible with $\mathbb{b}_n$, i.e. those for which $\sum\limits_{i=1}^nI\Big(\{a_i\in G^D_n(\mathbf{l}_n)\}=a\Big)I\Big(\{l_i\in \mathbf{l}_n\}=l\Big) = \mathbb{b}_n(a, l)$ for all $a, l$, for each $\mathbf{l}_n\in(\mathbf{l}_n)_{\mathbb{b}_n}$.

\begin{proposition}
    
\label{lemma: Lequitprobs}
    Given a cluster with composition $\mathbb{L}_n=\mathbb{l}_n$, consider an investigator-specified function $p^{G_n}$ such that $\sum\limits_{l\in\mathcal{L}}\mathbb{l}_n(l)p^{G_n}(l, \mathbb{l}_n) = \kappa_n$. Then, there exists a stochastic $L$-equitable CR $G_n\in\Pi^*_n(\kappa_n, L_i)$ defined by a distribution function $q^{*}_{G_n, \mathbb{l}_n}$ over $\Pi^D(\kappa_n)$ such that the following conditions hold:

\begin{itemize}
    \item[\textbf{P\ref{lemma: Lequitprobs}.1}] For each $l$, and all $i$, 
        $\mathbb{P}_n^F(A^{G_n+}_i=1 \mid \mathbb{L}_n=\mathbb{l}_n, L_i=l) = p^{G_n}(l, \mathbb{l}_n).$
    \item[\textbf{P\ref{lemma: Lequitprobs}.2}] For each $\mathbb{b}_n \in (\mathbb{b}_n)_{\mathbb{l}_n, \kappa_n}$ there exists a constant $p_{\mathbb{b}_n}\in[0,1]$ such that 
            $q^{*}_{G_n, \mathbb{l}_n}(G_n^D) = p_{\mathbb{b}_n}$
        for all $G_n^D\in(G_n^D)_{\mathbb{b}_n}$. 
    
\end{itemize}

\end{proposition}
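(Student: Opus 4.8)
The plan is to build the regime $G_n$ by a two-stage randomization and to reduce the whole statement to a single existence question about a distribution over treatment counts. In the first stage I would draw a composition $\mathbb{b}_n \in (\mathbb{b}_n)_{\mathbb{l}_n, \kappa_n}$ from a distribution $\nu$ to be determined; equivalently, for each covariate value $l$ this fixes the number $\mathbb{b}_n(1,l)$ of type-$l$ individuals to be treated. In the second stage, conditional on the drawn composition and on the realized configuration $\mathbf{l}_n$, I would assign the $\mathbb{b}_n(1,l)$ treatments within each covariate group uniformly at random among the $\mathbb{l}_n(l)$ type-$l$ individuals, independently across groups. This two-stage scheme defines a distribution $q^{*}_{G_n, \mathbb{l}_n}$ over $\Pi^D(\kappa_n)$, and property \textbf{P\ref{lemma: Lequitprobs}.2} is then immediate: two deterministic regimes sharing a composition $\mathbb{b}_n$ differ only in \emph{which} specific individuals of each type are treated, and the uniform second stage assigns them equal mass $p_{\mathbb{b}_n} = \nu(\mathbb{b}_n)\big/\prod_{l}\binom{\mathbb{l}_n(l)}{\mathbb{b}_n(1,l)}$.

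Next I would reduce \textbf{P\ref{lemma: Lequitprobs}.1} to a moment condition on $\nu$. Fix $i$ with $L_i = l$. Conditional on a composition $\mathbb{b}_n$, the uniform second stage makes every type-$l$ individual equally likely to be treated, so the conditional treatment probability is $\mathbb{b}_n(1,l)/\mathbb{l}_n(l)$, independent of $i$ and of the specific configuration $\mathbf{l}_n$. Marginalizing over $\nu$ by the law of total probability gives
\begin{align*}
\mathbb{P}_n^F(A^{G_n+}_i = 1 \mid \mathbb{L}_n = \mathbb{l}_n, L_i = l) = \frac{1}{\mathbb{l}_n(l)}\,\mathbb{E}_{\nu}[\mathbb{b}_n(1,l)].
\end{align*}
Hence \textbf{P\ref{lemma: Lequitprobs}.1} holds if and only if $\nu$ has prescribed marginal means $\mathbb{E}_{\nu}[\mathbb{b}_n(1,l)] = \mu_l$ for every $l$, where $\mu_l \coloneqq \mathbb{l}_n(l)\,p^{G_n}(l, \mathbb{l}_n)$. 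Everything therefore comes down to exhibiting a distribution on the integer-valued feasible count vectors with these means.

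The heart of the argument is this existence step, which I would settle through polytope integrality. Consider the polytope $\mathcal{P} \coloneqq \{x \in \mathbb{R}^{\mathcal{L}} : 0 \le x_l \le \mathbb{l}_n(l) \text{ for all } l, \ \sum_{l} x_l = \kappa_n\}$. The hypothesis $\sum_{l}\mathbb{l}_n(l)p^{G_n}(l,\mathbb{l}_n) = \kappa_n$ together with $p^{G_n} \in [0,1]$ shows that the mean vector $\mu = (\mu_l)_l$ lies in $\mathcal{P}$. I would then show that every vertex of $\mathcal{P}$ is integral: a vertex must make $|\mathcal{L}| - 1$ box constraints tight in addition to the single equality, so at most one coordinate is strictly interior to its bounds; the remaining coordinates sit at $0$ or $\mathbb{l}_n(l)$ (integers), and the equality $\sum_l x_l = \kappa_n$ then forces the lone interior coordinate to be an integer as well. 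Consequently the vertices of $\mathcal{P}$ are exactly the integer count vectors, i.e. the compositions in $(\mathbb{b}_n)_{\mathbb{l}_n, \kappa_n}$. By Carath\'eodory's theorem, $\mu$ is a convex combination of these vertices; taking $\nu$ to be the corresponding mixing weights yields a distribution over $(\mathbb{b}_n)_{\mathbb{l}_n, \kappa_n}$ with $\mathbb{E}_{\nu}[\mathbb{b}_n(1,l)] = \mu_l$, as required.

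The main obstacle is precisely the integrality/existence claim of the preceding paragraph: a prescribed vector of marginal treatment probabilities need not be realizable by any single \emph{deterministic} allocation respecting the hard constraint $\kappa_n$, and it is the fact that the box-intersect-hyperplane polytope $\mathcal{P}$ has only integer vertices that guarantees the required mixture of deterministic regimes exists. The remaining work — verifying that the two-stage scheme genuinely defines an element of $\Pi^*_n(\kappa_n, L_i)$, and that conditioning on $L_i = l$ lets one marginalize over the configuration $\mathbf{l}_n$ by within-group exchangeability — is routine bookkeeping.
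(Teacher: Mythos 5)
Your proof is correct, but it takes a genuinely different route from the paper's. The paper argues via the Birkhoff--von Neumann theorem: it encodes the specified propensities as an $n\times n$ bistochastic matrix whose columns are individuals and whose rows are $\kappa_n$ treatment slots and $n-\kappa_n$ no-treatment slots, decomposes it into permutation matrices interpreted as rank orderings in $S(\{1,\dots,n\})$, and takes the Birkhoff weights as the mixing distribution $q^{*}_{G_n,\mathbb{l}_n}$. You instead collapse the problem from the $n$-dimensional individual level to the $|\mathcal{L}|$-dimensional level of treated counts per covariate value, prove integrality of the vertices of the box-intersect-hyperplane polytope $\mathcal{P}$, apply Carath\'eodory to realize the mean vector $\mu$ as a mixture $\nu$ of feasible compositions, and recover the individual level by the uniform within-group second stage. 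Your route buys two things: property \textbf{P.2} holds by construction, whereas a Birkhoff decomposition is not automatically symmetric within composition classes and needs an averaging over within-group relabelings that the paper leaves implicit; and Carath\'eodory gives for free that $\nu$ can be supported on at most $|\mathcal{L}|$ composition classes, a sparsity property the paper only obtains afterwards by invoking the allocation-mechanism results of Budish et al. What the paper's route buys is a direct correspondence between decomposition terms and implementable rank orderings, matching its rank-and-treat formulation of CRs. Two small slips, neither damaging: your claim that the vertices of $\mathcal{P}$ are \emph{exactly} the integer count vectors is half right---every vertex is integral, but an integer feasible point need not be a vertex (e.g., a feasible integer point with no active box constraint)---and your argument only uses the correct inclusion (vertices are integral compositions); and the normalization $p_{\mathbb{b}_n}=\nu(\mathbb{b}_n)\big/\prod_{l}\binom{\mathbb{l}_n(l)}{\mathbb{b}_n(1,l)}$ implicitly identifies each deterministic regime with its restriction to configurations of composition $\mathbb{l}_n$ (or with a priority-order representative inducing the uniform within-group choice at every such configuration), which is harmless because the proposition only constrains behavior given $\mathbb{L}_n=\mathbb{l}_n$.
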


Condition $\textbf{P\ref{lemma: Lequitprobs}.1}$ asserts the existence of a weight function characterizing a stochastic $L$-equitable CR with $q_{i, \mathbb{l}_n}$ equivalent to a valid investigator-specified $p^{G_n}$. Condition $\textbf{P\ref{lemma: Lequitprobs}.2}$ states that one such weight function $q^{*}_{G_n, \mathbb{l}_n}$ is that for which all deterministic regimes resulting in a given composition of assigned treatment and covariates $\mathbb{b}_n$ will be assigned the same weight. 

The existence of such a distribution (Condition $\textbf{P\ref{lemma: Lequitprobs}.1}$) for any valid $q^*_{i,\mathbb{l}_{n}}$ follows from an application of the Birkhoff von-Neumann Theorem \citepSM{birkhoff1946three}, which states that any bistochastic matrix is decomposable as a linear combination of permutation matrices with non-negative weights. A bistochastic matrix is an $n\times n$ matrix where each row and each column sums to 1, and a permutation matrix is a special bistochastic matrix where each row and column are strictly comprised of 0's and a single element 1. In this case, each column in the bistochastic matrix would correspond to a single individual and each row would correspond to a single treatment option - with the first $\kappa_n$ rows corresponding to receiving treatment and the final $n-\kappa_n$ rows corresponding to \textit{not} receiving treatment. With these interpretations, an investigator's valid specification of $q^*_{i,\mathbb{l}_{n}}$ would correspond to the bistochastic matrix for which the $i,j^{th}$ entry would equal $\frac{n\times q^*_{i,\mathbb{l}_{n}}(1 \mid L_i)}{\kappa_n}$ for $j\leq \kappa_n$ and $\frac{n\times q^*_{i,\mathbb{l}_{n}}(0 \mid L_i)}{\kappa_n}$ otherwise. Each of the permutation matrices then would correspond to a specific rank ordering in $S(\{1,\dots, n\})$, and its positive weight in the Birkhoff decomposition would correspond to their probability mass in the distribution used by the investigator to implement the CR. 

A given distribution over the fully-deterministic regimes satisfying an investigator-specified $q^*_{i,\mathbb{l}_{n}}$ will not in general be unique, and furthermore, many parameters of the joint distribution of $\mathbb{O}_n^{G_n}$ will in general vary with $G_n$ on the subset of $\Pi^*_n(\kappa_n, L_i)$ with fixed $q^*_{i,\mathbb{l}_{n}}$. For these reasons, and other pragmatic concerns related to implementation, investigators may be interested in a particular $L$-equitable regime $G_n$. Specifically, there will exist a convex combination of rank orderings whereby: 1) the \textit{only} rank orderings with non-zero weights will result in $\mathbb{b}_n(1, l) \in \{ \lfloor n\times q^*_{i,\mathbb{l}_{n}}(1 \mid l) \rfloor, \lceil n\times q^*_{i,\mathbb{l}_{n}}(1 \mid l) \rceil \}$ for each $l$; and 2) if rank orderings are grouped by a common $\mathbb{b}_n$ under a limitation $\kappa_n$, the number of such groups with positive weights will be at most $|\mathcal{L}|$, the cardinality of $L_i$. These properties follow from application of results in \citetSM{budish2013designing}, who also provided an algorithm for finding such groupings and corresponding weights in computational time that is polynomial with respect to $|\mathcal{L}|$.

\subsection{Identifying intervention densities for CRs} \label{appsubsec: intdens}

In this section we provide a series of results identifying intervention densities $q^*_i$ and $\mathbb{q}^*_n$ for the class of CRs reviewed in this Appendix. To obtain the interventional density $q^*_i$ of the individual-level g-formula for an arbitrary regime $G_n\in\Pi_n(\kappa_n)$, we leverage the following identity which holds simply by laws of probability:
$$q^*_{i}(1,  l)  = \sum\limits_{\mathbb{l}_n} q^*_{i, \mathbb{l}_n}(1\mid  l)\mathbb{Q}_{L, n\mid l}(\mathbb{l}_n),$$ 
where $\mathbb{Q}_{L,n\mid l}(\mathbb{l}_n) = \mathbb{P}_n(\mathbb{L}_n=\mathbb{l}_n \mid L_i=l)$. Therefore, whenever $\mathbb{Q}_{L, n\mid l}$ and $q^*_{i, \mathbb{l}_n}$ are identified in terms of $Q_L$ then so will $q^*_{i}(1\mid  l)$. 

The following proposition provides identification results for $q^*_{i, \mathbb{l}_n}$ for several classes of CRs. In the following,  we let $(\mathbf{l}_n)_{i,l, \mathbb{l}_n}$ denote the set of configurations $\mathbf{l}_n$ with the common entry $l_i=l$ and common composition $\mathbb{l}_n$, and we let $(\mathbf{b}_n)_{\mathbb{b_n}}$ denote the set of configurations $\mathbf{b}_n$ with the common composition $\mathbb{b_n}$. The identities therein follow simply by definition of the regimes and laws of probability.

\begin{proposition} \label{prop: intdens_general}
Consider a $\mathbb{P}_n^F\in\mathcal{M}_n^{U}$.  We have for any $G_n\in \Pi_n(\kappa_n)$
 $$q^*_{i, \mathbb{l}_n}(a \mid l) = \sum\limits_{(\mathbf{l}_n)_{i,l, \mathbb{l}_n}} .   
    \Bigg[\sum\limits_{\Pi^{D}_n(\kappa_n)} I\Big(a = \big\{a_i \in G^D_n(\mathbf{l}_n)\big\}\Big) q^{*}_{G_n}(G^D_n)\Bigg]\mathbb{P}_n(\mathbf{L}_n=\mathbf{l}_n \mid L_i=l, \mathbb{L}_n=\mathbb{l}_n),$$
and
\begin{align*}
 \mathbb{q}^*_n(\mathbb{b}_n \mid \mathbb{l}_n) = &\sum\limits_{(\mathbf{b}_n)_{\mathbb{b_n}}} \Bigg[\sum\limits_{\Pi^{D}_n(\kappa_n)}I\Big(\mathbf{a}_n  = G^D_n(\mathbf{l}_n)\Big)q^{*}_{G_n}(G^D_n)\Bigg] 
      \mathbb{P}_n(\mathbf{L}_n=\mathbf{l}_n \mid \mathbb{L}_n=\mathbb{l}_n).   
\end{align*}
\end{proposition}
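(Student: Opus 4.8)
The plan is to prove both identities by the same routine device: expand each intervention density as a counterfactual conditional probability, apply the law of total probability by conditioning on the full covariate configuration $\mathbf{L}_n$, and then substitute the representation of a stochastic CR as a mixture over fully-deterministic regimes established in Appendix~\ref{appsubsubsec: stochanddet}, namely $\mathbb{P}^F_n(\mathbf{A}^{G_n+}_n=\mathbf{a}_n \mid \mathbf{L}_n=\mathbf{l}_n) = \sum_{\Pi^{D}_n(\kappa_n)} I(\mathbf{a}_n = G^D_n(\mathbf{l}_n))\, q^{*}_{G_n}(G^D_n)$. Because $\mathbb{P}_n^F \in \mathcal{M}_n^U$ is arbitrary and no structural restriction is invoked, both results are purely probabilistic and rely only on the definitions of the regime classes.

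For the individual-level density, I would begin from $q^*_{i,\mathbb{l}_n}(a\mid l) = \mathbb{P}_n(A_i^{G_n+}=a \mid L_i=l,\, \mathbb{L}_n=\mathbb{l}_n)$ and condition on $\mathbf{L}_n$, obtaining
$$q^*_{i,\mathbb{l}_n}(a\mid l) = \sum_{\mathbf{l}_n} \mathbb{P}_n(A_i^{G_n+}=a \mid \mathbf{L}_n=\mathbf{l}_n,\, L_i=l,\, \mathbb{L}_n=\mathbb{l}_n)\,\mathbb{P}_n(\mathbf{L}_n=\mathbf{l}_n \mid L_i=l,\, \mathbb{L}_n=\mathbb{l}_n).$$
Two simplifications then apply. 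First, since $L_i$ and $\mathbb{L}_n$ are deterministic functions of $\mathbf{L}_n$, the events $\{L_i=l\}$ and $\{\mathbb{L}_n=\mathbb{l}_n\}$ are redundant given $\mathbf{L}_n=\mathbf{l}_n$ and may be dropped from the first factor, while the second factor vanishes unless $\mathbf{l}_n \in (\mathbf{l}_n)_{i,l,\mathbb{l}_n}$, collapsing the sum onto that index set. Second, $\mathbb{P}_n(A_i^{G_n+}=a \mid \mathbf{L}_n=\mathbf{l}_n)$ is recovered by marginalizing the full-vector decomposition over all $\mathbf{a}_n$ whose $i$-th coordinate equals $a$; using the elementary identity $\sum_{\mathbf{a}_n:\, a_i=a} I(\mathbf{a}_n = G_n^D(\mathbf{l}_n)) = I(a = \{a_i \in G_n^D(\mathbf{l}_n)\})$ produces the bracketed weight $\sum_{\Pi^D_n(\kappa_n)} I(a = \{a_i \in G_n^D(\mathbf{l}_n)\})\, q^*_{G_n}(G^D_n)$, which is exactly the claimed form.

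The compositional density follows the same template. Starting from $\mathbb{q}^*_n(\mathbb{b}_n\mid\mathbb{l}_n) = \mathbb{P}_n(\mathbb{B}_n^{G_n+}=\mathbb{b}_n \mid \mathbb{L}_n=\mathbb{l}_n)$, I would write the composition event as the disjoint union $\{\mathbb{B}_n^{G_n+}=\mathbb{b}_n\} = \bigcup_{\mathbf{b}_n \in (\mathbf{b}_n)_{\mathbb{b}_n}} \{\mathbf{B}_n^{G_n+}=\mathbf{b}_n\}$, where each configuration $\mathbf{b}_n$ decomposes uniquely into a covariate configuration $\mathbf{l}_n$ and an assigned-treatment vector $\mathbf{a}_n$. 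The chain rule then gives
$$\mathbb{P}_n(\mathbf{B}_n^{G_n+}=\mathbf{b}_n \mid \mathbb{L}_n=\mathbb{l}_n) = \mathbb{P}_n(\mathbf{A}_n^{G_n+}=\mathbf{a}_n \mid \mathbf{L}_n=\mathbf{l}_n)\,\mathbb{P}_n(\mathbf{L}_n=\mathbf{l}_n \mid \mathbb{L}_n=\mathbb{l}_n),$$
where the $\mathbb{L}_n$ conditioning is again redundant in the first factor and where the second factor is nonzero only for $\mathbf{l}_n$ of composition $\mathbb{l}_n$, automatically enforcing compatibility of $\mathbb{b}_n$ with $\mathbb{l}_n$. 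Substituting the deterministic-regime decomposition for the first factor and summing over $(\mathbf{b}_n)_{\mathbb{b}_n}$ yields the second identity.

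I expect no genuine obstacle, since the argument is pure bookkeeping over conditional probabilities. The only steps demanding care are the marginalization identity $\sum_{\mathbf{a}_n:\, a_i=a} I(\mathbf{a}_n = G_n^D(\mathbf{l}_n)) = I(a = \{a_i \in G_n^D(\mathbf{l}_n)\})$ in the individual-level case, and the consistent tracking of which conditioning events become redundant or infeasible once $\mathbf{L}_n$ is fixed, so that the total-probability sums collapse correctly onto the index sets $(\mathbf{l}_n)_{i,l,\mathbb{l}_n}$ and $(\mathbf{b}_n)_{\mathbb{b}_n}$.
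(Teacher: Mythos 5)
Your proof is correct and takes essentially the same route the paper intends: the paper offers no detailed argument, stating only that ``the identities therein follow simply by definition of the regimes and laws of probability,'' and your steps---law of total probability over $\mathbf{L}_n$, dropping the redundant conditioning events so the sums collapse onto $(\mathbf{l}_n)_{i,l,\mathbb{l}_n}$ and $(\mathbf{b}_n)_{\mathbb{b}_n}$, and substituting the deterministic-mixture representation $\mathbb{P}^F_n(\mathbf{A}^{G_n+}_n=\mathbf{a}_n \mid \mathbf{L}_n=\mathbf{l}_n) = \sum_{\Pi^{D}_n(\kappa_n)} I\big(\mathbf{a}_n = G^D_n(\mathbf{l}_n)\big)\, q^{*}_{G_n}(G^D_n)$ from the paper's treatment of stochastic and fully deterministic CRs---are exactly the bookkeeping that remark elides. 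The one step requiring care, the marginalization identity $\sum_{\mathbf{a}_n:\, a_i=a} I\big(\mathbf{a}_n = G_n^D(\mathbf{l}_n)\big) = I\big(a = \{a_i \in G_n^D(\mathbf{l}_n)\}\big)$, is handled correctly, so nothing is missing.
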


Note that an investigator interested in identifying an individual-level parameter $\mathbb{E}_{\mathbb{P}^F_n}[ h'(O_i^{G_n+})]$, or linear combination thereof, for example, $\mathbb{E}_{\mathbb{P}^F_n}[ \overline{Y}_n^{{G}_n}]$, may sufficiently specify their regime $G_n$ via $p^{G_n}$, as opposed to the specific functional form of $G_n$. While there is indeed a large class of regimes $G'_n\in\{\Pi_n^*(\kappa_n, L_i): \mathbb{q}_{i, \mathbb{l}_n}^*(1 \mid l) = p^{G_n}(l, \mathbb{l}_n) \ \forall \ \mathbb{l}_n\}$ with common $p^{G_n}$, i.e., stochastic $L$-equitable regimes $G^*_{n}$ that would result in a particular set of assigned treatment probabilities described by $p^{G_n}$, the intervention densities of the individual-level g-formula $f_i^{G'_n}$ will be invariant in $i$ under $\mathcal{M}_n^{AB}$ and identical across all such regimes $G'_n$, $$q^*_{i}(1 \mid  l)  = \sum\limits_{\mathbb{l}_n} p^{G_n}(l, \mathbb{l}_n)\mathbb{Q}_{L, n\mid l}(\mathbb{l}_n),$$ which is not a function of $G'_n$.  

However, when interest is in a compositional parameter $\mathbb{E}_{\mathbb{P}^F_n}[ h(\mathbb{O}_n^{G_n+})]$, such a specification will not be sufficient: the intervention densities $\mathbb{q}_n^*$ in $\mathbb{f}^{G_n}$ will vary across regimes with common $p^{G_n}$, and computation of $\mathbb{q}_n^*$ must proceed analogously to an arbitrary regime $G_n\in\Pi_n(\kappa_n)$, as in Proposition \ref{prop: intdens_general}.

Now, we provide closed-form expressions for intervention densities $q^*_{i,\mathbb{l}_{n}}$ under the set of deterministic regimes, $\Pi^{d*}_n(\kappa_n, L_i)$, i.e., those characterized by a set of rank functions $\Lambda_{\mathbb{l}_n}$. These derivations also apply to the $L$-rank-preserving regimes, $\Pi^{d}_n(\kappa_n, L_i, \Lambda)$, by replacing $\Lambda_{\mathbb{l}_n}$  with $\Lambda$ wherever the latter appears in subsequent expressions. Define the random variable $\Lambda_{\mathbb{l}_n}(L_i)$ as the individual's coarsened rank under regime $G_n$ in a composition with $\mathbb{l}_n$. Let $S_{\Lambda_{\mathbb{l}_n}}(m) \coloneqq \sum\limits_{\mathcal{L}}\mathbb{l}_n(l)I(\Lambda_{\mathbb{l}_n}(l) \geq m)$ denote the number of individuals with coarsened rank group  greater than or equal to $m$ in a cluster with composition $\mathbb{l}_n$, and let $S_{\Lambda_{\mathbb{l}_n}}^-$ be a corresponding function denoting the number of such individuals with coarsened rank group strictly greater than $m$ in such a cluster. In a cluster with composition $\mathbb{l}_n$, there will exist a fixed coarsened rank representing the last group to receive at least one treatment unit under $G_n$. We denote the coarsened rank for this group by $\omega_{\mathbb{l}_n}$:

  \begin{align}
    \omega_{\mathbb{l}_n} \coloneqq \textup{inf}\bigg\{ c \in \mathbb{R}: S_{\Lambda_{\mathbb{l}_n}}(c)  < \kappa_{n}\bigg\}. \label{eq: omega_finite}
 \end{align}

\begin{proposition} \label{prop: intdens_Lrankpres}
Consider a $\mathbb{P}_n^F\in\mathcal{M}_n^{U}$.  We have for any $G_n\in \Pi^{d*}_n(\kappa_n, L_i)$

  \begin{align}
         q^*_{i,\mathbb{l}_{n}}(1 \mid l)  
             & =  \begin{cases}
                     \frac{\kappa_{n} - S_{\Lambda_{\mathbb{l}_n}}^-(\omega_{\mathbb{l}_n})}{S_{\Lambda_{\mathbb{l}_n}}(\omega_{\mathbb{l}_n})-S_{\Lambda_{\mathbb{l}_n}}^-(\omega_{\mathbb{l}_n})}   & : \Lambda_{\mathbb{l}_n}(l)= \omega_{\mathbb{l}_n}, \\
                     I\Big(\Lambda_{\mathbb{l}_n}(l) > \omega_{\mathbb{l}_n}\Big) & : \text{otherwise},
                \end{cases} \label{eq: gstar_tv_finite_cond}
\end{align}

and for a $\mathbb{P}_n^F\in\mathcal{M}^{AB}_n$, we have:

\begin{align}
 \mathbb{q}^*_n(\mathbb{b}_n \mid \mathbb{l}_n) = \begin{cases}
    0  &  \text{if there exists an } l\in\mathcal{L} \text{ such that:} \\
       & \ \ \ \  \Lambda_{\mathbb{l}_n}(l) > \omega_{\mathbb{l}_n} \text{ and } \mathbb{b}_n(1, l) \neq \mathbb{l}_n(l); \text{ or } \\
       & \ \ \ \  \Lambda_{\mathbb{l}_n}(l) < \omega_{\mathbb{l}_n} \text{ and } \mathbb{b}_n(1, l) \neq 0; \text{ or }  \\  
       & \ \ \ \  \sum\limits_{l': \Lambda_{\mathbb{l}_n}(l')=\omega_{\mathbb{l}_n}}\mathbb{b}_n(1, l') \neq \kappa_{n} - S^-_{\Lambda_{\mathbb{l}_n}}(\omega_{\mathbb{l}_n}),\\ 
    \frac{
            \prod\limits_{l: \Lambda_{\mathbb{l}_n}(l)=\omega_{\mathbb{l}_n}}
                \begin{pmatrix*}[c]
                \mathbb{b}_{n}(1, l) \\
                \mathbb{l}_{n}(l)
                \end{pmatrix*}
        }{
                \begin{pmatrix*}[c]\kappa_{n} - S^-_{\Lambda_{\mathbb{l}_n}}(\omega_{\mathbb{l}_n})\\
               S_{\Lambda_{\mathbb{l}_n}}(\omega_{\mathbb{l}_n}) - S^-_{\Lambda_{\mathbb{l}_n}}(\omega_{\mathbb{l}_n})\end{pmatrix*}
        } &  \text{ otherwise}. 
 \end{cases} \label{eq: gstar_tv_finite_cond2}
\end{align}

\end{proposition}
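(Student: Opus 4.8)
The plan is to read both densities off the rank-and-treat dynamics of \eqref{eq: AplusF}, using the boundary rank $\omega_{\mathbb{l}_n}$ of \eqref{eq: omega_finite} to partition $\mathcal{L}$ into a strictly-above-boundary block (treated with certainty), a strictly-below-boundary block (never treated), and a single boundary block at rank $\omega_{\mathbb{l}_n}$ across which the residual supply is rationed. Throughout I fix a composition $\mathbb{l}_n$ and work conditionally on $\mathbb{L}_n=\mathbb{l}_n$; by \eqref{eq: AplusF}, under $G_n\in\Pi^{d*}_n(\kappa_n, L_i)$ characterised by $\Lambda_{\mathbb{l}_n}$, units are offered in decreasing order of $\Lambda_{\mathbb{l}_n}(L_i)$ until $\kappa_n$ are exhausted.

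For the individual density \eqref{eq: gstar_tv_finite_cond}, first I would note that the $S^-_{\Lambda_{\mathbb{l}_n}}(\omega_{\mathbb{l}_n})$ individuals with $\Lambda_{\mathbb{l}_n}(L_i)>\omega_{\mathbb{l}_n}$ are all offered a unit before the supply runs out, since by \eqref{eq: omega_finite} this count is strictly below $\kappa_n$; hence each is treated with probability $1$, giving the value $I(\Lambda_{\mathbb{l}_n}(l)>\omega_{\mathbb{l}_n})$ in the ``otherwise'' branch. Symmetrically, individuals with $\Lambda_{\mathbb{l}_n}(L_i)<\omega_{\mathbb{l}_n}$ are reached only after the supply is gone and are treated with probability $0$. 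The boundary block $\{l:\Lambda_{\mathbb{l}_n}(l)=\omega_{\mathbb{l}_n}\}$ has size $S_{\Lambda_{\mathbb{l}_n}}(\omega_{\mathbb{l}_n})-S^-_{\Lambda_{\mathbb{l}_n}}(\omega_{\mathbb{l}_n})$ and receives the residual $\kappa_n-S^-_{\Lambda_{\mathbb{l}_n}}(\omega_{\mathbb{l}_n})$ units; since $G_n\in\Pi^*_n(\kappa_n, L_i)$ forces a common treatment probability across all boundary individuals (stochastic $L$-equity), that probability must equal the ratio of residual units to block size, which is precisely the first branch of \eqref{eq: gstar_tv_finite_cond}. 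This argument uses only the regime's definition, so it holds under $\mathcal{M}_n^{U}$.

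For the compositional density \eqref{eq: gstar_tv_finite_cond2}, I would first eliminate the degenerate cases: a table $\mathbb{b}_n$ with a treated count other than $\mathbb{l}_n(l)$ on some strictly-above-boundary $l$, or a positive treated count on some strictly-below-boundary $l$, or a boundary-block treated total unequal to $\kappa_n-S^-_{\Lambda_{\mathbb{l}_n}}(\omega_{\mathbb{l}_n})$, is unreachable under the dynamics above and therefore has density $0$. For a consistent $\mathbb{b}_n$, the only randomness left is the uniform allocation of the $\kappa_n-S^-_{\Lambda_{\mathbb{l}_n}}(\omega_{\mathbb{l}_n})$ residual units among the $S_{\Lambda_{\mathbb{l}_n}}(\omega_{\mathbb{l}_n})-S^-_{\Lambda_{\mathbb{l}_n}}(\omega_{\mathbb{l}_n})$ boundary individuals. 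Specialising Proposition \ref{prop: intdens_general} to this regime and invoking the conditional-iid (exchangeability) structure of $\mathbf{L}_n$ under $\mathcal{M}^{AB}_n$, the sum over configurations $\mathbf{l}_n$ consistent with $\mathbb{l}_n$ collapses, and the probability of a given boundary split $\{\mathbb{b}_n(1,l)\}$ becomes the multivariate-hypergeometric ratio of favourable selections $\prod_{l:\Lambda_{\mathbb{l}_n}(l)=\omega_{\mathbb{l}_n}}\binom{\mathbb{l}_n(l)}{\mathbb{b}_n(1,l)}$ (the count of ways to choose the treated individuals within each boundary covariate value) to the total number of selections $\binom{S_{\Lambda_{\mathbb{l}_n}}(\omega_{\mathbb{l}_n})-S^-_{\Lambda_{\mathbb{l}_n}}(\omega_{\mathbb{l}_n})}{\kappa_n-S^-_{\Lambda_{\mathbb{l}_n}}(\omega_{\mathbb{l}_n})}$, matching \eqref{eq: gstar_tv_finite_cond2}.

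The delicate step, and the one I would dwell on, is justifying the uniform-allocation claim within the boundary block and its reduction to a configuration-free count. Stochastic $L$-equity pins down only marginal treatment probabilities, so I must argue that the canonical deterministic CR allocates the residual units exchangeably across boundary individuals, and then that conditioning on $\mathbb{L}_n=\mathbb{l}_n$ and summing the configuration-level assignment probabilities of Proposition \ref{prop: intdens_general} yields a count that does not depend on which individuals carry which covariate value; this is exactly where the exchangeability afforded by $\mathcal{M}^{AB}_n$ enters, and it explains why the compositional branch requires $\mathcal{M}^{AB}_n$ whereas the individual branch needs only $\mathcal{M}_n^{U}$. A minor bookkeeping check, needed for both branches to be genuine probabilities, is that \eqref{eq: omega_finite} guarantees $S^-_{\Lambda_{\mathbb{l}_n}}(\omega_{\mathbb{l}_n})\le \kappa_n\le S_{\Lambda_{\mathbb{l}_n}}(\omega_{\mathbb{l}_n})$, so the residual count is nonnegative and no larger than the block size.
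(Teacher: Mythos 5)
Your proof is correct and takes essentially the same route as the paper's: the paper likewise reads $\omega_{\mathbb{l}_n}$ as the last rank group treated with positive probability, assigns probability one above the boundary and zero below it, obtains the boundary probability as the quotient of the residual supply $\kappa_n - S^-_{\Lambda_{\mathbb{l}_n}}(\omega_{\mathbb{l}_n})$ over the block size $S_{\Lambda_{\mathbb{l}_n}}(\omega_{\mathbb{l}_n}) - S^-_{\Lambda_{\mathbb{l}_n}}(\omega_{\mathbb{l}_n})$, and derives the compositional density by zeroing out incompatible tables and giving compatible ones their relative frequency via the same hypergeometric-type combinatoric ratio. Your explicit attention to the exchangeability afforded by $\mathcal{M}^{AB}_n$ in the compositional branch, and the check that $S^-_{\Lambda_{\mathbb{l}_n}}(\omega_{\mathbb{l}_n}) \le \kappa_n \le S_{\Lambda_{\mathbb{l}_n}}(\omega_{\mathbb{l}_n})$, merely make rigorous what the paper's brief justification leaves implicit.
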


The identity in \eqref{eq: gstar_tv_finite_cond} follows by recognizing that $ \omega_{\mathbb{l}_n}$ may be interpreted as the last rank-group in which individuals could receive treatment with positive probability. By construction of the regime, individuals in rank-groups after $\omega_{\mathbb{l}_n}$ certainly will not receive a treatment unit and individuals in rank-groups before $\omega_{\mathbb{l}_n}$ certainly will receive a treatment unit. Individuals in rank-group $\omega_{\mathbb{l}_n}$ have the same probability of receiving treatment, by definition of the regime. As there are $\kappa_{n} - S_{\Lambda_{\mathbb{l}_n}}^-(\omega_{\mathbb{l}_n})$ treatment units remaining for this group, and $S_{\Lambda_{\mathbb{l}_n}}(\omega_{\mathbb{l}_n})-S_{\Lambda_{\mathbb{l}_n}}^-(\omega_{\mathbb{l}_n})$ individuals in the group, then their quotient gives the probability.

The identity in \eqref{eq: gstar_tv_finite_cond2} follows by recognizing that any composition that does not have the appropriate counts of treatment within each rank group (per the preceding arguments) will have 0 probability. When there are multiple covariate patterns that map to the rank group $\omega_{\mathbb{l}_n}$, then each of the valid compositions will be equally likely with probability equal to their relative frequency, given by a quotient derived using simple combinatoric identities.

\begin{corollary}
    Consider a $\mathbb{P}_n^F\in\mathcal{M}_n^{AB}$. We have for any $G_n\in \Pi^{d*}_n(\kappa_n, L_i)$ where $\Lambda_{\mathbb{l}_n}$ is bijective that $q^*_n(\mathbb{b}_n \mid \mathbb{l}_n) = I\Big(\mathbb{b}_n = \mathbb{b}_{\mathbb{l}_n}^{G_n+}\Big).$
\end{corollary}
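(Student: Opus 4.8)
The plan is to obtain the corollary directly from the second identity of Proposition \ref{prop: intdens_Lrankpres}, namely the closed form \eqref{eq: gstar_tv_finite_cond2} for the compositional intervention density $\mathbb{q}^*_n(\mathbb{b}_n \mid \mathbb{l}_n)$, by exploiting the fact that bijectivity of $\Lambda_{\mathbb{l}_n}$ forces the boundary rank group to contain a single covariate value. First I would note that, since $\Lambda_{\mathbb{l}_n}$ is bijective on $\mathcal{L}$ it is in particular injective, so distinct covariate values receive distinct ranks. Consequently there is \emph{at most one} $l^*\in\mathcal{L}$ with $\Lambda_{\mathbb{l}_n}(l^*) = \omega_{\mathbb{l}_n}$, where $\omega_{\mathbb{l}_n}$ is the last-treated rank group defined in \eqref{eq: omega_finite}.

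Next I would read off the three vanishing conditions in the top branch of \eqref{eq: gstar_tv_finite_cond2} and argue that together they pin down a single admissible composition. Every covariate value $l$ with $\Lambda_{\mathbb{l}_n}(l) > \omega_{\mathbb{l}_n}$ must be fully treated, $\mathbb{b}_n(1,l) = \mathbb{l}_n(l)$; every value with $\Lambda_{\mathbb{l}_n}(l) < \omega_{\mathbb{l}_n}$ must be untreated, $\mathbb{b}_n(1,l) = 0$; and since the summation constraint over $\{l': \Lambda_{\mathbb{l}_n}(l') = \omega_{\mathbb{l}_n}\}$ now has a single summand, the boundary value satisfies $\mathbb{b}_n(1,l^*) = \kappa_n - S^-_{\Lambda_{\mathbb{l}_n}}(\omega_{\mathbb{l}_n})$. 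Any $\mathbb{b}_n$ violating one of these equalities falls in the zero branch; I would denote the unique composition satisfying all of them by $\mathbb{b}_{\mathbb{l}_n}^{G_n+}$ (the degenerate case of an empty boundary group, e.g. $\kappa_n\in\{0,n\}$, is subsumed by ``at most one $l^*$'' and is trivially deterministic).

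Finally I would evaluate the bottom branch of \eqref{eq: gstar_tv_finite_cond2} at $\mathbb{b}_n = \mathbb{b}_{\mathbb{l}_n}^{G_n+}$. Because the product over $\{l: \Lambda_{\mathbb{l}_n}(l) = \omega_{\mathbb{l}_n}\}$ now ranges over the singleton $\{l^*\}$, and because $S_{\Lambda_{\mathbb{l}_n}}(\omega_{\mathbb{l}_n}) - S^-_{\Lambda_{\mathbb{l}_n}}(\omega_{\mathbb{l}_n}) = \mathbb{l}_n(l^*)$ is exactly the size of the boundary group, the two stacked arguments in the numerator and denominator coincide entry-by-entry once $\mathbb{b}_n(1,l^*) = \kappa_n - S^-_{\Lambda_{\mathbb{l}_n}}(\omega_{\mathbb{l}_n})$. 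Hence the ratio equals one, and combining the two branches gives $\mathbb{q}^*_n(\mathbb{b}_n \mid \mathbb{l}_n) = I(\mathbb{b}_n = \mathbb{b}_{\mathbb{l}_n}^{G_n+})$, as claimed.

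I expect the only delicate point to be the bookkeeping showing that the three vanishing conditions admit exactly one solution under bijectivity: once the boundary group is a singleton there is no freedom in distributing the $\kappa_n - S^-_{\Lambda_{\mathbb{l}_n}}(\omega_{\mathbb{l}_n})$ remaining units \emph{across covariate values}, so the compositional randomness present in the general multi-value case collapses entirely, even though which specific individuals within the boundary group are treated may remain random. The ensuing cancellation is then routine.
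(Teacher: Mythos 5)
Your proof is correct and takes essentially the same route as the paper, whose entire justification is the single observation that bijectivity (injectivity) of $\Lambda_{\mathbb{l}_n}$ leaves exactly one covariate value $l^*$ with $\Lambda_{\mathbb{l}_n}(l^*)=\omega_{\mathbb{l}_n}$, so that only one composition $\mathbb{b}_{\mathbb{l}_n}^{G_n+}$ is compatible with $G_n$ given $\mathbb{l}_n$. Your branch-by-branch evaluation of the closed form \eqref{eq: gstar_tv_finite_cond2} from Proposition \ref{prop: intdens_Lrankpres} --- including the check that the binomial ratio collapses to one on the singleton boundary group, where $S_{\Lambda_{\mathbb{l}_n}}(\omega_{\mathbb{l}_n}) - S^-_{\Lambda_{\mathbb{l}_n}}(\omega_{\mathbb{l}_n}) = \mathbb{l}_n(l^*)$ --- simply makes the paper's one-sentence argument explicit.
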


When $\Lambda$ is bijective, i.e. $L_i\neq L_j \implies \Lambda(L_i)\neq \Lambda(L_j)$, then there is only a single value $l$ such that $\Lambda(l)=\omega_{\mathbb{l_n}}$ and thus there is only a single composition $\mathbb{b}_{\mathbb{l}_n}^{G_n+}$ compatible with $G_n$ given a composition $\mathbb{l}_n$.

Under $\mathcal{M}_n^{AB}$, the terms $\mathbb{Q}_{L, n\mid l}$ and $\mathbb{P}_n(\mathbf{L}_n=\mathbf{l}_n \mid \mathbb{L}_n=\mathbb{l}_n)$ are both identified in terms of $Q_L$ (see Appendix \ref{appsec: proofs}). Therefore,  $q^*_i$ and  $\mathbb{q}^*_n$ are both both identified in terms of $Q_L$ under $\mathcal{M}_n^{AB}$ in all the expressions in Propositions \ref{prop: intdens_general} and \ref{prop: intdens_Lrankpres}, since the remaining terms are investigator-specified or known functions. Although identified, computations of $q_i^*$ and $\mathbb{q}^*_n$ for generalized regimes in Proposition \ref{prop: intdens_general} involve evaluating very large sums over $(\mathbf{b}_n)_{\mathbb{b_n}}$ or $(\mathbf{l}_n)_{i,l}$ and $\Pi^{D}_n(\kappa_n)$ for a general non-deterministic regime.  

Evidently, $\mathbb{q}^*_n$ may be easier to compute than $q^*_i$ for an $L$-equitable deterministic CR. However, evaluating the g-formulae $f_i^{G_n}$ and $\mathbb{f}^{G_n}$ will involve summing over the set of possible compositions $\mathbb{l}_n$ in either case. Nevertheless, the resulting computations will involve summations of orders of magnitude fewer terms than those that require summing over $(\mathbf{b}_n)_{\mathbb{b_n}}$ or $(\mathbf{l}_n)_{i,l}$ and $\Pi^{D}_n(\kappa_n)$. Furthermore, additional computational advantages are possible under $V$-equitable regimes.

\subsection{Computational considerations for regimes that are at most a function of $\mathbf{V}_n$} \label{appsubsec: computation}

When parameters for finite clusters of size $n^*$ are of interest, practical concerns may arise about plug-in estimation of the compositional g-formula, or the intervention densities for the individual-level g-formula. We re-state these expressions here, for convenience:

\begin{align}
    \mathbb{E}_{\mathbb{P}^F_{n^*}}[ h(\mathbb{O}_{n^*}^{G_n+})] = & \sum\limits_{\mathbb{o}_{n^{\ast}}}h(\mathbb{o}_{n^{\ast}})  {\mathbb{f}}^{G_{n^{\ast}}}(\mathbb{o}_{n^*}) \label{eq: compgformsum}\\ 
    q^*_{i}(1 \mid  l)  = & \sum\limits_{\mathbb{l}_{n^*}} q^*_{i, \mathbb{l}_{n^*}}(1\mid  l)\mathbb{Q}_{L, n^*\mid l}(\mathbb{l}_{n^*}) \label{eq: igformsum}.
\end{align}

If we assume a binary outcome, then the number of terms in the summand of \eqref{eq: compgformsum} (i.e., the cardinality of $\mathbb{o}_{n^{\ast}}$) may be formulated as the number of weak compositions of $n^*$ into exactly $4|\mathcal{L}|$ parts, and thus is obtained by the binomial coefficient $\Big(^{n^* + 4|\mathcal{L}| - 1}_{n^*}\Big)$. Similarly, the number of terms in in the summand of \eqref{eq: igformsum} is obtained by $\Big(^{n^* + |\mathcal{L}| - 1}_{n^*}\Big)$. 

We emphasize that the cluster size defining a parameter of interest, $n^*$, is specified separately from $n$, the size of the cluster representing the observed data on hand. For example, $n$ will often be moderately large, whereas when interest is in a finite-cluster parameter (vs. a parameter defined by an arbitrarily large cluster), $n^*$ will be relatively small, i.e. $n^*<<n$. For example, our data analysis (presented in Section \ref{sec: icu example} and Appendix \ref{appsec: misc})  had $n=13,011$ and considered finite-cluster parameters corresponding to $n^*=20$, representing a reasonable number of beds in a moderately-sized intensive care unit. Nevertheless, $|\mathcal{L}|$ will often be large, reflecting an investigators concerns around adequate confounder control. In our data analysis, we had $|\mathcal{L}|=37632$, and thus $|\mathbb{O}_{n^*}|$ is prohibitively large. 

Nevertheless, investigators are often interested in regimes defined by $V_i = c(L_i)$, a coarsening of the covariates used for confounder control, such that $|\mathcal{V}| << |\mathcal{L}|$. In other words, investigators consider a regime $G_{n^*}: \{\mathcal{V}\}^{n^*} \times [0,1] \rightarrow \{0, 1\}^{n^*}.$ In our data analysis, we considered a $V$-rank-preserving regime ${G}_{n^*=20}^{\Lambda_2}$ adapted from a proposal of \citetSM{nyc2015ventilator}, where $V_i$ is the value of an individual's trichotomized SOFA score. In the following, we present a re-formulation of \eqref{eq: compgformsum} and \eqref{eq: igformsum}, which leverages distinctions between $|\mathcal{V}|$ and $|\mathcal{L}|$. Let $U_i \equiv \{A_i, V_i\}$ and $W_i \equiv \{Y_i, A_i, V_i\}$. Furthemore, let $Q_V$ denote the marginal densities of $V_i$ and $Q_{L\mid v}$ be the conditional density of $L_i$ given $V_i=v$ under $\{\mathbb{P}_n, \mathbb{P}_{n^*}\}\in\mathbb{P}_0^F$ following condition $D$. Then we consider the following definitions and subsequent proposition:

\begin{align}
    \mathbb{Q}_{V, n^*}(\mathbb{v}_{n^*}) & \coloneqq  \frac{n^*!}{\prod\limits_{l\in\mathcal{V}}\mathbb{v}_{n^*}(v)!}\prod\limits_{v\in\mathcal{V}}Q_{V}(v)^{\mathbb{v}_{n^*}(v)}\nonumber\\
    \mathbb{q}^{**}_{{n^*}}(\mathbb{u}_{n^*} \mid \mathbb{v}_{n^*}) & \coloneqq \mathbb{P}_{n^*}(\mathbb{U}^{G_n^*+}_{n^*}=\mathbb{u}_{n^*}\mid \mathbb{V}_{n^*}=\mathbb{v}_{n^*}) \nonumber\\
    Q^*_Y(y \mid a, v) & \coloneqq \sum\limits_{l:c(l)=v}Q_{Y}(y \mid a, l)Q_{L\mid v}(l)\nonumber\\
    \mathbb{Q}^*_{Y, n^*}(\mathbb{w}_{n^*} \mid \mathbb{u}_{n^*}) & \coloneqq  \prod\limits_{ a, v} \frac{\mathbb{u}_{n^*}(a,v)!}{\prod\limits_{y\in\mathcal{Y}_i}\mathbb{w}_{{n^*}}(y,a,v)!}\prod\limits_{y\in\mathcal{Y}_i}Q^*_Y(y \mid a, v) ^{\mathbb{w}_{{n^*}}(y,a,v)} \nonumber
\end{align}
\begin{align}
     \mathbb{f}_{red}^{G_{n^*}}(\mathbb{w}_{n^*}) & \coloneqq \mathbb{Q}^*_{Y,n^*}(\mathbb{w}_{n^*} \mid \mathbb{u}_{n^*})
                \mathbb{q}^{**}_{n^*}(\mathbb{u}_{n^*} \mid  \mathbb{v}_{n^*})\mathbb{Q}_{V,n^*}(\mathbb{v}_{n^*}) \label{eq: compgformred} \\
    q^*_{i, \mathbb{v}_{n^*}}(1\mid  v) & \coloneqq \mathbb{P}_{n^*}({A}^{G_n^*+}_i=a \mid V_i=v, \mathbb{V}_{n^*}=\mathbb{v}_{n^*}) \label{eq: indintdensV}
\end{align}

We refer to expression \eqref{eq: compgformred} as the \textit{reduced} compositional g-formula. Expression \eqref{eq: indintdensV} is defined analogously to $q^*_{i, \mathbb{l}_{n^*}}$ except with $v$ and $\mathbb{v}_{n^*}$ replacing  $l$ and $\mathbb{l}_{n^*}$. The following proposition links this g-formula to a counterfactual parameter of interest, where we define $\mathbb{Q}_{V, {n^*}\mid v}$ analogously to $\mathbb{Q}_{L, {n^*}\mid l}$.
\begin{proposition}\label{prop: compgformred}
Consider a law $\mathbb{P}_n^F\in\mathcal{M}^{AB}_n$, a real-valued function $h''$ 
and a regime $G_n \in \Pi_n(\kappa_n)$, with $G_n$ a function of at most $\{\mathbf{V}_n, \delta\}$, i.e., $G_n: \{\mathcal{V}\}^n \times [0,1] \rightarrow \{0, 1\}^n$, and with $V_i\coloneqq c(L_i)$ for all $i$. Then $\mathbb{E}_{\mathbb{P}^F_n}[ h(\mathbb{W}_n^{G_n+})] =  \sum\limits_{\mathbb{w}_{n}}h''(\mathbb{w}_n)  \mathbb{f}^{G_n}_{red}(\mathbb{w}_{n})$
whenever the right hand side of the equation is well-defined, and also that $q^*_{i}(1\mid  l)  = \sum\limits_{\mathbb{v}_n} q^*_{i, \mathbb{v}_n}(1\mid  c(l))\mathbb{Q}_{V, n\mid c(l)}(\mathbb{v}_n).$
\end{proposition}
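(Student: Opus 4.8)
The plan is to prove both identities by pushing the results already in hand---the compositional g-formula of Theorem \ref{theorem: CDTRID} and the total-probability expansion opening Appendix \ref{appsubsec: intdens}---through the coarsening map $c$, and then checking that each factor of the g-formula collapses cleanly. The single structural fact I exploit repeatedly is that, because $G_n$ is a function of at most $\mathbf{V}_n$ (and $\delta$), the assigned treatment $A_i^{G_n+}$ is a deterministic function of $(\mathbf{V}_n,\delta)$; combined with the conditional-i.i.d.\ structure of $\{L_i\}$ under $\mathcal{M}^{AB}_n$, this yields $A_i^{G_n+}\independent L_i \mid V_i$, which is the engine of the reduction.

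For the first identity, I would note that $h''$ depends on $\mathbb{O}_n$ only through $\mathbb{W}_n$, so applying Theorem \ref{theorem: CDTRID} with $h(\mathbb{o}_n)\coloneqq h''(\mathbb{w}_n)$ and grouping the sum over the fibers of $c$ reduces the claim to showing that $\mathbb{f}^{G_n}_{red}$ is the mass function of $\mathbb{W}_n^{G_n+}$. Since $\mathbb{w}_n$ determines $\mathbb{u}_n$ and $\mathbb{v}_n$, I would factor this mass function by the chain rule as
\begin{align*}
\mathbb{P}(\mathbb{W}_n^{G_n+}=\mathbb{w}_n) = \mathbb{P}(\mathbb{V}_n=\mathbb{v}_n)\,\mathbb{P}(\mathbb{U}_n^{G_n+}=\mathbb{u}_n\mid \mathbb{V}_n=\mathbb{v}_n)\,\mathbb{P}(\mathbb{W}_n^{G_n+}=\mathbb{w}_n\mid \mathbb{U}_n^{G_n+}=\mathbb{u}_n,\mathbb{V}_n=\mathbb{v}_n).
\end{align*}
The first factor equals $\mathbb{Q}_{V,n}(\mathbb{v}_n)$ because the $V_i=c(L_i)$ are i.i.d.\ with marginal $Q_V$ under $\mathcal{M}^{AB}_n$, so $\mathbb{V}_n$ is multinomial; the second equals $\mathbb{q}^{**}_n(\mathbb{u}_n\mid\mathbb{v}_n)$ by its definition.

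The crux is the third factor, which I claim equals $\mathbb{Q}^*_{Y,n}(\mathbb{w}_n\mid\mathbb{u}_n)$. Here the naive route---conditioning on the full $l$-level treatment composition $\mathbb{b}_n$ and summing the product-multinomial $\mathbb{Q}_{Y,n}$ over outcome splits---fails, since a convolution of multinomials with distinct cell probabilities $Q_Y(\cdot\mid a,l)$ across $l$ with $c(l)=v$ is not itself multinomial. The resolution is to condition only at the coarsened level: conditional on the cell $(A_i^{G_n+},V_i)=(a,v)$, the independence $A_i^{G_n+}\independent L_i\mid V_i$ forces $L_i\sim Q_{L\mid v}$, whence by Condition \textit{B} and the conditional-i.i.d.\ outcome property under $\mathcal{M}^{AB}_n$ the marginal outcome law is $\sum_{l:c(l)=v}Q_Y(y\mid a,l)Q_{L\mid v}(l)=Q^*_Y(y\mid a,v)$. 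Conditionally on the configuration of $(\mathbf{A}_n^{G_n+},\mathbf{V}_n)$, the residual covariates, and hence the outcomes, remain independent across individuals, so within each $(a,v)$ cell the outcomes are i.i.d.\ $Q^*_Y(\cdot\mid a,v)$ and independent across cells; this is exactly $\mathbb{Q}^*_{Y,n}(\mathbb{w}_n\mid\mathbb{u}_n)$, which depends on the configuration only through $\mathbb{u}_n$. Multiplying the three factors recovers $\mathbb{f}^{G_n}_{red}(\mathbb{w}_n)$ of \eqref{eq: compgformred}.

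For the second identity, the hypothesis on $G_n$ again gives $A_i^{G_n+}\independent L_i\mid V_i$, so $q^*_i(1\mid l)=\mathbb{P}(A_i^{G_n+}=1\mid L_i=l)=\mathbb{P}(A_i^{G_n+}=1\mid V_i=c(l))$; expanding the latter by the law of total probability over $\mathbb{V}_n$ conditional on $V_i=c(l)$, and recognizing the conditional summands as $q^*_{i,\mathbb{v}_n}(1\mid c(l))$ from \eqref{eq: indintdensV} and the weights as $\mathbb{Q}_{V,n\mid c(l)}(\mathbb{v}_n)$, yields the claim. This simply mirrors the total-probability expansion opening Appendix \ref{appsubsec: intdens} with $L$ replaced by its coarsening $V$. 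The main obstacle is entirely in the third factor of the first part: isolating why one must coarsen the conditioning set to $\mathbb{u}_n$ before averaging rather than retaining $\mathbb{b}_n$, and verifying that the hypothesis on $G_n$ is precisely what licenses replacing the heterogeneous $Q_Y(\cdot\mid a,l)$ by the single averaged law $Q^*_Y(\cdot\mid a,v)$.
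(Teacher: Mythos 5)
Your proposal is correct and follows essentially the same route as the paper's proof: the identical chain-rule factorization of $\mathbb{P}_n^F(\mathbb{W}_n^{G_n+}=\mathbb{w}_n)$ into $\mathbb{Q}_{V,n}$, $\mathbb{q}^{**}_n$, and the averaged outcome kernel, with the crux likewise being that $\mathbb{P}_n^F(\mathbb{W}_n^{G_n+}=\mathbb{w}_n \mid \mathbb{U}_n^{G_n+}=\mathbb{u}_n)=\mathbb{Q}^*_{Y,n}(\mathbb{w}_n\mid\mathbb{u}_n)$ via the conditional independence $L_i \independent A_i^{G_n+} \mid V_i$, the conditional mutual independence of outcomes given $\mathbf{U}_n^{G_n+}$, and the mixture kernel $Q^*_Y(y\mid a,v)=\sum_{l:c(l)=v}Q_Y(y\mid a,l)Q_{L\mid v}(l)$, and with the same two-step argument (degeneracy of $V_i$ given $L_i$, then total probability over $\mathbb{V}_n$) for the second identity. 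The only difference is presentational: the paper licenses these independencies by constructing explicit DAG models $\mathcal{G}(G_n)$ and $\mathcal{G}(\mathbf{a}_n)$ and reading them off via d-separation, whereas you derive them directly from the structural equations, the mutual independence of $\{\epsilon_{L_i}\}$, and the exogeneity of $\delta$ --- a sound, equivalent verification.
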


For deterministic regimes in $\Pi_{n^*}^{d}(\kappa_{n^*}, V_i, \Lambda)$, we obtain $q^*_{i, \mathbb{v}_{n^*}}$ and $\mathbb{q}^{**}_{n^*}$ analogously to Proposition \ref{prop: intdens_Lrankpres}. Arguments are isomorphic, so we do not repeat them here. Proposition \ref{prop: compgformred} shows us that plug-in estimation of compositional and individual-level g-formula parameters poses far fewer computational demands in practice, than is suggested upon inspection of their unreduced functionals presented in Section \ref{subsec: IDres}. Additional computational gains may be obtained for $ \sum\limits_{\mathbb{w}_{n^*}}h''(\mathbb{w}_{n^*})  \mathbb{f}^{G_{n^*}}_{red}(\mathbb{w}_{n^*})$ if one only sums over elements $\mathbb{w}_{n^*}$ with positive probability under $G_{n^*}$. When $\Lambda$ is bijective for $V_i$, then there is only a single composition $\mathbb{u}_{n^*}$ with positive probability, given $\mathbb{v}_{n^*}$. Thus, when $Y_i$ is binary, the number of terms with positive probability is equal to $\Big(^{n^* + 2|\mathcal{V}| - 1}_{n^*}\Big)$. For example, in the real data analysis of Section \ref{sec: icu example}, inference for $\mathbb{P}_{n^*=20}^F\Big(\mathbb{Y}_{n^*}^{G^{\Lambda_2}_{n^*=20}}(1)=x\Big)$ via parametric compositional g-formula estimation involved summing over $\Big(^{n^* + 2|\mathcal{V}| - 1}_{n^*}\Big) = \Big(^{20 + 2\times3 - 1}_{20}\Big)=53130$ terms. We have provided code for these computations at \href{https://github.com/AaronSarvet/OptimalLimited}{GitHub}.

\subsection{Sub-cluster regimes} \label{apsubsec: subclus}

Let $\mathbf{W}_n$ be an unmeasured random vector denoting the membership of each individual $i$ to one of $m_n<n$ sub-clusters that partition the (super) cluster of size $n$. Let $\boldsymbol{\kappa}_{m_n} \equiv \{\kappa_{n, 1}, \dots, \kappa_{n, m_n}\}$ denote a partition of the $\kappa_n$ treatment units across the $m_n$ sub-clusters. In a slight abuse of notation, we let $\boldsymbol{\kappa}_{m_n}$ both denote a random variable, and an arbitrary realization.

Now we define a general class of sub-cluster regimes:

\begin{definition}[Sub-cluster regimes]
Let $\delta$ be partitioned into 
$\delta_{\mathbf{R}} \equiv \{\delta_{\mathbf{R}, 1}, \dots, \delta_{\mathbf{R}, m_n}\}$, and $\delta'_{\mathbf{A}} \equiv \{\delta'_{\mathbf{A}, 1}, \dots, \delta'_{\mathbf{A}, m_n}\}$. Let $\mathbf{G}_m$ denote an $m\times m$ matrix of generalized resource-limited regimes, whose entry in its $r^{th}$ row and $p^{th}$ column, denoted by $[\mathbf{G}_m]_{r,p}$,  is some element in $\Pi_r^F(p)$, i.e. a generalized regime for a cluster of size $r$ with $p$ treatment units. Then, define the class of sub-cluster regimes $\Pi^F_{n, m_n}(\mathbf{G}_m)$ for which the following properties hold:

\begin{itemize}
        \item[1. ] Within each sub-cluster characterized by $W_i=c$, each  individual index is renumbered from $1,\dots, \mathbb{W}_n(c)$, according to the rank ordering of the original indices among such individuals.
        \item[2. ] Within each such sub-cluster characterized by $W_i=c$, individuals are assigned treatment according to $[\mathbf{G}_m]_{\mathbb{W}_n(c),\kappa_{n,c}}$ using the set of randomizers  $\{ \delta_{\mathbf{R}, c}, \delta'_{\mathbf{A}, c}\}$.
        \item[3. ] The sets of randomizers $\Big\{\{\delta_{\mathbf{R}, c}, \delta'_{\mathbf{A}, c}\} \mid c\in \{1,\dots, m_n\}\Big\}$ are mutually independent. 
    \end{itemize}

\end{definition}

In words, sub-cluster dynamic regimes determine treatment assignment independently by sub-cluster, where the regime implemented in each sub-cluster $c$ is some arbitrary CR in the general class defined by the number of individuals in that sub-cluster, $\mathbb{w}_n(c)$ and its corresponding treatment supply $\kappa_{n, c}$. Note that when their exists only a single sub-cluster, $m_n=1$, then $\Pi^F_{n, m_n}(\mathbf{G}_m)\equiv\Pi^F_{n}(\kappa_{n}).$ 

We leverage sub-cluster regimes in defining an alternative positivity condition in Appendix \ref{appsec: estim}, which, together with weak positivity condition $C0$, implies the strong positivity condition $C1$ articulated in Section \ref{sec: estim}.

\begin{figure}[h]
    \centering
\def\sizeA{1648800}
\def\sizeB{1251900}
\def\sizeC{750300}
\centering
\begin{tikzpicture}[x=5cm, y=5cm, scale=1.5,
  textDiscLabel/.style={%
    decorate,
    decoration={%
        text along path,%
        text align=center,
        reverse path=true,%
        raise=-2ex,%
        text={#1}}
  }]



    \draw  
    (0,0) arc (90:450:1) -- cycle;
\draw  
    (0,0) arc (90:450:.925) -- cycle;
\draw [very thick]
    (0,0) arc (90:450:.85) -- cycle;
\draw 
    (0,0) arc (90:450:.775) -- cycle;
\draw 
    (0,0) arc (90:450:.7) -- cycle;
\draw [dashed] 
    (0,0) arc (90:450:.625) -- cycle;
\draw [ultra thick]
    (0,0) arc (90:450:.55) -- cycle;
\draw  
    ({sin(325)},{-1 + cos(325)} )  arc (125:485:.4) -- cycle;
\draw  
    ({sin(35)},{-1 + cos(35)} )  arc (55:415:.4) -- cycle;

  \node[anchor=south] (NP) at (0,-2)  { $\Pi_n^F(\kappa_n)$};
  \node[anchor=south] (NP) at (0,-2*0.925)  { $\Pi_n(\kappa_n)$};
  \node[anchor=south] (NP) at (0,-2*0.85)  { $\Pi^*_n(\kappa_n, L_i)$};
  \node[anchor=south] (NP) at (0,-2*0.773)  { $\Pi^d_n(\kappa_n, L_i, q^*_{\Lambda_{\mathbb{l}_n}})$};
  \node[anchor=south] (NP) at (0,-2*0.695)  { $\Pi^{d*}_n(\kappa_n, L_i)$};
  \node[anchor=south] (NP) at (0,-2*0.61)  { $\Pi^{d\epsilon}_n(\kappa_n, \Lambda, Q_L)$};
 \node[anchor=south] (NP) at (0,-2*0.52)  { $\Pi^{d}_n(\kappa_n, L_i, \Lambda)$};
 \node[anchor=south] (NP) at (-.3,-.85)  { $\Pi^{D}_n(\kappa_n)$};
 \node[anchor=south] (NP) at (.3,-.85)  { $\Pi^{F}_{n, m_n}(\mathbf{G}_m)$};

\end{tikzpicture}
    \caption{Diagram depicting relations between classes of cluster dynamic treatment regimes (C-DTR). $\Pi_n^F(\kappa_n)$: generalized constrained C-DTRs; $\Pi_n(\kappa_n)$: constrained C-DTRs with exactly $\kappa_n$ treated units; $\Pi^*_n(\kappa_n, L_i)$: stochastic $L$-equitable C-DTRs; $\Pi^d_n(\kappa_n, L_i, q^*_{\Lambda_{\mathbb{l}_n}})$: mixed-deterministic C-DTRs; $\Pi^{d*}_n(\kappa_n, L_i)$: deterministic C-DTRs; $\Pi^{d\epsilon}_n(\kappa_n, \Lambda, Q_L)$: $\epsilon-L$-rank-preserving C-DTRs; $\Pi^{d}_n(\kappa_n, L_i, \Lambda)$: $L$-rank-preserving DTRs; $\Pi^{D}_n(\kappa_n)$: fully-deterministic C-DTRs; $\Pi^{F}_{n, m_n}(\mathbf{G}_m)$  generalized sub-cluster C-DTRs.}
    \label{fig: classes}
\end{figure}

\clearpage
\section{Inference} \label{appsec: estim}

This appendix elaborates on several inference-related topics deferred from the main text.  Section \ref{appsec: IPW} motivates a set of inverse-probability weighted (IPW) estimators of individual-level g-formula parameters, and notes important distinctions from conventional IPW estimators.
Section \ref{appsec: counterexamp_cons} provides an example demonstrating the insufficiency of weak positivity condition $C0$ for the general existence of a consistent estimator defined with respect to an asymptotic law $\mathbb{P}_0$ following condition $D$, and thus motivating the strong positivity condition $C1$ presented in Section \ref{sec: estim}.
Section \ref{appsec: altpos} provides an alternative to positivity condition $C1$ that is stronger but that may be easier to justify by subject matter experts. 
Section \ref{appsec: conditions} discusses inference for large-cluster parameters, which faces challenges due to the irregularity of these parameters when they are defined by a general CR. Therein, we develop conditions and results for parametric, semiparametric, and semiparametric efficient estimation of large-cluster parameters defined by candidate and optimal CRs, adapting results from \citetSM{luedtke2016optimal}. Notably, the conditions considered in this section ensure the regularity of the large-cluster parameters, in part through restrictions on the class of regimes considered.
Section \ref{appsec: online} presents strategies for consistent estimation of large-cluster parameters under conditions that do not ensure their regularity, adapted from online-learning-based estimators developed by \citetSM{luedtke2016statistical}. These strategies are important whenever investigators are interested in clinically implementing regimes defined by low-dimensional covariates, which often enjoy practical advantages over their high-dimensional counterparts.

\subsection{IPW} \label{appsec: IPW}
To motive a class of inverse-probability weighted estimators for cluster parameters, define the weight variable $$\tilde{W}_i^{G_{n^*}, j}  \coloneqq \frac{\tilde{q}^*_j(A_i \mid L_i)}{\tilde{q}_n(A_i \mid L_i)}.$$ Then, using simple algebra, we can show that $$\frac{1}{n}\sum\limits_{i=1}^nY_i\tilde{W}_i^{G_{n^*},j}  = \sum\limits_{o}y\tilde{f}_{j}^{G_{n}}(o).$$ As such, this appropriately-weighted empirical mean analogue of $Y_i$ will also be asymptotically consistent for its corresponding causal cluster parameter, $\mathbb{E}_{\mathbb{P}_{n^*}^F}[Y_j^{G_{n^*}}]$ under the conditions of Theorem \ref{theorem: asymptcons}. These relations hold despite the fact that a classical IPW representation of the individual-level g-formula $f_j^{G_{n^*}}(o)$ would suggest a different expression. Motivated by the density ratio $\frac{f_j^{G_{n^*}}(o)}{\mathbb{P}_{n^*}(O_i=o)} = \frac{q^{*}_{j}(a \mid l)}{q_{i}(a \mid l)}$ (under the conditions of Theorem \ref{theorem: asymptcons}), define the weight variable $$W^{G_{n^*}, j}_i\coloneqq \frac{q^{*}_{j}(A_i \mid L_i)}{q_{i}(A_i \mid L_i)}.$$ We thus have an IPW representation of the individual-level g-formula of Theorem \ref{theorem: CDTRID}:  by the Radon-Nikodym Theorem, as in \citetSM{richardson2013single}, we have that $$\mathbb{E}_{\mathbb{P}^F_{n^*}}[Y_j^{G_{n^*}}] =\mathbb{E}_{\mathbb{P}_{n^*}}[Y_iW^{G_{n^*}, j}_i]$$ under the conditions of Theorem \ref{theorem: CDTRID}. We emphasize that the weights $W^{G_{n^*}, j}_i$ are constructed with the density functions $q_i$ whereas $\tilde{W}_i^{G_{n^*},j}$ are constructed with a density function $\tilde{q}_n$. While we have that $\tilde{q}_n$ converges to $\overline{q}_0$ (under strong finite-cluster positivity $C1$), we do not have any guarantees that  $\overline{q}_0=q_i$. The following corollary may be illustrative, where we note the algebraic relation 
        $$\frac{1}{n}\sum\limits_{i=1}^nY_i  = \sum\limits_{o\in\mathcal{O}_i}y\tilde{Q}_{Y}(y \mid a, l) \tilde{q}_n(a \mid l) \tilde{Q}_L(l).$$

\begin{corollary}
    Consider an asymptotic law $\mathbb{P}_0^F$ following conditions $D$ and $\mathbf{f}_0$ such that $\lim\tilde{q}_n(a \mid l) = \overline{q}_0(a \mid l)$ for all $a,l$. Then, $\lim\frac{1}{n}\sum\limits_{i=1}^nY_i  = \mathbb{E}_{\mathbb{P}_0}[\overline{Y}_0] = \mathbb{E}_{\mathbb{P}_0}[Y_j\frac{\overline{q}_0(A_j \mid L_j)}{q_{j}(A_j \mid L_j)}].$
\end{corollary}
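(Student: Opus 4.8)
The plan is to prove the displayed chain as two separate equalities, each reducing to the common functional $\sum_{o\in\mathcal{O}_i} y\,Q_{Y}(y\mid a,l)\,\overline{q}_0(a\mid l)\,Q_L(l)$, which I take to be the meaning of $\mathbb{E}_{\mathbb{P}_0}[\overline{Y}_0]$. The first equality is a weak law of large numbers built on the algebraic identity displayed immediately above the corollary, while the second is a Radon--Nikodym change-of-measure identity of the kind already used for the individual-level g-formula. I would begin from $\frac{1}{n}\sum_{i=1}^n Y_i = \sum_{o} y\,\tilde{Q}_{Y}(y \mid a, l)\,\tilde{q}_n(a \mid l)\,\tilde{Q}_L(l)$ and show that each factor of each (finitely many) summand converges in probability. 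Under condition $D1$ every $\mathbb{P}_n\in\mathcal{M}_n^{AB}$, so by the conditional-iid property established in Appendix \ref{appsec: proofs} the covariates $\{L_i\}$ are i.i.d.\ with law $Q_L$, whence $\tilde{Q}_L(l)$ converges in probability to $Q_L(l)$ by the ordinary weak law; the hypothesis supplies the convergence of $\tilde{q}_n(a\mid l)$ to $\overline{q}_0(a\mid l)$. Slutsky's theorem over the finite index set then yields the first equality \emph{provided} $\tilde{Q}_Y(y\mid a,l)$ converges in probability to $Q_Y(y\mid a,l)$ on every stratum contributing to the limit.

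Establishing the convergence of $\tilde{Q}_Y$ is the crux. Because $\tilde{Q}_Y(y\mid a,l)=\mathbb{O}_n(y,a,l)/\mathbb{B}_n(a,l)$ is an average over the \emph{random} set of individuals falling in stratum $(a,l)$, and the treatment mechanism couples individuals across the cluster, I would condition on $(\mathbf{A}_n,\mathbf{L}_n)$. By the conditional-iid property of outcomes under $\mathcal{M}_n^{AB}$, the $Y_i$ within stratum $(a,l)$ are conditionally independent with common law $Q_Y(\cdot\mid a,l)$, so $\tilde{Q}_Y(y\mid a,l)$ has conditional mean $Q_Y(y\mid a,l)$ and conditional variance of order $1/\mathbb{B}_n(a,l)$. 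On a stratum with $\overline{q}_0(a\mid l)Q_L(l)>0$ one has $\mathbb{B}_n(a,l)=n\,\tilde{q}_n(a\mid l)\,\tilde{Q}_L(l)\to\infty$ in probability, and a conditional Chebyshev bound followed by bounded convergence gives the required convergence of $\tilde{Q}_Y(y\mid a,l)$. On a stratum with $\overline{q}_0(a\mid l)Q_L(l)=0$ the entire summand vanishes, since $\tilde{q}_n\tilde{Q}_L\to 0$ in probability while $\tilde{Q}_Y\le 1$ and $y$ ranges over a bounded (or integrable) set. Combining these cases, $\boldsymbol{\lim}\tfrac{1}{n}\sum_{i=1}^n Y_i=\sum_{o} y\,Q_{Y}(y\mid a,l)\,\overline{q}_0(a\mid l)\,Q_L(l)=\mathbb{E}_{\mathbb{P}_0}[\overline{Y}_0]$.

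For the second equality I would evaluate the weighted expectation directly for a fixed index $j$, using that $L_j\sim Q_L$, $A_j\mid L_j\sim q_j$, and $Y_j\mid A_j,L_j\sim Q_Y$ under $\mathcal{M}_n^{AB}$, so that $\mathbb{E}_{\mathbb{P}_0}\big[\,Y_j\,\overline{q}_0(A_j\mid L_j)/q_j(A_j\mid L_j)\,\big]=\sum_{y,a,l} y\,\frac{\overline{q}_0(a\mid l)}{q_j(a\mid l)}\,Q_{Y}(y\mid a,l)\,q_j(a\mid l)\,Q_L(l)$. The factors $q_j(a\mid l)$ cancel---a Radon--Nikodym change of measure as in \citetSM{richardson2013single}, legitimate whenever $q_j(a\mid l)>0$ on the support of $\overline{q}_0(a\mid l)Q_L(l)$---leaving $\sum_{y,a,l} y\,\overline{q}_0(a\mid l)\,Q_{Y}(y\mid a,l)\,Q_L(l)$, which is free of $j$ and coincides with the probability limit obtained above.

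The main obstacle is the convergence of $\tilde{Q}_Y$ to $Q_Y$: the strong dependence of treatment assignment across the cluster and the randomness of the stratum counts $\mathbb{B}_n(a,l)$ rule out a direct unconditional law of large numbers, so the argument must be made conditionally on $(\mathbf{A}_n,\mathbf{L}_n)$ together with a verification that $\mathbb{B}_n(a,l)$ diverges on precisely the strata that survive in the limit. Everything else is bookkeeping with Slutsky's theorem and an elementary change of measure.
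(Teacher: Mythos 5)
Your proof is correct and takes essentially the same route the paper intends: the corollary is meant to follow from the algebraic identity $\frac{1}{n}\sum_{i=1}^n Y_i = \sum_{o} y\,\tilde{Q}_{Y}(y\mid a,l)\,\tilde{q}_n(a\mid l)\,\tilde{Q}_L(l)$ together with the consistency machinery of Theorem \ref{theorem: asymptcons} (iid law of large numbers for $\tilde{Q}_L$, conditional-iid Chebyshev for $\tilde{Q}_Y$ on strata where $\mathbb{B}_n(a,l)$ diverges, Slutsky over the finite index set) and the Radon--Nikodym cancellation of $q_j$ for the weighted representation, which is exactly your decomposition. Your explicit caveat that the cancellation requires $q_j(a\mid l)>0$ on the support of $\overline{q}_0(a\mid l)Q_L(l)$ is a condition the paper leaves implicit but is indeed needed for the second equality.
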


Thus, we see that the probability limit of the empirical mean analogue of $Y_i$ is equivalent (under stated conditions) to the expectation of a weighted outcome $Y_j$ whose likelihood-ratio weight has a denominator and numerator defined by the measures $q_j$ and $\overline{q}_0$, respectively. Thus, we see that the probability limit of $\frac{1}{n}\sum\limits_{i=1}^nY_i\tilde{W}_i^{G_n, j}$ will result in the appropriate cancellations of measures, and that it agrees with the classical IPW representation of the individual-level g-formula $\mathbb{E}_{\mathbb{P}_0}[Y_i\frac{\overline{q}_0(A_i \mid L_i)}{q_{i}(A_i \mid L_i)}\frac{q^*_{j}(A_i \mid L_i)}{\overline{q}_0(A_i \mid L_i)}]$.

The preceding results focused on IPW representation of finite cluster parameter $f_j^{G_{n^*}}(o)$. When interest is in the large cluster parameter $f^{\mathbf{G}_0}(o)$, analogous estimators may be obtained using weights $$\tilde{W}_i^{\mathbf{G}_0}  \coloneqq \frac{\tilde{q}^*_0(A_i \mid L_i)}{\tilde{q}_n(A_i \mid L_i)},$$
provided that $f^{\mathbf{G}_0}(o)$ is indeed a regular parameter, for example under the conditions of Lemma \ref{lemma: omegacons} in Appendix Section \ref{appsec: conditions}.

\subsection{Counterexample to the existence of a consistent plug-in estimator of $f_i^{G_n}$ under conditions of Theorem \ref{theorem: CDTRID}.} \label{appsec: counterexamp_cons}

Theorem \ref{theorem: CDTRID} provides a functional of individual-level observed data parameters $Q_L$ and $Q_Y$ that equals a finite cluster counterfactual parameter, provided that functional is well-defined. In line with conventional approaches \citepSM{richardson2013single}, weak cluster positivity condition $C0$ guarantees that functional is well-defined. In section \ref{sec: estim}, we articulated an a notion of asymptotics defined by an asymptotic law $\mathbb{P}_0^F$. Here we demonstrate by example that the asymptotic analogues of the conditions of Theorem \ref{theorem: CDTRID} (i.e. condition $D$) and an asymptotic analogue of condition $C0$ are not sufficient for the consistency result of Theorem \ref{theorem: asymptcons}. 

Let condition $D^*$ denote the intersection of condition $D$ and the following condition, where we remind the reader that we have defined $q_0(a\mid l) \coloneqq \underset{n\to\infty}{\lim}\overline{q}_n(a \mid l)$:

\begin{itemize}
    \item[$D3.$] ${q}_0$ is well-defined and if $\overline{q}^*_n(a \mid l)Q_L(l) > 0 $ then ${q}_0(a \mid l)>0$, for all $a,l$.
\end{itemize}

In other words, we will show by counterexample that the following proposition is false.

\begin{proposition*}
    Consider $\mathbb{P}_0^F$ and $G_{n^*}\in \Pi_{n^{\ast}}(\kappa_{n^{\ast}})$ following condition $D^*$. Then,
    \begin{align}
        & \boldsymbol{\lim} \sum\limits_{o}h'(o)\tilde{f}_{i}^{G_{n^*}}(o) = \mathbb{E}_{\mathbb{P}^F_{n^*}}[ h'(O_i^{G_{n^*}+})].
    \end{align}
\end{proposition*}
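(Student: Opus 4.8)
The plan is to refute the proposition by exhibiting a single asymptotic law $\mathbb{P}_0^F$ together with a finite-cluster regime $G_{n^*}$ that satisfy $D^*$ (hence both $D$ and $D3$) yet for which the plug-in functional $\sum_o h'(o)\tilde f_i^{G_{n^*}}(o)$ fails to converge in probability to $\mathbb{E}_{\mathbb{P}^F_{n^*}}[h'(O_i^{G_{n^*}+})]$. The construction follows the heuristic of Section~\ref{sec: estim}: covariates and outcomes are generated in the benign conditionally-i.i.d.\ manner forced by $\mathcal{M}^{AB}_n$, while the treatment mechanism is driven by a single cluster-level randomizer that concentrates mass on extreme allocations. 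Concretely I would take $\mathcal{L}=\{0,1\}$ and $\mathcal{Y}=\{0,1\}$, draw $L_1,\dots,L_n$ i.i.d.\ with $Q_L(1)=\tfrac12$, and draw $Y_i$ from a fixed kernel $Q_Y(\cdot\mid a,l)$ whose value genuinely depends on $a$ at $l=1$ (e.g.\ $Q_Y(1\mid 1,1)\neq Q_Y(1\mid 0,1)$). For the treatment mechanism I would introduce an exogenous fair coin $Z_n$, independent of all covariate and outcome errors, and set $A_i = Z_n\cdot L_i$, so that every $l=1$ individual is treated precisely when $Z_n=1$. As the target I would take a finite-cluster regime in $\Pi^d_{n^*}(\kappa_{n^*},L_i,\Lambda)\subset\Pi_{n^*}(\kappa_{n^*})$ that assigns treatment to $l=1$ individuals with positive probability, so that the estimand depends on $Q_Y(\cdot\mid 1,1)$.

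The next step is to verify the hypotheses. For each $n$, conditions $A1$--$A3$ hold because $L_i$ and $Y_i$ are produced by invariant structural equations with mutually independent, identically distributed errors, and because the shared coin perturbs only the (unrestricted) treatment-error structure; condition $B$ holds because $Z_n\independent\epsilon_{Y_i}$ yields $Y_i^{a_i}\independent A_i\mid L_i$. Thus $D1$ holds, and $D2$ holds since the $L$- and $Y$-mechanisms do not vary with $n$. For $D3$ I would compute $q_i(1\mid 1)=\mathbb{P}(Z_n=1)=\tfrac12$ for every $i$, whence $\overline{q}_n(1\mid 1)\to q_0(1\mid 1)=\tfrac12>0$, and likewise the remaining cells receive strictly positive $q_0$ wherever the target regime places mass, so weak positivity $D3$ is satisfied. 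Theorem~\ref{theorem: CDTRID} then identifies the true estimand as a fixed number involving $Q_Y(\cdot\mid 1,1)$.

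I then turn to the plug-in limit. The decisive observation is that, under $\mathcal{M}^{AB}_n$, the outcomes within any realized cell $\{i:A_i=a,L_i=l\}$ are i.i.d.\ draws from $Q_Y(\cdot\mid a,l)$; consequently $\tilde Q_Y(\cdot\mid a,l)$ can fail to converge to $Q_Y(\cdot\mid a,l)$ \emph{only} by that cell being (near-)empty---it can never converge to an incorrect value. The shared coin arranges exactly such emptiness: when $Z_n=0$ the cell $(a,l)=(1,1)$ is empty, and when $Z_n=1$ the cell $(0,1)$ is empty. Hence if the target regime treats $l=1$ deterministically, the cell $(1,1)$ needed for the estimand is empty with probability $\tfrac12$, so $\sum_o h'(o)\tilde f_i^{G_{n^*}}(o)$ is ill-defined on an event of probability $\tfrac12$ and cannot have $\theta\coloneqq\mathbb{E}_{\mathbb{P}^F_{n^*}}[h'(O_i^{G_{n^*}+})]$ as its probability limit; if instead the target is stochastic on $l=1$, the estimand needs both $(1,1)$ and $(0,1)$, and since one of these is empty in \emph{every} realization the plug-in is ill-defined with probability one. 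Either way the claimed equality fails.

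The main obstacle is expository rather than mathematical: I must state precisely what $\sum_o h'(o)\tilde f_i^{G_{n^*}}(o)$ means on the empty-cell event (fix any convention, e.g.\ $\tilde Q_Y(\cdot\mid a,l)\coloneqq 0$ when $\mathbb{B}_n(a,l)=0$) and then argue that no convention restores the limit, since the correct contribution of that cell is nonzero. The conceptual crux---and the reason the counterexample is instructive---is the gap it isolates between $q_0=\lim\overline{q}_n$, an average of marginal propensities that remains bounded away from $0$ here, and $\overline{q}_0=\lim\tilde q_n$, a limit of empirical treated-fractions that oscillates in $\{0,1\}$ and so is not even well-defined. This is exactly the distinction between weak positivity ($C0$/$D3$) and strong positivity ($C1$): the former is satisfied, the latter is not, and Theorem~\ref{theorem: asymptcons} correctly requires the latter.
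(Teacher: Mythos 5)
Your overall strategy is the right one and matches the paper's: the paper itself proves this proposition \emph{false} by counterexample, using exactly the mechanism you describe (a cluster-level randomizer concentrating the observed allocation on extreme rank orders, so that $\overline{q}_n$ converges while $\tilde{q}_n$ oscillates and the cell needed by the estimand is empty with non-vanishing probability). However, your specific construction has a genuine gap: with $A_i = Z_n\cdot L_i$ you get $q_i(1\mid 0)=0$ for all $i,n$, hence $q_0(1\mid 0)=0$, and your claim that ``the remaining cells receive strictly positive $q_0$ wherever the target regime places mass'' is false for the cell $(a,l)=(1,0)$. The proposition requires $G_{n^*}\in \Pi_{n^*}(\kappa_{n^*})$, the class in which \emph{exactly} $\kappa_{n^*}$ units are assigned for every $\mathbf{l}_{n^*},\delta$. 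So for any $\kappa_{n^*}\geq 1$, the composition in which all $n^*$ individuals have $l=0$ occurs with probability $2^{-n^*}>0$, and in it the regime must treat $\kappa_{n^*}$ individuals with $l=0$; consequently $\overline{q}^*_{n^*}(1\mid 0)\,Q_L(0)>0$ while $q_0(1\mid 0)=0$, so condition $D3$ --- and hence the hypothesis $D^*$ you are supposed to satisfy --- fails, and your example does not refute the proposition. (Your only escape within your mechanism is $\kappa_{n^*}=0$ with the estimand driven by $Q_Y(\cdot\mid 0,1)$ and the empty cell $(0,1)$ on the event $Z_n=1$, but you explicitly rule that out by insisting the target treat $l=1$ with positive probability.)

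The repair is exactly the paper's construction: symmetrize the observed mechanism so every cell retains positive averaged marginal propensity. The paper takes $\kappa_n=\lfloor n/2\rfloor$ and lets $\mathbf{f}_{n,\mathbf{A}}$ equal a mixed-deterministic CR in $\Pi^{d}_n(\kappa_n,L_i,q^*_{\Lambda_{\mathbb{l}_n}})$ putting mass $1/2$ on $\Lambda$ with $\Lambda(1)>\Lambda(0)$ and mass $1/2$ on $\Lambda^*$ with $\Lambda^*(0)>\Lambda^*(1)$; equivalently, in your notation, $A_i=Z_nL_i+(1-Z_n)(1-L_i)$ up to the $O(\sqrt{n})$ spillover needed to exhaust exactly $\kappa_n$ units. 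Then $q_i(a\mid l)=1/2$ for all cells, so $D3$ holds for \emph{every} $(a,l)$ and every target regime, while in each realization one of the treated cells $(1,1)$ or $(1,0)$ is still empty (or nearly so) with probability bounded away from zero, so $\tilde{Q}_{Y,n}$ does not converge in probability to $Q_Y$ and the plug-in fails, as you argue. With that one change your argument goes through, and your remaining points --- that under $\mathcal{M}^{AB}_n$ the empirical kernel can fail only through empty cells, the need to fix a convention on the empty-cell event, and the diagnosis that the example separates $q_0=\lim\overline{q}_n$ from $\overline{q}_0=\lim\tilde{q}_n$, i.e.\ weak positivity $C0$/$D3$ from strong positivity $C1$ --- are correct and coincide with the paper's own discussion.
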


We do so by finding a law $\mathbb{P}_0^F$ following condition $D^*$ in which $\tilde{Q}_{Y,n}$ does not converge and thus $\tilde{f}_{i}^{G_{n^*}}(o)$ does not converge. 

Consider a setting with binary covariate $L\in\mathcal{L}\equiv\{0,1\}$, and consider an asymptotic law $\mathbb{P}_0^F$ following conditions $D$, characterized by $Q_L(1) =0.5$ and $\lim\frac{\kappa_n}{n} = 0.5$: asymptotically, half of the cluster can receive treatment. Furthermore, suppose that for each $n$, $\mathbb{P}_n$ is induced in part by $\mathbf{f}_{n, \mathbf{A}}\coloneqq \{f_{A_i}, \dots, f_{A_n}\}$ that is equivalent to some mixed-deterministic CR $G_n \in \Pi^{d}_n(\kappa_n, L_i, q^*_{\Lambda_{\mathbb{l}_n}})$ such that for each $n$, for all $\mathbb{l}_n$ $q^*_{\Lambda_{\mathbb{l}_n}}$ puts equal mass on a $\Lambda$ such that $\Lambda(1)>\Lambda(0)$ and on a $\Lambda^*$ such that $\Lambda^*(0)>\Lambda^*(1)$. Thus, $q_i(1 \mid l) = 0.5$ for each $l$ and each $i$ and all $n$. It follows then that $\overline{q}_0(1 \mid l) = 0.5$, and so $\mathbb{P}_0^F$ will follow condition $D^*$ fro any $G_{n^*}$. 

Under such an asymptotic law $\mathbb{P}_0^F$ we will have that $\lim\limits_{n\to \infty}\mathbb{P}_n\Big( \mathbb{B}_{n}(1, l) = 0 \Big)=0.5$ for each $l$. In words, the probability that a given cluster will have exactly 0 treated individuals with covariate history $l$ is exactly 0.5, for each $l$.  We thus have that $$\lim\limits_{n\to \infty} \mathbb{P}_n\Bigg( \Big| \frac{\mathbb{O}_{n}(y,a, l)}{\mathbb{B}_{n}(a, l)} - Q_{Y}(y \mid a, l) \Big| > \epsilon' \Bigg)=0.5,$$ for all $\epsilon'>0$, which violates the convergence condition for consistent estimation of the parameters described by $Q_Y$ under the asymoptotic regime in this case.

\subsection{Alternative conditions for asymptotic consistency: sub-cluster positivity} \label{appsec: altpos}

In the preceding Appendix Section \ref{appsec: counterexamp_cons}, we illustrated that an asymptotic law following condition $D$ could not provide any guarantees for the convergence of relevant estimators of finite-cluster parameters (e.g. $\tilde{f}^{G_{n^*}}_i(o)$), because no such guarantees can be made about the convergence of $\tilde{{q}}_n$, and by consequence the convergence of $\tilde{Q}_{Y,n}$. Strong cluster positivity condition $C1$ provides direct guarantees for such convergence, by asserting that the probability limit of $\tilde{{q}}_n$ is well-defined. However, this limiting condition is arguably abstract; subject matter experts may find it difficult to reason about its truth-value for a given setting. Therefore, in this Appendix Section we provide a stronger condition on the treatment assignment mechanisms $\mathbf{f}_{0,\mathbf{A}}$ in $\mathbb{S}_0$ (for a given asymptotic law $\mathbb{P}_0$), which, together with $C0$ implies the strong cluster positivity condition $C1$, and which may recognizably capture observed data generating mechanisms in many settings.

Let $P^{\kappa}_0 \equiv \{P^{\kappa}_{1}, P^{\kappa}_{2}, P^{\kappa}_{3}, \dots, P^{\kappa}_{k},\dots, P^{\kappa}_{m},\dots\}$ denote some sequence of distributions, whereby, for each $k$,  $P^{\kappa}_{k}$ is some probability mass function over the space $\{0,\dots, k\}$. Furthermore define the alpha-mixing coefficient introduced by \citetSM{rosenblatt1956central}, commonly used in interference settings (see for example, \citetSM{savje2021average}), $$\alpha_{\mathbb{P}_n}(V_1,V_2) = \sup\limits_{v_1\in\sigma(V_1), v_2\in\sigma(V_2)}\Big| \mathbb{P}_n(V_1\in v_1, V_2\in v_2) - \mathbb{P}_n(V_1\in v_1)\mathbb{P}_n(V_2\in v_2)\Big|,$$ where $\sigma(X)$ is the sub-sigma algebra generated by $X$.  

Now we consider the following alternative conditions for an asymptotic law $\mathbb{P}_0^F$: 

\begin{itemize}
    \item [$C2.$] \textbf{Sub-cluster positivity} $\mathbb{P}_0^F$, is such that for all $a, l$, the following conditions hold:

\begin{itemize}
    \item [1. ] For all $n$, $\mathbb{P}_n(\mathbb{W}_n(c) \leq m ) = 1$ for some fixed $m\in\mathbb{N}^+$ for all $c$.
    \item[2. ] For some fixed random variable $V^{\dagger}$ with support on $\{0, \dots, m\}$, we have that $(\mathbb{W}_n(c) - V^{\dagger}) = o_{\mathbb{P}_0}(1)$ for all $c\in\mathcal{C}_n \subset \{1,\dots, m_n\}$ and $\frac{m_n-|\mathcal{C}_n|}{m_n} = o(1)$.
    \item[3. ] For $c_1 \neq c_2$ two sub-cluster indices, $\alpha_{\mathbb{P}_n^F}(\mathbb{W}_n(c_1), \mathbb{W}_n(c_2)) = o(1)$.
    \item [4. ] For all $n$, $\mathbb{P}_n\in\mathbb{P}_0$ is characterized by an $\mathbf{f}_{n, \mathbf{A}}$ equivalent to a regime $G^*_n\in \Pi^{F}_{n, m_n}(\mathbf{G}_m)$, with $\delta\equiv\epsilon_{\mathbf{A}_n}$.
    \item [5. ]  $\kappa_{n, c}$ is distributed according  $P^{\kappa}_{\mathbb{W}_n(c)}\in P^{\kappa}_0$, with the elements of $\boldsymbol{\kappa}_{m_n}$ mutually independent and independent of $\boldsymbol{\epsilon}_n$.
    \item [6. ] $\epsilon_{\mathbf{W}_n} \independent \boldsymbol{\epsilon}_n$.
\end{itemize}
\end{itemize}

We direct the reader to Appendix \ref{apsubsec: subclus} for explicit definitions of the class of sub-cluster regimes  $\Pi^{F}_{n, m_n}(\mathbf{G}_m)$. Conditions $C2.1$ to $C2.3$ place mild but important conditions on the sub-cluster asymptotic regime. Considered jointly, they ensure that the number of clusters grows sufficiently fast, that the number of individuals per sub-cluster converges to some common distribution for almost all sub-clusters, and that jointly, the sub-cluster sizes are merely weakly dependent. Condition $C2.4$ ensures that any sub-cluster with fixed size $n_c$ will have a common distribution for the number of treatment units available to it, and condition $C2.5$ ensures that, once the number of treatment units available for that sub cluster is additionally fixed to $\kappa_{n,c}$, any such cluster will follow a common regime $[\mathbf{G}_m]_{n_c,\kappa_{n,c}}$. A consequence of $C2.1$ to $C2.5$ considered jointly is that $\tilde{q}_n$ converges sufficiently fast to an empirical average of weakly dependent and identically distributed random variables with a number of terms that grows proportional to $n$.  Condition $C2.6$ ensures that sub-cluster membership may be excluded as a covariate from identification functionals. 

Sub-cluster positivity condition $C2$ describes a data-generating mechanism that may be justifiable via subject matter reasoning in many cases. For example, a setting in which investigators receive data in trenches from an increasing number of intensive care units of bounded size, i.e., that of our observed data setting, would arguably be characterized by condition $C2$. In the following results, we assert that condition $C2$ does indeed imply a stronger version of convergence of $\tilde{q}_n$, and indeed when considered in conjunction with weak positivity condition $C0$, does imply strong positivity condition $C1$. 

\begin{lemma} \label{lemma: subclus}
    Consider an asymptotic law $\mathbb{P}_0^F$ following conditions $D$ and sub-cluster positivity condition $C2$. Then $\tilde{q}_n$ is $L^2(\mathbb{P}_0)$ consistent for $\overline{q}_0$, i.e. $\mathbb{E}[\lVert \tilde{q}_n - \overline{q}_0\rVert_2^2] = o(1).$
\end{lemma}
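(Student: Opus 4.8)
The plan is to use the sub-cluster structure imposed by $C2$ to write $\tilde q_n$ as a ratio of two averages over the $m_n$ sub-clusters, and then to prove $L^2$ convergence of the numerator and denominator separately through a law of large numbers for weakly dependent arrays. Writing $N_c(a,l)$ for the number of individuals in sub-cluster $c$ with $(A_i,L_i)=(a,l)$ and $N_c(l)=\sum_a N_c(a,l)$, I would first record the identity
\[
\tilde q_n(a\mid l)=\frac{\mathbb{B}_n(a,l)}{\mathbb{L}_n(l)}=\frac{m_n^{-1}\sum_{c=1}^{m_n}N_c(a,l)}{m_n^{-1}\sum_{c=1}^{m_n}N_c(l)}=:\frac{A_n(a,l)}{B_n(l)}.
\]
By the sub-cluster regime definition $C2.4$ and the independence assertions in the definition of $\Pi^F_{n,m_n}(\mathbf{G}_m)$ together with $C2.5$, each $N_c$ is a measurable function of only the sub-cluster size $\mathbb{W}_n(c)$, its resource allocation $\kappa_{n,c}$, and the covariates and randomizers internal to sub-cluster $c$; under $D1$ the covariates are i.i.d., the randomizers are mutually independent across $c$, and $\boldsymbol{\kappa}_{m_n}$ is mutually independent by $C2.5$. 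Consequently the only cross-sub-cluster dependence among the $N_c$ propagates through the sizes $\mathbb{W}_n(c)$.

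I would then establish $L^2$ convergence of $A_n$ and $B_n$ to deterministic limits. Since $N_c(l)\le\mathbb{W}_n(c)\le m$ by $C2.1$, every summand is uniformly bounded, and $\mathrm{Var}(A_n)=m_n^{-2}\sum_{c_1,c_2}\mathrm{Cov}\big(N_{c_1},N_{c_2}\big)$. The diagonal contributes at most $m^2/m_n=o(1)$, since $C2.2$ forces $m_n\to\infty$. For the off-diagonal terms, $N_{c_1}$ and $N_{c_2}$ are conditionally independent given $(\mathbb{W}_n(c_1),\mathbb{W}_n(c_2))$, so their $\alpha$-mixing coefficient is dominated by that of the sizes, which is $o(1)$ uniformly over pairs by $C2.3$; a Davydov-type covariance bound for bounded variables then gives $|\mathrm{Cov}(N_{c_1},N_{c_2})|\le 4m^2\,\alpha_{\mathbb{P}_n^F}(\mathbb{W}_n(c_1),\mathbb{W}_n(c_2))$, making the off-diagonal contribution $o(1)$ as well. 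Convergence of the means follows from $C2.2$: all but a vanishing fraction $1-|\mathcal{C}_n|/m_n=o(1)$ of the sizes converge to the common law $V^\dagger$, and the bounded, regime-determined conditional law of $N_c$ given the size (and $\kappa_{n,c}\sim P^\kappa_{\mathbb{W}_n(c)}$) converges with it, while the excluded sub-clusters contribute $O\big((m_n-|\mathcal{C}_n|)/m_n\big)=o(1)$. This yields $A_n\xrightarrow{L^2}\mu(a,l)$ and, identically, $B_n\xrightarrow{L^2}\mu(l)=\mathbb{E}[V^\dagger]\,Q_L(l)$.

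Finally I would pass to the ratio. On the support of $Q_L$ the denominator limit $\mu(l)$ is strictly positive, so $(x,y)\mapsto x/y$ is continuous at $(\mu(a,l),\mu(l))$; with $A_n\xrightarrow{p}\mu(a,l)$ and $B_n\xrightarrow{p}\mu(l)$ (implied by their $L^2$ convergence), the continuous mapping theorem gives $\tilde q_n(a\mid l)\xrightarrow{p}\mu(a,l)/\mu(l)$, which I take as the definition of the now demonstrably well-defined probability limit $\overline q_0(a\mid l)$. Because $\tilde q_n(a\mid l)\in[0,1]$ is uniformly bounded, convergence in probability upgrades to $L^2$ convergence by bounded convergence, and summing over the finitely many $(a,l)$ pairs gives $\mathbb{E}[\lVert\tilde q_n-\overline q_0\rVert_2^2]=o(1)$. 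I expect the variance step to be the main obstacle: one must reduce the $\alpha$-mixing of the summary statistics $N_c$ to that of the sizes $\mathbb{W}_n(c)$ and check that the $o(1)$ coefficients of $C2.3$ decay uniformly enough across the $m_n^2$ pairs for the double covariance sum to be $o(m_n^2)$. The merely asymptotic (rather than exact) identity of the laws of the $N_c$, controlled through the set $\mathcal{C}_n$ of $C2.2$, is the secondary complication.
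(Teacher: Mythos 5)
Your proposal is correct in substance but takes a genuinely different route from the paper. The paper first proves a representation lemma (its Lemma on sub-cluster structure) showing that $\tilde q_n(a \mid l) = \frac{1}{m_n}\sum_{c=1}^{m_n}\frac{m_n}{\mathbb{L}_n(l)}f_n(\mathbb{W}_n(c), \epsilon_{\mathbb{w}_n,c})$ with $f_n$ invariant in $c$ and the internal error terms mutually independent and independent of the sizes (this is where $C2.4$--$C2.6$ and conditions $A1$--$A2$ do their work), and then invokes a central limit theorem for stationary strongly mixing sequences (Billingsley, Theorem 27.4), after artificially extending the sub-cluster size process to a stationary sequence beyond $m_n$, to extract approximations $\mu_n$ and $\sigma_n^2 = o(n)$ for the first two moments; the $L^2$ claim then follows from a bias--variance expansion around $\mu_n$. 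You avoid the CLT entirely: you split $\tilde q_n$ into a ratio of numerator and denominator sub-cluster averages, bound the variance of each directly via a Davydov covariance inequality for bounded variables --- reducing the mixing of the counts $N_c$ to the mixing of the sizes, which is valid precisely because the counts are conditionally independent given the sizes once the internal randomizers are jointly independent of $\mathbf{W}_n$; note that this last independence is condition $C2.6$, which you use implicitly but should cite explicitly --- then pass to the ratio by continuous mapping and upgrade convergence in probability to $L^2$ via boundedness of $\tilde q_n \in [0,1]$. Your argument is more elementary: law-of-large-numbers strength suffices for an $L^2$-consistency claim, so the paper's stationary extension and normal approximation are heavier machinery than the lemma requires. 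It also treats the random denominator $\mathbb{L}_n(l)$ more transparently than the paper, which absorbs $1/\mathbb{L}_n(l)$ into the per-sub-cluster summands $f'_n$ even though that factor depends on all sub-clusters jointly. What the paper's route buys is the normal approximation itself, which underpins its later assertions of regular asymptotically linear behavior of estimators under $C2$; your LLN-style argument would not deliver that by itself.

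Three refinements would tighten your sketch. First, $m_n \to \infty$ follows from $C2.1$ together with the partition structure ($\sum_c \mathbb{W}_n(c) = n$ with each size at most $m$ forces $m_n \ge n/m$), not from $C2.2$ as you state. Second, strict positivity of your denominator limit $\mathbb{E}[V^{\dagger}]\,Q_L(l)$ on the support of $Q_L$ deserves a line: since the sizes sum to $n$ over $m_n \le n$ nonempty sub-clusters, the average size is at least $1$, whence $\mathbb{E}[V^{\dagger}] \ge 1 > 0$. Third, you correctly flag that $C2.3$'s pairwise $o(1)$ must hold uniformly enough over the $m_n^2$ pairs for the off-diagonal covariance sum to be $o(m_n^2)$; the paper's own variance computation for $\sigma_{n^*}^2$ relies on the same uniform reading of $C2.3$, so this is a shared, not a new, burden.
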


\begin{corollary}
    Consider a finite cluster regime of interest $G_{n^*}$.  Under the conditions of Lemma \ref{lemma: subclus}, Weak finite cluster positivity condition $C0$ implies strong finite cluster positivity condition $C1$. 
\end{corollary}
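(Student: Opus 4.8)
The plan is to read condition $C1$ as the conjunction of two claims---that $\overline{q}_0$ is well-defined, and that $\overline{q}^*_{n^*}(a\mid l)Q_L(l)>0$ implies $\overline{q}_0(a\mid l)>0$ for all $a,l$---and to discharge each in turn. The first claim is immediate from Lemma~\ref{lemma: subclus}: under conditions $D$ and $C2$ the Lemma gives $\mathbb{E}[\lVert \tilde{q}_n - \overline{q}_0\rVert_2^2]=o(1)$, so the probability limit $\overline{q}_0=\lim_n \tilde{q}_n$ exists. The substance of the proof is therefore the positivity implication, and the bridge connecting it to $C0$---which is phrased in terms of the averaged true propensity $\overline{q}_n=\frac{1}{n}\sum_i q_i$ rather than the empirical limit $\overline{q}_0$---is the identity $\overline{q}_0=\lim_n \overline{q}_n=q_0$.

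To establish that identity I would write the empirical conditional frequency as the ratio $\tilde{q}_n(a\mid l)=\mathbb{B}_n(a,l)/\mathbb{L}_n(l)$ and treat numerator and denominator separately. Dividing by $n$, the expected counts satisfy $\mathbb{E}[\mathbb{B}_n(a,l)]/n=\overline{q}_n(a\mid l)Q_L(l)$ and $\mathbb{E}[\mathbb{L}_n(l)]/n=Q_L(l)$ under $D$, and the $L^2$ convergence of the Lemma together with the concentration of $\mathbb{L}_n(l)/n$ about $Q_L(l)$ lets me pass to the limit by Slutsky's theorem, yielding $\overline{q}_0(a\mid l)=q_0(a\mid l)=\lim_n \overline{q}_n(a\mid l)$ whenever $Q_L(l)>0$ (which is guaranteed under the hypothesis $\overline{q}^*_{n^*}(a\mid l)Q_L(l)>0$). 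With the bridge in hand, applying $C0$ at cluster size $n^*$ gives $\overline{q}_{n^*}(a\mid l)>0$, and it remains only to deduce that the limit $q_0(a\mid l)$ is itself strictly positive.

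The hard part is precisely this last deduction: strict positivity at a single finite size $n^*$ does not by itself forbid $\overline{q}_n(a\mid l)\to 0$. Here I would lean on the structural content of sub-cluster positivity $C2$. Because $C2.1$ bounds every sub-cluster size by a fixed $m$, there are only finitely many sub-cluster configurations, and by $C2.2$, $C2.3$ and $C2.5$ the empirical frequencies of these configurations converge to fixed limiting weights determined by $V^{\dagger}$, $P^{\kappa}_0$ and $\mathbf{G}_m$. The averaged propensity $q_0(a\mid l)$ is then a convex combination, with these fixed weights, of the configuration-specific treatment probabilities assigned by the regimes $[\mathbf{G}_m]_{r,\kappa}$; since at least one configuration carrying positive limiting weight assigns $(a,l)$ with positive probability---this being the content transferred from $\overline{q}_{n^*}(a\mid l)>0$ through the bounded-size structure---the combination is bounded below by a strictly positive constant uniformly in $n$. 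This uniform lower bound gives $q_0(a\mid l)=\overline{q}_0(a\mid l)>0$ and completes the verification of $C1$. The delicate point to handle with care is ensuring that the configuration responsible for positivity at size $n^*$ persists with positive weight in the limiting sub-cluster law, which is exactly where the boundedness of sub-cluster sizes and the convergence asserted in $C2.2$ are essential.
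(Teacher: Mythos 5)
Your decomposition is sensible, and your first two steps are correct and essentially reconstruct what the paper leaves implicit: the paper states this corollary without any explicit proof, noting only (inside the proof of Lemma~\ref{lemma: subclus}) that the $L^2$-consistency of $\tilde{q}_n$ supplies the ``limit condition'' in $C1$, i.e.\ the well-definedness of $\overline{q}_0$. Your bridge $\overline{q}_0=\lim_n\overline{q}_n$ is also legitimate as you argue it: $\mathbb{E}[\mathbb{B}_n(a,l)]/n=\overline{q}_n(a\mid l)Q_L(l)$ and $\mathbb{E}[\mathbb{L}_n(l)]/n=Q_L(l)$ under $D$, $\tilde{q}_n$ is bounded in $[0,1]$ so convergence in probability yields convergence of expectations, and the concentration of $\mathbb{L}_n(l)/n$ about $Q_L(l)>0$ handles the ratio-versus-expectation issue via Slutsky.

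The genuine gap is precisely the step you flagged as delicate, and it cannot be closed by the route you propose. Condition $C2$ does \emph{not} guarantee that the configuration responsible for positivity at the single size $n^*$ persists with positive weight in the limiting sub-cluster law: $C2.2$ requires only $(\mathbb{W}_n(c)-V^{\dagger})=o_{\mathbb{P}_0}(1)$, an asymptotic constraint that places no restriction whatsoever on the partition realized at the fixed finite size $n^*$, and although $\mathbf{G}_m$ and $P^{\kappa}_0$ are fixed across $n$, the entries of $\mathbf{G}_m$ actually used depend on the realized sub-cluster sizes. Concretely, take $m=2$ and $V^{\dagger}\equiv 2$, let the size-$2$ regimes in $\mathbf{G}_m$ never assign treatment to individuals with $L_i=l_0$ (permitted in the generalized class $\Pi^F$), but let singleton sub-clusters always treat; if the cluster at $n^*$ is partitioned entirely into singletons while sizes converge to $V^{\dagger}$ thereafter, then $\overline{q}_{n^*}(1\mid l_0)>0$ yet $\overline{q}_0(1\mid l_0)=0$, all consistent with $C2$ (indeed, keeping a vanishing fraction of singletons at every $n$, as $C2.2$ allows, even gives $\overline{q}_n(1\mid l_0)>0$ for all $n$ with a zero limit). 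So your claim that ``at least one configuration carrying positive limiting weight assigns $(a,l)$ with positive probability'' is asserted rather than derived, and is false in general. What makes the corollary go through is reading the weak positivity hypothesis at the level of the limit --- i.e.\ as positivity of $q_0=\lim_n\overline{q}_n$, the reading in condition $D3$ of Appendix~\ref{appsec: counterexamp_cons} --- after which your bridge $\overline{q}_0=q_0$ finishes the argument in one line; alternatively, strengthen $C2.2$ to exact stationarity of the sub-cluster size law (distributed as $V^{\dagger}$ for every $n$), under which your convex-combination argument does work because the weights at $n^*$ then coincide with the limiting weights.
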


\subsection{Estimation of regular large cluster parameters} \label{appsec: conditions}

In this appendix, we study estimators for the expected average potential outcome in a large-cluster under an arbitrary asymptotic regime $\mathbf{G}_0$, and also under the optimal such regime $\mathbf{G}_{0}^{\mathbf{opt}}$. We do so by deriving their efficient influence functions, and establish the regularity, asymptotic linearity (RAL), and efficiency of their corresponding estimators among this class (of RAL estimators) under a nonparametric model and appropriate regularity conditions. Our results build on those of   \citetSM{luedtke2016optimal} and related work \citepSM{van2014targeted, luedtke2016statistical, qiu2020optimal}. Interestingly, the conditions, estimators, and their resulting properties are largely isomorphic to those in this literature, despite the substantially different nature of the causal parameters we consider, the joint distributions in which they are defined, and the model restrictions assumed for identification. 

The crux of our arguments resolves around the following proposition: that our parameter of interest -- the expected \textit{average} outcome over \textit{all} individuals in a \textit{large-cluster} under a \textit{cluster-} dynamic regime -- is equivalent at all laws in our model, to the expected outcome for a \textit{single} \textit{solitary} individual under a particular  \textit{individualized} dynamic regime. Insodoing, we establish that the statistical (observed data) parameter of interest is identical to that studied by \citetSM{luedtke2016optimal}. As such, under isomorphic conditions to those in \citetSM{luedtke2016optimal}, we inherit the properties of these estimators while benefiting from the interpretive properties of our causal estimands and the realistic models under which inference is done.

In Section \ref{appsec: CDTRreexpress} we establish the linkages between our large-cluster parameters and previously studied individual level parameters; In Section \ref{appsec: OneStepConstruction}, we re-express the efficient influence functions (EIF) and resulting EIF-based estimators of \citetSM{luedtke2016optimal}; and in Section \ref{appsec: OneStepResults}, we adapt conditions of \citetSM{luedtke2016optimal} in terms of our notation and extend their results. 

\subsubsection{Re-expressing large-cluster CRs as stochastic IRs} \label{appsec: CDTRreexpress}

Note that $Y_i^{\mathbf{a}_n}$, $i=1,..n$ are marginally mutually independent under $\mathcal{M}^{AB}_n$, and that $Y_i^{\mathbf{a}_n} = Y_i^{a_i}$ for the $a_i \subset \mathbf{a}_n$ (See Proposition \ref{cor: ciidcount} in Appendix \ref{appsec: proofs}). By consequence, the expectation of a given individual's outcome under any CR in $\Pi^F(\kappa_n)$ can be characterized by the probability with which that individual is assigned treatment under that regime, conditional on their covariates, i.e. it is characterized by $q^*_i$. This is easily appreciated upon inspection of the individual-level g-formula function $f_i^{G_n}$, given in \eqref{eq: patlevgform}: by Theorem \ref{theorem: CDTRID}, $\mathbb{E}_{\mathbb{P}^F_n}[ h'(O_i^{G_n+})]$ is only a function of $i$ via $q^*_i$. Therefore, any two regimes $G_n$ and $G_n^{\dagger}$ that result in the same $q^*_i$ will be equivalent in the sense that $\mathbb{E}_{\mathbb{P}^F_n}[ h'(O_i^{G_n+})] = \mathbb{E}_{\mathbb{P}^F_n}[ h'(O_i^{G^{\dagger}_n+})]$. As a special case, we might consider a particular pair of regimes: 1) an arbitrary stochastic CR $G_n$; and 2) a stochastic IR $g_i$ in which treatment is assigned as a random draw from the treatment assignment density $q^*_n$ under $G_n$.

\begin{proposition} \label{lemma: CDTRID}
    Consider an IR $g_i$ that only intervenes on treatment for individual $i$, assigning $A^{g_i+}_i$ as a random draw from $q^{*}_{i}$, the conditional intervention density for an arbitrary $G_n\in\Pi_n(\kappa_n)$. Under $\mathcal{M}^{AB}_n$,
\begin{align}
     \mathbb{E}_{\mathbb{P}^F_n}[ h'(O_i^{g_i})] = & \sum\limits_{\mathcal{O}_i}h'(o)f_{i}^{G_n}(o). \label{eq: igformstoch}
\end{align}
\end{proposition}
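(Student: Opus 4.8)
The plan is to identify the counterfactual law of $O_i^{g_i}$ directly and to recognize it as the individual-level g-formula density $f_i^{G_n}$ of Expression \eqref{eq: patlevgform}. The conceptual content is that, because $g_i$ assigns treatment to individual $i$ as an exogenous draw from $q^*_i$, it reproduces precisely the marginal conditional intervention density that $G_n$ induces for individual $i$; and since under $\mathcal{M}^{AB}_n$ the marginal law of $O_i$ under any cluster regime depends on that regime only through $q^*_i$ (as is apparent from $f_i^{G_n}$), the individualized regime $g_i$ and the cluster regime $G_n$ must agree on every functional of $O_i$.

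First I would decompose $O_i^{g_i} = \{L_i, A_i^{g_i+}, Y_i^{g_i}\}$ and identify each factor of its joint density in turn. Since $g_i$ intervenes only on the treatment node and covariates are generated prior to treatments, $L_i$ is unaffected and retains its marginal law $Q_L$ (using the conditional-iid property of covariates under $\mathcal{M}^{AB}_n$ established in the appendix proofs). By construction of $g_i$, the assigned treatment $A_i^{g_i+}$ given $L_i=l$ is distributed as $q^*_i(\cdot \mid l)$, with the draw governed by an exogenous randomizer. For the outcome factor, I would invoke the structural reduction $Y_i^{a_i}=Y_i^{\mathbf{a}_n}$ that holds under Condition $A1$ — so that $Y_i^{g_i}$ responds only to individual $i$'s own assigned treatment value — together with Condition $B$ and consistency, to obtain $\mathbb{P}^F_n(Y_i^{a}=y \mid L_i=l) = Q_Y(y\mid a,l)$.

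Multiplying the three factors gives the joint density $Q_Y(y\mid a,l)\,q^*_i(a\mid l)\,Q_L(l) = f_i^{G_n}(o)$, whence $\mathbb{E}_{\mathbb{P}^F_n}[h'(O_i^{g_i})] = \sum_{\mathcal{O}_i} h'(o) f_i^{G_n}(o)$, as claimed. Equivalently, one may observe that the right-hand side is exactly the functional that Theorem \ref{theorem: CDTRID} equates with $\mathbb{E}_{\mathbb{P}^F_n}[h'(O_i^{G_n+})]$, so that it suffices to establish the standard stochastic-IR g-formula for $g_i$; the remainder is then the classical individualized identification argument for stochastic regimes.

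The main obstacle is the outcome-factor step: justifying that assigning $A_i^{g_i+}$ as an exogenous draw from $q^*_i$ leaves it independent of the potential outcome $Y_i^{a}$ given $L_i$, so that the product factorization is valid. This requires carefully combining the exogeneity of the randomizer underlying $g_i$ with the reformulated no-confounding condition $Y_i^{\mathbf{a}_n}\independent A_i \mid L_i$ (rather than naively treating $q^*_i$, itself a composition-averaged quantity defined through $G_n$, as an additional source of conditioning). A secondary check is that $q^*_i(\cdot\mid l)$ is a bona fide conditional law on $\{0,1\}$ from which $g_i$ can sample, which is immediate from its definition.
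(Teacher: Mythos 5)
Your proposal is correct and matches the paper's own argument: the paper proves this proposition by establishing $\mathbb{P}_n^F(Y^{g_i}_i=y \mid A^{g_i+}_i=a, L_i=l) = Q_Y(y \mid a, l)$ via the identical structural-equation and randomizer-exogeneity steps used in its Lemma for the cluster-regime kernel (your flagged ``main obstacle'' is exactly that step), and then factorizes the joint law of $O_i^{g_i}$ as $Q_Y(y\mid a,l)\,q^*_i(a\mid l)\,Q_L(l) = f_i^{G_n}(o)$. Your write-up simply makes explicit the three-factor decomposition that the paper compresses into its citations of Lemma~\ref{eq: IDkernel} and the g-formula results.
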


\begin{proof}
    The result follows by showing that $\mathbb{P}_n^F(Y^{g_i}_i=y \mid A^{g_i+}_i=a, L_i=l) = Q_Y(y \mid a, l)$ via identical arguments to the proof of Lemma \ref{eq: IDkernel} and applying Proposition \ref{theorem: YbarID} in Appendix \ref{appsec: proofs}.
\end{proof}

The following corollary extends Proposition \ref{lemma: CDTRID} to large-cluster regimes.

\begin{corollary}
 Consider an asymptotic regime $\mathbf{G}_0$ following  conditions $E$ under a law $\mathbb{P}_0^F$ following conditions $D$, and a law $\mathbb{P}_n^F\in\mathbb{P}_0^F$. Furthermore, consider an IR $g^{*}_i$ that only intervenes on treatment for individual $i$, assigning $A^{g^{*}_i+}_i$ as a random draw from $q^{*}_{0}$, the conditional intervention density for the asymptotic regime $\mathbf{G}_0$  under $\mathbb{P}_0^F$.  Then the following identities hold, for all $i$ and all $n$:

    $$\mathbb{E}_{\mathbb{P}_n^F}[Y_i^{g^{*}_i}] = \mathbb{E}_{\mathbb{P}_0^F}[Y_i^{g^{*}_i}]= \mathbb{E}_{\mathbb{P}_0^F}[\overline{Y}_0^{\mathbf{G}_0}].$$

    Let $g^{\mathbf{opt}}_i$ be such a regime assigning $A^{g^{\mathbf{opt}}_i+}_i$ as a random draw from $q^{\mathbf{opt}}_{0}$, the conditional intervention density under the optimal asymptotic regime $\mathbf{G}^{\mathbf{opt}}_{0}$. Likewise, the following identities hold, for all $i$ and all $n$:

    $$\mathbb{E}_{\mathbb{P}_n^F}[Y_i^{g^{\mathbf{opt}}_i}] = \mathbb{E}_{\mathbb{P}_0^F}[Y_i^{g^{\mathbf{opt}}_i}]= \mathbb{E}_{\mathbb{P}_0^F}[\overline{Y}_0^{\mathbf{G}^{\mathbf{opt}}_{0}}].$$
\end{corollary}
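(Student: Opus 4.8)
The plan is to collapse every expectation appearing in the corollary onto the single large-cluster g-formula functional $\sum_{o} y\, f^{\mathbf{G}_0}(o)$, and then read off the asserted equalities. The structural observation that makes this possible is that $g^*_i$ draws treatment from the \emph{fixed} limiting density $q^*_0$, which depends on neither $n$ nor $i$; this is what decouples the individual counterfactual $Y_i^{g^*_i}$ from the cluster size and lets a single functional serve for all $n$ and $i$.

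First I would establish the rightmost equality. Since the proof of Proposition \ref{lemma: CDTRID} only uses that $g^*_i$ is a stochastic IR intervening solely on individual $i$, the particular density from which it draws is immaterial, so the argument applies verbatim with $q^*_0$ in place of $q^*_i$. Concretely, under condition $D1$ each factor $\mathbb{P}_n^F$ lies in $\mathcal{M}^{AB}_n$, so $Y_i^{a_i}=Y_i^{\mathbf{a}_n}$ with $Y_i^{a_i}\independent A_i\mid L_i$, whence $\mathbb{P}_n^F(Y_i^{g^*_i}=y\mid A_i^{g^*_i+}=a, L_i=l)=Q_Y(y\mid a,l)$, and the g-formula of Theorem \ref{theorem: CDTRID} gives
\begin{align*}
\mathbb{E}_{\mathbb{P}_n^F}[Y_i^{g^*_i}] = \sum_o y\, Q_Y(y\mid a,l)\, q^*_0(a\mid l)\, Q_L(l) = \sum_o y\, f^{\mathbf{G}_0}(o),
\end{align*}
using definition \eqref{eq: largepopgformA} of $f^{\mathbf{G}_0}$. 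By Theorem \ref{theorem: YbarIDlarge}, the right-hand side equals $\mathbb{E}_{\mathbb{P}_0^F}[\overline{Y}_0^{\mathbf{G}_0}]$, delivering the final equality of the first display.

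Next I would argue the middle equality, $\mathbb{E}_{\mathbb{P}_n^F}[Y_i^{g^*_i}]=\mathbb{E}_{\mathbb{P}_0^F}[Y_i^{g^*_i}]$. Under conditions $D$, a single pair of functions $Q_L$ and $Q_Y$ describes the relevant densities for all $i$ and $n$ (the stated consequence of $D1$ and $D2$); since $q^*_0$ is moreover a fixed density, the display above shows $\mathbb{E}_{\mathbb{P}_n^F}[Y_i^{g^*_i}]$ takes one and the same value for every $n$ and $i$. Because $Y_i^{g^*_i}$ is a function only of individual $i$'s covariate law, outcome structural equation, and the fixed randomizer draw, its marginal distribution under the joint law $\mathbb{P}_0^F$ coincides with its distribution under the factor $\mathbb{P}_n^F$; hence the two expectations are literally equal. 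Chaining the three evaluations yields the first display.

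Finally, the optimal-regime claim is an immediate specialization: by Proposition \ref{lemma: optimallarge}, $\mathbf{G}_0^{\mathbf{opt}}$ is itself an asymptotic regime satisfying conditions $E$ (with $\Lambda=\Delta$) and $q^{\mathbf{opt}}_0$ is its limiting intervention density, so replacing $(q^*_0, g^*_i)$ by $(q^{\mathbf{opt}}_0, g^{\mathbf{opt}}_i)$ throughout reproduces the second display. The one point requiring care is the middle equality: one must verify that the law of the individual counterfactual is genuinely invariant in $n$, which rests precisely on using the \emph{limit} density $q^*_0$ rather than the finite-cluster densities $q^*_i$, together with the cross-cluster invariance of $Q_L$ and $Q_Y$ guaranteed by condition $D2$. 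Everything else is bookkeeping around the already-established g-formulae of Theorems \ref{theorem: CDTRID} and \ref{theorem: YbarIDlarge}.
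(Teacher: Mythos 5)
Your proof is correct and follows essentially the same route the paper intends: the paper presents this corollary as an immediate extension of Proposition \ref{lemma: CDTRID} (whose argument, as you note, is agnostic to the particular intervention density and so applies with $q^{*}_{0}$ in place of $q^{*}_{i}$), combined with Theorem \ref{theorem: YbarIDlarge} for the large-cluster identity, the $n$- and $i$-invariance of $Q_L$ and $Q_Y$ under condition $D$ for the middle equality, and Proposition \ref{lemma: optimallarge} for the specialization to $\mathbf{G}^{\mathbf{opt}}_{0}$. Your explicit treatment of why the marginal law of $Y_i^{g^{*}_i}$ under $\mathbb{P}_0^F$ coincides with that under the factor $\mathbb{P}_n^F$ is a detail the paper leaves implicit, but it matches the paper's factorization $\mathbb{P}_0 = \mathbb{P}_1\mathbb{P}_2\mathbb{P}_3\cdots$ exactly.
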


In the following derivations, it will be useful to express the single parameter $\mathbb{E}_{\mathbb{P}_n^F}[Y_i^{g^{\mathbf{opt}}_i}]$ -- the value under a stochastic regime dependent on $\delta$ and $L_i$ -- as a particular average over a set of parameters that each correspond to the value under a \textit{deterministic} regime dependent on $L_i$ alone. To that end, note that any stochastic regime $g^{*}_i$ corresponding to an asymptotic regime $\mathbf{G}_0$ can be formulated as a deterministic function with respect to $L_i$ and a randomizer term $\delta \sim \text{Unif}(0,1)$, specifically:

\begin{align}
    g^{*}_i : \{\delta, l\} \mapsto I\Big(\delta < q^*_0(1 \mid l)\Big).
\end{align}

Thus conditional on a specific value of $\delta$, $g^{*}_i$ is equivalent to a specific deterministic regime that is a function of $L_i$ alone,  $g^{*}_{i, \delta}: l \mapsto I\Big(\delta < q^*_0(1 \mid l)\Big)$. Let $P^F \equiv \mathbb{P}^F_1\in\mathbb{P}_0^F$ following conditions $D$, and let $P\equiv P(P^F)$. Then, we have the following equivalence:

\begin{align}
    \mathbb{E}_{P^F}[Y_i^{g^{*}_i}] = \mathbb{E}_{\delta}\Big[\mathbb{E}_{P^F}[Y_i^{g^{*}_{i, \delta}}]\Big].
\end{align}


\subsubsection{Constructing a one-step estimator for the value of $G^{\mathbf{opt}}_{n}$} \label{appsec: OneStepConstruction}

In the preceding Section, we've argued that when we are interested in the value of an asymptotic regime $\mathbf{G}_0$ for a large cluster following an asymptotic law $\mathbb{P}_0^F$ (under conditions $D$ and $E$), we can do so by instead targeting the value under an appropriately-equivalent class of individual-level regimes $g_{i, \delta}$, deterministic with respect to $L_i$, applied to a single individual in a cluster entirely of their own, $\mathbb{E}_{P^F}[Y_i^{g^{*}_{i, \delta}}]$. Thus, such parameters are identical to those classically studied by the causal inference literature, under iid settings, where investigators target the value under some deterministic regime $g$, i.e., $\mathbb{E}_{P^F}[Y^{g}]$, where $Y\equiv Y_i$ when the observed data $\mathbf{O}_n\equiv \{O_1,\dots, O_n\}$ are viewed as $n$ independent draws from the common law $P$.

We leverage this formulation to motivate articulation of the efficient influence function for $\mathbb{E}_{\mathbb{P}_0^F}[\overline{Y}_0^{\mathbf{G}_{0}}]$, via the efficient influence function of $\mathbb{E}_{P^F}[Y_i^{g^{*}_{i, \delta}}]$. 

Let $\Psi^{\mathbf{G}_0}\coloneqq \sum\limits_{o\in\mathcal{O}_i}yf^{\mathbf{G}_0}(o)$ denote the indentifying functional for the expected average outcome under the asymptotic regime $\mathbf{G}_0$, and let $\Psi^{\mathbf{opt}}\coloneqq \sum\limits_{o\in\mathcal{O}_i}yf^{\mathbf{G}_{0}^{\mathbf{opt}}}(o)$ denote the identifying functional for that parameter under the asymptotic optimal regime $\mathbf{G}_{0}^{\mathbf{opt}}$. Note that $\Psi^{\mathbf{G}_0}$ is simply the well-studied individual-level g-formula with respect to $P$, an arbitrary margin of the more elaborated law $\mathbb{P}_n$. 

Let $\overline{P}$ be the observed margins of a distribution in $\mathcal{M}_1$, which factorizes according to $Q_Y$, $\overline{q}_0$, and $Q_L$, with respect to the asymptotic law $\mathbb{P}_0$. We consider this law $\overline{P}$ because, under condition set $E$ and the strong cluster positivity conditions $C1$, it represents the limiting distribution of the empirical law $\tilde{P}_n \equiv \{\tilde{Q}_{Y,n}, \tilde{q}_n, \tilde{Q}_{L,n}\}$. Then we define the following two random variables $\Phi_1(\overline{P}, g)(O_i)$ and $\Phi_2(g, \eta)(O_i)$, with respect to a deterministic regime $g$: 

\begin{align}
    \Phi_1(\overline{P}, g)(o) \coloneqq & 
        \frac{I\big(a=g(l)\big)}{\overline{q}_0(a\mid l)}\Big(y-\mathbb{E}_{\overline{P}}[Y_i \mid a, l]\Big) \nonumber \\ & + \mathbb{E}_{\overline{P}}[Y_i \mid g(l), l] - \mathbb{E}_{\overline{P}}\big[\mathbb{E}_{\overline{P}}[Y_i \mid g(L_i), L_i]\big]. \\
    \Phi_2(g, \eta)(o) \coloneqq & 
        \eta(g(l) - \kappa^*).
\end{align}

$ \Phi_1(\overline{P}, g)$ is recognizable as the EIF for the g-formula parameter under a fixed deterministic regime $g$ and $\Phi_2(g, \eta)$ has been shown by \citetSM{luedtke2016optimal} to be an additional additive term for the EIF when the regime $g$ is the $\kappa^*$ resource-constrained optimal regime that assigns treatment to all individuals with a CATE greater than $\eta$.

Let $g^*$ be an arbitrary stochastic regime that depends on $L_i$ and $\delta$, and $g^*_{\delta}$ a corresponding deterministic regime with $\delta$ fixed. We further define $\Phi_0(\overline{P}, g^*, \eta)(O_i)$ to be the expectation of the sum of $\Phi_1(\overline{P}, g^*_{\delta})(O_i)$ and $\Phi_2(g^*_{\delta}, \eta)(O_i)$, taken with respect to the distribution of $\delta$, and we let $\sigma^{2}(\overline{P}, g^*, \eta)$ denote the variance of $\Phi_0(\overline{P}, g^*, \eta)(O_i)$ under $\overline{P}$:

\begin{align}
    \Phi_0(\overline{P}, g^*, \eta)(o) & \coloneqq \mathbb{E}_{\delta}\Big[ \Phi_1(\overline{P}, g^*_{\delta})(o) + \Phi_2(g^*_{\delta}, \eta)(o)\Big].\\
    \sigma^{2}(\overline{P}, g^*, \eta) & \coloneqq \text{Var}_{\overline{P}}\Big(\Phi_0(\overline{P}, g^*, \eta)(O_i)\Big) \label{eq: IF}.
\end{align}

\citetSM{luedtke2016optimal} showed that, under conditions, $\Phi_0(\overline{P}, g^{*}, \eta)(O_i)$ represents the EIF when the regime $g^{*}$ is the $\kappa^*$ resource-constrained optimal regime that assigns treatment to all individuals with a CATE greater than $\eta$, and randomly assigns treatment to individuals with CATE equal to $\eta$ with precisely the probability for which the marginal probability of assigned treatment is exactly $\kappa^*$. 

In a slight abuse of notation, let $\tilde{Q}_{Y, n}$,  $\tilde{q}_n$, $\tilde{Q}_{L, n}$ denote functionals of $\mathbb{O}_n$ designed to approximate $Q_Y$,  $\overline{q}_0$, $Q_L$, respectively, possibly via flexible regression or machine learning procedures. 
Then define $\tilde{g}^{*}_i$ and $\tilde{g}^{\mathbf{opt}}_i$ to be the appropriately equivalent individual regimes, learned from the data $\tilde{P}_n$, corresponding to $\mathbf{G}_0$ and $\mathbf{G}_{0}^{\mathbf{opt}}$, respectively:

\begin{align}
    \tilde{g}^{*}_i & : \delta \times l \mapsto I\Big(\delta < \tilde{q}^*_0(1 \mid l)\Big),\\
    \tilde{g}^{\mathbf{opt}}_i & : \delta \times l \mapsto I\Big(\delta < \tilde{q}^{\mathbf{opt}}_0(1 \mid l)\Big).
\end{align}

With these objects defined, let $\tilde{\Psi}^{\mathbf{G}_0}_{n}$ and $\tilde{\Psi}^{\mathbf{opt}}_{n}$ denote one-step estimators for $\Psi^{\mathbf{G}_0}$ and $\Psi^{\mathbf{opt}}$, and let $\tilde{\sigma}^{2, \mathbf{G}_0}_n$ and $\tilde{\sigma}^{2, \mathbf{opt}}_n$ denote estimators for their asymptotic variances, respectively:

\begin{align}
 \tilde{\Psi}^{\mathbf{G}_0}_{n} & \coloneqq    
    \sum\limits_{o\in\mathcal{O}_i}y\tilde{f}^{\mathbf{G}_0}(o) +  \frac{1}{n}\sum\limits_{i=1}^n\Phi_0(\tilde{P}_n, \tilde{g}^{*}_i, 0)(O_i),  \\  
    \tilde{\Psi}^{\mathbf{opt}}_{n} & \coloneqq    
    \sum\limits_{o\in\mathcal{O}_i}y\tilde{f}^{\mathbf{G}^{\mathbf{opt}}_{0}}(o) +  \frac{1}{n}\sum\limits_{i=1}^n\Phi_0(\tilde{P}_n, \tilde{g}^{\mathbf{opt}}_i, \tilde{\eta}_0)(O_i),\\
    \tilde{\sigma}^{2, \mathbf{G}_0}_n & \coloneqq \frac{1}{n}\sum\limits_{i=1}^n\Phi_0(\tilde{P}_n, \tilde{g}^{*}_i, 0)(O_i)^2, \\
    \tilde{\sigma}^{2, \mathbf{opt}}_n & \coloneqq \frac{1}{n}\sum\limits_{i=1}^n\Phi_0(\tilde{P}_n, \tilde{g}^{\mathbf{opt}}_i, \tilde{\eta}_0)(O_i)^2.
\end{align}

We emphasize that the argument for $\eta$ with respect to the function $\Phi_0$ in the definition of $\tilde{\Psi}^{\mathbf{G}_0}_{n}$ is 0, indicating that the empirical analogue of the term deriving from $\Phi_2$ (the additional term in the influence function for the resource-constrained optimal regime) is omitted from the estimator for the value under a fixed asymptotic regime $\mathbf{G}_0$.

\subsubsection{Results for a one-step estimator} \label{appsec: OneStepResults}

In the following, we re-state regularity conditions from \citetSM{luedtke2016optimal}. To do so, we define for convenience some additional objects. First, let $S_{\Lambda, P}$ be the survival function of $\Lambda(L_i)$, i.e., $S_{\Lambda, P}: c \mapsto P(\Lambda(L_i)<c)$. Furthermore define the following second-order remainder terms:

\begin{align}
    R_{1,0}(g^*, \tilde{P}_n) & = \mathbb{E}_{\delta}\Bigg[\mathbb{E}_{P}\Big[\Big(1-\frac{\overline{q}_0(g^*_{\delta}(L_i) \mid L_i)}{\tilde{q}_n(g^*_{\delta}(L_i) \mid L_i)}\Big)\Big(\mathbb{E}_{\tilde{P}_n}[Y_i\mid g^*_{\delta}(L_i), L_i] - \mathbb{E}_{P}[Y_i\mid g^*_{\delta}(L_i), L_i]\Big)\Big]\Bigg],\\
    R_{2,0}(g^*, g^{\dagger}, \Lambda, \omega) & = \mathbb{E}_{\delta}\Bigg[\mathbb{E}_{P}\Big[\big\{g^*_{\delta}(L_i) - g^{\dagger}_{\delta}(L_i)\big\}\big\{\Lambda(L_i) - \omega\big\}\Big]\Bigg].  
\end{align}

The following conditions are articulated with respect to a general large-cluster parameter under asymptotic regime $\mathbf{G}_0$, characterized by a rank-function $\Lambda$, with equivalent IR $g^*_i$ under $P\in\mathbb{P}_0.$
\begin{itemize}
    \item[F1.] $\Lambda(L_i)$ has continuous density $f_0$ at $\omega_0$ under $P$ and $0< f_0(\omega_0)<\infty$.
    \item[F2.]  $S_{\Lambda, P}$ is continuous in a neighborhood of $\omega_0$.
    \item[F3.] $P\big(\Lambda(L_i) = \omega\big)=0$ for all $\omega$ in a neighborhood of $\omega_0$.
    \item[F4.] $P\big(\tau < \overline{q}_0(1 \mid L_i) <1-\tau\big)=1$ for some $\tau>0$.
    \item[F5.] $P\big(\tau < \tilde{q}_n(1 \mid L_i) <1-\tau\big)=1$ with probability approaching 1 for some $\tau>0$.
    \item[F6.] $R_{1,0}(\tilde{g}^{*}_{i}, \tilde{P}_n) = o_P(n^{1/2})$.
    \item[F7.] $\mathbb{E}_{\overline{P}}\Big[\Big(\Phi_0(\tilde{P}_n, \tilde{g}^{*}_i, 0)(O_i) - \Phi_0(P, g^{*}_i,0)(O_i)\Big)^2\Big]= o_P(1)$, and $\mathbb{E}_{\Tilde{P}_n}[Y_i \mid A_i = a, L_i=l]$ belongs to a $P$-Donsker class and is an $L^2(P)$-consistent estimator of $\mathbb{E}_{P}[Y_i \mid A_i = a, L_i=l]$.
    \item[F8.] $\Phi_0(\tilde{P}_n, \tilde{g}^{*}_{i}, 0)$ belongs to a $P-$Donsker class $\mathcal{D}$ with probability approaching 1.
    \item[F9.] $\frac{1}{n}\sum\limits_{i=1}^n\Phi_0(\tilde{P}_n, \tilde{g}^{*}_i, 0)(O_i)= o_P(n^{1/2})$.
\end{itemize}

Let Condition $F$ denote the intersection of conditions $F1$-$F9$. For inference on the optimal asymptotic regime $\mathbf{G}^{\mathbf{opt}}_{0}$, consider the following modified conditions:

\begin{itemize}
    \item[$F1^*.$] $\Delta_0(L_i)$ has continuous density $f^*_0$ at $\eta_0$ under $P$ and $0< f^*_0(\eta_0)<\infty$.
    \item[$F2^*.$]  $S_{\Delta_0, P}$ is continuous in a neighborhood of $\eta_0$.
    \item[$F3^*.$] $P\big(\Delta_0(L_i) = \eta\big)=0$ for all $\eta$ in a neighborhood of $\eta_0$.
    \item[$F4^*.$] $P\big(\tau^* < \overline{q}_0(1 \mid L_i) <1-\tau^*\big)=1$ for some $\tau^*>0$.
    \item[$F5^*.$] $P\big(\tau^* < \tilde{q}_n(1 \mid L_i) <1-\tau^*\big)=1$ with probability approaching 1 for some $\tau^*>0$.
    \item[$F6^*.$] $R_{1,0}(\tilde{g}^{\mathbf{opt}}_{i}, \tilde{P}_n) = o_P(n^{1/2})$ and  $R_{2,0}(\tilde{g}^{\mathbf{opt}}_{i}, {g}^{\mathbf{opt}}_{i}, \Delta_0, \eta_0) = o_P(n^{1/2})$.
    \item[$F7^*.$] $\mathbb{E}_{\overline{P}}\Big[\Big(\Phi_0(\tilde{P}_n, \tilde{g}^{\mathbf{opt}}_i, {\eta}_0)(O_i) - \Phi_0(\overline{P}, g^{\mathbf{opt}}_i,\eta_0)(O_i)\Big)^2\Big]= o_P(1)$, and $\mathbb{E}_{\Tilde{P}_n}[Y_i \mid A_i = a, L_i=l]$ belongs to a $P$-Donsker class and is an $L^2(P)$-consistent estimator of $\mathbb{E}_{P}[Y_i \mid A_i = a, L_i=l]$.
    \item[$F8^*.$] $\Phi_0(\tilde{P}_n, \tilde{g}^{\mathbf{opt}}_{i}, \eta_0)$ belongs to a $P-$Donsker class $\mathcal{D}$ with probability approaching 1.
    \item[$F9^*.$] $\frac{1}{n}\sum\limits_{i=1}^n\Phi_0(\tilde{P}_n, \tilde{g}^{\mathbf{opt}}_i, \eta_0)(O_i)= o_P(n^{1/2})$.
    \item[$F10^*$] $\tilde{\Delta}_n$ is consistent for $\Delta_0$ in $L^1(P)$ and  $\tilde{\Delta}_n$ belongs to a Glivenko-Cantelli class with probability approaching 1.
\end{itemize}

Let Condition $F^*$ denote the intersection of these adapted conditions. The following Theorem consolidates results from \citetSM{luedtke2016optimal}.

\begin{theorem}\label{thm: asympeffGopt}
    Consider an asymptotic law $\mathbb{P}_0$ following condition $D$ and an asymptotic optimal regime $\mathbf{G}_{0}^{\mathbf{opt}}$ following condition $E$. Suppose further that condition $F^*$ holds. Then the following properties also hold:

    \begin{itemize}
        \item[(1.)] $\Psi^{\mathbf{opt}}$ is a pathwise differentiable parameter with efficient non-parametric influence function $\Phi_0(\overline{P}_0, g^{\mathbf{opt}}_{i}, \eta_0)$.
        \item[(2.)] $\tilde{\Psi}^{\mathbf{opt}}_{n}$ is the asymptotically-efficient, regular and asymptotically linear estimator of $\Psi^{\mathbf{opt}}$.
        \item[(3.)] An asymptotically valid two-sided $1-\alpha$ confidence interval is given by $$\tilde{\Psi}^{\mathbf{opt}}_{n} \pm z_{1-\alpha/2}\frac{\tilde{\sigma}^{2, \mathbf{opt}}_n}{\sqrt{n}}.$$
    \end{itemize}
\end{theorem}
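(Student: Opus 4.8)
The plan is to reduce all three claims to the corresponding results of \citetSM{luedtke2016optimal} by exploiting the isomorphism, established earlier in this appendix, between our large-cluster optimal-value parameter and their resource-constrained individualized optimal-value parameter. Concretely, the Corollary to Proposition \ref{lemma: CDTRID} gives $\mathbb{E}_{\mathbb{P}_0^F}[\overline{Y}_0^{\mathbf{G}^{\mathbf{opt}}_{0}}] = \mathbb{E}_{P^F}[Y_i^{g^{\mathbf{opt}}_i}]$, so that the identifying functional $\Psi^{\mathbf{opt}} = \sum_{o\in\mathcal{O}_i} y\, f^{\mathbf{G}_{0}^{\mathbf{opt}}}(o)$, viewed as a map of the observed margin $\overline{P}$ (which factorizes through $Q_Y$, $\overline{q}_0$, and $Q_L$), coincides exactly with the optimal-value functional studied by \citetSM{luedtke2016optimal} in the nonparametric model $\mathcal{M}_1$. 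Since the observed-data parameter, its canonical gradient, the second-order remainders $R_{1,0}$ and $R_{2,0}$, and the conditions $F1^*$--$F10^*$ are all faithful transcriptions of theirs, each of (1)--(3) follows once we verify that the limiting behaviour of the single-draw empirical law $\tilde{P}_n$ under $\mathbb{P}_0$ reproduces the iid empirical-process and central-limit arguments on which their proof rests.

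For claim (1), I would compute the pathwise derivative of $\Psi^{\mathbf{opt}}$ along one-dimensional submodels through $\overline{P}$ in $\mathcal{M}_1$. The canonical gradient decomposes into the fixed-rule g-formula influence function $\Phi_1(\overline{P}, g^{\mathbf{opt}}_{\delta})$ and the additional term $\Phi_2(g^{\mathbf{opt}}_{\delta}, \eta_0)$ induced by the dependence of the optimal rule on the unknown CATE $\Delta_0$ and threshold $\eta_0$; averaging over $\delta$ recovers $\Phi_0(\overline{P}_0, g^{\mathbf{opt}}_i, \eta_0)$. The only nontrivial point is that the thresholding defining $g^{\mathbf{opt}}_i$ is non-smooth: conditions $F1^*$--$F3^*$ (continuity and non-degeneracy of the density of $\Delta_0(L_i)$ at $\eta_0$, together with the absence of a point mass there) guarantee that the indeterminacy at the threshold occupies a $P$-null set and contributes no first-order term, so that $\Psi^{\mathbf{opt}}$ is pathwise differentiable with the stated efficient influence function. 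This is precisely the regularity that fails for low-dimensional discrete $V$, which is why the online-learning route of Appendix \ref{appsec: online} is required in that case.

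For claims (2) and (3), I would expand the one-step estimator into the empirical mean of the efficient influence function evaluated at the truth, plus an empirical-process remainder and the second-order terms:
\begin{align*}
\tilde{\Psi}^{\mathbf{opt}}_{n} - \Psi^{\mathbf{opt}} = \frac{1}{n}\sum_{i=1}^{n}\Phi_0(\overline{P}, g^{\mathbf{opt}}_i, \eta_0)(O_i) + o_P(n^{-1/2}),
\end{align*}
where the $o_P(n^{-1/2})$ collects the empirical-process and remainder contributions. Condition $F6^*$ forces the second-order remainders $R_{1,0}$ and $R_{2,0}$ to be negligible; the positivity bounds $F4^*$--$F5^*$ keep the inverse weights bounded; the Donsker and $L^2$-consistency conditions $F7^*$--$F8^*$ make the empirical-process term vanish; $F9^*$ controls the centering; and $F10^*$ ensures $\tilde{g}^{\mathbf{opt}}_i$ converges to $g^{\mathbf{opt}}_i$ via consistency of the estimated CATE. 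Asymptotic linearity with influence function equal to the canonical gradient of (1) then yields regularity and efficiency among the class of RAL estimators, and a central limit theorem for the leading average together with consistency of $\tilde{\sigma}^{2, \mathbf{opt}}_n$ for $\sigma^2(\overline{P}, g^{\mathbf{opt}}_i, \eta_0)$ delivers the Wald interval of (3).

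The main obstacle, and the genuine departure from \citetSM{luedtke2016optimal}, is that $\tilde{\Psi}^{\mathbf{opt}}_{n}$ is computed from a \emph{single} draw of $\mathbb{O}_n$ with limits taken along the asymptotic law $\mathbb{P}_0$: the empirical quantities in $\tilde{P}_n$ are built from conditionally-iid rather than iid observations, and $\tilde{q}_n$ converges to the average propensity $\overline{q}_0$ rather than to any individual $q_i$. Consequently the Donsker steps ($F7^*$, $F8^*$) and the central-limit step cannot simply be quoted but must be re-established for this conditionally-iid, weakly-dependent collection. Under sub-cluster positivity $C2$ this is achievable: Lemma \ref{lemma: subclus} supplies $L^2(\mathbb{P}_0)$-consistency of $\tilde{q}_n$, and the cluster decomposes into weakly dependent sub-clusters of bounded size for which the requisite empirical-process control and a triangular-array martingale central limit theorem apply. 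Verifying these probabilistic conditions in the dependent regime, rather than the purely functional-analytic derivation of the influence function, is where the real work lies.
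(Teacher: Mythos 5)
Your reduction to the \citetSM{luedtke2016optimal} functional, your treatment of claim (1) via the non-smooth threshold and conditions $F1^*$--$F3^*$, and your mapping of $F4^*$--$F10^*$ onto the remainder terms all match the paper's proof, which likewise starts from the expansion in Equation \eqref{eq: Fundamental equation} and disposes of the second-order and empirical-process terms by direct appeal to $F^*$. But your final step contains a genuine gap: you propose to re-establish the central-limit step for the ``conditionally-iid, weakly-dependent collection'' by invoking sub-cluster positivity $C2$, Lemma \ref{lemma: subclus}, and a triangular-array martingale CLT. The theorem assumes only conditions $D$, $E$, and $F^*$ --- $C2$ is not among its hypotheses --- so your argument would prove a strictly weaker statement than the one asserted. (The martingale CLT is the engine of the \emph{online} estimator of Theorem \ref{thm: onlineNormality}, for the irregular discrete-covariate case; it plays no role here.) Moreover, $F7^*$ and $F8^*$ directly \emph{assume} the Donsker membership and $L^2(P)$-consistency of the fitted outcome regression, so the empirical-process control you worry about re-deriving is hypothesized, not proved.

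What the paper actually does at this point is algebraic, not probabilistic. Writing $\Phi_0 = \Phi_1 + \Phi_2$ and using that $\tilde{P}_n$ assigns the exact empirical conditional frequencies, a law-of-total-expectation computation (conditioning on $L_i$, then pulling out $\tilde{q}_n$ and $\mathbb{E}_{\tilde{P}_n}[Y_i \mid g_i^{\mathbf{opt}}(L_i), L_i]$) shows that the empirical mean of the inverse-weighted residual term collapses exactly:
\begin{align*}
\tilde{P}_n \Phi_1(\bar{P}, g_i^{\mathbf{opt}})
= \frac{1}{n}\sum_{i=1}^n \frac{\tilde{q}_n - \bar{q}_0}{\bar{q}_0}\Big(\mathbb{E}_{\tilde{P}_n}[Y_i \mid \cdot\,] - \mathbb{E}_{\bar{P}}[Y_i \mid \cdot\,]\Big)
+ \frac{1}{n}\sum_{i=1}^n \Big(\mathbb{E}_{\tilde{P}_n}[Y_i \mid \cdot\,] - \mathbb{E}_{\bar{P}}[Y_i \mid \cdot\,]\Big),
\end{align*}
with $\bar{P}\Phi_1(\bar{P}, g_i^{\mathbf{opt}}) = 0$. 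All of the dependence among the treatments $\mathbf{A}_n$ --- the only non-iid ingredient --- is thereby swept into these product and difference terms, which are $o_{\bar{P}}(n^{-1/2})$ under $F7^*$ and the boundedness of $\tilde{q}_n$ from $F5^*$. The leading term $(\tilde{P}_n - \bar{P})$ is then applied to a function of $L_i$ alone, and since the $L_i$ are genuinely iid under $\mathcal{M}_n^{AB}$ (Lemma \ref{lemma: ciid}), the \emph{classical} iid CLT closes the argument. No mixing conditions, sub-cluster structure, or dependent-data CLT is needed; the insight you are missing is that the conditional-iid structure lets one isolate the dependent coordinates inside second-order remainders rather than carry them into the limit theorem.
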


The subsequent proposition extends these results to estimators for $\Psi^{\mathbf{G}_0}$.

\begin{proposition}
    
 \label{prop: asympeffG0}
    Consider an asymptotic law $\mathbb{P}_0$ following condition $D$ and an asymptotic  regime $\mathbf{G}_{0}$ following condition $E$. Suppose further that condition $F$ holds. Then the following properties also hold:

    \begin{itemize}
        \item[(1.)] $\Psi^{\mathbf{G}_0}$ is a pathwise differentiable parameter with efficient non-parametric influence function $\Phi_0(\overline{P}_0, g^{*}_{i}, 0)$.
        \item[(2.)] $\tilde{\Psi}^{\mathbf{G}_0}_{n}$ is the asymptotically-efficient RAL estimator of $\Psi^{\mathbf{G}_0}$.
        \item[(3.)] An asymptotically valid two-sided $1-\alpha$ confidence interval is given by $$\tilde{\Psi}^{\mathbf{G}_0}_{n} \pm z_{1-\alpha/2}\frac{\tilde{\sigma}^{2, \mathbf{G}_0}_n}{\sqrt{n}}.$$
    \end{itemize}
\end{proposition}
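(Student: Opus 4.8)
The plan is to deduce Proposition~\ref{prop: asympeffG0} from Theorem~\ref{thm: asympeffGopt} by exploiting the reduction of Appendix~\ref{appsec: CDTRreexpress}, which recasts the large-cluster value as an individual-level parameter under a \emph{fixed} stochastic regime rather than an optimized one. Concretely, by the corollary to Proposition~\ref{lemma: CDTRID}, $\Psi^{\mathbf{G}_0} = \mathbb{E}_{\mathbb{P}_0^F}[\overline{Y}_0^{\mathbf{G}_0}] = \mathbb{E}_{P^F}[Y_i^{g^{*}_i}]$, the mean outcome for a solitary individual under the stochastic IR $g^{*}_i$ whose intervention density is the specified limit $q^{*}_0$. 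First I would record this identity together with the $\delta$-decomposition $\mathbb{E}_{P^F}[Y_i^{g^{*}_i}] = \mathbb{E}_{\delta}[\mathbb{E}_{P^F}[Y_i^{g^{*}_{i,\delta}}]]$, so that the target is an average over $\delta$ of mean outcomes under the deterministic, law-independent rules $g^{*}_{i,\delta}: l \mapsto I(\delta < q^{*}_0(1\mid l))$. This places $\Psi^{\mathbf{G}_0}$ squarely in the class of statistical parameters treated by \citetSM{luedtke2016optimal}, but in the elementary \emph{known-rule} case rather than the constrained-optimization case.

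Next I would establish claim (1), pathwise differentiability with efficient influence function $\Phi_0(\overline{P}, g^{*}_i, 0)$. For a fixed deterministic rule $g^{*}_{i,\delta}$, the efficient influence function of $P \mapsto \mathbb{E}_{P^F}[Y_i^{g^{*}_{i,\delta}}]$ is the standard augmented-inverse-probability-weighted gradient $\Phi_1(\overline{P}, g^{*}_{i,\delta})$; averaging over $\delta$, and using that the gradient of a $\delta$-mixture is the mixture of gradients, yields $\Phi_0(\overline{P}, g^{*}_i, 0) = \mathbb{E}_{\delta}[\Phi_1(\overline{P}, g^{*}_{i,\delta})]$. The point to emphasize is that the extra summand $\Phi_2$ appearing in Theorem~\ref{thm: asympeffGopt} arises \emph{only} because there the rule is data-adaptively optimized subject to the resource constraint, so the threshold $\eta_0$ enters as a Lagrange multiplier and injects a term $\eta_0\big(g(l)-\kappa^{*}\big)$ into the gradient. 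Here $\Lambda$ is fixed and the rule is not optimized, so this contribution is absent; this is precisely why the estimator is defined with the argument $\eta = 0$, and why condition~$F$ omits both the remainder $R_{2,0}$ and the condition $F10^{*}$ that were needed in the optimal case.

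Claims (2) and (3) then follow by the standard one-step (von Mises) expansion, now essentially a specialization of the argument behind Theorem~\ref{thm: asympeffGopt}. Writing $\tilde{\Psi}^{\mathbf{G}_0}_n - \Psi^{\mathbf{G}_0}$ as the empirical mean of $\Phi_0(P, g^{*}_i, 0)(O_i)$ plus an empirical-process term and a second-order remainder, the positivity bounds $F4$--$F5$ keep the weights bounded, the Donsker and consistency conditions $F7$--$F8$ force the estimated gradient to behave like the true one, $F6$ renders the remainder $R_{1,0}$ asymptotically negligible, and $F9$ controls the plug-in centering. Asymptotic linearity with influence function $\Phi_0(\overline{P}, g^{*}_i, 0)$ follows, whence $\sqrt{n}\big(\tilde{\Psi}^{\mathbf{G}_0}_n - \Psi^{\mathbf{G}_0}\big) \rightsquigarrow N\big(0, \sigma^{2}(\overline{P}, g^{*}, 0)\big)$; efficiency is immediate because the influence function coincides with the efficient influence function, and the interval in (3) follows from consistency of $\tilde{\sigma}^{2, \mathbf{G}_0}_n$ for $\sigma^{2}(\overline{P}, g^{*}, 0)$ and Slutsky's theorem.

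The main obstacle is the second step: rigorously justifying that $\Phi_0(\overline{P}, g^{*}_i, 0)$ is the \emph{full} gradient, i.e.\ that the threshold contribution $\Phi_2$ is genuinely absent. The delicacy is that although $\Lambda$ is fixed, the induced density $q^{*}_0$ still depends on the law through the boundary level $\omega_0$ via the constraint of Proposition~\ref{lemma: largeint}, so one must confirm that perturbing the law through $\omega_0$ does not contribute a first-order term. The cleanest resolution is to regard the regime as a fixed stochastic intervention \emph{specified through} $q^{*}_0$, so the gradient is unambiguously $\Phi_0(\overline{P}, g^{*}_i, 0)$, and then to verify that replacing $q^{*}_0$ by $\tilde{q}^{*}_0$ (equivalently $\omega_0$ by $\tilde{\omega}_0$) contributes only a second-order, asymptotically negligible remainder: since $\Lambda$ is known, the discrepancy $\tilde{g}^{*}_i - g^{*}_i$ is supported near the boundary where $\Lambda(L_i)-\omega_0$ is small, so the relevant product remainder vanishes sufficiently fast under the density regularity $F1$--$F3$, in direct analogy to (but simpler than) the control of $R_{2,0}$ in Theorem~\ref{thm: asympeffGopt}. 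Verifying this negligibility carefully is where essentially all the work lies.
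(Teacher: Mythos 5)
Your overall route matches the paper's: reduce $\Psi^{\mathbf{G}_0}$ to the individual-level parameter via the corollary to Proposition~\ref{lemma: CDTRID}, then specialize the one-step analysis of Theorem~\ref{thm: asympeffGopt} with the $\eta$-term set to zero because $\Lambda$ is known rather than optimized; this is exactly what the paper's own ``proof'' asserts (the remark following the proof of Theorem~\ref{thm: asympeffGopt}: a direct extension for a $g_i$ defined by a known $\Lambda$ instead of $\Delta_0$), and your mapping of $F4$--$F9$ to the expansion, and your observation that $R_{2,0}$ and $F10^{*}$ are dropped from condition $F$, are correct. The genuine gap is in your claims (2)--(3): you invoke ``the standard one-step (von Mises) expansion'' and let asymptotic linearity and normality ``follow,'' which tacitly applies an iid central limit theorem to $\frac{1}{n}\sum_{i}\Phi_0(\overline{P}, g^{*}_i, 0)(O_i)$. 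Under $\mathbb{P}_n\in\mathcal{M}_n^{AB}$ the elements of $\mathbf{O}_n$ are \emph{not} iid --- the $A_i$ are mutually dependent through the resource constraint --- and the paper's proof flags this explicitly (``Asymptotic linearity and normality will not follow from standard arguments because the elements of $\mathbf{O}_n$ are not iid'') and devotes essentially all of its length to the repair: conditioning the $\Phi_1$ contribution on $L_i$, using the law of total expectation and the definition of $\tilde{q}_n$, so that the $(A_i, Y_i)$-dependent part of the empirical average collapses into product terms of the form $(\tilde{q}_n-\overline{q}_0)(\mathbb{E}_{\tilde{P}_n}-\mathbb{E}_{\overline{P}})$, which are $o_{\overline{P}}(n^{-1/2})$ under $F7$ with $\tilde{q}_n$ bounded, leaving as the CLT-bearing term a function of the $L_i$ alone --- and the $L_i$ \emph{are} iid by Lemma~\ref{lemma: ciid}. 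Without this reduction (or a substitute), the normality claimed in (2)--(3) is unjustified in this model; this step is the substance of the paper's argument and is missing from yours.

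Your closing paragraph correctly isolates the other delicate point --- $q^{*}_0$ is law-dependent through $\omega_0$, and the paper's treatment is thin (Lemma~\ref{lemma: omegacons} gives consistency of $\tilde{\omega}_n$ under $F2$--$F3$ as ``a sufficient condition for the regularity of $\Psi^{\mathbf{G}_0}$'') --- but your proposed mechanism for negligibility is wrong as written. You argue the contribution of $\tilde{g}^{*}_i - g^{*}_i$ is second-order because the discrepancy is supported where $\Lambda(L_i)-\omega_0$ is small; however, the value's first-order sensitivity to the rule is weighted by $\Delta_0(L_i)$, not by $\Lambda(L_i)-\omega_0$, and for a non-optimal $\Lambda$ the CATE on the boundary is approximately $\mathbb{E}[\Delta_0(L_i)\mid \Lambda(L_i)=\omega_0]$, which has no reason to vanish. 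It is precisely the cancellation $\Delta_0\approx\eta_0$ on the boundary that makes $R_{2,0}$ second-order in the optimal case of Theorem~\ref{thm: asympeffGopt}; that cancellation is unavailable here, so negligibility of the rule-estimation error must instead be imposed through the conditions as stated ($F6$, $F7$, $F9$) rather than derived from your claimed analogy to $R_{2,0}$.
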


The following results establish consistency of $\tilde{\omega}_n$, thereby providing a sufficient condition for the regularity of $\Psi^{\mathbf{G}_0}$.

\begin{lemma} \label{lemma: omegacons}
    Consider an asymptotic law $\mathbb{P}_0$ following condition $D$ and an asymptotic regime $\mathbf{G}_{0}$ following condition $E$. Suppose further that Conditions $F2$ and $F3$ hold. Then, $\tilde{\omega}_n$ converges to $\omega_0$ in probability.
\end{lemma}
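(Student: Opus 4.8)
The plan is to recognize $\tilde{\omega}_n$ as an empirical upper-quantile of $\Lambda(L_i)$ and to establish its consistency by a standard sample-quantile argument, with Conditions $F2$ and $F3$ supplying the continuity that pins down $\omega_0$ as a clean crossing point. The essential simplification---which reduces the problem to classical iid theory despite the single-draw, triangular-array nature of the data---is that under Condition $D$ the covariates $L_1,\dots,L_n$ are mutually independent and identically distributed with common law $Q_L$ (the conditional-iid property of $\mathcal{M}_n^{AB}$ invoked in Section~\ref{sec: estim} and proved in Appendix~\ref{appsec: proofs}), and by $D2$ this law is invariant in $n$. Consequently $\Lambda(L_1),\dots,\Lambda(L_n)$ form a growing iid sample from a fixed distribution, so all subsequent steps may be carried out as if we observed an ordinary iid sequence.

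First I would introduce the population upper-tail function $G(c)\coloneqq\mathbb{P}_0(\Lambda(L_i)>c)$ and its within-cluster empirical analogue $\tilde{G}_n(c)\coloneqq\frac{1}{n}\sum_{i=1}^n I(\Lambda(L_i)>c)$, so that $\omega_0=\inf\{c:G(c)\le\kappa^*\}$ and $\tilde{\omega}_n=\inf\{c:\tilde{G}_n(c)\le\kappa_n/n\}$. Since $G$ is non-increasing, Condition $F3$ (no atoms of $\Lambda(L_i)$ near $\omega_0$) together with Condition $F2$ (continuity of $S_{\Lambda,P}$ near $\omega_0$) makes $G$ continuous at $\omega_0$; combined with the infimum definition this yields $G(\omega_0)=\kappa^*$ and identifies $\omega_0$ as the crossing of $G$ through the level $\kappa^*$. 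The Glivenko--Cantelli theorem then gives $\sup_c|\tilde{G}_n(c)-G(c)|\to 0$ in probability for the iid sequence $\Lambda(L_i)$.

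The consistency itself I would prove two-sidedly. Fix $\epsilon>0$. For the lower deviation, the infimum definition of $\omega_0$ forces $G(\omega_0-\epsilon)>\kappa^*$; writing $\gamma_-\coloneqq G(\omega_0-\epsilon)-\kappa^*>0$ and using that $\tilde{G}_n(\omega_0-\epsilon)$ converges to $G(\omega_0-\epsilon)$ in probability while $\kappa_n/n\to\kappa^*$ by Condition $E2$, we obtain $\tilde{G}_n(\omega_0-\epsilon)>\kappa_n/n$ with probability tending to one, whence $\tilde{\omega}_n\ge\omega_0-\epsilon$. For the upper deviation, provided $G(\omega_0+\epsilon)<\kappa^*$, the same convergence and the gap between $G(\omega_0+\epsilon)$ and $\lim\kappa_n/n=\kappa^*$ give $\tilde{G}_n(\omega_0+\epsilon)<\kappa_n/n$ with high probability, so that $\omega_0+\epsilon$ lies in $\{c:\tilde{G}_n(c)\le\kappa_n/n\}$ and $\tilde{\omega}_n\le\omega_0+\epsilon$. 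Intersecting the two events yields $\mathbb{P}_0(|\tilde{\omega}_n-\omega_0|>\epsilon)\to 0$.

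The main obstacle is precisely the hypothesis $G(\omega_0+\epsilon)<\kappa^*$ used in the upper deviation: monotonicity of $G$ only delivers the weak inequality $G(\omega_0+\epsilon)\le\kappa^*$, and to sharpen it to a strict inequality one must rule out a flat stretch of $G$ at the level $\kappa^*$ immediately to the right of $\omega_0$, i.e.\ a zero-probability interval of $\Lambda(L_i)$ there. This is the well-separation of the quantile, and it is where the continuity structure of $F2$--$F3$ (in the Luedtke--van der Laan framework, together with the positive-density behaviour of $\Lambda(L_i)$ at $\omega_0$) must be used to guarantee that $\omega_0$ is a strict point of decrease of $G$. The remaining elements---the converging threshold level $\kappa_n/n$ and the passage from the pointwise law of large numbers to uniform Glivenko--Cantelli control---are routine once the iid reduction is in place.
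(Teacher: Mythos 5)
Your proof takes essentially the same route as the paper's: the paper offers no self-contained argument but states that the lemma ``can be proved in the same manner as Lemma 5 in \citetSM{luedtke2016optimal}'', and that lemma is exactly the empirical-quantile consistency argument you carry out --- population and empirical upper-tail functions $G$ and $\tilde{G}_n$, Glivenko--Cantelli control, and a two-sided deviation argument with $\kappa_n/n\to\kappa^*$ via condition $E2$. Your one added ingredient is also the correct one: the reduction of the single-cluster draw to an iid sample of $\Lambda(L_i)$ from a fixed law, via the mutual independence and identical distribution of $\{L_i\}$ under $\mathcal{M}_n^{AB}$ (Lemma \ref{lemma: ciid}) together with the $n$-invariance of $Q_L$ under $D2$; this porting step is what makes the Luedtke--van der Laan argument applicable, and the paper leaves it implicit.

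On the obstacle you flag in your final paragraph: you are right, and it is sharper than you state. Conditions $F2$ and $F3$ (continuity of the survival function near $\omega_0$, no atoms there) do not rule out a flat stretch of $G(c)=\mathbb{P}_0(\Lambda(L_i)>c)$ at level exactly $\kappa^*$ immediately to the right of $\omega_0$, and in that case the conclusion genuinely fails. Concretely, let $\Lambda(L_i)$ be uniform on $[0,1]\cup[2,3]$ and $\kappa^*=1/2$: then $\omega_0=1$ and both $F2$ and $F3$ hold, but $\tilde{G}_n$ is almost surely constant on $[1,2)$ at a $\mathrm{Bin}(n,1/2)/n$ value, which exceeds $\kappa_n/n=\lfloor n/2\rfloor/n$ with probability tending to $1/2$; on that event $\tilde{\omega}_n\geq 2$, so $\tilde{\omega}_n$ is not consistent. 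Hence the strict inequality $G(\omega_0+\epsilon)<\kappa^*$ in your upper-deviation step cannot be extracted from $F2$--$F3$ alone; it requires well-separation, supplied in this framework by $F1$ (continuous, strictly positive density of $\Lambda(L_i)$ at $\omega_0$, which fails in the example above). Since the lemma is only ever invoked under the full condition set $F$ (as in Proposition \ref{prop: asympeffG0}, whose hypotheses include $F1$), this is a gap in the lemma's stated hypotheses rather than in your argument: your diagnosis of where the extra structure enters matches the argument the paper points to, and your proof is as complete as the stated hypotheses permit.
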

Lemma \ref{lemma: omegacons} can be proved in the same manner as Lemma 5 in \citetSM{luedtke2016optimal}.

Now, consider a large-cluster positivity condition analogous to strong finite-cluster positivity condition $C1$.

\begin{itemize}
    \item [C1$^*$.] \textbf{Strong large-cluster positivity} Consider an asymptotic regime $\mathbf{G}_0$ following condition $E$.  $\mathbb{P}_0^F$ is such that $\overline{q}_0$ is well-defined and, for all $a, l$, if ${q}^*_{0}(a \mid l)Q_L(l) > 0$ then $\overline{q}_0(a \mid l)>0.$
\end{itemize}

The following corollary extends the results of Theorem \ref{theorem: asymptcons} to large cluster parameters:

\begin{corollary} \label{cor: asymconslarge}
   Consider an asymptotic law $\mathbb{P}_0$ following conditions $D$ and an asymptotic regime $\mathbf{G}_{0}$ following conditions $E$. Suppose that $\tilde{\omega}_n$ is consistent for $\omega_0$, possibly via the conditions of Lemma \ref{lemma: omegacons}, and that the strong large-cluster positivity condition $C1^*$ holds (possibly via the conjunction of sub-cluster positivity condition $C2$ and weak large-cluster positivity condition $C0^*)$. Then we have that $\sum\limits_{o\in\mathcal{O}_i}y\tilde{f}^{\mathbf{G}_0}(o)$ is an asymptotically consistent estimator for $\Psi^{\mathbf{G}_0}$. Under sub-cluster positivity condition $C2$ we additional have that $\sum\limits_{o\in\mathcal{O}_i}y\tilde{f}^{\mathbf{G}_0}(o)$ is a RAL estimator for $\Psi^{\mathbf{G}_0}$.    
\end{corollary}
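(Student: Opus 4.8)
The plan is to prove the two assertions separately, reducing each to the corresponding finite-cluster result (Theorem \ref{theorem: asymptcons}) and then controlling the single new object that distinguishes the large-cluster g-formula $f^{\mathbf{G}_0} = Q_Y\, q^*_0\, Q_L$ from its finite-cluster analogue, namely the intervention density $q^*_0$. Since $\tilde{f}^{\mathbf{G}_0}$ has exactly the form of the individual-level empirical g-formula with $\tilde{q}^*_0$ in place of the finite-cluster intervention density, it suffices to establish convergence in probability of the three factors $\tilde{Q}_{Y,n}$, $\tilde{q}^*_0$, $\tilde{Q}_{L,n}$ and then argue by the continuous mapping theorem as in the proof of Theorem \ref{theorem: asymptcons}. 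First I would note $\tilde{Q}_{L,n}\to Q_L$ in probability: under $\mathcal{M}^{AB}_n$ the $\{L_i\}$ are mutually independent and identically distributed (conditions \textit{A1}--\textit{A3}), so their empirical frequencies converge by the law of large numbers. Next, $\tilde{Q}_{Y,n}(y\mid a,l)\to Q_Y(y\mid a,l)$ at every $(a,l)$ with $q^*_0(a\mid l)Q_L(l)>0$ is exactly where strong large-cluster positivity $C1^*$ enters: since $\overline{q}_0=\boldsymbol{\lim}\tilde{q}_n$ is well-defined and $\overline{q}_0(a\mid l)>0$ there, the cell count $\mathbb{B}_n(a,l)$ grows proportionally to $n\,\overline{q}_0(a\mid l)Q_L(l)\to\infty$, and the conditional law of large numbers applied to the conditionally-iid variables $\{Y_i\mid A_i=a,L_i=l\}$ gives the claim.

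The remaining ingredient is $\tilde{q}^*_0\to q^*_0$, which I would obtain by writing $\tilde{q}^*_0$ as the plug-in of $\tilde{Q}_L$ and $\tilde{\omega}_n$ into the closed form of Proposition \ref{lemma: largeint}. By hypothesis $\tilde{\omega}_n\to\omega_0$ (for instance via Lemma \ref{lemma: omegacons} under conditions $F2$ and $F3$), and $\tilde{Q}_L\to Q_L$ as above, so continuity of the maps $c\mapsto\mathbb{P}_0(\Lambda(L_i)>c)$ and $c\mapsto\mathbb{P}_0(\Lambda(L_i)=c)$ in a neighborhood of $\omega_0$ --- guaranteed by $F2$ and $F3$ --- yields the convergence by continuous mapping. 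Combining the three limits via Slutsky's theorem delivers $\boldsymbol{\lim}\sum_{o}y\tilde{f}^{\mathbf{G}_0}(o)=\sum_{o}yf^{\mathbf{G}_0}(o)=\Psi^{\mathbf{G}_0}$, which identifies $\mathbb{E}[\overline{Y}^{\mathbf{G}_0}_0]$ by Theorem \ref{theorem: YbarIDlarge}.

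For the RAL claim under sub-cluster positivity $C2$, the strategy is to upgrade each consistency statement to asymptotic linearity and then propagate through the plug-in map. Condition $C2$ partitions the super-cluster into a growing number of weakly ($\alpha$-mixing) dependent, bounded-size sub-clusters that are identically distributed for almost all indices; by Lemma \ref{lemma: subclus} this already gives $L^2(\mathbb{P}_0)$-consistency of $\tilde{q}_n$, and, more to the point, it places the empirical nuisance estimators $\tilde{Q}_{Y,n}$ and $\tilde{Q}_{L,n}$ within the scope of the martingale central limit theorem for triangular arrays \citepMain{ganssler1978central}, so that these (nonparametric, discrete) estimators are $\sqrt{n}$-consistent and asymptotically linear. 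Because $L_i$ is discrete, the plug-in g-formula coincides with the nonparametric maximum-likelihood plug-in, whose influence function is the efficient one; combined with the regularity of $\omega_0$ from Lemma \ref{lemma: omegacons} that renders $\tilde{q}^*_0$ a Hadamard-differentiable functional of $(\tilde{Q}_L,\tilde{\omega}_n)$, the functional delta method shows that $\sum_{o}y\tilde{f}^{\mathbf{G}_0}(o)$ is regular and asymptotically linear with the same influence function as the one-step estimator of Proposition \ref{prop: asympeffG0}.

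The step I expect to be the main obstacle is the propagation of asymptotic linearity through the threshold $\omega_0$: the density $q^*_0$ depends on $\omega_0$ through the non-smooth indicator $I(\Lambda(l)>\omega_0)$, so absent the continuity and no-atom conditions $F2$ and $F3$ the parameter $\omega_0$, and hence $\Psi^{\mathbf{G}_0}$, fails to be pathwise differentiable, as flagged in the main text. Verifying that these conditions make the threshold contribution first-order negligible, and confirming that the discrete-covariate plug-in genuinely attains the efficient influence function under the sub-cluster dependence of $C2$ rather than being merely consistent, is where the real care is required; everything else reduces to the continuous-mapping and law-of-large-numbers arguments already deployed for the finite-cluster results.
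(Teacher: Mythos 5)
Your overall architecture is the one the paper intends (the corollary carries no standalone proof there: it is meant to follow by combining the consistency machinery of Theorem \ref{theorem: asymptcons}, the closed form of Proposition \ref{lemma: largeint} with $\tilde{\omega}_n \to \omega_0$, and, for the RAL claim, Lemma \ref{lemma: subclus} feeding into Proposition \ref{prop: asympeffG0}). However, there is a concrete gap in your continuous-mapping step for $\tilde{q}^*_0 \to q^*_0$: you invoke continuity of $c \mapsto \mathbb{P}_0(\Lambda(L_i) > c)$ and the no-atom behavior supplied by $F2$--$F3$. Those are not hypotheses of the corollary (only consistency of $\tilde{\omega}_n$ is assumed, with $F2$--$F3$ offered merely as one route to it), and, more importantly, they cannot hold in the paper's own setting: $L_i$ is discrete, so $\Lambda(L_i)$ has atoms and $\omega_0$ generically sits exactly at the atom of the marginally-treated rank group --- the paper itself notes that $F1^*$--$F2^*$-type conditions contradict discrete covariates. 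At such an atom both $I\big(\Lambda(l) > \tilde{\omega}_n\big)$ and the ratio $\big(\kappa^* - \mathbb{P}_0(\Lambda(L_i) > \tilde{\omega}_n)\big)/\mathbb{P}_0\big(\Lambda(L_i) = \tilde{\omega}_n\big)$ are discontinuous in $\tilde{\omega}_n$ at $\omega_0$, so continuous mapping as you wrote it fails in precisely the generic case. The repair is elementary and should be stated: with $\Lambda$ taking finitely many values, $\tilde{\omega}_n$ is supported on that finite set, so $\tilde{\omega}_n \to \omega_0$ in probability means $\mathbb{P}_0(\tilde{\omega}_n = \omega_0) \to 1$; on this event $\tilde{q}^*_0$ is a continuous functional of $\tilde{Q}_L$ alone, and convergence follows from $\tilde{Q}_L \to Q_L$. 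You flag the atom issue as ``the main obstacle'' but leave it unresolved, and for the consistency half it must be resolved, since the discrete case is the corollary's primary case; the remainder of your consistency argument ($\tilde{Q}_{L,n}$ by the law of large numbers, $\tilde{Q}_{Y,n}$ via cell-count growth under $C1^*$, Slutsky, then Theorem \ref{theorem: YbarIDlarge}) is exactly right.

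For the RAL half, your shape is correct (weak dependence under $C2$ implies a CLT, the discrete nonparametric plug-in solves the efficient-influence-function estimating equation, so the plug-in is asymptotically equivalent to the one-step estimator of Proposition \ref{prop: asympeffG0}), but you mis-attribute the probabilistic tool: the martingale triangular-array CLT is what the paper uses for the online-learning estimators of \emph{irregular} parameters, whereas under $C2$ the operative device is Billingsley's strong-mixing central limit theorem applied to sums over sub-clusters (this is how Lemma \ref{lemma: subclus} is proved), combined with the reduction in the proof of Theorem \ref{thm: asympeffGopt} showing that the leading empirical-process term is an average of functions of the iid $L_i$ --- handled by the ordinary CLT --- with all $\tilde{q}_n$-dependent contributions second order. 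Your closing caveat is apt: the RAL assertion is only coherent in the regularity regime where $\omega_0$ is pathwise differentiable, which is exactly why the corollary's hypothesis routes $\tilde{\omega}_n$'s consistency through Lemma \ref{lemma: omegacons}.
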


\subsection{Estimation of irregular large-cluster parameters} \label{appsec: online}

Following \citetSM{luedtke2016optimal}, let $\{l_n\}_{n \in \mathbb{N}}$ be the sequence of sizes of the smallest subsample of $\mathbf{O}_n$ on which the optimal rule is learned. For $j = l_n, \ldots,n$, let $\Tilde{P}_{n,j}$ be the empirical distribution of $\{O_1, \ldots, O_j\}$, and let $\Tilde{q}_{n,j}$, $\Tilde{g}_{i, j}^*$ and $\Tilde{\eta}_{n,j}$ be the estimates of $\Bar{q}_0$, $g_i^*$, and ${\eta}_0$ based on $\{O_1, \ldots, O_{j-1}\}$.

Conditions $F^*$ are reasonable if $L_i$ is continuous, but if the covariates are discrete then will not hold be verified: they would contradict conditions $F1^*$ and $F2^*$. The following results will therefore consider a set of conditions that exclude $F1^*$ and $F2^*$. In particular, we let $F^{\dagger}$ denote the intersection of $F4^*$, $F7^*$, and the following boundedness condition for the outcomes $Y_i$:

\begin{itemize}
    \item[$F3^{\dagger}$.] $\overline{P}(|Y_i| < M) = 1$ for some $M < \infty$.
\end{itemize}

\citetSM{luedtke2016statistical} consider a slightly weaker version of condition $F7^*$, that we do not restate here to focus on relations between conditions for the estimators of this section compared to estimators considered in Appendix \ref{appsec: OneStepResults}.
 
Then, \citetSM{luedtke2016statistical} develop an online estimator. Let $\ell_n$ be the size of the smallest sample on which we train our models and learn the optimal rule.

For $j = 1, \ldots, n$, let $(\Tilde{q}_{n,j}, \Tilde{Q}_{n,j}, \Tilde{g}_{i,j})$ be estimators of $(\Bar{q}_0, \mathbb{E}_{\overline{P}}, g^{\textbf{opt}})$ respectively, based on $(O_1, \ldots, O_{j-1})$.

Let 
\begin{equation*}
    \Tilde{\sigma}^2_{0,n,j} := \text{Var}_{\overline{P}}(\Phi_0(\Tilde{P}_{n,j},\Tilde{g}_{i,j}, \Tilde{\eta}_{n,j}) \mid O_1, \ldots, O_{j-1}),
\end{equation*}
and $\Tilde{\sigma}^2_{n,j}$ be an estimate of $\Tilde{\sigma}^2_{0,n,j}$ based on $(O_1, \ldots, O_{j-1})$.

Furthermore, let 
\begin{equation*}
    \Gamma_n := \frac{1}{n - \ell_n} \sum\limits_{j = \ell_n + 1}^n \Tilde{\sigma}_{n,j}^{-1}.
\end{equation*}

Let $\Tilde{\Phi}_{n,j} := \Phi_0(\Tilde{P}_{n,j},\Tilde{g}_{i,j}, \Tilde{\eta}_{n,j})$. Then, \citetSM{luedtke2016statistical} define the following estimator.
\begin{equation*}
    \hat{\Psi}_n^{\textbf{opt}} := \Gamma_n^{-1} \frac{1}{n - \ell_n} \sum\limits_{j = \ell_n + 1}^n \Tilde{\sigma}_{n,j}^{-1} \Tilde{\Phi}_{n,j} (O_j).
\end{equation*}

We further consider the following remainder terms:

    \begin{align*}
        &R_{1n}:= \frac{1}{n - \ell_n}
        \sum\limits_{j = \ell_n + 1}^n \frac{ \mathbb{E}_{\overline{P}} \left[\left(1 - \frac{\Bar{q}_0(\Tilde{g}_{i,j}(L_i) \mid L_i)}{\Tilde{q}_{n,j}(\Tilde{g}_{i,j}(L_i) \mid L_i)} \right)(\Tilde{Q}_{n,j}(\Tilde{g}_{i,j}(L_i), L_i) - \mathbb{E}_{\overline{P}}[Y_i \mid \Tilde{g}_{i,j}(L_i), L_i]) \right]}{\Tilde{\sigma}_{n,j}},\\
       & R_{2n} := \frac{1}{n - \ell_n} \sum\limits_{j = \ell_n + 1}^n \frac{ \mathbb{E}_{\Bar{P}}[\mathbb{E}_{\overline{P}}[Y_i \mid \Tilde{g}_{i,j}(L_i), L_i]] - \Psi^{\textbf{opt}}}{\Tilde{\sigma}_{n,j}}.
    \end{align*}

The following assumptions from \citetSM{luedtke2016statistical} are required for $\hat{\Psi}_n^{\textbf{opt}}$ to be asymptotically normal.
\begin{itemize}
    \item[G1.] $n - \ell_n \to \infty$ as $n \to \infty$.
    \item[G2.](Lindeberg-like condition) $\forall \epsilon > 0$,
    \begin{align*}
        &\frac{1}{n - \ell_n} \sum\limits_{j = \ell_n + 1}^n \mathbb{E}_{\overline{P}}\left[\left(\frac{\Tilde{\Phi}_{n,j}(O_j)}{\Tilde{\sigma}_{n,j}} \right)^2 I\left[\frac{|\Tilde{\Phi}_{n,j}(O_j)|}{\Tilde{\sigma}_{n,j}} > \epsilon \sqrt{n - \ell_n} \right] \Bigg| O_1, \ldots, O_{j-1} \right]= o_{\overline{P}}(1).
    \end{align*}
    \item[G3.] $\frac{1}{n - \ell_n} \sum\limits_{j = \ell_n + 1}^n \frac{\Tilde{\sigma}^2_{0,n,j}}{\Tilde{\sigma}^2_{n,j}} \to 1$ in probability.
    \item[G4.] $R_{1n} = o_{\overline{P}}(n^{-1/2})$.

    \item[G5.] $R_{2n}  = o_{\overline{P}}(n^{-1/2})$.
\end{itemize}
The following theorem is an immediate consequence of Theorem 2 in \citetSM{luedtke2016statistical}.
\begin{theorem}[\citepSM{luedtke2016statistical}] \label{thm: onlineNormality}
    Under Conditions $F^{\dagger}$ and G,  
    \begin{equation*}
        \Gamma_n \sqrt{n - \ell_n}(\hat{\Psi}_n^{\textbf{\textup{opt}}} - \Psi^{\textbf{\textup{opt}}}) \overset{d}{\to} \mathcal{N}(0,1),
    \end{equation*}
    as $n \to \infty$.
\end{theorem}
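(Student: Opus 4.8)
The plan is to obtain the result as a direct transcription of Theorem 2 of \citetSM{luedtke2016statistical} into the present setting, so that the bulk of the work consists of verifying that the target parameter, the estimator, and the hypotheses all coincide with theirs. The conceptual engine is the re-expression developed in Appendix \ref{appsec: CDTRreexpress}: by Proposition \ref{lemma: CDTRID} and its corollary, the large-cluster value $\mathbb{E}_{\mathbb{P}_0^F}[\overline{Y}_0^{\mathbf{G}_{0}^{\mathbf{opt}}}]$ equals $\mathbb{E}_{P^F}[Y_i^{g^{\mathbf{opt}}_i}]$, the value for a single individual under a stochastic individualized regime. Consequently the identifying functional $\Psi^{\mathbf{opt}}$ is exactly the individual-level g-formula under $\overline{P}$, which is the statistical parameter studied by \citetSM{luedtke2016statistical}, and its efficient influence function is $\Phi_0(\overline{P}, g^{\mathbf{opt}}_i, \eta_0)$, as recorded in Theorem \ref{thm: asympeffGopt}.

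First I would confirm that $\hat{\Psi}_n^{\mathbf{opt}}$ is literally their online one-step estimator: the sequentially-trained nuisances $(\tilde{q}_{n,j}, \tilde{Q}_{n,j}, \tilde{g}_{i,j}, \tilde{\eta}_{n,j})$ are fit on $\{O_1,\dots,O_{j-1}\}$, the standardized influence terms $\tilde{\sigma}_{n,j}^{-1}\tilde{\Phi}_{n,j}(O_j)$ are averaged over $j=\ell_n+1,\dots,n$, and $\Gamma_n$ is the matching normalizing constant. Next I would match hypotheses term by term: $F^{\dagger}$ supplies positivity ($F4^*$), the second-order and Donsker control ($F7^*$), and outcome boundedness ($F3^{\dagger}$), while $G1$--$G5$ supply the subsample-growth, Lindeberg-type, variance-ratio, and remainder ($R_{1n}, R_{2n}$) conditions, which are precisely the hypotheses of their Theorem 2. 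With these correspondences in place, their decomposition of $\Gamma_n\sqrt{n-\ell_n}(\hat{\Psi}_n^{\mathbf{opt}}-\Psi^{\mathbf{opt}})$ into a martingale-difference sum plus asymptotically negligible remainders $R_{1n}+R_{2n}$ applies verbatim, and the limit follows from the martingale central limit theorem for triangular arrays \citepMain{ganssler1978central}.

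The hard part will be justifying that the martingale-difference structure underlying that central limit theorem remains legitimate when the data are a single draw from $\mathbb{P}_n$ rather than $n$ iid draws. The conditions $G2$--$G3$ and the remainder terms are phrased through conditional expectations $\mathbb{E}_{\overline{P}}[\cdot \mid O_1,\dots,O_{j-1}]$, which presumes that, given the past, $O_j$ behaves like a fresh draw from $\overline{P}$, the marginal law whose treatment mechanism is $\overline{q}_0$. This is exactly where the earlier structural apparatus must be invoked: the conditional-iid property of $\{L_i\}$ and of $\{Y_i \mid A_i=a, L_i=l\}$ under $\mathcal{M}^{AB}_n$ (Appendix \ref{appsec: proofs}) handles the covariate and outcome coordinates, while the $L^2(\mathbb{P}_0)$-consistency $\tilde{q}_n \to \overline{q}_0$ of Lemma \ref{lemma: subclus} under sub-cluster positivity ($C2$) controls the treatment coordinate, whose dependence across individuals is induced by the resource constraint. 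Together these ensure that the standardized influence terms form an asymptotically mean-zero martingale-difference array with respect to the natural filtration, so that the transcription of \citetSM{luedtke2016statistical}'s argument is valid and the stated convergence holds.
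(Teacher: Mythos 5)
Your proposal is correct and follows essentially the paper's own route: the paper likewise obtains the result as an immediate consequence of Theorem 2 of \citetSM{luedtke2016statistical}, after identifying $\Psi^{\mathbf{opt}}$ with their statistical parameter via the individualized-regime re-expression, matching Conditions $F^{\dagger}$ and $G$ to their hypotheses, and observing that their martingale-CLT argument nowhere relies on the i.i.d.\ assumption. The only divergence is your final paragraph: the paper does not invoke Lemma \ref{lemma: subclus} or sub-cluster positivity ($C2$) here---the conditional-expectation conditions $G2$--$G5$ are simply assumed under $\overline{P}$, so your structural verification of the martingale-difference array is an extra layer the paper omits (and, if genuinely required by your argument, would quietly add hypotheses beyond those in the theorem's statement).
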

Theorem \ref{thm: onlineNormality} implies the following interval is a valid $1-\alpha$ CI for $\Psi(P_0)$,
\begin{equation*}
    \hat{\Psi}_n^{\textbf{opt}} \pm z_{1 - \alpha/2} \frac{\Gamma_n^{-1}}{\sqrt{n - \ell_n}},
\end{equation*}
where $z_{1-\alpha/2}$ denotes the $1-\alpha/2$ quantile of a standard normal random variable. 

As the asymptotic normality of $\hat{\Psi}_n^{\textbf{opt}}$ is proved in \citetSM{luedtke2016statistical} using a martingale central limit theorem, the proof of Theorem 2 in \citetSM{luedtke2016statistical} is also valid in our interference setting, as it does not rely on the i.i.d. assumption.

\clearpage
\section{Proofs of main results} \label{appsec: proofs}

Here, we re-state and prove formal results, which we organize in two main sections. Appendix  \ref{appsec: proofsID} provides proofs for identification-related results presented inSections \ref{sec: ID} and \ref{sec: largecluster} of the main text, and Appendices \ref{appsec: gencdtrs} and \ref{appsec: misc}. Appendix  \ref{appsec: proofsEst} provides proofs for estimation-related results in the main text Section \ref{sec: estim}, and Appendix \ref{appsec: estim}.

\subsection{Identification results} \label{appsec: proofsID}

\subsubsection{Finite-cluster parameters}

We begin by establishing the following lemma.

\begin{lemma}
 \label{lemma: ciid}
    Under $\mathcal{M}_n^{AB}$, the following conditions hold for all $l, a$, and for all $i \neq j$:
\begin{align}
    & \{L_i\in \mathbf{L}_n\}  \text{ are mutually independent.} \label{eq: condid1}\\
   & \{Y_i \in \mathbf{Y}_n \mid L_i = l, A_i=a\}  \text{ are mutually independent.} \label{eq: condid2} \\
   & \mathbb{P}_n(L_i =l) =  \mathbb{P}_n(L_j = l). \label{eq: ID1}\\
    & \mathbb{P}_n(Y_i=y \mid L_i=l, A_i=a) =  \mathbb{P}_n(Y_j=y \mid L_j=l, A_j=a) \label{eq: ID2}.
\end{align}
\end{lemma}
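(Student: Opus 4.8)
The plan is to exploit the fact that under $\mathcal{M}_n^{AB}$ the parent restrictions of A1 collapse the structural equations for $L_i$ and $Y_i$ to functions of individual-$i$ errors alone, and then to read each of the four claims off the cross-individual independence (A1), invariance (A2), and identical-distribution (A3) postulates, with B supplying the crucial decoupling of the treatment error from the outcome error. First I would record the structural consequences of A1: since $Pa(L_i)\cup Pa(Y_i)\subseteq O_i$ and the recursive SEM orders $\mathbf{L}_n$ before $\mathbf{A}_n$ before $\mathbf{Y}_n$, acyclicity forces $Pa(L_i)=\emptyset$ and $Pa(Y_i)\subseteq\{L_i,A_i\}$, so that $L_i=f_{L_i}(\epsilon_{L_i})$ and $Y_i=f_{Y_i}(Pa(Y_i),\epsilon_{Y_i})$.

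Claims \eqref{eq: condid1} and \eqref{eq: ID1} are then immediate. Each $L_i$ is a deterministic function of $\epsilon_{L_i}$ only; the $\{\epsilon_{L_i}\}$ are mutually independent by A1, so the $\{L_i\}$ are mutually independent; and applying the common map $f_{L_i}=f_{L_j}$ (A2) to identically distributed errors $\epsilon_{L_i}\sim\epsilon_{L_j}$ (A3) yields the common marginal in \eqref{eq: ID1}.

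For \eqref{eq: ID2}, I would combine consistency ($Y_i=Y_i^{A_i}$, via the SUTVA property noted after B) with B to obtain $\mathbb{P}_n(Y_i=y\mid L_i=l,A_i=a)=\mathbb{P}_n(Y_i^a=y\mid L_i=l)$. Writing $Y_i^a=f_{Y_i}(pa(y),\epsilon_{Y_i})$ with $pa(y)$ the value of $Pa(Y_i)$ determined by $(l,a)$, this conditional law is governed only by $f_{Y_i}$ and the conditional error law $(\epsilon_{Y_i}\mid Pa(Y_i)=pa(y))$; B also permits the conditioning event to be taken as exactly $\{Pa(Y_i)=pa(y)\}$, so that $f_{Y_i}=f_{Y_j}$ (A2) and the matching conditional error laws (A3) identify the expression with the corresponding quantity for $j$. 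The conditional form of A3 is exactly what absorbs any $\epsilon_{L_i}$--$\epsilon_{Y_i}$ dependence that enters through $L_i\in Pa(Y_i)$.

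The main obstacle is \eqref{eq: condid2}, the conditional mutual independence of the outcomes given the measured past. The difficulty is that A1 does \emph{not} render the individual error bundles $\epsilon_i$ mutually independent: the treatment errors $\{\epsilon_{A_i}\}$ may be arbitrarily coupled, which is the defining feature of the resource-limited setting, so conditioning on $(\mathbf{L}_n,\mathbf{A}_n)$ could in principle induce dependence among the $\{Y_i\}$ along a chain $\epsilon_{Y_i}-\epsilon_{A_i}\leftrightarrow\epsilon_{A_j}-\epsilon_{Y_j}$. The plan is to block this chain using both halves of the hypothesis: A1's postulate $\epsilon_{Y_i}\perp\boldsymbol{\epsilon}_n\setminus\epsilon_i$ severs every cross-individual link, while B removes the residual within-individual link by forcing the $Y_i$-relevant part of $\epsilon_{Y_i}$ to be independent of $A_i$ given $L_i$. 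Assembling these into a factorization of the conditional law of $(\epsilon_{Y_1},\dots,\epsilon_{Y_n})$ given $(\mathbf{L}_n,\mathbf{A}_n)$ --- through careful graphoid manipulations rather than a one-line appeal to independent bundles --- is where the real work lies; once that factorization is in hand, \eqref{eq: condid2} follows because, with the past fixed, each $Y_i$ is a fixed function of $\epsilon_{Y_i}$ alone.
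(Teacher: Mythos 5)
Your handling of \eqref{eq: condid1}, \eqref{eq: ID1} and \eqref{eq: ID2} is correct and essentially the paper's argument: A1 plus the temporal ordering forces $Pa(L_i)=\emptyset$ and $Pa(Y_i)\subseteq\{L_i,A_i\}$, after which independence, invariance (A2), and identical conditional error laws (A3) deliver the three claims. One remark on \eqref{eq: ID2}: your counterfactual detour is unnecessary and somewhat circular. Conditioning on $\{L_i=l,A_i=a\}$ is (taking $Pa(Y_i)=\{L_i,A_i\}$ without loss) exactly conditioning on $Pa(Y_i)=pa(y)$, so A2 and A3 give invariance in $i$ directly; the paper never invokes B here, and your route --- consistency and B to pass to $\mathbb{P}_n(Y_i^a=y\mid L_i=l)$, then B again to reinsert $A_i=a$ --- merely returns you to the conditional you started from.

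The genuine gap is \eqref{eq: condid2}, which is precisely the claim you did not prove. You correctly isolate the collider danger (conditioning on $\mathbf{B}_n$ can couple $\epsilon_{Y_i}$ and $\epsilon_{Y_j}$ through the chain $\epsilon_{Y_i}$--$\epsilon_{A_i}$--$\epsilon_{A_j}$--$\epsilon_{Y_j}$), but you then state that assembling the factorization ``is where the real work lies'' and stop; a plan that names the hard step without executing it is not a proof. Moreover, the mechanism you propose for that step is not available in the form you claim: B is a distributional statement about the counterfactual, $Y_i^{a_i}\independent A_i\mid L_i$, not about the error coupling, and it does not imply the error-level severing $\epsilon_{Y_i}\independent\epsilon_{A_i}$ (given $L_i$) that your graphoid argument needs. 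Concretely, with $L_i$ trivial, $U,V_1,V_2$ iid fair bits, $\epsilon_{A_i}=(U,V_i)$, $A_i=U\oplus V_i$, and $\epsilon_{Y_i}=Y_i=V_i$, one checks that $\epsilon_{Y_i}\independent\boldsymbol{\epsilon}_n\setminus\epsilon_i$ holds and that $Y_i^{a}=V_i$ is independent of $A_i$ (so B holds, unfaithfully), yet $\epsilon_{Y_i}$ is a component of $\epsilon_{A_i}$; the within-individual decoupling you want B to supply simply fails, so the announced graphoid manipulations cannot be completed from B as you use it.

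For comparison, the paper proves \eqref{eq: condid2} without B at all: it chain-rules $\mathbb{P}_n(\mathbf{Y}_n=\mathbf{y}_n\mid\mathbf{B}_n=\mathbf{b}_n)$ over $i$, fixes $B_i=b_i$ inside the structural equations so that the remaining conditioning events $\{\mathbf{B}_{-i}=\mathbf{b}_{-i}\}$ and $\{\overline{Y}_{i-1}=\overline{y}_{i-1}\}$ are rewritten as events generated by $\boldsymbol{\epsilon}_n\setminus\epsilon_i$ alone, and then drops them by A1's postulate $\epsilon_{Y_i}\independent\boldsymbol{\epsilon}_n\setminus\epsilon_i$. That rewriting --- not B --- is the missing idea in your draft. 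If you rebuild the proof along those lines, you should still address explicitly why the residual conditioning on $B_i=b_i$ does not reactivate the chain you flagged (your collider instinct is probing exactly the delicate point, as the XOR configuration above illustrates); your instinct that something must be said there is sound, but the resolution has to come from the structural reading of the error restrictions rather than from the conditional independence stated in B.
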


\begin{proof}

We prove the proposition in \eqref{eq: condid1} by showing that $\mathbb{P}_n(\mathbf{L}_n=\mathbf{l}_n) = \prod\limits_{i=1}^n\mathbb{P}_n(L_i=l_i)$.
\begin{align*}
    \mathbb{P}_n(\mathbf{L}_n=\mathbf{l}_n) = & \prod\limits_{i=1}^n\mathbb{P}_n(L_i=l_i\mid L_1=l_1,\dots, L_{i-1}=l_{i-1}) \\
    = & \prod\limits_{i=1}^n\mathbb{P}_n(f_{L_i}(\epsilon_{L_i})=l_i\mid f_{L_1}(\epsilon_{L_1})=l_1,\dots, f_{L_{i-1}}(\epsilon_{L_{i-1}})=l_{i-1}) \\
    = & \prod\limits_{i=1}^n\mathbb{P}_n(f_{L_i}(\epsilon_{L_i})=l_i) \\
    = & \prod\limits_{i=1}^n\mathbb{P}_n(L_i=l_i).
\end{align*}

The first equality follows by laws of probability. The second equality follows by definition of the structural equation model and the restrictions on the parents of $L_i$ in Condition A1 (Conditional noninterference). The third equality follows by the mutual independence of the error terms for each $L_i$. The final equality is by the definition of the structural equation model. We prove the proposition in \eqref{eq: condid2} by showing that $\mathbb{P}_n(\mathbf{Y}_n=\mathbf{y}_n \mid \mathbf{B}_n=\mathbf{b}_n) = \prod\limits_{i=1}^n\mathbb{P}_n(Y_i=y_i \mid \mathbf{B}_n=\mathbf{b}_n)$, where we let $\mathbf{V}_{-i}\coloneqq \mathbf{V}_{n}\setminus V_i$ and let $\overline{L}_i \coloneqq \{L_1,\dots, L_i\}$.

\begin{align*}
   & \mathbb{P}_n(\mathbf{Y}_n=\mathbf{y}_n \mid \mathbf{B}_n=\mathbf{b}_n) \\ 
   = & \prod\limits_{i=1}^n\mathbb{P}_n(Y_i=y_i \mid B_i=b_i, \mathbf{B}_{-i}=\mathbf{b}_{-i}, \overline{Y}_{i-1}=\overline{y}_{i-1}) \\
    = & \prod\limits_{i=1}^n\mathbb{P}_n(f'_{Y_i}(\epsilon_{Y_i})=y_i \mid B_i=b_i, f'_{\mathbf{B}_{-i}}(\epsilon_{\mathbf{A}_{-i}},\epsilon_{\mathbf{L}_{-i}})=\mathbf{b}_{-i}, \overline{f}'_{Y_{i-1}}(\overline{\epsilon}_{Y_{i-1}})=\overline{y}_{i-1}). \\
    = & \prod\limits_{i=1}^n\mathbb{P}_n(f'_{Y_i}(\epsilon_{Y_i})=y_i \mid B_i=b_i) \\
    = & \prod\limits_{i=1}^n\mathbb{P}_n(Y_i=y_i \mid \mathbf{B}_n=\mathbf{b}_n).
\end{align*}

The first equality follows by laws of probability. The second equality follows by definition of the structural equation model and restrictions on the parent sets of $Y_i$ in Condition $A1$, where we have let  $f'_{Y_i}(\cdot) \equiv f_{Y_i}(\cdot, B_i=b_i)$ denote the structural equation for $Y_i$ with $B_i$ fixed to $b_i$,  $f'_{\mathbf{B}_{-i}}(\cdot)$ denote the vector valued structural equation for $\mathbf{B}_{-i}$ with $B_i$ fixed to $b_i$, and $\overline{f}'_{\mathbf{Y}_{i-1}}(\cdot)$ denote the vector valued structural equation for $\overline{Y}_{i-1}$ conditional on $\mathbf{B}_{-i} = \mathbf{b}_{-i}$. The third equality follows by the error independence structure for $\epsilon_{Y_i}$ in Condition A1 (Conditional noninterference). The final equality is again by the definition of the structural equation model, and by reversing the step between the second and third equality.

We prove the propositions in \eqref{eq: ID1} and \eqref{eq: ID2}  by showing that the elements of $\mathbf{L}_n$ are identically distributed variables and likewise that $\{Y_i : B_i=b\}$ are identically distributed. First, by condition $A3$ (conditional identically distributed errors) we have the error terms for each $L_i$ have some common distribution, say that of $V_L$.  Then we have that each $L_i = f_{L_i}(\epsilon_{L_i}) = f_{L}(\epsilon_{L_i}) \sim f_{L}(V_L)$, that is, each $L_i$ is some fixed function $f_{L}$ of some variables with a fixed distribution, where $f_{L_i}=f_L$ follows by condition $A2$ (structural invariance). Likewise, let $f'_{Y_i}$ be defined as above. By condition $A3$ and $A1$ we have the error terms for each $Y_i$ have some common distribution conditional on $B_i=b_i$, say that of $V_Y$. Then we have that, given $B_i=b_i$, each $Y_i = f'_{Y_i}(\epsilon_{Y_i}) = f'_{Y}(\epsilon_{Y_i}) \sim f'_{Y}(V_Y)$, that is, each $Y_i$ is some fixed function $f'_{Y}$ of some variables with a fixed distribution, where $f'_{Y_i}=f_Y$ follows by condition $A2$ (structural invariance). Thus, $\mathbb{P}_n(L_i=l)$ and $\mathbb{P}_n(Y_i=y \mid B_i=b)$ are thus described by common measures $Q_L$ and $Q_Y$, respectively.
\end{proof}

\begin{corollary} \label{cor: ciidcount}
    Under $\mathcal{M}_n^{AB}$, the following conditions hold for all $l, a$:
\begin{align}
   & \{Y_i^{G_n} \in \mathbf{Y}_n ^{G_n} \mid L_i = l, A^{G_n+}_i=a\}  \text{ are mutually independent.} \label{eq: condid2count} \\
   & \mathbb{P}_n^F(Y_i^{\mathbf{a}_n} =  Y_i^{a_i}) = 1 \text{ for each } i, a_i\in\mathbf{a}_n. \label{eq: condid4count} \\
   & \{Y_i^{\mathbf{a}_n} \in \mathbf{Y}_n^{\mathbf{a}_n} \}  \text{ are mutually independent.} \label{eq: condid3count} 
\end{align}
\end{corollary}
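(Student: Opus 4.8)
The plan is to prove the three displays in the order \eqref{eq: condid4count}, \eqref{eq: condid3count}, \eqref{eq: condid2count}, since each rests on the previous one. The workhorse is the structural identity \eqref{eq: condid4count}, which collapses every counterfactual outcome into a function of a single individual's error pair $(\epsilon_{L_i}, \epsilon_{Y_i})$; once this is in hand, the two independence statements reduce to re-running the disintegration argument already used for \eqref{eq: condid2} in Lemma \ref{lemma: ciid}, now with the observed treatment $A_i$ replaced by an intervened or assigned value.

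First I would establish \eqref{eq: condid4count} by recursive substitution in the intervened system. Because Condition $A1$ forces $Pa(L_i)\cup Pa(Y_i)\subseteq O_i$, and $O_i$ contains only individual $i$'s own variables, the temporal ordering $\mathbf{L}_n \prec \mathbf{A}_n \prec \mathbf{Y}_n$ gives $Pa(L_i)=\emptyset$ and $Pa(Y_i)\subseteq\{L_i,A_i\}$. Thus $L_i=f_{L_i}(\epsilon_{L_i})$ is a non-descendant of every treatment, and upon setting $\mathbf{A}_n=\mathbf{a}_n$ the only treatment that can enter $f_{Y_i}$ is $A_i$, fixed to $a_i$; every component of $\mathbf{a}_n\setminus a_i$ and all of their descendants other than $A_i$ are excluded from the argument of $f_{Y_i}$. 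Hence $Y_i^{\mathbf{a}_n}=f_{Y_i}(L_i,a_i,\epsilon_{Y_i})=Y_i^{a_i}$ almost surely, which is \eqref{eq: condid4count}.

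Next, for \eqref{eq: condid3count} I would use \eqref{eq: condid4count} to write $Y_i^{\mathbf{a}_n}=f_{Y_i}(L_i,a_i,\epsilon_{Y_i})$, a measurable function of $(\epsilon_{L_i},\epsilon_{Y_i})$ with $a_i$ held fixed. I would then expand $\mathbb{P}^F_n(\mathbf{Y}^{\mathbf{a}_n}_n=\mathbf{y}_n)$ by conditioning on $\mathbf{L}_n$, invoke \eqref{eq: condid1} to factor $\mathbb{P}_n(\mathbf{L}_n=\mathbf{l}_n)=\prod_i Q_L(l_i)$, and show that the conditional law factors as $\prod_i \mathbb{P}^F_n(Y_i^{a_i}=y_i\mid L_i=l_i)$. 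The factorization is exactly where the noninterference clause $\epsilon_{Y_i}\independent\boldsymbol{\epsilon}_n\setminus\epsilon_i$ of $A1$ is needed: given $\mathbf{L}_n$, each $Y_i^{a_i}$ is a function of $\epsilon_{Y_i}$ alone, and these factor across $i$ by the same disintegration as in \eqref{eq: condid2}. The per-individual marginal then equals $Q_Y(y\mid a_i,l)$ by Condition $B$ together with consistency (so that $\mathbb{P}^F_n(Y_i^{a_i}=y\mid L_i=l)=\mathbb{P}_n(Y_i=y\mid L_i=l,A_i=a_i)$), completing the product form and hence the mutual independence.

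Finally, \eqref{eq: condid2count} follows along the same route applied to $G_n$: by the argument of \eqref{eq: condid4count} we have $Y_i^{G_n}=Y_i^{A_i^{G_n+}}=f_{Y_i}(L_i,A_i^{G_n+},\epsilon_{Y_i})$, so conditioning on $L_i=l$ and $A_i^{G_n+}=a$ reduces $Y_i^{G_n}$ to a function of $\epsilon_{Y_i}$, which again factor across $i$. I expect the main obstacle to be precisely this conditioning step. The assigned treatment $A_i^{G_n+}$ is a function of the covariates and randomizer of the \emph{whole} cluster, so conditioning on it threatens to reintroduce a link between individual $i$'s outcome and the errors of $j\neq i$; marginal independence of $\epsilon_{Y_i}$ from the other bundles does not by itself survive conditioning on a quantity built from those bundles. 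Discharging it requires using $\epsilon_{Y_i}\independent\boldsymbol{\epsilon}_n\setminus\epsilon_i$ jointly with the exogeneity of the randomizer $\delta$, so that the conditional law of $\epsilon_{Y_i}$ given $(L_i,A_i^{G_n+})$ depends on the remainder of the cluster only through $L_i$ — the same delicate point that underwrites the third equality in the proof of Lemma \ref{lemma: ciid}.
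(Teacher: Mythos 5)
Your proposal is correct, and for \eqref{eq: condid4count} and \eqref{eq: condid2count} it is essentially the paper's own argument: \eqref{eq: condid4count} is read directly off the structural equation for $Y_i$ under Condition \textit{A1} (the main text makes the identical point in Section \ref{subsec: IDcondRemarks} when discussing SUTVA), and \eqref{eq: condid2count} is exactly the paper's observation that, given $L_i=l$ and $A_i^{G_n+}=a$, $Y_i^{G_n}$ equals $f^*_Y(\epsilon_{Y_i})$ with treatment and covariate fixed; the delicate conditioning step you flag --- that conditioning on $A_i^{G_n+}$, a function of the whole cluster's covariates and $\delta$, could reintroduce dependence --- is precisely the one the paper discharges in Lemma \ref{eq: IDkernel} using the exogeneity of $\delta$ together with $\epsilon_{Y_i}\independent\boldsymbol{\epsilon}_n\setminus\epsilon_i$, so you have located the right mechanism. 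Where you genuinely diverge is \eqref{eq: condid3count}. The paper telescopes $\mathbb{P}_n(\mathbf{Y}_n^{\mathbf{a}_n}=\mathbf{y}_n)=\prod_{i}\mathbb{P}_n\big(Y_i^{a_i}=y_i \mid Y_1^{a_1}=y_1,\dots,Y_{i-1}^{a_{i-1}}=y_{i-1}\big)$ after applying \eqref{eq: condid4count}, and then drops the conditioning on prior counterfactual outcomes by d-separation in the auxiliary DAG $\mathcal{G}(\mathbf{a}_n)$ of Figure \ref{fig: proofDAG1}(b), a supermodel of the Condition-\textit{A1} model constructed in the proof of Proposition \ref{prop: compgformred}. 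You instead disintegrate over $\mathbf{L}_n$, factor the conditional law of the $Y_i^{a_i}$ by the same sequential peeling used for \eqref{eq: condid2} in Lemma \ref{lemma: ciid}, factor $\mathbb{P}_n(\mathbf{L}_n=\mathbf{l}_n)=\prod_i Q_L(l_i)$ via \eqref{eq: condid1}, and marginalize; since each conditional factor depends on $\mathbf{l}_n$ only through $l_i$, the sum over $\mathbf{l}_n$ factorizes and marginal mutual independence follows. Your route is more elementary in that it never leaves the SEM/disintegration toolkit already deployed in Lemma \ref{lemma: ciid}, whereas the paper's buys reuse of the graphical supermodel it needs anyway for Proposition \ref{prop: compgformred}. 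One small point: your appeal to Condition \textit{B} and consistency to evaluate the per-individual marginal as $Q_Y(y\mid a_i,l)$ is superfluous for \eqref{eq: condid3count}, which is purely an independence claim; as the paper's DAG makes explicit by allowing bi-directed $L_i\leftrightarrow Y_i$ confounding arcs, the factorization holds under Condition \textit{A1} alone, and invoking \textit{B} there slightly obscures which hypothesis is doing the work (though it is harmless, since the corollary is stated under $\mathcal{M}_n^{AB}$).
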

Property \eqref{eq: condid2count} of Corollary \ref{cor: ciidcount} follows immediately by recognizing that $Y_i^{G_n}$ given $L_i = l, A^{G_n+}_i=a$ is equal to $f^*_Y(\epsilon_{Y_i})$ with $f^*_Y$ equivalent to $f_Y$ with $A_i$ and $L_i$ fixed to $a$ and $l$. Property \eqref{eq: condid4count} follows directly by inspection of the structural equation for each $Y_i$ and its arguments when $A_i$ is fixed to $a_i$. Property \eqref{eq: condid3count} follows if $\mathbb{P}_n(\mathbf{Y}_n^{\mathbf{a}_n}=\mathbf{y}_n ) = \prod\limits_{i=1}^n\mathbb{P}_n(Y^{\mathbf{a}_n}_i=y_i)$.

\begin{align*}
   & \mathbb{P}_n(\mathbf{Y}_n^{\mathbf{a}_n}=\mathbf{y}_n ) \\ 
   = & \prod\limits_{i=1}^n\mathbb{P}_n(Y_i^{\mathbf{a}_n}=y_i \mid \overline{Y}^{\mathbf{a}_n}_{i-1}=\overline{y}_{i-1}) \\
   = & \prod\limits_{i=1}^n\mathbb{P}_n(Y_i^{a_i}=y_i \mid Y_1^{a_1}=y_1, \dots, Y_{i-1}^{a_{i-1}}=y_{i-1}) \\
   = & \prod\limits_{i=1}^n\mathbb{P}_n(Y_i^{a_i}=y_i) \\
   = & \prod\limits_{i=1}^n\mathbb{P}_n(Y_i^{\mathbf{a}_n}=y_i).
\end{align*}

The first equality follows by laws of probability. The second follows by Property \eqref{eq: condid4count}. The third follows by independencies evident on the DAG $\mathcal{G}(\mathbf{a}_n)$ in Figure \ref{fig: proofDAG1}b (See the Proof of Proposition \ref{prop: compgformred}). The fourth follows again by Property \eqref{eq: condid4count}. 

The subsequent corollary follows simply by recognizing that, under the conditions of the previous Lemma, $\mathbb{L}_{n}$, $\mathbb{O}_{n}$ given $\mathbb{B}_{n}=\mathbb{b}_{n}$, and $\mathbb{O}_{n}^{G_n+}$ given $\mathbb{B}^{G_n+}_{n}=\mathbb{b}_{n}$ may be formulated as grouped Poisson-multinomial random variables. 

\begin{corollary} \label{cor: compid}
    Consider a law $\mathbb{P}_n$ under $\mathcal{M}^{AB}_n$. Then,
    \begin{align}
        \mathbb{Q}_{L, n}(\mathbb{l}_{n})&  =  \frac{n!}{\prod\limits_{l\in\mathcal{L}}\mathbb{l}_{n}(l)!}\prod\limits_{l\in\mathcal{L}}Q_{L}(l)^{\mathbb{l}_{n}(l)}, \\
          \mathbb{Q}_{L, n\mid l}(\mathbb{l}_n)& =  \frac{(n-1)!}{(\mathbb{l}_{n}(l)-1)!\prod\limits_{l'\in\mathcal{L}\setminus l}\mathbb{l}_{n}(l')!}Q_{L}(l)^{\mathbb{l}_{n}(l)-1}\prod\limits_{l'\in\mathcal{L}\setminus l}Q_{L}(l')^{\mathbb{l}_{n}(l')}, \\
        \mathbb{Q}_{Y, n}(\mathbb{o}_{n} \mid \mathbb{b}_n) & = \prod\limits_{ a, l} \frac{\mathbb{b}_n(a,l)!}{\prod\limits_{y\in\mathcal{Y}}\mathbb{o}_{n}(y,a,l)!}\prod\limits_{y\in\mathcal{Y}}Q_{Y}(y \mid a, l)^{\mathbb{o}_{n}(y,a,l)}, \label{eq: compidfact} \\
        \mathbb{P}_n^F(\mathbb{O}_n^{G_n+}=\mathbb{o}_n \mid \mathbb{B}_n^{G_n+}=\mathbb{b}_n) & = \prod\limits_{ a, l} \frac{\mathbb{b}_n(a,l)!}{\prod\limits_{y\in\mathcal{Y}}\mathbb{o}_{n}(y,a,l)!}\prod\limits_{y\in\mathcal{Y}}\mathbb{P}_n^F(Y^{G_n} =y \mid A^{G_n+}_i=a, L_i=l)^{\mathbb{o}_{n}(y,a,l)}. \label{eq: compidPO} 
    \end{align}
\end{corollary}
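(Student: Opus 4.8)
The plan is to recognise each of the four displays as the law of a cell-count vector obtained by pushing forward, through the counting map, the product measures already guaranteed by Lemma \ref{lemma: ciid} and Corollary \ref{cor: ciidcount}; the remaining work is then combinatorial bookkeeping, once the conditioning events have been set up correctly. When the within-group laws are identical the grouped Poisson--multinomial structure collapses to an ordinary product of multinomials, which is exactly what produces the closed forms above.

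First I would dispatch the two covariate identities. By \eqref{eq: condid1} and \eqref{eq: ID1} the variables $L_1,\dots,L_n$ are i.i.d.\ with common law $Q_L$, so $\mathbb{L}_n$ is by definition multinomial$(n,(Q_L(l))_{l\in\mathcal{L}})$ and its mass function is the first display. For $\mathbb{Q}_{L,n\mid l}$ I would condition on $L_i=l$; mutual independence means the remaining $n-1$ covariates are still i.i.d.\ $Q_L$ and unaffected by the conditioning, so the residual counts (i.e.\ $\mathbb{l}_n$ with one unit deleted from cell $l$) are multinomial$(n-1,\cdot)$, giving the second display.

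The substantive step is $\mathbb{Q}_{Y,n}$, where I would condition first on the full treatment--covariate configuration $\mathbf{B}_n=\mathbf{b}_n$, not merely on its composition. The computation inside the proof of Lemma \ref{lemma: ciid} shows that, given $\mathbf{B}_n=\mathbf{b}_n$, the $Y_i$ are mutually independent with $Y_i\sim Q_Y(\cdot\mid a_i,l_i)$; grouping individuals by their cell $(a,l)$ and counting outcome values then makes the conditional law of $\mathbb{O}_n$ a product over cells of multinomial$(\mathbb{b}_n(a,l),(Q_Y(y\mid a,l))_y)$ masses. The key observation is that this conditional law depends on $\mathbf{b}_n$ only through its composition $\mathbb{b}_n$, so by the law of total probability over all configurations sharing composition $\mathbb{b}_n$ the coarser conditioning $\mathbb{B}_n=\mathbb{b}_n$ returns the identical product-multinomial mass, namely \eqref{eq: compidfact}. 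I expect this configuration-to-composition reduction to be the main (if modest) obstacle: one must check that permuting individuals within a cell leaves the outcome-count law unchanged, so that averaging over configurations with a fixed composition is a no-op.

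Finally, \eqref{eq: compidPO} follows by repeating this argument for counterfactuals. Mutual independence within a cell is supplied by \eqref{eq: condid2count}, and the identical-distribution-within-cell property required for a genuine multinomial follows from the SUTVA identity \eqref{eq: condid4count} together with conditions $A1$--$A3$, which make the within-cell law of $Y_i^{G_n}$ common across $i$ once $A_i^{G_n+}=a$ and $L_i=l$ are fixed. Conditioning first on the assigned configuration $\mathbf{B}_n^{G_n+}=\mathbf{b}_n$ and then applying the same configuration-to-composition reduction yields the stated product-multinomial form in terms of the counterfactual kernel $\mathbb{P}_n^F(Y^{G_n}=y\mid A_i^{G_n+}=a,L_i=l)$, completing the argument.
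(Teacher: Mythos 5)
Your proof is correct and takes essentially the same approach as the paper, which disposes of this corollary in a single line by observing that, under Lemma \ref{lemma: ciid} and Corollary \ref{cor: ciidcount}, $\mathbb{L}_n$, $\mathbb{O}_n$ given $\mathbb{B}_n$, and $\mathbb{O}_n^{G_n+}$ given $\mathbb{B}_n^{G_n+}$ are grouped (Poisson-)multinomial random variables --- exactly the multinomial push-forward structure you spell out. Your explicit configuration-to-composition reduction and the within-cell identical-distribution check for the counterfactual kernel are details the paper leaves implicit, and you handle both correctly.
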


Note that for such laws, $\{\mathbb{Q}_{L, n}, \mathbb{Q}_{L, n \mid l}, \mathbb{Q}_{Y, n}\}$ are defined entirely in terms of $\{Q_L, Q_Y\}$ and known quantities, as is $\mathbb{q}^*_n$ and $q^*_i$ (see Proposition \ref{prop: intdens_general} in Appendix \ref{appsec: gencdtrs}). 

\begin{lemma} \label{eq: IDkernel}
Consider a law $\mathbb{P}_n^F\in\mathcal{M}^{AB}_n$. Then:

\begin{align}
    \mathbb{P}_n^F(Y^{G_n}_i=y \mid A^{G_n+}_i=a, L_i=l) = Q_Y(y \mid a, l), \label{eq: indIDkernel}
\end{align}
and

\begin{align}
    \mathbb{P}_n^F(\mathbb{O}_n^{G_n+}=\mathbb{o}_n \mid \mathbb{B}_n^{G_n+}=\mathbb{b}_n) = \mathbb{Q}_{Y, n}(\mathbb{o}_{n} \mid \mathbb{b}_n). \label{eq: compIDkernel}
\end{align}
    
\end{lemma}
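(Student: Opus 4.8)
The plan is to prove the individual-level identity \eqref{eq: indIDkernel} first and to obtain the compositional identity \eqref{eq: compIDkernel} from it. For \eqref{eq: indIDkernel} I would reduce both sides to the single object $\mathbb{P}_n(Y_i^{a}=y\mid L_i=l)$, the conditional law of the one-treatment counterfactual given covariates. On the counterfactual side, the exclusion restriction implied by $A1$ (namely $Pa(Y_i)\subseteq\{L_i,A_i\}$, so $f_{Y_i}$ ignores $\mathbf{a}_n\setminus a_i$) gives $Y_i^{\mathbf{a}_n}=Y_i^{a_i}$ and hence $Y_i^{G_n}=Y_i^{A_i^{G_n+}}$; conditioning on $\{A_i^{G_n+}=a\}$ then yields $\mathbb{P}_n^F(Y_i^{G_n}=y\mid A_i^{G_n+}=a,L_i=l)=\mathbb{P}_n^F(Y_i^{a}=y\mid A_i^{G_n+}=a,L_i=l)$. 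On the observed side, I would write $Q_Y(y\mid a,l)=\mathbb{P}_n(Y_i=y\mid A_i=a,L_i=l)$, use consistency ($Y_i=Y_i^{a}$ on $\{A_i=a\}$) to replace $Y_i$ by $Y_i^{a}$, and apply Condition $B$ ($Y_i^{a}\independent A_i\mid L_i$) to drop the event $\{A_i=a\}$, giving $Q_Y(y\mid a,l)=\mathbb{P}_n(Y_i^{a}=y\mid L_i=l)$. Because the conditional law of $Y_i^{a}$ given $L_i$ is intrinsic to the full-data law and does not depend on the regime, this coincides with $\mathbb{P}_n^F(Y_i^{a}=y\mid L_i=l)$.

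The two reductions close only if $\mathbb{P}_n^F(Y_i^{a}=y\mid A_i^{G_n+}=a,L_i=l)=\mathbb{P}_n^F(Y_i^{a}=y\mid L_i=l)$, i.e. $Y_i^{a}\independent A_i^{G_n+}\mid L_i$, and I expect this to be the main obstacle, precisely because the assigned treatment $A_i^{G_n+}$ is a function of the covariates of \emph{all} individuals, not just $i$. The plan is to note that, once $L_i$ is fixed, $A_i^{G_n+}$ is measurable with respect to $(\mathbf{L}_{-i},\delta)$ — each $L_j=f_{L_j}(\epsilon_{L_j})$ is parentless under $A1$ and $\delta$ is an exogenous randomizer — whereas $Y_i^{a}$ is, given $L_i=l$, a function of $\epsilon_{Y_i}$ alone, so the claim reduces to $\epsilon_{Y_i}\independent\{\epsilon_{L_j}\}_{j\neq i}\mid L_i$. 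I would derive this from the marginal independence $\epsilon_{Y_i}\independent\boldsymbol{\epsilon}_n\setminus\epsilon_i$ and the mutual independence of the covariate errors $\{\epsilon_{L_j}\}$ asserted in $A1$. The delicate point is the passage from these marginal statements to the required independence \emph{conditional} on $L_i$: one must use that $A1$ excludes cross-individual confounding between $L_i$ and $L_j$ and that $B$ rules out the M-bias collider structure at $L_i$ (discussed in Section \ref{subsec: IDcondRemarks}), so that conditioning on $L_i$ opens no path between $\epsilon_{Y_i}$ and the remaining covariate errors. Establishing this conditional independence carefully is the crux of the argument.

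Finally, for the compositional identity \eqref{eq: compIDkernel} I would avoid repeating the counterfactual analysis and instead invoke the grouped-multinomial structure already available in Corollary \ref{cor: ciidcount} and Corollary \ref{cor: compid}. By \eqref{eq: condid2count} the counterfactual outcomes $\{Y_i^{G_n}\mid L_i=l,\,A_i^{G_n+}=a\}$ are mutually independent, and by \eqref{eq: compidPO} the conditional law $\mathbb{P}_n^F(\mathbb{O}_n^{G_n+}=\mathbb{o}_n\mid\mathbb{B}_n^{G_n+}=\mathbb{b}_n)$ is a product of grouped-multinomial factors whose cell kernel is exactly $\mathbb{P}_n^F(Y^{G_n}=y\mid A^{G_n+}_i=a,L_i=l)$. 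Substituting the just-proved \eqref{eq: indIDkernel} replaces this kernel by $Q_Y(y\mid a,l)$, which matches \eqref{eq: compidfact} term by term and hence equals $\mathbb{Q}_{Y,n}(\mathbb{o}_n\mid\mathbb{b}_n)$, completing the proof.
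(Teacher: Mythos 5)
Your overall architecture coincides with the paper's proof: both sides are reduced to $\mathbb{P}_n^F(Y_i^{a}=y\mid L_i=l)$ — the counterfactual side via the exclusion restriction of $A1$/$A2$ (so that $Y_i^{G_n}=Y_i^{A_i^{G_n+}}$; the paper's first three equalities), the observed side via consistency and Condition $B$ (the paper's last three equalities, run in reverse) — and you obtain the compositional identity \eqref{eq: compIDkernel} exactly as the paper does, by substituting \eqref{eq: indIDkernel} into the grouped-multinomial factorization \eqref{eq: compidPO} of Corollary \ref{cor: compid} and matching \eqref{eq: compidfact}. You also correctly isolate the crux of the argument: $Y_i^{a}\independent A_i^{G_n+}\mid L_i$, which reduces to $\epsilon_{Y_i}\independent(\epsilon_{\mathbf{L}_{-i}},\delta)\mid L_i$.

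Your proposed resolution of that crux, however, misfires. Condition $B$ can play no role there: $B$ constrains the \emph{natural} assignment $A_i=f_{A_i}(Pa(A_i),\epsilon_{A_i})$, whereas $A_i^{G_n+}$ is the investigator-specified function $G_n(\mathbf{L}_n,\delta)$ evaluated at coordinate $i$, with $\delta$ exogenous; the regime replaces the observed mechanism, so no restriction involving $f_{A_i}$ or $\epsilon_{A_i}$ (the M-bias exclusion included) bears on $A_i^{G_n+}$. Conversely, $Y_i^{a}\independent A_i\mid L_i$ is independence from one particular function and does not yield independence from its arguments $(\mathbf{L}_{-i},\delta)$. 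The paper attributes this step to $A1$ and the exogeneity of $\delta$ alone, with $A1$'s error restrictions read \emph{jointly}: individual $i$'s own error pair $(\epsilon_{L_i},\epsilon_{Y_i})$ is independent of the other individuals' covariate errors $\epsilon_{\mathbf{L}_{-i}}$, so conditioning on $L_i=f_{L}(\epsilon_{L_i})$, a function of $\epsilon_{L_i}$, preserves $\epsilon_{Y_i}\independent(\epsilon_{\mathbf{L}_{-i}},\delta)$. Your instinct that the two marginal statements in $A1$ (namely $\epsilon_{Y_i}\independent\boldsymbol{\epsilon}_n\setminus\epsilon_i$ and mutual independence of the $\epsilon_{L_j}$) do not \emph{formally} force the conditional version is sound — an XOR construction on three binary errors satisfies both pairwise independencies while violating $\epsilon_{Y_i}\independent\epsilon_{L_j}\mid\epsilon_{L_i}$ — but the fix is the joint reading of $A1$, not an appeal to $B$. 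Replace your $B$-based argument at that step with this joint-independence argument and your proof closes; as written, that step would not go through.
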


\begin{proof}

First we prove the proposition of \eqref{eq: indIDkernel}.
    \begin{align*}
       & \mathbb{P}_n^F(Y^{G_n}_i=y \mid A^{G_n+}_i=a, L_i=l) \\
    =& \mathbb{P}_n^F(f_Y(\epsilon_{Y_i}, a, l)=y \mid A^{G_n+}_i=a, L_i=l) \\
    =& \mathbb{P}_n^F(f_Y(\epsilon_{Y_i}, a, l)=y \mid G^*_n(\epsilon_{\mathbf{L}}, \delta)=a, L_i=l) \\
    =& \mathbb{P}_n^F(f_Y(\epsilon_{Y_i}, a, l)=y \mid G'_n(\epsilon_{\mathbf{L}_{-i}}, \delta)=a, L_i=l) \\
    =& \mathbb{P}_n^F(f_Y(\epsilon_{Y_i}, a, l)=y \mid L_i=l)\\
    =& \mathbb{P}_n^F(Y^{a}=y \mid L_i=l)\\
    =& \mathbb{P}_n^F(Y^{a}=y \mid A_i, =a, L_i=l)\\
    =& \mathbb{P}_n^F(Y=y \mid A_i, =a, L_i=l) = Q_Y(y \mid a, l).\\
    \end{align*}
The first equality follows by definition of the structural equation model, and by conditions $A1$ (conditional noninterference) and $A2$ (structural invariance). The second equality follows by definition of the CR $G_n$, and by defining a function $G_n^*$ that is a composition of $G_n$ and $f_{\mathbf{L}_n}$ . The third follows by defining a function $G_n'$ that is equivalent to $G^*_n$ with $L_i$ fixed to $l$. The fourth equality follows by condition $A1$ and the exogeneity of the randomizer term $\delta$. The fifth equality follows by definition of the structural equation model. The sixth follows by condition $B1$ (no individual-level confounding of outcomes). The final equalities follow again by the structural equation model and then by Lemma \ref{lemma: ciid}. 

The proposition in \eqref{eq: compIDkernel} follows immediately by applying \eqref{eq: indIDkernel} to \eqref{eq: compidPO} and noting \eqref{eq: compidfact} in Lemma \ref{cor: compid}.

\end{proof}

\begin{theorem*}[Re-statement of Theorem \ref{theorem: CDTRID} of Section \ref{sec: ID}] 
Consider a law $\mathbb{P}_n\in\mathcal{M}^{AB}_n$  
and a regime $G_n \in \Pi_n(\kappa_n)$. 
Then   $$\mathbb{E}[ h(\mathbb{O}_n^{G_n+})] =  \sum\limits_{\mathbb{o}_{n}}h(\mathbb{o}_n)  \mathbb{f}^{G_n}(\mathbb{o}_{n}) \ 
 \text{and} \ \mathbb{E}[ h'(O_i^{G_n+})] =  \sum\limits_{o}h'(o)f_{i}^{G_n}(o),$$ 
whenever the right hand sides of the equations are well-defined. 
\end{theorem*}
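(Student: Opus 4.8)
The plan is to reduce both identities to the conditional-iid structure of $\mathcal{M}^{AB}_n$ established in Lemma \ref{lemma: ciid} and, crucially, to the identification kernel of Lemma \ref{eq: IDkernel}, which equates the conditional counterfactual outcome law with the observational law $Q_Y$. Granting those lemmas, each statement becomes a matter of expanding an expectation as a sum over its support weighted by a counterfactual probability, factorizing that probability by the chain rule, and substituting the three recognizable factors.

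For the individual-level identity, first I would write $\mathbb{E}[h'(O_i^{G_n+})] = \sum_{o} h'(o)\, \mathbb{P}_n^F(O_i^{G_n+}=o)$, where $o=(y,a,l)$ and where I use that $L_i^{G_n}=L_i$ because $G_n$ does not intervene on covariates. Factorizing, $\mathbb{P}_n^F(O_i^{G_n+}=o) = \mathbb{P}_n^F(Y_i^{G_n}=y \mid A_i^{G_n+}=a, L_i=l)\, \mathbb{P}_n^F(A_i^{G_n+}=a \mid L_i=l)\, \mathbb{P}_n^F(L_i=l)$. I would then substitute $Q_Y(y\mid a,l)$ for the first factor via \eqref{eq: indIDkernel} of Lemma \ref{eq: IDkernel}, recognize the second factor as $q^*_i(a\mid l)$ by definition, and replace the third by $Q_L(l)$ via \eqref{eq: ID1} of Lemma \ref{lemma: ciid}. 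The product is exactly $f_i^{G_n}(o)$ of \eqref{eq: patlevgform}, which completes this half.

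For the compositional identity the argument is structurally identical but carried out at the level of frequency tables. I would write $\mathbb{E}[h(\mathbb{O}_n^{G_n+})] = \sum_{\mathbb{o}_n} h(\mathbb{o}_n)\, \mathbb{P}_n^F(\mathbb{O}_n^{G_n+}=\mathbb{o}_n)$ and factorize the counterfactual composition probability as $\mathbb{P}_n^F(\mathbb{O}_n^{G_n+}=\mathbb{o}_n \mid \mathbb{B}_n^{G_n+}=\mathbb{b}_n)\, \mathbb{P}_n^F(\mathbb{B}_n^{G_n+}=\mathbb{b}_n \mid \mathbb{L}_n=\mathbb{l}_n)\, \mathbb{P}_n^F(\mathbb{L}_n=\mathbb{l}_n)$, where $\mathbb{b}_n$ and $\mathbb{l}_n$ are the margins obtained by collapsing $\mathbb{o}_n$ over $y$ and then over $a$. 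Substituting \eqref{eq: compIDkernel} of Lemma \ref{eq: IDkernel} for the first factor, the definition of $\mathbb{q}^*_n$ for the second, and $\mathbb{Q}_{L,n}$ (Corollary \ref{cor: compid}) for the third yields $\mathbb{f}^{G_n}(\mathbb{o}_n)$ of \eqref{eq: compgform}.

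I expect the main obstacle to be bookkeeping rather than genuine difficulty, since the substantive content is already packaged in the lemmas. Two points require care. First, I must justify that the chain-rule factorization is internally consistent: conditioning jointly on $\mathbb{B}_n^{G_n+}=\mathbb{b}_n$ and $\mathbb{L}_n=\mathbb{l}_n$ is legitimate because these events are deterministic functions of $\mathbb{o}_n$ through its margins, and $\mathbb{L}_n^{G_n+}=\mathbb{L}_n$ since $G_n$ acts only on treatment. Second, the \emph{whenever well-defined} caveat: the factorizations presuppose that each conditioning event carries positive probability, which is precisely what the weak positivity condition $C0$ supplies, so every conditional density appearing above is well-defined exactly where the right-hand sides are.
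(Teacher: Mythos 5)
Your proposal is correct and follows essentially the same route as the paper's own proof: the same chain-rule factorization of $\mathbb{P}_n^F(O_i^{G_n+}=o)$ and of $\mathbb{P}_n^F(\mathbb{O}_n^{G_n+}=\mathbb{o}_n)$ (using that $\mathbb{B}_n^{G_n+}$ and $\mathbb{L}_n$ are coarsenings of $\mathbb{O}_n^{G_n+}$), with the identical substitutions from Lemma \ref{eq: IDkernel} for the outcome kernels and Lemma \ref{lemma: ciid}/Corollary \ref{cor: compid} for the covariate laws. Your two points of care---that the conditioning events are margins of $\mathbb{o}_n$ with $\mathbb{L}_n^{G_n+}=\mathbb{L}_n$, and that well-definedness is exactly what condition $C0$ supplies---match the paper's treatment, so nothing further is needed.
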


\begin{proof}

We first show that $\mathbb{E}_{\mathbb{P}^F_n}[ h'(O_i^{G_n+})] =  \sum\limits_{o}h'(o)f_{i}^{G_n}(o)$. 

\begin{align*}
    & \mathbb{P}_n^F(Y^{G_n}_i=y, A^{G_n+}_i=a, L_i=l) \\
   =& \mathbb{P}_n^F(Y^{G_n}_i=y \mid A^{G_n+}_i=a, L_i=l) \mathbb{P}_n^F(A^{G_n+}_i=a \mid L_i=l) \mathbb{P}_n^F(L_i=l) \\
   =& \mathbb{P}_n(Y_i=y \mid A_i=a, L_i=l) \mathbb{P}_n^F(A^{G_n+}_i=a \mid L_i=l) \mathbb{P}_n(L_i=l) \\
   =& Q_y(y \mid a, l) q^*_i(a \mid l) Q_L(l) = f_{i}^{G_n}(o).
\end{align*}

The first equality follows by laws of probability. The second follows by Lemma \ref{eq: IDkernel}. The third follows by Lemma \ref{lemma: ciid} and definitions, and the final follows by definition. Then the result follows immediately by definition of the expectation. 

Now we prove  $\mathbb{E}_{\mathbb{P}^F_n}[ h(\mathbb{O}_n^{G_n+})] =  \sum\limits_{\mathbb{o}_{n}}h(\mathbb{o}_n)  \mathbb{f}^{G_n}(\mathbb{o}_{n})$.

\begin{align*}
    \mathbb{P}_n^F(\mathbb{O}_n^{G_n+}=\mathbb{o}_n) &  =  \mathbb{P}_n^F(\mathbb{O}_n^{G_n+}=\mathbb{o}_n, \mathbb{B}_n^{G_n+}=\mathbb{b}_n,  \mathbb{L}_n=\mathbb{l}_n) \\
    & = \mathbb{P}_n^F(\mathbb{O}_n^{G_n+}=\mathbb{o}_n \mid \mathbb{B}_n^{G_n+}=\mathbb{b}_n) 
    \mathbb{P}_n^F(\mathbb{B}_n^{G_n+}=\mathbb{b}_n \mid \mathbb{L}_n=\mathbb{l}_n)
     \mathbb{P}_n^F(\mathbb{L}_n=\mathbb{l}_n) \\
    & = \mathbb{P}_n(\mathbb{O}_n=\mathbb{o}_n \mid \mathbb{B}_n=\mathbb{b}_n) 
    \mathbb{P}_n^F(\mathbb{B}_n^{G_n+}=\mathbb{b}_n \mid \mathbb{L}_n=\mathbb{l}_n)
     \mathbb{P}_n(\mathbb{L}_n=\mathbb{l}_n) = \mathbb{f}^{G_n}(\mathbb{o}_{n})
\end{align*}

The first equality follows because $\mathbb{B}_n^{G_n+}$ and $\mathbb{L}_n$ are coarsenings of $\mathbb{O}_n^{G_n+}$. The second follows by laws of probability. The third follows by Lemma \ref{eq: IDkernel}, and the final follows by definition. Then the result follows immediately. 
\end{proof}

In Appendix \ref{appsec: gencdtrs}, we prove Proposition \ref{theorem: YbarIDrelax}. Proposition \ref{theorem: YbarIDrelax} is a stronger version of Proposition \ref{theorem: YbarID} in Section \ref{sec: ID} of the main text except with respect to the larger class of stochastic $L_i$-equitable regimes $\Pi_n^*(\kappa_n, L_i)$ defined in Appendix \ref{appsec: gencdtrs} for which the $L_i$-rank-preserving regimes $\Pi_n^{d}(\kappa_n, L_i, \Lambda)$ considered in Proposition \ref{theorem: YbarID} are a special sub-class.

\begin{proposition*} [Re-statement of Proposition \ref{prop: compgformred} of Appendix \ref{appsec: gencdtrs}] 
Consider a law $\mathbb{P}_n^F\in\mathcal{M}^{AB}_n$, a real-valued function $h''$ 
and a regime $G_n \in \Pi_n(\kappa_n)$, with $G_n$ a function of at most $\{\mathbf{V}_n, \delta\}$, i.e., $G_n: \{\mathcal{V}\}^n \times [0,1] \rightarrow \{0, 1\}^n$, and with $V_i\coloneqq c(L_i)$ for all $i$. Then $\mathbb{E}_{\mathbb{P}^F_n}[ h(\mathbb{W}_n^{G_n+})] =  \sum\limits_{\mathbb{w}_{n}}h''(\mathbb{w}_n)  \mathbb{f}^{G_n}_{red}(\mathbb{w}_{n})$
whenever the right hand side of the equation is well-defined, and also that $q^*_{i}(1\mid  l)  = \sum\limits_{\mathbb{v}_n} q^*_{i, \mathbb{v}_n}(1\mid  c(l))\mathbb{Q}_{V, n\mid c(l)}(\mathbb{v}_n).$
\end{proposition*}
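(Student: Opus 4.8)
The plan is to mirror the proof of Theorem~\ref{theorem: CDTRID}, exploiting the fact that the regime $G_n$ depends on the fine covariates $\mathbf{L}_n$ only through their coarsening $\mathbf{V}_n = c(\mathbf{L}_n)$. The essential new ingredient is a \emph{reduced kernel} identity analogous to Lemma~\ref{eq: IDkernel}: I would first show that $\mathbb{P}_n^F(Y_i^{G_n} = y \mid A_i^{G_n+} = a, V_i = v) = Q^*_Y(y \mid a, v)$. To obtain this, decompose the left-hand side by conditioning on $L_i = l$ for $l$ with $c(l) = v$, so that it becomes $\sum_{l : c(l) = v} \mathbb{P}_n^F(Y_i^{G_n} = y \mid A_i^{G_n+} = a, L_i = l)\, \mathbb{P}_n^F(L_i = l \mid A_i^{G_n+} = a, V_i = v)$. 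The first factor equals $Q_Y(y \mid a, l)$ by Lemma~\ref{eq: IDkernel} (conditioning on $V_i = v$ is redundant once $L_i = l$ is fixed), while the second factor equals $Q_{L \mid v}(l)$ once I establish that $L_i \independent A_i^{G_n+} \mid V_i$. Summing then yields $Q^*_Y(y \mid a, v)$ by its definition.

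The conditional independence $L_i \independent A_i^{G_n+} \mid V_i$ is the conceptual crux. It holds because $A_i^{G_n+}$ is a deterministic function of $(\mathbf{V}_n, \delta)$: given $V_i$, the only remaining randomness driving $A_i^{G_n+}$ comes from $\mathbf{V}_{-i}$ and the exogenous $\delta$, and under Condition~$A1$ the covariates $L_i$ are mutually independent, so $L_i \independent (\mathbf{V}_{-i}, \delta) \mid V_i$. The same logic shows $L_i \independent A_i^{G_n+} \mid (V_i, \mathbb{V}_n)$ and $\mathbb{V}_n \independent L_i \mid V_i$, which I would record for use in the intervention-density claim.

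Next I would assemble the compositional statement. Following the Poisson--multinomial representation of Corollary~\ref{cor: compid}, the marginal composition satisfies $\mathbb{P}_n^F(\mathbb{V}_n = \mathbb{v}_n) = \mathbb{Q}_{V,n}(\mathbb{v}_n)$ because the $V_i$ are i.i.d.\ with density $Q_V$ under $\mathcal{M}^{AB}_n$; the conditional term $\mathbb{P}_n^F(\mathbb{U}_n^{G_n+} = \mathbb{u}_n \mid \mathbb{V}_n = \mathbb{v}_n)$ is $\mathbb{q}^{**}_n$ by definition; and the remaining factor $\mathbb{P}_n^F(\mathbb{W}_n^{G_n+} = \mathbb{w}_n \mid \mathbb{U}_n^{G_n+} = \mathbb{u}_n) = \mathbb{Q}^*_{Y,n}(\mathbb{w}_n \mid \mathbb{u}_n)$ follows from the reduced kernel together with the mutual conditional independence of $\{Y_i^{G_n} \mid A_i^{G_n+} = a, V_i = v\}$---itself an analogue of property~\eqref{eq: condid2count} of Corollary~\ref{cor: ciidcount}, obtained by the same argument applied to the pairs $(\epsilon_{Y_i}, L_i)$ after coarsening $L_i$ to $V_i$. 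Chaining the three factors gives $\mathbb{P}_n^F(\mathbb{W}_n^{G_n+} = \mathbb{w}_n) = \mathbb{f}^{G_n}_{red}(\mathbb{w}_n)$, and the expectation formula follows by definition of the expectation exactly as in Theorem~\ref{theorem: CDTRID}.

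Finally, the intervention-density identity follows from the law of total probability: writing $q^*_i(1 \mid l) = \sum_{\mathbb{v}_n} \mathbb{P}_n(A_i^{G_n+} = 1 \mid L_i = l, \mathbb{V}_n = \mathbb{v}_n)\,\mathbb{P}_n(\mathbb{V}_n = \mathbb{v}_n \mid L_i = l)$, I would replace the first factor by $q^*_{i, \mathbb{v}_n}(1 \mid c(l))$ using $L_i \independent A_i^{G_n+} \mid (V_i, \mathbb{V}_n)$ and the second by $\mathbb{Q}_{V, n \mid c(l)}(\mathbb{v}_n)$ using $\mathbb{V}_n \independent L_i \mid V_i$. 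I expect the main obstacle to be the careful justification of the conditional-independence statements in the second paragraph; once those are in hand, the rest is the routine Poisson--multinomial bookkeeping already carried out for the $L$-level results, which is why the isomorphism with Theorem~\ref{theorem: CDTRID} can be invoked rather than repeated.
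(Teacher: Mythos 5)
Your proposal is correct and matches the paper's proof essentially step for step: the same reduced-kernel identity $\mathbb{P}_n^F(Y_i^{G_n}=y \mid A_i^{G_n+}=a, V_i=v)=Q_Y^*(y\mid a,v)$ obtained by summing $Q_Y(y\mid a,l)Q_{L\mid v}(l)$ over $\{l: c(l)=v\}$ via Lemma~\ref{eq: IDkernel}, the same three-factor decomposition $\mathbb{Q}^*_{Y,n}\cdot\mathbb{q}^{**}_n\cdot\mathbb{Q}_{V,n}$ resting on mutual conditional independence of the outcomes given $\mathbf{U}_n^{G_n+}$, and the same law-of-total-probability computation for $q_i^*(1\mid l)$. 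The only difference is presentational: where the paper verifies the key conditional independences ($L_i \independent A_i^{G_n+} \mid V_i$ and $Y_i^{G_n} \independent \mathbf{V}_{-i} \mid A_i^{G_n+}, V_i$) by d-separation on the auxiliary DAGs $\mathcal{G}(G_n)$ and $\mathcal{G}(\mathbf{a}_n)$ of Figure~\ref{fig: proofDAG1} together with an explicit chain of equalities conditioning on $\delta$, you derive them directly from the facts that $A_i^{G_n+}$ is a measurable function of $(\mathbf{V}_n,\delta)$ and that the error pairs are mutually independent under Condition~\textit{A1} --- an equivalent justification of the same steps.
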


\begin{proof}

Our proof for the reduced compositional g-formula result hinges on the proposition that $\mathbb{P}_n^F(\mathbb{W}_n^{G_n+}=\mathbb{o}_n \mid \mathbb{B}_n^{G_n+}=\mathbb{b}_n) = \mathbb{Q}^*_{Y, n}(\mathbb{w}_n \mid \mathbb{u}_n)$. We first show that, under a  $G_n: \{\mathcal{V}\}^n \times [0,1] \rightarrow \{0, 1\}^n$, we have that $\mathbb{P}_n^F(\mathbb{W}_n^{G_n+}=\mathbb{w}_n \mid \mathbb{U}_n^{G_n+}=\mathbb{u}_n)$ decomposes isomorphically to $\mathbb{P}_n^F(\mathbb{O}_n^{G_n+}=\mathbb{o}_n \mid \mathbb{B}_n^{G_n+}=\mathbb{b}_n)$, as in expression \eqref{eq: compidPO} of Corollary \ref{cor: compid}. In the following, we use graphical arguments for convenience. In particular we posit a particular DAG model for the joint distribution of a subset of the variables in $\mathbf{O}^{G_n+}_n$, where graphical rules of d-separation can be used to infer conditional independencies at all distributions in the DAG model. When $\mathcal{M}_n^{AB}$ (appropriately marginalized) is a submodel of the DAG model, then a d-separation in the DAG can be used to infer a conditional independence in a law $\mathbb{P}_n^F\in\mathcal{M}_n^{AB}$.

We present such a DAG $\mathcal{G}(G_n)$ in Figure \ref{fig: proofDAG1}a, representing a model for the joint distribution of $\{O_i^{G+}, \mathbf{L}_{-i}\}$, where we use bi-directed arrows between nodes to denote lack of independence between error terms for those nodes. To show that this DAG model contains all distributions for $\{O_i^{G+}, \mathbf{L}_{-i}\}$ in $\mathcal{M}_n^{AB}$, consider a supermodel of $\mathcal{M}_n^{AB}$ that is defined only by assumption $A1$ (Conditional noninterference), which we denote by $\mathcal{M}_n^{A1}$. By $A1$, only $\{A_i^{G_n+}, L_i\}$ can have arrows into $Y_i^{G_n}$ . By definition of $A_i^{G_n+}$, only $\{V_i, \mathbf{V}_{-i}\}$ may have directed arrows into $A_i^{G_n+}$ in the DAG model. By definition of $V_i\coloneqq c(L_i)$, only $L_i$ may have an arrow into $V_i$ and likewise only $\mathbf{L}_{-i}$ may have an arrow into $\mathbf{V}_{-i}$. The assumed error independencies between $\{\epsilon_{L_1}, \dots, \epsilon_{L_n}\}$ preclude any bidirected arrows between  $L_i$ and $\mathbf{L}_{-i}\}$. The assumed error independencies between $\epsilon_{Y_i}$ and $\{\boldsymbol{\epsilon}_{n}\setminus \epsilon_{i}\}$ preclude any bidirected arrows between $Y_i$ and $\mathbf{L}_{-i}$. A second DAG $\mathcal{G}(\mathbf{a}_n)$ in Figure \ref{fig: proofDAG1}b represents a model for the joint distribution of $\{\mathbf{O}_n^{\mathbf{a}_n}\setminus \mathbf{a}_n\}$, which is a supermodel for $\mathcal{M}_n^{A1}$ by identical arguments. 

Now we proceed with the proof. To show that $\mathbb{P}_n^F(\mathbb{W}_n^{G_n+}=\mathbb{o}_n \mid \mathbb{B}_n^{G_n+}=\mathbb{b}_n)$ decomposes analogous to \eqref{eq: compidPO} of Corollary \ref{cor: compid} , we argue that $\{Y^{G_n}_i \in \mathbf{Y}^{G_n}_n \mid V_i = v, A^{G_n+}_i=a\}$ are mutually independent. We establish the latter by showing that $\mathbb{P}^F_n(\mathbf{Y}^{G_n}_n=\mathbf{y}_n \mid \mathbf{U}^{G_n+}_n=\mathbf{u}_n) = \prod\limits_{i=1}^n\mathbb{P}^F_n(Y^{G_n}_i=y_i \mid \mathbf{U}^{G_n+}_n=\mathbf{u}_n)$.

\newpage
\begin{align*}
   & \mathbb{P}^F_n(\mathbf{Y}^{G_n}_n=\mathbf{y}_n \mid \mathbf{U}^{G_n+}_n=\mathbf{u}_n) \\ 
=& \prod\limits_{i=1}^n\mathbb{P}^F_n(Y_i^{G_n}=y_i \mid  \mathbf{A}^{G_n+}_{n}=\mathbf{a}_{n},  \mathbf{V}_{n}=\mathbf{v}_{n}, \overline{Y}^{G_n}_{i-1}=\overline{y}_{i-1}) \\
=& \prod\limits_{i=1}^n\mathbb{P}^F_n(Y_i^{G_n}=y_i \mid \mathbf{A}^{G_n+}_{n}=\mathbf{a}_{n},  \mathbf{V}_{n}=\mathbf{v}_{n}, \overline{Y}^{G_n}_{i-1}=\overline{y}_{i-1}, \delta) \\
=& \prod\limits_{i=1}^n\mathbb{P}^F_n(Y_i^{a_i}=y_i \mid \mathbf{A}^{G_n+}_{n}=\mathbf{a}_{n},  \mathbf{V}_{n}=\mathbf{v}_{n}, \overline{Y}^{\overline{a}_{i-1}}_{i-1}=\overline{y}_{i-1}, \delta) \\
=& \prod\limits_{i=1}^n\mathbb{P}^F_n(Y_i^{a_i}=y_i \mid  \mathbf{V}_{n}=\mathbf{v}_{n}, \overline{Y}^{\overline{a}_{i-1}}_{i-1}=\overline{y}_{i-1}) \\
=& \prod\limits_{i=1}^n\mathbb{P}^F_n(Y_i^{a_i}=y_i \mid  \mathbf{V}_{n}=\mathbf{v}_{n}) \\
=& \prod\limits_{i=1}^n\mathbb{P}^F_n(Y_i^{a_i}=y_i \mid A^{G_n+}_i=a_i, \mathbf{V}_{n}=\mathbf{v}_{n}) \\
=& \prod\limits_{i=1}^n\mathbb{P}^F_n(Y_i^{G_n}=y_i \mid \mathbf{U}^{G_n+}_n=\mathbf{u}_n) .
\end{align*}

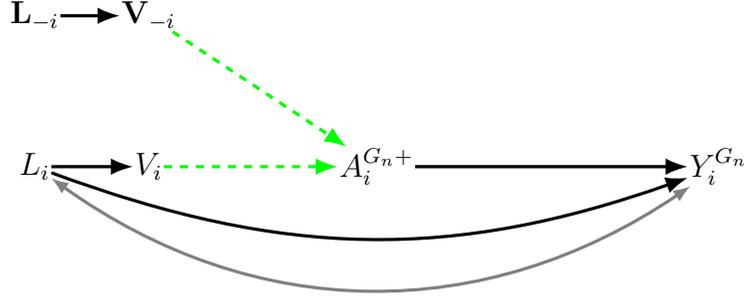
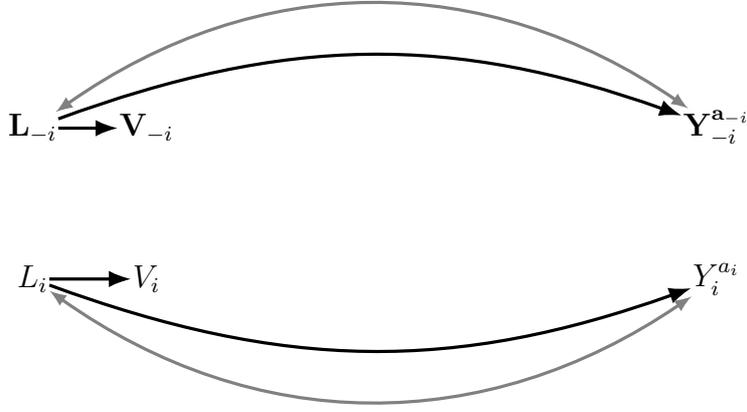
\begin{figure}[h] 
\centering
\subfloat[]{
\begin{tikzpicture}[inner sep=0.3, outer sep=0.3, scale=.5]
        \node (A2)    at  (4   *3, 0   ) {$A^{G_n+}_{i}$};
        
        \node (L1)    at  (1   *3, 4  )  {$\mathbf{L}_{-i}$};
        \node (L2)    at  (1   *3, 0   ) {${L}_{i}$};
        \node (V1)    at  (2   *3, 4  )  {$\mathbf{V}_{-i}$};
        \node (V2)    at  (2   *3, 0   ) {${V}_{i}$};

        \node (Y2)    at  (7   *3, 0   ) {$Y^{G_n}_{i}$};
        
\begin{scope}

 
        \path (L1)          edge[very thick]                   (V1);
        \path (V1)          edge[very thick, green, dashed]    (A2);
        
        \path (L2)          edge[very thick]                   (V2);
        \path (V2)          edge[very thick, green, dashed]                   (A2);
        \path (A2)          edge[very thick]                   (Y2);
        \path (L2)          edge[bend right=20,  very thick]                   (Y2);

        \path[{latex}-{latex}] (L2)    edge[bend right=35,  very thick, gray]                  (Y2);

 \end{scope}       
        
\end{tikzpicture}
}\\ \par\bigskip 
\subfloat[]{
\begin{tikzpicture}[inner sep=0.3, outer sep=0.3, scale=.5]
        
        \node (L1)    at  (1   *3, 4  )  {$\mathbf{L}_{-i}$};
        \node (L2)    at  (1   *3, 0   ) {${L}_{i}$};
        \node (V1)    at  (2   *3, 4  )  {$\mathbf{V}_{-i}$};
        \node (V2)    at  (2   *3, 0   ) {${V}_{i}$};

        \node (Y1)    at  (7   *3, 4   ) {$\mathbf{Y}^{\mathbf{a}_{-i}}_{-i}$};
        \node (Y2)    at  (7   *3, 0   ) {$Y^{a_i}_{i}$};
        
\begin{scope}

 
        \path (L1)          edge[very thick]                   (V1);
        
        \path (L2)          edge[very thick]                   (V2);
        \path (L2)          edge[bend right=20,  very thick]                   (Y2);
        \path (L1)          edge[bend left=20,  very thick]                   (Y1);
        \path[{latex}-{latex}] (L2)    edge[bend right=35,  very thick, gray]                  (Y2);
        \path[{latex}-{latex}] (L1)    edge[bend left=35,  very thick, gray]                  (Y1);

 \end{scope}       
        
\end{tikzpicture}
}
\caption{(a) DAG $\mathcal{G}({G_n})$ representing a model for the joint distribution of $\{O_i^{G_n+}, \mathbf{L}_{-i}\}$ under condition $A1$. (b) DAG $\mathcal{G}(\mathbf{a}_n)$ representing a model for the joint distribution of $\{\mathbf{O}_n^{\mathbf{a}_n}\setminus \mathbf{a}_n\}$ under condition $A1$.}
\label{fig: proofDAG1}
\end{figure}

The first equality follows by laws of probability. The second equality follows by the strict exogeneity of the randomizer, $\delta$. The third follows by definition of the structural equation model for $Y_i$. The fourth follows by noting that $\mathbf{A}^{G_n+}_{n}$ is a constant conditional on $\mathbf{V}_{n}=\mathbf{v}_{n}$ and $\delta$, and then using the strict exogeneity of the randomizer, to remove $\delta$ from the conditioning set. The fifth equality follows from DAG  $\mathcal{G}(\mathbf{a}_n)$, where we can read that $Y_i^{a_i} \independent \mathbf{Y}_{-i}^{\mathbf{a}_{-i}} \mid \mathbf{V}_n$. The sixth equality follows by reversing the steps in the first through third equalities. The final equality follows from DAG  $\mathcal{G}(G_n)$, where we can read that  $Y_i^{G_n+} \independent \mathbf{V}_{-i} \mid A_i^{G_n+}, V_i$, and again reversing the steps in the first throught third equalities.

Now we have that $\mathbb{P}_n^F(\mathbb{W}_n^{G_n+}=\mathbb{o}_n \mid \mathbb{B}_n^{G_n+}=\mathbb{b}_n) = \mathbb{Q}^*_{Y, n}(\mathbb{w}_n \mid \mathbb{u}_n)$ if $\mathbb{P}_n(Y_i^{G_n+}=y \mid U^{G_n+}_i=u) = Q^*_{Y}(y\mid a, v)$, and so we establish the latter.

\begin{align*}
    \mathbb{P}^F_n(Y_i^{G_n}=y \mid U^{G_n+}_i=u) 
        =  & \sum\limits_{l:c(l)=v}\mathbb{P}^F_n(Y_i^{G_n}=y \mid A^{G_n+}_i=a, L_i=l)\mathbb{P}^F_n(L_i=l \mid A^{G_n+}_i=a, V_i=v) \\
        =  & \sum\limits_{l:c(l)=v}\mathbb{P}^F_n(Y_i^{G_n}=y \mid A^{G_n+}_i=a, L_i=l)Q_{L\mid v}(l) \\
        = & \sum\limits_{l:c(l)=v}Q_{Y}(y \mid a, l)Q_{L\mid v}(l)
\end{align*}\

The first follows by laws of probability, and noting that $V_i$ is degenerate conditional on $L_i$. The second follows from DAG  $\mathcal{G}(G_n)$, where we can read that  $L_i\independent A_i^{G_n+} \mid  V_i$. The third follows by Lemma \ref{eq: IDkernel}. 

Following arguments isomorphic to the proof of Theorem \ref{theorem: CDTRID}, and using the immediately preceding results, we have that:

\begin{align*}
    \mathbb{P}_n^F(\mathbb{W}_n^{G_n+}=\mathbb{w}_n) &  =  \mathbb{P}_n^F(\mathbb{W}_n^{G_n+}=\mathbb{w}_n, \mathbb{U}_n^{G_n+}=\mathbb{u}_n,  \mathbb{V}_n=\mathbb{v}_n) \\
    & = \mathbb{P}_n^F(\mathbb{W}_n^{G_n+}=\mathbb{w}_n \mid \mathbb{U}_n^{G_n+}=\mathbb{u}_n) 
    \mathbb{P}_n^F(\mathbb{U}_n^{G_n+}=\mathbb{u}_n \mid \mathbb{V}_n=\mathbb{v}_n)
     \mathbb{P}_n^F(\mathbb{V}_n=\mathbb{v}_n) \\
    & = \mathbb{P}_n(\mathbb{W}_n=\mathbb{w}_n \mid \mathbb{u}_n=\mathbb{u}_n) 
    \mathbb{P}_n^F(\mathbb{U}_n^{G_n+}=\mathbb{u}_n \mid \mathbb{V}_n=\mathbb{v}_n)
     \mathbb{P}_n(\mathbb{V}_n=\mathbb{v}_n) = \mathbb{f}_{red}^{G_n}(\mathbb{w}_{n}).
\end{align*}

This concludes the proof for the reduced compositional g-formula identity of Proposition \ref{prop: compgformred}. Now we show that $q^*_{i}(1\mid  l)  = \sum\limits_{\mathbb{v}_n} q^*_{i, \mathbb{v}_n}(1\mid  c(l))\mathbb{Q}_{V, n\mid c(l)}(\mathbb{v}_n)$. The result follows immediately upon noting the following identities:

\begin{align*}
    \mathbb{P}^F_n(A^{G_n+}_i=a \mid L_i=l ) 
= &  \mathbb{P}^F_n(A^{G_n+}_i=a \mid L_i=l, V_i=c(l)) \\
= &  \mathbb{P}^F_n(A^{G_n+}_i=a \mid  V_i=c(l)) \\
= &\sum\limits_{\mathbb{v}_n} q^*_{i, \mathbb{v}_n}(1\mid  c(l))\mathbb{Q}_{V, n\mid c(l)}(\mathbb{v}_n). 
\end{align*}

The first follows by noting that $V_i$ is degenerate conditional on $L_i$. The second follows from DAG  $\mathcal{G}(G_n)$, where we can read that  $L_i\independent A_i^{G_n+} \mid  V_i$. The third follows by laws of probability. This concludes the proof.

\end{proof}

\subsubsection{Large-cluster parameters}

In this appendix, we provide a proof for Theorem \ref{theorem: YbarIDlarge} of Section \ref{sec: largecluster} that provides an identification functional for the large population parameter $\mathbb{E}_{\mathbb{P}^F_0}[ \overline{Y}_0^{\mathbf{G}_0}]$:

\begin{theorem*}[Re-statement of Theorem \ref{theorem: YbarIDlarge} of Section \ref{sec: largecluster}] 
    Consider an asymptotic law $\mathbb{P}_0$ and an asymptotic regime $\mathbf{G}_0$ following conditions $D$ and $E$, respectively. Then, $\mathbb{E}[\overline{Y}^{\mathbf{G}_0}_0] = \sum\limits_{o\in\mathcal{O}_i}yf^{\mathbf{G}_0}(o)$ whenever the right-hand side is well-defined. 
\end{theorem*}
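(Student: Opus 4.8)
The plan is to obtain the asymptotic identity by passing to the limit in the finite-cluster identity of Proposition~\ref{theorem: YbarID}. First I would check that this proposition applies at every cluster size: condition $D1$ guarantees $\mathbb{P}_n\in\mathcal{M}^{AB}_n$ and condition $E1$ guarantees $G_n\in\Pi^d_n(\kappa_n,L_i,\Lambda)$ for each $n$, so for every $n$,
\[
\mathbb{E}_{\mathbb{P}_n}[\overline{Y}^{G_n}_n]
 = \sum_{o\in\mathcal{O}_i} y\,f^{G_n}_i(o)
 = \sum_{o\in\mathcal{O}_i} y\,Q_Y(y\mid a,l)\,q^*_i(a\mid l)\,Q_L(l).
\]
The structural fact I would lean on is that, under condition $D$, the functions $Q_L$ and $Q_Y$ are identical across all $n$ (as recorded just after the statement of condition $D$); hence the sole source of $n$-dependence in the summand is the intervention density $q^*_i(a\mid l)$.

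Next I would collapse the outcome sum before taking any limit. Because $q^*_i(a\mid l)$ does not depend on $y$, it factors out of the sum over $y$, giving
\[
\sum_{o\in\mathcal{O}_i} y\,f^{G_n}_i(o)
 = \sum_{a,l} q^*_i(a\mid l)\,Q_L(l)\,m(a,l),
 \qquad m(a,l):=\sum_{y} y\,Q_Y(y\mid a,l),
\]
where $m(a,l)=\mathbb{E}[Y_i\mid A_i=a,L_i=l]$ is finite for every relevant $(a,l)$ precisely when the right-hand side is well-defined (the role of the positivity condition $C0^*$). Since $A_i$ is binary and $L_i$ is discrete, this is now a \emph{finite} sum over $(a,l)$. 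I would then take $n\to\infty$: by definition $\mathbb{E}[\overline{Y}^{\mathbf{G}_0}_0]:=\lim_n\mathbb{E}_{\mathbb{P}_n}[\overline{Y}^{G_n}_n]$, and because a finite sum of convergent terms converges to the sum of the limits, using $q^*_i(a\mid l)\to q^*_0(a\mid l)$ gives
\[
\lim_{n\to\infty}\sum_{a,l} q^*_i(a\mid l)\,Q_L(l)\,m(a,l)
 = \sum_{a,l} q^*_0(a\mid l)\,Q_L(l)\,m(a,l)
 = \sum_{o\in\mathcal{O}_i} y\,f^{\mathbf{G}_0}(o),
\]
which simultaneously establishes that the limit defining the left-hand side exists and equals the claimed functional.

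The main obstacle I anticipate is justifying the convergence $q^*_i(a\mid l)\to q^*_0(a\mid l)$, which carries the genuine content of the theorem rather than being a notational convenience. This limit exists, is common to all individuals $i$, and is independent of the particular regime in $\mathbf{G}_0$ only because conditions $D$ and $E$ force the empirical composition $\mathbb{L}_n/n$ to concentrate around $Q_L$ as $n$ grows, so that an arbitrary individual's assigned-treatment probability under the $L$-rank-preserving rule stabilizes; Proposition~\ref{lemma: largeint} supplies its explicit closed form and thereby discharges this step. Once that convergence and the finiteness of the $m(a,l)$ (guaranteed by well-definedness of the right-hand side) are in hand, the remaining interchange of limit and sum is trivial, since it is over the finite index set of $(a,l)$ pairs and no dominated-convergence argument for the outcome variable is needed.
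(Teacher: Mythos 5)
Your proof is correct and takes essentially the same route as the paper's: the paper likewise starts from the finite-cluster identity (via Proposition~\ref{theorem: YbarIDrelax}, the $L$-equitable generalization of Proposition~\ref{theorem: YbarID}, which coincides with it for $G_n\in\Pi^d_n(\kappa_n,L_i,\Lambda)$), uses $D2$ to fix $Q_Y$ and $Q_L$ across $n$, passes the limit through the sum, and reduces everything to the convergence $q^*_i(a\mid l)\to q^*_0(a\mid l)$, which it establishes in a dedicated lemma that simultaneously proves Proposition~\ref{lemma: largeint} via concentration of $\mathbb{L}_n/n$ around $Q_L$. Your additional step of first collapsing the $y$-sum into $m(a,l)$ so that the limit-sum interchange occurs over the finite $(a,l)$ index set is slightly more careful bookkeeping than the paper's display, but not a different argument.
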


\begin{proposition*}[Re-statement of Proposition \ref{lemma: largeint} of Section \ref{sec: largecluster}]
Consider an asymptotic law $\mathbb{P}_0$ and an asymptotic regime $\mathbf{G}_0$ following conditions $D$ and $E$, respectively. Then, 
\begin{align}
         q^*_{0}(1 \mid l)  
             & =  \begin{cases}
                     \frac{\kappa^* -  \mathbb{P}_0(\Lambda(L_i) > \omega_{0})}{\mathbb{P}_0(\Lambda(L_i) = \omega_{0})}   & : \Lambda(l)= \omega_{0}, \\
                     I\Big(\Lambda(l) > \omega_{0}\Big) & : \text{otherwise},
                \end{cases}  
\end{align}
where $\omega_{0} \coloneqq \textup{inf}\bigg\{ c: \mathbb{P}_0(\Lambda(L_i)>c) \leq \kappa^* \bigg\}.$
\end{proposition*}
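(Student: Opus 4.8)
The plan is to pass to the limit in the finite-cluster intervention density already derived in Proposition \ref{prop: intdens_Lrankpres}. Because $\mathbf{G}_0$ follows condition $E1$, each $G_n$ lies in $\Pi^d_n(\kappa_n, L_i, \Lambda)$ with a single fixed coarsened rank function, so $\Lambda_{\mathbb{l}_n} \equiv \Lambda$ for every composition. Specializing Proposition \ref{prop: intdens_Lrankpres} accordingly, the composition-conditional density is
\begin{align*}
 q^*_{i, \mathbb{l}_n}(1 \mid l) = \begin{cases} \dfrac{\kappa_n - S^-_{\Lambda}(\omega_{\mathbb{l}_n})}{S_{\Lambda}(\omega_{\mathbb{l}_n}) - S^-_{\Lambda}(\omega_{\mathbb{l}_n})} & : \Lambda(l) = \omega_{\mathbb{l}_n}, \\ I\big(\Lambda(l) > \omega_{\mathbb{l}_n}\big) & : \text{otherwise}, \end{cases}
\end{align*}
which notably does not depend on $i$. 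Averaging over the composition via the identity $q^*_i(1 \mid l) = \sum_{\mathbb{l}_n} q^*_{i, \mathbb{l}_n}(1 \mid l)\, \mathbb{Q}_{L, n\mid l}(\mathbb{l}_n)$ of Appendix \ref{appsubsec: intdens}, it suffices to study the limit of $\mathbb{E}[q^*_{i, \mathbb{L}_n}(1\mid l)\mid L_i = l]$.

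First I would establish concentration of the empirical composition. Under condition $D$, Lemma \ref{lemma: ciid} gives that $L_1, \dots, L_n$ are mutually independent with common marginal $Q_L$, so the weak law of large numbers yields $\mathbb{L}_n(l')/n \to Q_L(l')$ in probability for each of the finitely many $l' \in \mathcal{L}$; conditioning on $L_i = l$ perturbs each frequency by at most $1/n$ and so does not affect the limit. Consequently, for each fixed rank threshold $m$,
\begin{align*}
\frac{S_{\Lambda}(m)}{n} \xrightarrow{\ \mathbb{P}_0\ } \mathbb{P}_0\big(\Lambda(L_i) \geq m\big), \qquad \frac{S^-_{\Lambda}(m)}{n} \xrightarrow{\ \mathbb{P}_0\ } \mathbb{P}_0\big(\Lambda(L_i) > m\big),
\end{align*}
and condition $E2$ gives $\kappa_n/n = \lfloor n\kappa^*\rfloor/n \to \kappa^*$.

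Next I would identify the limiting threshold. Since $\mathcal{L}$ is discrete, $\Lambda(L_i)$ takes values in a discrete set of rank groups, and unwinding the infima shows that $\omega_{\mathbb{l}_n}$ is the rank group with $S_\Lambda(\omega_{\mathbb{l}_n}) \geq \kappa_n > S^-_\Lambda(\omega_{\mathbb{l}_n})$, while $\omega_0$ is the rank group with $\mathbb{P}_0(\Lambda(L_i) \geq \omega_0) > \kappa^* \geq \mathbb{P}_0(\Lambda(L_i) > \omega_0)$. Whenever $\kappa^*$ avoids the finitely many values $\mathbb{P}_0(\Lambda(L_i)\geq m)$, the displayed convergences force $\omega_{\mathbb{L}_n} = \omega_0$ with probability tending to one; on that event the randomization fraction equals $\big(\kappa_n/n - S^-_\Lambda(\omega_0)/n\big)\big/\big(S_\Lambda(\omega_0)/n - S^-_\Lambda(\omega_0)/n\big)$, which converges in probability to $(\kappa^* - \mathbb{P}_0(\Lambda(L_i) > \omega_0))/\mathbb{P}_0(\Lambda(L_i) = \omega_0)$, and the indicator term converges to $I(\Lambda(l) > \omega_0)$ off the threshold. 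As $q^*_{i, \mathbb{L}_n}(1\mid l) \in [0,1]$, bounded convergence upgrades convergence in probability to convergence of the conditional expectation, yielding the claimed expression.

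The hard part will be the knife-edge case where $\kappa^*$ coincides with one of the values $\mathbb{P}_0(\Lambda(L_i)\geq m)$. There $S_\Lambda(\omega_0)/n - \kappa_n/n$ fluctuates around zero at scale $n^{-1/2}$, so $\omega_{\mathbb{L}_n}$ does not settle on a single rank group but oscillates between two adjacent ones. The resolution is that the per-covariate probability $q^*_{i,\mathbb{L}_n}(1\mid l)$ nonetheless converges for each fixed $l$: in the regime where $\omega_{\mathbb{L}_n}$ takes the higher group the randomization fraction tends to $1$ while the indicator for the lower group is $0$, whereas in the complementary regime the fraction tends to $0$ while the higher group is treated with indicator $1$; both outcomes agree with the value prescribed by the asymmetric strict/weak inequalities defining $\omega_0$ (which collapse the fraction to $0$ at the boundary). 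Verifying that this dichotomy exactly reproduces the stated formula is the delicate point; the remaining ingredients — the weak law, bounded convergence, and the algebra of the fraction — are routine.
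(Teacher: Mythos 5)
Your proposal is correct and follows essentially the same route as the paper's proof in Appendix \ref{appsec: proofsID}: specialize Proposition \ref{prop: intdens_Lrankpres} to the fixed rank function $\Lambda$ guaranteed by condition $E1$, use the mutual independence and common marginal of the $L_i$ (Lemma \ref{lemma: ciid}, condition $D$) to concentrate $\mathbb{L}_n/n$ at $Q_L$ (with the $1/n$ perturbation from conditioning on $L_i=l$ negligible), rewrite $\omega_{\mathbb{l}_n}$ and the randomization fraction in the normalized forms $\inf\{c: \sum_{\mathcal{L}}\tfrac{\mathbb{l}_n(l)}{n}I(\Lambda(l)>c)\le \tfrac{\kappa_n}{n}\}$ and $\big(\tfrac{\kappa_n}{n}-\tfrac{S^-_\Lambda}{n}\big)/\big(\tfrac{S_\Lambda}{n}-\tfrac{S^-_\Lambda}{n}\big)$, and pass to the limit using $\kappa_n/n\to\kappa^*$ from $E2$. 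The one point where you go beyond the paper is the knife-edge case where $\kappa^*$ coincides with a jump value of the survival function of $\Lambda(L_i)$: the paper's proof simply interchanges the limit and the infimum along a typical composition sequence without addressing that $\omega_{\mathbb{L}_n}$ may then oscillate between two adjacent rank groups, whereas your resolution (the fraction tends to $1$ in the regime where $\omega_{\mathbb{L}_n}$ takes the higher group and to $0$ in the complementary regime, in both cases agreeing with the boundary value prescribed by the strict/weak inequalities defining $\omega_0$) is correct and supplies a detail the paper elides.
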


We defined $\mathbb{E}_{\mathbb{P}^F_0}[\overline{Y}^{\mathbf{G}_0}_0]$ as the limit of a sequence of expectations of cluster average potential outcomes $\underset{n\to\infty}{\lim}\mathbb{E}_{\mathbb{P}^F_n}[ \overline{Y}^{{G}_n}_n]$ corresponding to clusters indexed by $n$, where this sequence is defined by an asymptotic law $\mathbb{P}^F_0$ and an asymptotic regime $\mathbf{G}_0$. Identification of these limiting parameters follows via restrictions on $\mathbb{P}^F_0$ and $\mathbf{G}_0$, which we restate here for convenience.

\begin{itemize}
    \item [$D1.$] $\mathbb{P}_n \in \mathcal{M}_n^{AB}$, for all $n\in\mathbb{N}^+.$
    \item [$D2.$] $\{f_{L_i}, \mathbb{P}_n^{\epsilon_{L_i}}, f_{Y_i}, \mathbb{P}_n^{\epsilon_{Y_i \mid pa(y)}}\} \equiv \{f_{L_i}, \mathbb{P}_{n'}^{\epsilon_{L_i}}, f_{Y_i}, \mathbb{P}_{n'}^{\epsilon_{Y_i \mid pa(y)}}\}$ for all  $n, n' \in\mathbb{N}^+.$
    \item [$E1.$] For each $n$, $G_n\in \Pi^{d}_n(\kappa_n, L_i, \Lambda)$.
    \item [$E2.$] For each $n$, $\kappa_n = \lfloor n \times \kappa^* \rfloor$, with $\kappa^* \in [0,1].$
\end{itemize}

Then consider the following sequence of identities:

\begin{align}
  &  \underset{n\to\infty}{\lim}\mathbb{E}_{\mathbb{P}^F_n}[ \overline{Y}^{{G}_n}_n], \nonumber\\
= & \underset{n\to\infty}{\lim}\mathbb{E}_{\mathbb{P}^F_n}[ {Y}^{{G}_n}_i], \nonumber\\
= & \underset{n\to\infty}{\lim} \sum\limits_oyQ_{Y}(y \mid a, l) q^{*}_{i}(a \mid l)  Q_L(l), \nonumber\\
= &  \sum\limits_oyQ_{Y}(y \mid a, l) \big\{\underset{n\to\infty}{\lim} q^{*}_{i}(a \mid l)\big\}  Q_L(l). \label{eq: limlemma}
\end{align}

The first equality holds by the proof of Proposition \ref{theorem: YbarIDrelax}, because by conditions $D1$ and $E1$, $G_n$ and $\mathbb{P}_n$ satisfy the conditions of Proposition \ref{theorem: YbarIDrelax} for each $n$. The second equality holds by Theorem \ref{theorem: CDTRID} because condition $D1$ ensures that $\mathbb{P}_n$ satisfies its conditions. The third equality holds by condition $D2$, which ensures that the measures 
$Q_{Y}$ and $Q_L$ are invariant in $n$ for $\mathbb{P}_0$.

It thus remains to prove the following Lemma:

\begin{lemma}
Consider an asymptotic law $\mathbb{P}_0^F$ and an asymptotic regime $\mathbf{G}_0$ following conditions $D$ and $E$, respectively. Then $\underset{n\to\infty}{\lim} q^{*}_{i}(a \mid l) = q^*_0(a \mid l)$ for all $a, l$.
\end{lemma}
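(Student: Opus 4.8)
The plan is to realize $q^*_{i}(a\mid l)$ as a conditional expectation of a composition-specific density and to pass the limit $n\to\infty$ inside this expectation. By the decomposition of Appendix~\ref{appsubsec: intdens} we have $q^*_{i}(a\mid l)=\sum_{\mathbb{l}_n}q^*_{i,\mathbb{l}_n}(a\mid l)\,\mathbb{Q}_{L,n\mid l}(\mathbb{l}_n)=\mathbb{E}[\,q^*_{i,\mathbb{L}_n}(a\mid l)\mid L_i=l\,]$, where under condition $E1$ the regime $G_n$ lies in $\Pi^d_n(\kappa_n,L_i,\Lambda)$, so $q^*_{i,\mathbb{l}_n}(a\mid l)$ is available in closed form from Proposition~\ref{prop: intdens_Lrankpres} with the fixed coarsened rank function $\Lambda$. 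Since $q^*_{i,\mathbb{L}_n}(a\mid l)\in[0,1]$, it suffices to show that $q^*_{i,\mathbb{L}_n}(a\mid l)\to q^*_0(a\mid l)$ in probability (conditionally on $L_i=l$) and then invoke the bounded convergence theorem to conclude convergence of the conditional expectation.

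Second, I would establish the law of large numbers for the cluster composition. Under $\mathcal{M}^{AB}_n$ (condition $D1$) Lemma~\ref{lemma: ciid} gives that the $L_j$ are i.i.d.\ with marginal $Q_L$, so conditionally on $L_i=l$ the remaining $n-1$ coordinates stay i.i.d.\ and the fixed coordinate contributes only an $O(1/n)$ term; hence $\mathbb{L}_n(l')/n\to Q_L(l')$ in probability for every $l'$. Writing $S_\Lambda(c)=\sum_{l'}\mathbb{L}_n(l')\,I(\Lambda(l')\ge c)$ and its strict analogue $S^-_\Lambda(c)$, the continuous mapping theorem yields $S_\Lambda(c)/n\to\mathbb{P}_0(\Lambda(L_i)\ge c)$ and $S^-_\Lambda(c)/n\to\mathbb{P}_0(\Lambda(L_i)>c)$ in probability, jointly over the finitely many values in the range of $\Lambda$ (recall $L$, hence $\Lambda(L_i)$, is discrete). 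Condition $E2$ gives $\kappa_n/n=\lfloor n\kappa^*\rfloor/n\to\kappa^*$.

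Third, I would transfer these limits to the cutoff $\omega_{\mathbb{L}_n}$ and the density itself. Because $\Lambda(L_i)$ takes finitely many values, both $\omega_{\mathbb{L}_n}$ and $\omega_0$ are attained at range points, and $\omega_{\mathbb{L}_n}=v$ is equivalent to $S^-_\Lambda(v)<\kappa_n\le S_\Lambda(v)$. In the generic case $\mathbb{P}_0(\Lambda(L_i)>\omega_0)<\kappa^*<\mathbb{P}_0(\Lambda(L_i)\ge\omega_0)$, these two sandwiching inequalities hold with probability tending to one after dividing by $n$, so $\mathbb{P}(\omega_{\mathbb{L}_n}=\omega_0)\to1$; the continuous mapping theorem applied to the quotient $\frac{\kappa_n-S^-_\Lambda(\omega_0)}{S_\Lambda(\omega_0)-S^-_\Lambda(\omega_0)}$, whose denominator divided by $n$ converges to $\mathbb{P}_0(\Lambda(L_i)=\omega_0)>0$, returns the boundary value in Proposition~\ref{lemma: largeint}, while the cases $\Lambda(l)>\omega_0$ and $\Lambda(l)<\omega_0$ reduce to convergence of $I(\Lambda(l)>\omega_{\mathbb{L}_n})$ to $I(\Lambda(l)>\omega_0)$.

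The main obstacle is the non-generic boundary configuration $\kappa^*=\mathbb{P}_0(\Lambda(L_i)>\omega_0)$, where $\kappa_n/n$ and $S^-_\Lambda(\omega_0)/n$ share a common limit and the event $\{S^-_\Lambda(\omega_0)<\kappa_n\}$ need not stabilize, so $\omega_{\mathbb{L}_n}$ can oscillate between $\omega_0$ and the adjacent higher rank group. I would handle this by noting that the target value is then $q^*_0(1\mid l)=0$ on $\{\Lambda=\omega_0\}$, and checking directly that both candidate outcomes are asymptotically negligible: when $\omega_{\mathbb{L}_n}=\omega_0$ the numerator $\kappa_n-S^-_\Lambda(\omega_0)$ divided by $n$ tends to $0$ against a denominator bounded below, and when $\omega_{\mathbb{L}_n}$ exceeds $\omega_0$ the group is untreated so the density is exactly $0$; hence $q^*_{i,\mathbb{L}_n}(1\mid l)\to0$ in probability regardless. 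Combining the three covariate cases establishes the in-probability limit of $q^*_{i,\mathbb{L}_n}(1\mid l)$, and the bounded convergence step of the first paragraph then delivers $\lim_{n\to\infty}q^*_i(1\mid l)=q^*_0(1\mid l)$, with the case $a=0$ following from $q^*_i(0\mid l)=1-q^*_i(1\mid l)$.
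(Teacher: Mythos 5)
Your proof is correct and follows essentially the same route as the paper's: the same decomposition $q^*_{i}(a\mid l)=\sum_{\mathbb{l}_n}q^*_{i,\mathbb{l}_n}(a\mid l)\,\mathbb{Q}_{L,n\mid l}(\mathbb{l}_n)$, the closed form of Proposition \ref{prop: intdens_Lrankpres}, the normalization of $S_{\Lambda}$, $S^-_{\Lambda}$ and $\kappa_n$ by $n$, and $\kappa_n/n\to\kappa^*$ from condition $E2$. The differences are in rigor, and they favor you on two points. First, the paper passes to a single deterministic ``typical'' sequence $\mathbb{l}^0_n$ and asserts $\lim q^*_i=\lim q^*_{i,\mathbb{l}^0_n}$ without justification; your formulation---$q^*_{i,\mathbb{L}_n}(a\mid l)\in[0,1]$, so an in-probability limit plus bounded convergence controls the conditional expectation---is the honest version of that step. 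Second, the paper computes $\lim\omega_{\mathbb{l}^0_n}=\omega_0$ by tacitly exchanging the limit with the infimum, and this exchange can fail in exactly the boundary configuration $\kappa^*=\mathbb{P}_0(\Lambda(L_i)>\omega_0)$ that you isolate: with two rank groups of mass $1/2$ each and $\kappa^*=1/2$, a typical sequence with upper-group frequency $1/2+n^{-1/2}$ has $\omega_{\mathbb{l}^0_n}$ equal to the upper rank for every $n$, not $\omega_0$, even though the density limit survives---which is what your case analysis actually verifies, so your write-up repairs a genuine soft spot in the paper's argument. One pinhole remains in your third paragraph: in that boundary case the claim that $\Lambda(l)>\omega_0$ ``reduces to convergence of $I(\Lambda(l)>\omega_{\mathbb{L}_n})$'' is not quite right for the rank group immediately above $\omega_0$, since on the event $\{\omega_{\mathbb{L}_n}=v\}$ (with $v$ the adjacent range value) that group receives the quotient $\bigl(\kappa_n-S^-_{\Lambda}(v)\bigr)/\bigl(S_{\Lambda}(v)-S^-_{\Lambda}(v)\bigr)$ rather than the indicator; but since $\kappa_n/n\to\kappa^*=\mathbb{P}_0(\Lambda(L_i)\geq v)$ and $S^-_{\Lambda}(v)/n\to\mathbb{P}_0(\Lambda(L_i)>v)$, this quotient also tends to $1=I(\Lambda(l)>\omega_0)$, so the same computation you apply to the $\omega_0$ group closes this case in one line.
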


\begin{proof}

First, note that by Lemma \ref{lemma: ciid} and condition $D$ that $\boldsymbol{\lim}\frac{\mathbb{L}_{n}(l)}{n} = Q_{L}(l)$ for all $l$. Second, note that, by laws of probability, $$q^*_{i}(1 \mid l)  = \sum\limits_{\mathbb{l}_n} q^*_{i, \mathbb{l}_n}(1 \mid  l)\mathbb{P}_n(\mathbb{L}_n=\mathbb{l}_n \mid L_i=l).$$

Then define the sequence $\mathbb{l}_0 \coloneqq\{ \mathbb{l}^0_1,  \mathbb{l}^0_2,  \mathbb{l}^0_3, \dots\}$ such that  $\lim \mathbb{P}_n( \lVert \frac{\mathbb{l}^0_n}{n} - Q_L\rVert_1  < \epsilon) = 1$ for all $\epsilon>0$ with respect to $\mathbb{l}_0$ and $\mathbb{P}_0$. We thus have that $\lim q^*_{i}(1 \mid l)  = \lim q^*_{i, \mathbb{l}^0_n}(1 \mid  l)$. Recall the following result (from Proposition \ref{prop: intdens_Lrankpres} of Appendix \ref{appsec: gencdtrs}) for $q^*_{i, \mathbb{l}_n}$ under a $L_i$-rank-preserving regime, where we had defined $S_{\Lambda}(m) \coloneqq \sum\limits_{\mathcal{L}}\mathbb{l}_n(l)I(\Lambda(l) \leq m)$, and $S^-_{\Lambda}(m)$ to be the corresponding function denoting the number of such individuals with coarsened rank group strictly greater than $m$:

  \begin{align*}
         q^*_{i,\mathbb{l}_{n}}(1 \mid l)  
             & =  \begin{cases}
                     \frac{\kappa_{n} - S_{\Lambda}^-(\omega_{\mathbb{l}_n})}{S_{\Lambda}(\omega_{\mathbb{l}_n})-S_{\Lambda}^-(\omega_{\mathbb{l}_n})}   & : \Lambda(l)= \omega_{\mathbb{l}_n}, \\
                     I\Big(\Lambda(l) > \omega_{\mathbb{l}_n}\Big) & : \text{otherwise}.
                \end{cases} 
\end{align*}
with 
  \begin{align*}
    \omega_{\mathbb{l}_n} \coloneqq \textup{inf}\bigg\{ c \in \mathbb{R}: S_{\Lambda}(c)  \le \kappa_{n}\bigg\}. 
 \end{align*}

 Note that $\omega_{\mathbb{l}_n}$  is alternatively expressed as $\textup{inf}\bigg\{ c \in \mathbb{R}: 
 \sum\limits_{\mathcal{L}}\frac{\mathbb{l}_n(l)}{n}
 I(\Lambda(l) > c) 
 \leq \frac{\kappa_{n}}{n}\bigg\}$

 Thus we have that:
  \begin{align*}
    \underset{n\to\infty}{\lim}\omega_{\mathbb{l}_n^0} = & \underset{n\to\infty}{\lim}\textup{inf}\bigg\{ c \in \mathbb{R}: \sum\limits_{\mathcal{L}}\frac{\mathbb{l}^0_n(l)}{n}I(\Lambda(l) > c) \le \frac{\kappa_{n}}{n}\bigg\} \\
  = &  \textup{inf}\bigg\{ c \in \mathbb{R}: \sum\limits_{\mathcal{L}}Q_L(l)I(\Lambda(l) > c) \le \kappa^*\bigg\} = \omega_0,
 \end{align*}
 where we use $E2$, so that $\underset{n\to\infty}{\lim}\frac{\kappa_{n}}{n} = \kappa^*$\

 We can also make the following re-formulation:

 \begin{align*}
      \frac{\kappa_{n} - S_{\Lambda}^-(\omega_{\mathbb{l}_n})}{S_{\Lambda}(\omega_{\mathbb{l}_n})-S_{\Lambda}^-(\omega_{\mathbb{l}_n})} =  
      \frac{\frac{\kappa_{n}}{n} - \sum\limits_{\mathcal{L}}\frac{\mathbb{l}_n(l)}{n}I(\Lambda(l) > \omega_{\mathbb{l}_n})}{ \sum\limits_{\mathcal{L}}\frac{\mathbb{l}_n(l)}{n}I(\Lambda(l) = \omega_{\mathbb{l}_n})}
 \end{align*}

 Thus we have that:
  \begin{align*}
    \underset{n\to\infty}{\lim}\frac{\kappa_{n} - S_{\Lambda}^-(\omega_{\mathbb{l}^0_n})}{S_{\Lambda}(\omega_{\mathbb{l}^0_n})-S_{\Lambda}^-(\omega_{\mathbb{l}^0_n})} 
  = & \frac{\kappa^* - \sum\limits_{\mathcal{L}}Q_L(l)I(\Lambda(l) > \omega_{0})}{ \sum\limits_{\mathcal{L}}Q_L(l)I(\Lambda(l) = \omega_{0})}.
 \end{align*}

Because $\Lambda$ is fixed and invariant to $n$ in the asymptotic $\mathbf{G}_0$ by condition $E1$, then we have the result $\lim q^*_{i, \mathbb{l}^0_n}(1 \mid  l) = q_0^*(a\mid l)$, thus also proving Proposition \ref{lemma: largeint} from Section \ref{sec: largecluster}. The result in Theorem \ref{theorem: YbarIDlarge} follows by plugging in $q_0^*(a\mid l)$ for $\underset{n\to\infty}{\lim} q^{*}_{i}(a \mid l)$ in expression \eqref{eq: limlemma}.

\end{proof}

\subsubsection{Optimal regimes}

We restate the optimal regime identification result of Section \ref{sec: largecluster}.

\begin{proposition*}[Re-statement of Proposition \ref{lemma: optimallarge} of Section \ref{sec: largecluster}]
Consider a law $\mathbb{P}_n\in\mathcal{M}_n^{AB}$. The CR  $G^{\mathbf{opt}}_{n}\equiv G_n\in\Pi_n(\kappa_n)$ that maximizes $\mathbb{E}[\overline{Y}_n^{G_n}]$ is the $V$-rank-preserving CR $G_n\in\Pi_n^d(\kappa_n, V_i, \Lambda)$ characterized by $\Lambda \equiv \Delta$.

Consider an asymptotic law $\mathbb{P}_0$ following condition $D$, and an asymptotic regime $\mathbf{G}_0$ following condition $E$. If $G_{n^*}\in\mathbf{G}_0$ is such an optimal regime for $\mathbb{P}_{n^*}(\mathbb{P}_0)$ for some $n^*>1$ then $G_{n}\in\mathbf{G}_0$ is the optimal such regime for $\mathbb{P}_{n}(\mathbb{P}_0)$ for all $n$. Let $q^{\mathbf{opt}}_0$ denote the intervention density under the asymptotic regime $\mathbf{G}^{\mathbf{opt}}_{0} \equiv \Big(G^{\mathbf{opt}}_{n=1}, G^{\mathbf{opt}}_{n=2}, G^{\mathbf{opt}}_{n=3},\dots\Big)$. Then, $\mathbb{E}[\overline{Y}_0^{\mathbf{G}^{\mathbf{opt}}_{0}}]$ is identified as in Theorem \ref{theorem: YbarIDlarge} and its intervention density $q^{\mathbf{opt}}_0$ is identified as in \eqref{eq: gstar_large} of Proposition \ref{lemma: largeint} where we take $\Lambda = \Delta_0\coloneqq \Delta$ and $\omega_0 =\eta_0 \coloneqq \inf\{ c  \mid \mathbb{P}_0(\Delta_0(L_i) > c) \leq \kappa^*\}.$
\end{proposition*}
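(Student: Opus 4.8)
The plan is to reduce the cluster-level optimization to a family of per-configuration knapsack problems, solve each one greedily by the conditional average effect, and then transport the resulting characterization across cluster sizes using the invariance guaranteed by condition $D$.

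First I would handle the finite-cluster claim. By linearity, $\mathbb{E}[\overline{Y}_n^{G_n}] = \frac{1}{n}\sum_i \mathbb{E}[Y_i^{G_n+}]$, and I would condition on the covariate configuration $\mathbf{V}_n = \mathbf{v}_n$. Under $\mathcal{M}_n^{AB}$ the reduction is not the naive iid one: I would invoke SUTVA, i.e. $Y_i^{a_i}=Y_i^{\mathbf{a}_n}$ together with the conditional independence of the $Y_i^{\mathbf{a}_n}$ (Corollary \ref{cor: ciidcount}), and the reduced-covariate kernel $Q^*_Y$ of Proposition \ref{prop: compgformred}, to show that the expected total counterfactual outcome given $\mathbf{V}_n=\mathbf{v}_n$ and an assignment $\mathbf{a}_n$ splits as $\sum_i m^*(0,v_i) + \sum_i a_i\,\Delta(v_i)$, where $m^*(a,v)=\mathbb{E}[Y_i^a\mid V_i=v]$ and $\Delta(v)=m^*(1,v)-m^*(0,v)$.

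The first summand is constant in the regime, so maximizing $\mathbb{E}[\overline{Y}_n^{G_n}]$ reduces, configuration by configuration, to maximizing $\mathbb{E}_{\delta}[\sum_i A_i^{G_n+}\Delta(v_i)]$ subject to $\lVert \mathbf{a}_n\rVert=\kappa_n$. For every realization this sum is bounded by the total of the $\kappa_n$ largest values among $\{\Delta(v_i)\}$, with equality exactly when treatment is directed to the individuals of highest $\Delta$ (boundary ties being immaterial to the value); this is the classical greedy solution of \citetMain{dantzig1957discrete}. A regime attaining this bound at every positive-probability configuration attains the overall maximum, whereas a regime misallocating on a positive-probability configuration is strictly suboptimal. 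The $\Delta$-rank-preserving regimes are precisely the bound-attainers, and selecting a symmetric tie-break within the boundary group produces a maximizer in $\Pi_n^d(\kappa_n, V_i, \Delta)$, establishing the first assertion. For the sequence statement, optimality of $G_{n^*}$ forces the fixed rank function $\Lambda$ of $\mathbf{G}_0$ to be order-compatible with $\Delta$ on the covariate support: if $\Delta(v)>\Delta(v')$ while $\Lambda(v)\le\Lambda(v')$, then (as $n^*>1$ and $Q_L$ charges both $v,v'$) some positive-probability configuration would be misallocated, contradicting optimality. Since condition $D$ renders $Q_Y,Q_L$, hence $\Delta$ and its support, invariant in $n$, and condition $E1$ fixes $\Lambda$ across the sequence, this compatibility persists at every $n$; applying the finite-cluster claim at each $n$ then yields optimality of $G_n$ for all $n$.

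Finally, the optimal asymptotic regime $\mathbf{G}_0^{\mathbf{opt}}$ carries fixed rank function $\Lambda=\Delta_0\coloneqq\Delta$ with $\kappa_n=\lfloor n\kappa^*\rfloor$, so it satisfies condition $E$; Theorem \ref{theorem: YbarIDlarge} then identifies $\mathbb{E}[\overline{Y}_0^{\mathbf{G}_0^{\mathbf{opt}}}]$ and Proposition \ref{lemma: largeint} identifies $q_0^{\mathbf{opt}}$ upon substituting $\Lambda=\Delta_0$, with $\omega_0=\eta_0=\inf\{c:\mathbb{P}_0(\Delta_0(L_i)>c)\le\kappa^*\}$. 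The main obstacle I anticipate is the reduction step itself: carefully justifying that, conditional on $\mathbf{V}_n$, the expected sum of counterfactuals splits additively with treatment-benefit coefficient exactly $\Delta(v_i)$ requires the conditional-independence and invariance machinery of $\mathcal{M}_n^{AB}$ (Corollary \ref{cor: ciidcount}, Proposition \ref{prop: compgformred}) rather than an iid argument, and the lifting of per-configuration optimality to optimality in expectation relies on positivity, so that each configuration with a decisive boundary comparison carries positive mass.
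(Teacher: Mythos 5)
Your proof is correct and takes essentially the same route as the paper's, whose entire argument consists of recognizing the problem as a fractional knapsack problem in the sense of Dantzig (1957) with rewards $\Delta(L_i)$ (so the greedy $\Delta$-rank-preserving allocation is optimal), noting that condition $D$ fixes $\Delta$ across $n$ for the sequence claim, and substituting $\Lambda=\Delta_0$ into Proposition \ref{lemma: largeint}. Your explicit additive decomposition via Corollary \ref{cor: ciidcount} and Proposition \ref{prop: compgformred} merely fills in the justification the paper leaves implicit in that knapsack recognition.
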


The formulation of $G^{\mathbf{opt}}_{n}$ for a $\mathbb{P}^F_n\in\mathcal{M}^{AB}_n$ follows from recognizing the optimization problem as a special case of the fractional knapsack problem of \citet{dantzig1957discrete} where rewards for each individual $i$ are characterized by $\Delta_{\mathbb{P}_n^F}(L_i)$, thus corresponding to a regime in $\Pi_n^d(\kappa_n, V_i, \Lambda)$ with $\Lambda=\Delta_{\mathbb{P}_n^F}(L_i)$. The large-cluster result follows from the fact that for an asymptotic law $\mathbb{P}_0$ under conditions $D$, $\Delta_{\mathbb{P}_n^F}(L_i)$ is fixed to some $\Delta_{\mathbb{P}_0^F}$ for all $n$. The identification result for $q^{\mathbf{opt}}_0$ follows simply by substituting $\Delta_0 \coloneqq \Delta_{\mathbb{P}_0^F}$ for $\Lambda$ in $q_0^*$, as formulated in Proposition \ref{lemma: largeint}.

Note that Proposition \ref{lemma: optimallargegen} follows simply by an extension of the fractional knapsack problem that only allocates resources with non-negative rewards, which is captured by a generalized rank-and-treat regime that only provides treatment when $\Delta_{\mathbb{P}_n^F}(l_i)$ is non-negative.

\subsection{Inference results} \label{appsec: proofsEst} 

\subsubsection{Asymptotic consistency for finite-cluster parameters}

We re-state the main asymptotic consistency result of Theorem \ref{theorem: asymptcons} from Section \ref{sec: estim}.
\begin{theorem*}[Re-statement of Theorem \ref{theorem: asymptcons} of Section \ref{sec: estim}] 
    Consider $\mathbb{P}_0$ following conditions $D$ and $G_{n^*}\in \Pi_{n^{\ast}}(\kappa_{n^{\ast}})$ following $C1$. Then,
    \begin{align}
        & \boldsymbol{\lim}\sum\limits_{\mathbb{o}_{n^{\ast}}}h(\mathbb{o}_{n^{\ast}})  \tilde{\mathbb{f}}^{G_{n^{\ast}}}(\mathbb{o}_{n^*}) = \mathbb{E}[ h(\mathbb{O}_{n^{\ast}}^{G_{n^{\ast}}+})], \\
        & \boldsymbol{\lim} \sum\limits_{o}h'(o)\tilde{f}_{i}^{G_{n^*}}(o) = \mathbb{E}[ h'(O_i^{G_{n^*}+})].
    \end{align}
\end{theorem*}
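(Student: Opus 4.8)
The plan is to reduce both displayed limits to a single convergence statement about the empirical individual-level kernels $\tilde{Q}_{L,n}$ and $\tilde{Q}_{Y,n}$, and then propagate it through a continuity argument. First, by Theorem \ref{theorem: CDTRID} the right-hand sides $\mathbb{E}[h(\mathbb{O}_{n^*}^{G_{n^*}+})]$ and $\mathbb{E}[h'(O_i^{G_{n^*}+})]$ equal the population compositional and individual-level g-formula functionals $\sum_{\mathbb{o}_{n^*}}h(\mathbb{o}_{n^*})\mathbb{f}^{G_{n^*}}(\mathbb{o}_{n^*})$ and $\sum_{o}h'(o)f_i^{G_{n^*}}(o)$, which by Corollary \ref{cor: compid} and Proposition \ref{prop: intdens_Lrankpres} are built entirely from $Q_L$, $Q_Y$, and the known regime $G_{n^*}$. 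Because the defining cluster size $n^*$ is \emph{fixed}, each functional is a finite sum of multinomial-type expressions, hence a polynomial—and in particular a continuous map—of $(Q_L, Q_Y)$; the empirical functionals $\tilde{\mathbb{f}}^{G_{n^*}}$ and $\tilde{f}_i^{G_{n^*}}$ are exactly these same maps evaluated at $(\tilde{Q}_{L,n}, \tilde{Q}_{Y,n})$. Thus, by the continuous mapping theorem, it suffices to show $\boldsymbol{\lim}\,\tilde{Q}_{L,n} = Q_L$ and $\boldsymbol{\lim}\,\tilde{Q}_{Y,n} = Q_Y$ on the support of the nonzero terms.

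Next I would dispatch the covariate kernel. By Lemma \ref{lemma: ciid}, under $\mathcal{M}_n^{AB}$ the covariates $\{L_i\}_{i=1}^n$ are mutually independent with common law $Q_L$, and condition $D$ fixes $Q_L$ across $n$. Hence $\tilde{Q}_{L,n}(l) = \mathbb{L}_n(l)/n$ is an empirical mean of i.i.d.\ indicators, and $\boldsymbol{\lim}\,\tilde{Q}_{L,n}(l) = Q_L(l)$ by the weak law of large numbers.

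The crux is the outcome kernel: I must show $\boldsymbol{\lim}\,\tilde{Q}_{Y,n}(y\mid a,l) = Q_Y(y\mid a,l)$ for every cell $(a,l)$ entering a g-formula term with positive weight, i.e.\ with $\overline{q}^*_{n^*}(a\mid l)Q_L(l)>0$. The obstacle is precisely the ``spurious association'' phenomenon flagged in the main text and formalized by the counterexample of Appendix \ref{appsec: counterexamp_cons}: the conditioning count $\mathbb{B}_n(a,l)$ need not diverge if the observed allocation concentrates on extreme configurations, whence $\tilde{Q}_{Y,n}$ may fail to converge even when $\overline{q}_n$ does. Strong positivity $C1$ is what removes this obstacle. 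Under $C1$, $\overline{q}_0 = \boldsymbol{\lim}\,\tilde{q}_n$ is well-defined and $\overline{q}_0(a\mid l)>0$ on the relevant cells; combined with $\tilde{Q}_{L,n}(l)\to Q_L(l)>0$ and the identity $\mathbb{B}_n(a,l)/n = \tilde{q}_n(a\mid l)\,\tilde{Q}_{L,n}(l)$, this gives $\boldsymbol{\lim}\,\mathbb{B}_n(a,l)/n = \overline{q}_0(a\mid l)Q_L(l)>0$, so $\mathbb{B}_n(a,l)\to\infty$ in probability. Conditioning on the treatment–covariate configuration $\mathbf{B}_n$, Lemma \ref{lemma: ciid} shows the outcomes $\{Y_i: A_i=a, L_i=l\}$ are i.i.d.\ $Q_Y(\cdot\mid a,l)$; thus given $\mathbb{B}_n(a,l)=m$, the estimator $\tilde{Q}_{Y,n}(y\mid a,l)$ is a binomial proportion with mean $Q_Y(y\mid a,l)$ and conditional variance at most $1/(4m)$. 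A Chebyshev bound conditional on $\mathbb{B}_n(a,l)$, combined with $\mathbb{B}_n(a,l)\to\infty$ through a routine truncation over $\{\mathbb{B}_n(a,l)<M\}$ and its complement, yields the claimed convergence.

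Finally I would assemble the pieces: joint convergence $(\tilde{Q}_{L,n}, \tilde{Q}_{Y,n})\to(Q_L, Q_Y)$ in probability on the relevant support, fed through the continuous g-formula maps via the continuous mapping theorem, gives convergence in probability of $\sum_{\mathbb{o}_{n^*}}h(\mathbb{o}_{n^*})\tilde{\mathbb{f}}^{G_{n^*}}(\mathbb{o}_{n^*})$ and $\sum_{o}h'(o)\tilde{f}_i^{G_{n^*}}(o)$ to the population functionals, which equal the causal estimands by Theorem \ref{theorem: CDTRID}. The main difficulty, as indicated, is establishing divergence of the conditioning counts $\mathbb{B}_n(a,l)$—the single place where $C1$ (rather than merely $C0$) is indispensable; the covariate LLN, the conditional concentration, and the continuity/continuous-mapping steps are all elementary by comparison.
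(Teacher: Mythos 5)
Your proof is correct and follows essentially the same route as the paper's: reduce both claims via plug-in continuity of the fixed-$n^*$ g-formula functionals in $(Q_L, Q_Y)$ (the paper invokes Slutsky's theorem), apply an iid law of large numbers for $\tilde{Q}_{L,n}$, and use strong positivity $C1$ to force the conditioning counts $\mathbb{B}_n(a,l)$ to diverge so that $\tilde{Q}_{Y,n}(y\mid a,l)\to Q_Y(y\mid a,l)$ on the cells with $\overline{q}^*_{n^*}(a\mid l)Q_L(l)>0$. Your explicit identity $\mathbb{B}_n(a,l)/n = \tilde{q}_n(a\mid l)\,\tilde{Q}_{L,n}(l)$ and the conditional Chebyshev-plus-truncation argument simply make rigorous the divergence step that the paper states informally.
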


\begin{proof}

Theorem \ref{theorem: asymptcons} follows if $\boldsymbol\lim\tilde{\mathbb{f}}^{G_{n^{\ast}}}(\mathbb{o}_{n^*}) = {\mathbb{f}}^{G_{n^{\ast}}}(\mathbb{o}_{n^*})$ and likewise $\boldsymbol\lim \tilde{f}_{n^*,i}^{G_{n^*}}(o) = {f}_{i}^{G_{n^*}}(o)$.

First we remind the reader that ${f}_{i}^{G_{n^*}}$ and ${\mathbb{f}}^{G_{n^{\ast}}}$ are both smooth functionals of $Q_L$ and $Q_Y$ that are fixed to $n^*$ across the index $n$ of the asymptotic law $\mathbb{P}_0$. Furthermore, we remind the reader that $\tilde{f}_{i}^{G_{n^*}}$ and $\tilde{\mathbb{f}}^{G_{n^{\ast}}}$ are simply plug-in estimators of ${f}_{i}^{G_{n^*}}$ and ${\mathbb{f}}^{G_{n^{\ast}}}$, with respect to $\tilde{Q}_L$ and $\tilde{Q}_Y$. Thus, our result follows via Slutsky's theorem if $\boldsymbol\lim\tilde{Q}_L(l)= Q_L(l)$ for all $l$ and if   $\boldsymbol\lim\tilde{Q}_Y(y \mid a, l)= {Q}_Y(y \mid a, l)$ for all $a,l$ such that  $\overline{q}^*_n(a \mid l)Q_L(l) > 0 \implies \overline{q}_n(a \mid l)>0.$

Under an asymptotic law $\mathbb{P}^F_0$ following condition $D$ we have that $\boldsymbol\lim\frac{\mathbb{L}_n(l)}{n} = Q_L(l)$ for all $l$, since $\boldsymbol\lim\frac{\mathbb{L}_n(l)}{n} = \frac{1}{n}\sum\limits_{i=1}^nI(L_i=l)$ is an empirical average of iid random variables (see Lemma \ref{lemma: ciid}), and Condition $D$ ensures that $Q_L$ remains fixed across laws in $\mathbb{P}^F_0$. $\frac{\mathbb{O}_n(y, a,l)}{\mathbb{B}_n(a,l)}$ is also an empirical average of a randomly varying number of iid random variables. The counterexample of Appendix \ref{appsec: estim} illustrates that weak positivity condition $C0$ is not sufficient for consistency of $\boldsymbol\lim\frac{\mathbb{O}_n(y, a,l)}{\mathbb{B}_n(a,l)}$ because $\mathbb{B}_n(a,l)$ under these conditions will not necessarily grow with some rate that is a function of $n$. Strong positivity condition $C1$ ensures that $\mathbb{B}_n(a,l)$ indeed grows at such a rate so that the number of individuals in this set will almost surely approach infinitely many as $n$ grows. Thus we have that  $\boldsymbol\lim \frac{\mathbb{O}_n(y, a,l)}{\mathbb{B}_n(a,l)} = {Q}_Y(y \mid a, l)$. 
    
\end{proof}

\subsubsection{Semiparametric efficiency for regular large-cluster parameters}

\begin{theorem*}[Re-statement of Theorem \ref{thm: asympeffGopt} from Appendix \ref{appsec: estim}]
    Consider an asymptotic law $\mathbb{P}_0$ following condition $D$ and an asymptotic optimal regime $\mathbf{G}_{0}^{\mathbf{opt}}$ following condition $E$. Suppose further that condition $F^*$ holds. Then the following properties also hold:

    \begin{itemize}
        \item[(1.)] $\Psi^{\mathbf{opt}}$ is a pathwise differentiable parameter with efficient non-parametric influence function $\Phi_0(\overline{P}_0, g^{\mathbf{opt}}_{i}, \eta_0)$.
        \item[(2.)] $\tilde{\Psi}^{\mathbf{opt}}_{n}$ is the asymptotically-efficient, regular and asymptotically linear estimator of $\Psi^{\mathbf{opt}}$.
        \item[(3.)] An asymptotically valid two-sided $1-\alpha$ confidence interval is given by $$\tilde{\Psi}^{\mathbf{opt}}_{n} \pm z_{1-\alpha/2}\frac{\tilde{\sigma}^{2, \mathbf{opt}}_n}{\sqrt{n}}.$$
    \end{itemize}
\end{theorem*}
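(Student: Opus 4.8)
The plan is to reduce the large-cluster problem to the individualized problem of \citetSM{luedtke2016optimal} and then transport their results wholesale. By the corollary to Proposition \ref{lemma: CDTRID}, under conditions $D$ and $E$ we have $\mathbb{E}_{\mathbb{P}_0^F}[\overline{Y}_0^{\mathbf{G}_0^{\mathbf{opt}}}] = \mathbb{E}_{P^F}[Y_i^{g^{\mathbf{opt}}_i}]$, so that the identifying functional $\Psi^{\mathbf{opt}} = \sum_{o} y f^{\mathbf{G}_0^{\mathbf{opt}}}(o)$ is \emph{exactly} the resource-constrained optimal-value g-formula functional, evaluated at the limiting margin $\overline{P} \equiv (Q_Y, \overline{q}_0, Q_L)$. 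Because pathwise differentiability and the form of the canonical gradient are properties of the functional and its model alone---not of the sampling scheme---claim (1) follows immediately from the efficient-influence-function derivation of \citetSM{luedtke2016optimal}: their argument expressing the gradient as $\Phi_0 = \mathbb{E}_\delta[\Phi_1 + \Phi_2]$ applies verbatim once one identifies $\Psi^{\mathbf{opt}}$ with their functional and notes that conditions $F1^*$--$F3^*$ supply the regularity of $\Delta_0(L_i)$ near $\eta_0$ that keeps the optimal rule a smooth (non-kinked) function of the law.

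For claim (2) I would analyze the one-step estimator $\tilde{\Psi}^{\mathbf{opt}}_n$ through the standard von Mises expansion. Writing $\mathbb{P}_n^{\mathrm{emp}}$ for the empirical measure placing mass $1/n$ on each of the $n$ individuals in the cluster, mean-zero-ness of the gradient and pathwise differentiability give
\[
\tilde{\Psi}^{\mathbf{opt}}_n - \Psi^{\mathbf{opt}} = \mathbb{P}_n^{\mathrm{emp}}\,\Phi_0(\overline{P}, g^{\mathbf{opt}}_i, \eta_0) + (\mathbb{P}_n^{\mathrm{emp}} - \overline{P})\big(\Phi_0(\tilde{P}_n,\cdot) - \Phi_0(\overline{P},\cdot)\big) + R_n,
\]
where $R_n$ collects the second-order remainder. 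The Donsker conditions $F7^*$--$F8^*$ together with the $L^2$-consistency of the outcome regression force the empirical-process (middle) term to $o_P(n^{-1/2})$; the remainder $R_n$ is precisely the pair $R_{1,0} + R_{2,0}$, which $F6^*$ assumes to be negligible; and $F4^*$--$F5^*$ and $F10^*$ supply the bounded-overlap and consistency inputs that \citetSM{luedtke2016optimal} use to absorb estimation of the optimal rule and the plug-in of $\eta_0$. It then remains only to show the leading term obeys a central limit theorem.

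The main obstacle is precisely this last step, because the data within a cluster are \emph{not} i.i.d.: the resource constraint couples the treatments $\mathbf{A}_n$, so $\mathbb{P}_n^{\mathrm{emp}}\Phi_0$ is an average of dependent summands and the i.i.d.\ central limit theorem of \citetSM{luedtke2016optimal} does not apply directly. The resolution is to exploit the conditional-i.i.d.\ structure available under $\mathcal{M}_n^{AB}$: by Lemma \ref{lemma: ciid} the covariates $\{L_i\}$ are i.i.d.\ and the outcomes $\{Y_i \mid A_i=a, L_i=l\}$ are conditionally i.i.d., so the only cross-unit dependence enters through the weights $I(A_i = g(L_i))/\overline{q}_0(A_i \mid L_i)$. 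One shows this dependence is asymptotically negligible---either by conditioning on $\mathbf{A}_n$ and invoking a martingale central limit theorem for triangular arrays (as in \citetMain{ganssler1978central}) or, more transparently, by working under the weakly dependent sub-cluster structure of condition $C2$, under which $\tilde{q}_n \to \overline{q}_0$ sufficiently fast (Lemma \ref{lemma: subclus}) and the gradient average splits into blocks with $o(1)$ mixing coefficients. This yields asymptotic linearity with influence function $\Phi_0(\overline{P}, g^{\mathbf{opt}}_i, \eta_0)$; regularity and efficiency within the RAL class then follow because this influence function is the canonical gradient identified in claim (1).

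Finally, claim (3) is a direct corollary: $\tilde{\sigma}^{2,\mathbf{opt}}_n$ is the empirical second moment of the estimated influence function and converges in probability to $\sigma^2(\overline{P}, g^{\mathbf{opt}}_i, \eta_0)$ under $F7^*$, so Slutsky's theorem combined with the asymptotic linearity from claim (2) delivers the stated Wald interval. I expect essentially all of the genuinely new labor to reside in the dependent-data central limit theorem of the third paragraph; the remaining steps are faithful transcriptions of \citetSM{luedtke2016optimal} once the parameter equivalence of the first paragraph is in hand.
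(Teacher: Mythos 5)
Your skeleton---the parameter equivalence via Proposition \ref{lemma: CDTRID} and its corollary, the von Mises expansion with $F6^*$--$F8^*$ absorbing the remainder and empirical-process terms, and Slutsky plus consistency of $\tilde{\sigma}^{2,\mathbf{opt}}_n$ for the Wald interval---coincides with the paper's proof, and you correctly isolate the crux: the leading term $(\tilde{P}_n - \overline{P})\Phi_0(\overline{P}, g_i^{\mathbf{opt}}, \eta_0)$ averages dependent summands, so the i.i.d.\ CLT used by Luedtke and van der Laan is unavailable. Where you diverge is in how that term is handled. The paper proves no dependent-data CLT at all. Instead it computes $\tilde{P}_n \Phi_1(\overline{P}, g_i^{\mathbf{opt}})$ \emph{exactly}, by conditioning on $L_i$ under the empirical measure: grouping the IPW-residual summands by $(a,l)$ collapses all dependence on $\mathbf{A}_n$ and $\mathbf{Y}_n$ into the nuisance estimates themselves, yielding
\[
\tilde{P}_n \Phi_1(\overline{P}, g_i^{\mathbf{opt}})
= \frac{1}{n}\sum_{i=1}^n \frac{\tilde{q}_n - \overline{q}_0}{\overline{q}_0}\Bigl(\mathbb{E}_{\tilde{P}_n}[Y_i \mid \cdot\,] - \mathbb{E}_{\overline{P}}[Y_i \mid \cdot\,]\Bigr)
+ \frac{1}{n}\sum_{i=1}^n \Bigl(\mathbb{E}_{\tilde{P}_n}[Y_i \mid \cdot\,] - \mathbb{E}_{\overline{P}}[Y_i \mid \cdot\,]\Bigr)
\]
(arguments $g_i^{\mathbf{opt}}(L_i), L_i$ suppressed), together with $\overline{P}\Phi_1 = 0$. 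The first term is a second-order product, $o_{\overline{P}}(n^{-1/2})$ under $F7^*$ and boundedness of $\tilde{q}_n$; the second is controlled by the Donsker and $L^2$-consistency parts of $F7^*$; and the only term requiring a CLT is a centered empirical average of a fixed function of $L_i$ alone (the $\Phi_2$-type part and $\mathbb{E}_{\overline{P}}[Y_i \mid g_i^{\mathbf{opt}}(L_i), L_i]$), to which the classical i.i.d.\ CLT applies because the $L_i$ \emph{are} i.i.d.\ under $\mathcal{M}_n^{AB}$ (Lemma \ref{lemma: ciid}). In short, the doubly-robust structure makes the coupled treatment assignments contribute only through a product of estimation errors, so your diagnosis (``dependence enters only through the weights'') is exactly right, but the cure is pure algebra rather than a limit theorem for dependent data.

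This matters because your two proposed cures, as stated, do not prove the theorem under its hypotheses. The mixing route invokes condition $C2$ and Lemma \ref{lemma: subclus}, but the theorem assumes only $D$, $E$, and $F^*$; in the paper $C2$ is reserved for the consistency results (Theorem \ref{theorem: asymptcons}, Corollary \ref{cor: asymconslarge}), so importing it here proves a weaker statement. The martingale route (conditioning on $\mathbf{A}_n$ and using a triangular-array martingale CLT) is where the paper turns only for the online estimator of the \emph{irregular} case (Theorem \ref{thm: onlineNormality}); to make it work here you would still need stabilization of the conditional variances, i.e., convergence of $\tilde{q}_n$-type averages, which $F^*$ supplies only as boundedness ($F5^*$), not as a rate. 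So the proposal is sound in outline but defers the one genuinely novel step to an unproven and mis-scoped CLT, where the intended argument shows that no such CLT is needed.
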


\begin{proof}
    We adapt the proof from \citetSM{luedtke2016optimal}. Let $Pf := \mathbb{E}_P[f(O_i)]$. Asymptotic linearity and normality, will not follow from standard arguments because the elements of $\mathbf{O}_n$ are not iid. The rest of the proof follows from \citetSM{luedtke2016optimal}, where in a slight abuse of notation, we at times let $\Phi_0(\cdot)$ denote the random variable  $\Phi_0(\cdot)(O_i)$, and likewise for $\Phi_1(\cdot)$ and $\Phi_2(\cdot)$.

\citetSM{luedtke2016optimal} show that
\begin{align}
    \Tilde{\Psi}_n^{\mathbf{opt}} - \Psi^{\mathbf{opt}} &= (\Tilde{P}_n - \Bar{P}_0)\Phi_0(\Bar{P},g_i^{\mathbf{opt}}, \eta) + (\Tilde{P}_n - \Bar{P})(\Phi_0(\Tilde{P}_n, \Tilde{g}_i^{\mathbf{opt}},\eta) - \Phi_0(\Bar{P},g_i^{\mathbf{opt}}, \eta))\nonumber \\
    &+ R_0(\Tilde{g}_i^{\mathbf{opt}}, \Tilde{P}_n) + o_{\Bar{P}}(n^{-1/2}).
    \label{eq: Fundamental equation}
\end{align}
.

The second and third terms of Equation \eqref{eq: Fundamental equation} are $o_{\Bar{P}}(n^{-1/2})$ by the assumptions of Theorem 4 in \citetSM{luedtke2016optimal} (a subset of $F^*$). \citetSM{luedtke2016optimal} use the central limit theorem (and hence the i.i.d. assumption) to show asymptotic normality of $\sqrt{n}(\Tilde{P}_n - \Bar{P})\Phi_0(\Bar{P},g_i^{\mathbf{opt}}, \eta)$. As the elements of $\mathbf{O}_n$ are not iid, we cannot use the central limit theorem to obtain the same conclusion; we need to show that $\sqrt{n}(\Tilde{P}_n - \Bar{P})\Phi_0(\Bar{P},g_i^{\mathbf{opt}}, \eta)$ is asymptotically normal.

Recall the definition of $\Phi_0$:
\begin{align}
    \Phi_0(\Bar{P},g_i^{\mathbf{opt}}, \eta)(o) &= \Phi_1(\Bar{P}, g_i^{\mathbf{opt}})(o) + \Phi_2(g_i^{\mathbf{opt}}, \eta)(o) \nonumber \\
    &= \frac{I(a = g_i^{\mathbf{opt}}(l))}{\Bar{q}_{0}(a \mid l)}(y - \mathbb{E}_{\Bar{P}}(Y_i \mid A_i = a , L_i = l))\nonumber\\
    &+ \{\mathbb{E}_{\Bar{P}}(Y_i \mid A_i = g_i^{\mathbf{opt}}(l), L_i = l) - \mathbb{E}_{\Bar{P}}[\mathbb{E}_{\Bar{P}}(Y_i \mid g_i^{\mathbf{opt}}(L_i), L_i)]\}\nonumber\\
    &- \eta(g_i^{\mathbf{opt}}(l) - \kappa^*).
   \label{eq: D_0 formula} 
\end{align}
The first equality follows from the definition of $\Phi_0$ as $g_i^{\mathbf{opt}}$ is deterministic by Condition $F3^*$ (and thus not a function of $\delta)$ so we can omit the expectation over $\delta$ in expression \ref{eq: IF}, and the second equality follows from the definitions of $\Phi_1$ and $\Phi_2$.

By the law of total expectation and as $\Tilde{P}_n$ has the empirical distribution of $L_i$ marginally (given by $\mathbb{L}_n$),
\begin{align*}
    \Tilde{P}_n \Phi_1(\Bar{P}, g_i^{\mathbf{opt}}) = \frac{1}{n} \sum_{i = 1}^n \Tilde{P}_n(\Phi_1(\Bar{P}, g_i^{\mathbf{opt}}) \mid L_i).
\end{align*}
However,
\begin{align}
   & \Tilde{P}_n(\Phi_1(\Bar{P}, g_i^{\mathbf{opt}}) \mid L_i) \nonumber \\  & = \Tilde{P}_n\left[\frac{I(A_i = g_i^{\mathbf{opt}}(L_i))}{\Bar{q}_0(g_i^{\mathbf{opt}}(L_i) \mid L_i)}(Y_i - \mathbb{E}_{\Bar{P}}(Y_i \mid g_i^{\mathbf{opt}}(L_i), L_i)) \ \Big| \  L_i\right] \nonumber\\
    &= \Tilde{P}_n\left[\frac{I(A_i = g_i^{\mathbf{opt}}(L_i))}{\Bar{q}_0(g_i^{\mathbf{opt}}(L_i) \mid L_i)}Y_i \ \Big| \  L_i\right] - \Tilde{P}_n\left[\frac{I(A_i = g_i^{\mathbf{opt}}(L_i))}{\Bar{q}_0(g_i^{\mathbf{opt}}(L_i) \mid L_i)}\mathbb{E}_{\Bar{P}}(Y_i \mid g_i^{\mathbf{opt}}(L_i), L_i) \ \Big| \  L_i\right]. \label{eq: separation of conditional expectation}
\end{align}
The first equality comes from the definition of $\Phi_1$, and the second equality follows from linearity of the expectation.

We focus on the first term of Equation \eqref{eq: separation of conditional expectation}.
\begin{align*}
    \Tilde{P}_n\left[\frac{I(A_i = g_i^{\mathbf{opt}}(L_i))}{\Bar{q}_0(g_i^{\mathbf{opt}}(L_i) \mid L_i)}Y_i \ \Big| \ L_i\right] &= \frac{1}{\Bar{q}_0(g_i^{\mathbf{opt}}(L_i) \mid L_i)} \Tilde{P}_n\left[I(A_i = g_i^{\mathbf{opt}}(L_i))Y_i \ \Big| \ L_i \right].
\end{align*}
The equality follows by linearity of the expectation as $\Bar{q}_0(g_i^{\mathbf{opt}}(L_i) \mid L_i)$ is a constant conditional on $L_i$.

Then, 
\begin{align*}
    \Tilde{P}_n\left[\frac{I(A_i = g_i^{\mathbf{opt}}(L_i))}{\Bar{q}_0(g_i^{\mathbf{opt}}(L_i) \mid L_i)}Y_i \ \Big| \ L_i\right] &= \frac{1}{\Bar{q}_0(g_i^{\mathbf{opt}}(L_i) \mid L_i)} \Tilde{P}_n\left[I(A_i = g_i^{\mathbf{opt}}(L_i))Y_i \ \Big| \ L_i\right] \\
    &= \frac{1}{\Bar{q}_0(g_i^{\mathbf{opt}}(L_i) \mid L_i)} \mathbb{E}_{\Tilde{P}_n}(Y_i \mid g_i^{\mathbf{opt}}(L_i), L_i) \Tilde{q}_n(g_i^{\mathbf{opt}}(L_i)\mid L_i).
\end{align*}
The first equality is the same as the previous equation, and the second inequality follows as 
\begin{align*}
    \Tilde{P}_n[I(A_i = g_i^{\mathbf{opt}}(L_i))Y_i \mid L_i] &= \Tilde{P}_n[\Tilde{P}_n[I(A_i = g_i^{\mathbf{opt}}(L_i))Y_i \mid A_i, L_i]\mid L_i]\\
    &= \Tilde{P}_n[I(A_i = g_i^{\mathbf{opt}}(L_i))\Tilde{P}_n[Y_i \mid A_i, L_i]\mid L_i]\\
    &= \Tilde{P}_n[I(A_i = g_i^{\mathbf{opt}}(L_i))\Tilde{P}_n[Y_i \mid g_i^{\mathbf{opt}}(L_i), L_i]\mid L_i]\\
    &= \Tilde{P}_n[I(A_i = g_i^{\mathbf{opt}}(L_i))\mid L_i]\Tilde{P}_n[Y_i \mid g_i^{\mathbf{opt}}(L_i), L_i]\\
    &= \Tilde{q}_n(g_i^{\mathbf{opt}}(L_i) \mid L_i)\Tilde{P}_n[Y_i \mid g_i^{\mathbf{opt}}(L_i), L_i].
\end{align*}

The first equality follows from the law of total expectation; the second equality follows as $I(A_i = g_i^{\mathbf{opt}}(L_i))$ is a constant conditional on $A_i$ and $L_i$; the third equality follows as $I(A_i = g_i^{\mathbf{opt}}(L_i))\Tilde{P}_n[Y_i \mid A_i, L_i] = I(A_i = g_i^{\mathbf{opt}}(L_i))\Tilde{P}_n[Y_i \mid g_i^{\mathbf{opt}}(L_i), L_i]$ almost surely; the fourth equality follows as $\Tilde{P}_n[Y_i \mid g_i^{\mathbf{opt}}(L_i), L_i]$ is a constant; the last equality follows by the definition of $\Tilde{q}_n$.

However,
\begin{align*}
    &\frac{1}{n} \sum_{i = 1}^n \Tilde{P}_n\left[\frac{I(A_i = g_i^{\mathbf{opt}}(L_i))}{\Bar{q}_0(g_i^{\mathbf{opt}}(L_i) \mid L_i)}Y_i \ \Big | \ L_i\right]\\
    &= \frac{1}{n} \sum_{i = 1}^n \frac{\Tilde{q}_n(g_i^{\mathbf{opt}}(L_i)\mid L_i)}{\Bar{q}_0(g_i^{\mathbf{opt}}(L_i) \mid L_i)} \mathbb{E}_{\Tilde{P}_n}(Y_i \mid g_i^{\mathbf{opt}}(L_i), L_i) \\
    &= \frac{1}{n} \sum_{i = 1}^n \left(\frac{\Tilde{q}_n(g_i^{\mathbf{opt}}(L_i)\mid L_i)}{\Bar{q}_0(g_i^{\mathbf{opt}}(L_i) \mid L_i)}  - 1\right)\mathbb{E}_{\Tilde{P}_n}(Y_i \mid g_i^{\mathbf{opt}}(L_i), L_i)\\
    &+ \frac{1}{n} \sum_{i = 1}^n (\mathbb{E}_{\Tilde{P}_n}(Y_i \mid g_i^{\mathbf{opt}}(L_i), L_i) - \mathbb{E}_{\Bar{P}}(Y_i \mid g_i^{\mathbf{opt}}(L_i), L_i))\\
    &+ \frac{1}{n} \sum_{i = 1}^n \mathbb{E}_{\Bar{P}}(Y_i \mid g_i^{\mathbf{opt}}(L_i), L_i)\\
    &= \frac{1}{n} \sum_{i = 1}^n \left(\frac{\Tilde{q}_n(g_i^{\mathbf{opt}}(L_i)\mid L_i) - \Bar{q}_0(g_i^{\mathbf{opt}}(L_i) \mid L_i)}{\Bar{q}_0(g_i^{\mathbf{opt}}(L_i) \mid L_i)}\right)\mathbb{E}_{\Tilde{P}_n}(Y_i \mid g_i^{\mathbf{opt}}(L_i), L_i)\\
    &+ \frac{1}{n} \sum_{i = 1}^n (\mathbb{E}_{\Tilde{P}_n}(Y_i \mid g_i^{\mathbf{opt}}(L_i), L_i) - \mathbb{E}_{\Bar{P}}(Y_i \mid g_i^{\mathbf{opt}}(L_i), L_i))\\
    &+ \frac{1}{n} \sum_{i = 1}^n \mathbb{E}_{\Bar{P}}(Y_i \mid g_i^{\mathbf{opt}}(L_i), L_i).\\
\end{align*}
The first equality follows from the previous equation; the second equality follows from adding and subtracting $n^{-1} \sum_{i = 1}^n (\mathbb{E}_{\Tilde{P}_n}(Y_i \mid g_i^{\mathbf{opt}}(L_i), L_i) - \mathbb{E}_{\Bar{P}}(Y_i \mid g_i^{\mathbf{opt}}(L_i), L_i))$; the third equality follows from replacing $1$ by $\Bar{q}_0(g_i^{\mathbf{opt}}(L_i) \mid L_i)/\Bar{q}_0(g_i^{\mathbf{opt}}(L_i) \mid L_i)$.

Similarly,
\begin{align*}
    &\frac{1}{n} \sum_{i = 1}^n \Tilde{P}_n\left[\frac{I(A_i = g_i^{\mathbf{opt}}(L_i))}{\Bar{q}_0(g_i^{\mathbf{opt}}(L_i) \mid L_i)}\mathbb{E}_{\Bar{P}}(Y_i \mid g_i^{\mathbf{opt}}(L_i), L_i) \ \Big | \ L_i\right]\\
    &= \frac{1}{n} \sum_{i = 1}^n \left(\frac{\Tilde{q}_n(g_i^{\mathbf{opt}}(L_i)\mid L_i) - \Bar{q}_0(g_i^{\mathbf{opt}}(L_i) \mid L_i)}{\Bar{q}_0(g_i^{\mathbf{opt}}(L_i) \mid L_i)}\right)\mathbb{E}_{\Bar{P}}(Y_i \mid  g_i^{\mathbf{opt}}(L_i), L_i)\\
    &+ \frac{1}{n} \sum_{i = 1}^n \mathbb{E}_{\Bar{P}}(Y_i \mid g_i^{\mathbf{opt}}(L_i), L_i).
\end{align*}
This equality follows as $\mathbb{E}_{\Bar{P}}(Y_i \mid g_i^{\mathbf{opt}}(L_i), L_i)$ and $\Bar{q}_0(g_i^{\mathbf{opt}}(L_i) \mid L_i)$ are constants, by definition of $\Tilde{q}_n$, and by adding and subtracting $n^{-1} \sum_{i = 1}^n \mathbb{E}_{\Bar{P}}(Y_i \mid g_i^{\mathbf{opt}}(L_i), L_i)$.

Hence, 
\begin{align*}
   & \Tilde{P}_n \Phi_1(\Bar{P}, g_i^{\mathbf{opt}}) \\
    &= \frac{1}{n} \sum_{i = 1}^n \frac{\Tilde{q}_n(g_i^{\mathbf{opt}}(L_i)\mid L_i) - \Bar{q}_0(g_i^{\mathbf{opt}}(L_i) \mid L_i)}{\Bar{q}_0(g_i^{\mathbf{opt}}(L_i) \mid L_i)}(\mathbb{E}_{\Tilde{P}_n}(Y_i \mid g_i^{\mathbf{opt}}(L_i), L_i) - \mathbb{E}_{\Bar{P}}(Y_i \mid g_i^{\mathbf{opt}}(L_i), L_i))\\
    &+ \frac{1}{n} \sum_{i = 1}^n (\mathbb{E}_{\Tilde{P}_n}(Y_i \mid g_i^{\mathbf{opt}}(L_i), L_i) - \mathbb{E}_{\Bar{P}}(Y_i \mid g_i^{\mathbf{opt}}(L_i), L_i)).
\end{align*}
This equality follows from our previous derivations.

Using similar derivations based on the law of total expectation, we can replace empirical averages and expectations under $\Tilde{P}_n$ with expectations with respect to $\Bar{P}$, and replace $\Tilde{q}_n$ with $\Bar{q}_0$. Hence, we find $\Bar{P} \Phi_1(\Bar{P}, g_i^{\mathbf{opt}}) = 0$. Thus,
\begin{align*}
    &(\Tilde{P}_n - \Bar{P})\Phi_0(\Bar{P}, g_i^{\mathbf{opt}}, \eta)\\
    &= (\Tilde{P}_n - \Bar{P})\Phi_1(\Bar{P}, g_i^{\mathbf{opt}}) + (\Tilde{P}_n - \Bar{P})\Phi_2(g_i^{\mathbf{opt}}, \eta) - (\Tilde{P}_n - \Bar{P})[\eta(g_i^{\mathbf{opt}} - \kappa^*)]\\
    &= (\Tilde{P}_n - \Bar{P})(\Phi_2(g_i^{\mathbf{opt}}, \eta) - \eta(g_i^{\mathbf{opt}}(L_i) - \kappa^*))\\
    &+ \frac{1}{n} \sum_{i = 1}^n \left(\frac{\Tilde{q}_n(g_i^{\mathbf{opt}}(L_i)\mid L_i) - \Bar{q}_0(g_i^{\mathbf{opt}}(L_i) \mid L_i)}{\Bar{q}_0(g_i^{\mathbf{opt}}(L_i) \mid L_i)}\right)(\mathbb{E}_{\Tilde{P}_n}(Y_i \mid g_i^{\mathbf{opt}}(L_i), L_i) - \mathbb{E}_{\Bar{P}}(Y_i \mid g_i^{\mathbf{opt}}(L_i), L_i))\\
    &+ \frac{1}{n} \sum_{i = 1}^n (\mathbb{E}_{\Tilde{P}_n}(Y_i \mid g_i^{\mathbf{opt}}(L_i), L_i) - \mathbb{E}_{\Bar{P}}(Y_i \mid g_i^{\mathbf{opt}}(L_i), L_i))
\end{align*}
The first equality follows from definitions of $\Phi_0$, $\Phi_1$, and $\Phi_2$; the second equality follows from our previous derivations.

However, 
\begin{align*}
    \sqrt{n}(\Tilde{P}_n - \Bar{P})(\Phi_2(g_i^{\mathbf{opt}}, \eta)(L_i) - \eta(g_i^{\mathbf{opt}}(L_i) - \kappa^*))
\end{align*}
is asymptotically normal by the central limit theorem, as the $L_i$ are i.i.d and $\Phi_2(g_i^{\mathbf{opt}}, \eta)$ is only a function of $L_i$. Furthermore, 
\begin{align*}
    &\sqrt{n}\left[\frac{1}{n} \sum_{i = 1}^n \bigg(\frac{\Tilde{q}_n(g_i^{\mathbf{opt}}(L_i)\mid L_i) - \Bar{q}_0(g_i^{\mathbf{opt}}(L_i) \mid L_i)}{\Bar{q}_0(g_i^{\mathbf{opt}}(L_i) \mid L_i)}\right)(\mathbb{E}_{\Tilde{P}_n}(Y_i \mid g_i^{\mathbf{opt}}(L_i), L_i) - \mathbb{E}_{\Bar{P}}(Y_i \mid g_i^{\mathbf{opt}}(L_i), L_i))\\
    &+ \frac{1}{n} \sum_{i = 1}^n (\mathbb{E}_{\Tilde{P}_n}(Y_i \mid g_i^{\mathbf{opt}}(L_i), L_i) - \mathbb{E}_{\Bar{P}}(Y_i \mid g_i^{\mathbf{opt}}(L_i), L_i)) \bigg] = o_{\Bar{P}}(n^{-1/2}),
\end{align*}
if $\mathbb{E}_{\Tilde{P}_n}(Y_i \mid g_i^{\mathbf{opt}}(L_i), L_i)$ verifies conditions analogous to Assumptions 4 and 5 in \citetSM{luedtke2016optimal}, that is, Condition $F7^*$, and $\Tilde{q}_n$ is bounded. 
\end{proof}
\begin{remark*}
    The proof of Proposition \ref{prop: asympeffG0} is a direct extension of this proof for a $g_i$ defined by a known $\Lambda$ function instead of $\Delta_0$.
\end{remark*}

\subsubsection{Sub-cluster positivity}

\begin{lemma*}[Re-statment of Lemma \ref{lemma: subclus} from Appendix \ref{appsec: estim}]
    Consider an asymptotic law $\mathbb{P}_0^F$ following conditions $D$ and sub-cluster positivity condition $C2$. Then $\tilde{q}_n$ is $L^2(\mathbb{P}_0)$ consistent for $\overline{q}_0$, i.e. $\mathbb{E}[\lVert \tilde{q}_n - \overline{q}_0\rVert_2^2] = o(1).$
\end{lemma*}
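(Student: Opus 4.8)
The plan is to exploit the sub-cluster structure imposed by condition $C2$ to represent both the numerator and denominator of $\tilde{q}_n(a\mid l) = \mathbb{B}_n(a,l)/\mathbb{L}_n(l)$ as averages of bounded, weakly dependent, asymptotically identically distributed random variables over a number of sub-clusters that grows proportionally to $n$. Since $\lVert \tilde{q}_n - \overline{q}_0 \rVert_2^2$ is a finite sum over the finitely many values $(a,l)$, it suffices to show $\mathbb{E}_{\mathbb{P}_0}[(\tilde{q}_n(a\mid l) - \overline{q}_0(a\mid l))^2] = o(1)$ for each fixed $(a,l)$ with $Q_L(l)>0$; values with $Q_L(l)=0$ contribute negligibly by Lemma \ref{lemma: ciid}.

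First I would decompose the cluster counts by sub-cluster. Writing $B_{n,c}(a,l)$ and $L_{n,c}(l)$ for the numbers of individuals in sub-cluster $c$ with $\{A_i=a,L_i=l\}$ and with $L_i=l$, respectively, we have $\mathbb{B}_n(a,l)=\sum_{c=1}^{m_n}B_{n,c}(a,l)$ and $\mathbb{L}_n(l)=\sum_{c=1}^{m_n}L_{n,c}(l)$, with each summand bounded by $m$ under $C2.1$. Dividing by $m_n$, I would write $\tilde{q}_n(a\mid l)$ as the ratio $\bar B_n/\bar L_n$ of the two sub-cluster averages. Under $C2.4$--$C2.5$ the regime within a sub-cluster of size $w$ is the common element $[\mathbf{G}_m]_{w,\kappa}$ with $\kappa\sim P^{\kappa}_{w}$, while under condition $D$ and Lemma \ref{lemma: ciid} the covariates within each sub-cluster are i.i.d.\ draws from $Q_L$ that are, by $C2.6$, unaffected by conditioning on the partition $\mathbf{W}_n$ and hence independent across sub-clusters given $\mathbf{W}_n$; the randomizers and resource allocations are likewise mutually independent across sub-clusters by the definition of $\Pi^F_{n,m_n}(\mathbf{G}_m)$ together with $C2.5$. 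Consequently each $B_{n,c}(a,l)$ is a function of $\mathbb{W}_n(c)$ and a block of randomization that is independent across $c$ given $\mathbf{W}_n$, with $\mathbb{E}[B_{n,c}(a,l)\mid \mathbf{W}_n]=g(\mathbb{W}_n(c))$ depending only on the size of sub-cluster $c$. Using $C2.2$, the sizes $\mathbb{W}_n(c)$ converge to the law of $V^{\dagger}$ for all but an $o(1)$ fraction of sub-clusters, so the per-sub-cluster means converge to limits $\beta(a,l):=\lim\mathbb{E}[B_{n,c}(a,l)]$ and $\lambda(l):=\lim\mathbb{E}[L_{n,c}(l)]=Q_L(l)\,\mathbb{E}[V^{\dagger}]$; the $(m_n-|\mathcal{C}_n|)$ ``bad'' sub-clusters contribute at most $O((m_n-|\mathcal{C}_n|)/m_n)=o(1)$ since counts are bounded by $m$. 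I would then set $\overline{q}_0(a\mid l):=\beta(a,l)/\lambda(l)$, which is well-defined and finite because $\mathbb{E}[V^{\dagger}]\geq 1$ (as $m_n<n$) and $Q_L(l)>0$.

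Next I would establish an $L^2$ weak law for each average. The bias of $\bar B_n$ is $o(1)$ by the convergence of the per-sub-cluster means just described. For its variance I would expand $m_n^{-2}\sum_{c,c'}\mathrm{Cov}(B_{n,c}(a,l),B_{n,c'}(a,l))$: the $m_n$ diagonal terms are bounded by $m^2$ and give an $O(m_n^{-1})$ contribution, while for $c\neq c'$ conditional independence given $\mathbf{W}_n$ yields $\mathrm{Cov}(B_{n,c},B_{n,c'})=\mathrm{Cov}(g(\mathbb{W}_n(c)),g(\mathbb{W}_n(c')))$, a covariance between bounded functions of $\mathbb{W}_n(c)$ and $\mathbb{W}_n(c')$. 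The standard covariance inequality for $\alpha$-mixing $\sigma$-algebras bounds each such term by $4m^2\,\alpha_{\mathbb{P}_n^F}(\mathbb{W}_n(c),\mathbb{W}_n(c'))$, so the off-diagonal contribution is at most $4m^2\sup_{c\neq c'}\alpha_{\mathbb{P}_n^F}(\mathbb{W}_n(c),\mathbb{W}_n(c'))=o(1)$ by $C2.3$. Hence $\bar B_n\to\beta(a,l)$ and, by the identical argument, $\bar L_n\to\lambda(l)$ in $L^2$.

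Finally, I would pass from the two $L^2$-convergent averages to their ratio: since $\bar L_n\to\lambda(l)>0$ and both averages lie in $[0,m]$, writing $\bar B_n/\bar L_n-\beta/\lambda$ over the common denominator and using that $\bar L_n$ is bounded away from $0$ with probability approaching one gives $\mathbb{E}[(\tilde{q}_n(a\mid l)-\overline{q}_0(a\mid l))^2]=o(1)$, and summing over the finitely many $(a,l)$ completes the proof. The main obstacle is the variance control: the reduction to conditional independence given $\mathbf{W}_n$ is precisely what makes the mixing condition $C2.3$ usable, and one must read $C2.3$ as a uniform-in-pairs statement so that the $O(m_n^2)$ off-diagonal covariances, each individually $o(1)$, aggregate after normalization to $o(1)$ rather than merely vanishing pairwise.
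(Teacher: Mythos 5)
Your proof is correct, but it takes a genuinely different route from the paper's. The paper first establishes the same sub-cluster representation you use (its Lemma \ref{lemma: sub-clusterpos2}, whose properties (1)--(3) correspond to your conditional-independence-given-$\mathbf{W}_n$ reduction), but then, instead of computing second moments directly, it pads the size process $\{\mathbb{W}_n(c)\}$ beyond $m_n$ with degenerate variables and fresh independent copies so as to manufacture a stationary, $n$-dependent sequence, invokes Billingsley's central limit theorem for stationary $\alpha$-mixing processes (its Lemma \ref{lemma: sub-clusterpos central limit}) to obtain a normal approximation with mean $\mu_n$ and variance $\sigma_n^2 = o(n)$, and finally deduces $L^2$ consistency from a bias--variance decomposition around $\mu_n$. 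You replace the CLT with a direct weak-law computation: bias via $C2.2$ and bounded convergence, and variance by splitting into the $m_n$ diagonal terms (contributing $O(m_n^{-1})$ after normalization, with $m_n \geq n/m \to \infty$ guaranteed by $C2.1$) and the off-diagonal terms, which conditional independence given $\mathbf{W}_n$ reduces to covariances of bounded functions of the pairs $(\mathbb{W}_n(c), \mathbb{W}_n(c'))$, each bounded by $4m^2\,\alpha_{\mathbb{P}_n^F}(\mathbb{W}_n(c),\mathbb{W}_n(c'))$ via the standard mixing covariance inequality and then killed by $C2.3$. Your route buys three things: it is more elementary, requiring neither the padding construction nor the $\alpha_k = O(k^{-5})$ mixing rate the CLT needs, only the pairwise coefficients actually posited in $C2.3$ (read uniformly over pairs, the same reading the paper's own variance calculation makes when it passes to a supremum); it treats the ratio $\mathbb{B}_n(a,l)/\mathbb{L}_n(l)$ honestly by running the argument separately for numerator and denominator and then passing to the quotient, whereas the paper absorbs the factor $1/\mathbb{L}_n(l)$ into its per-sub-cluster function $f_n'(\mathbb{W}_n(c), \epsilon_{\mathbb{w}_n,c})$, which strictly speaking depends on the whole cluster rather than only on its displayed arguments; and it yields well-definedness of $\overline{q}_0$ as a by-product, since your limit $\beta(a,l)/\lambda(l)$ identifies the probability limit that defines $\overline{q}_0$. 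What the paper's CLT route buys in exchange is asymptotic normality of the sub-cluster averages, which is stronger than this lemma requires but is reused to support the regularity and asymptotic linearity claims made under $C2$ elsewhere in Appendix \ref{appsec: estim}. Your handling of the edge cases --- levels $l$ with $Q_L(l)=0$, and the vanishing-probability event that $\bar{L}_n$ is near zero, both absorbed using $\tilde{q}_n \in [0,1]$ --- is if anything more careful than the paper's.
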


\begin{proof}
Consider the following equations.
\begin{align}
    &\mathbb{E}_{\mathbb{P}_0^F}[(\Tilde{q}_n(a \mid l) - \Bar{q}_0(a \mid l))^2] = o(1) \label{eq: L2-consistency},\\
    & \Tilde{q}_n(a\mid l) - \Bar{q}_0(a \mid l) = o_{\mathbb{P}_0}(1). \label{eq: convergence in prob}
\end{align}
 We aim to prove \eqref{eq: L2-consistency}, which implies \eqref{eq: convergence in prob}, the limit condition in the Strong Positivity Conditions $C1$ and $C1^*$, which we consider in Appendix \ref{appsec: counterexamp_cons}. For $L^2$-consistency, we need \eqref{eq: L2-consistency}.

We prove Equation \eqref{eq: L2-consistency} in several steps. We show that $\Tilde{q}_n(a | l)$ can be expressed as a sum of $m_n$ functions of sub-cluster variables in Lemma \ref{lemma: sub-clusterpos2}, that is, the number of terms in the sum grows proportionately to $n$. However, we cannot apply the usual central limit theorem, since sub-cluster sizes $\mathbb{W}_n(c)$ are not i.i.d. Therefore, we show $\Tilde{q}_n(a|l)$ can be expressed as a sum of functions of $\mathbb{W}_n(c)$ and independent error terms, and can be approximated with a normal distribution in Lemma \ref{lemma: sub-clusterpos central limit}. Finally, we prove Equation \eqref{eq: L2-consistency} holds using this approximation. 

Lemma \ref{lemma: sub-clusterpos central limit} is used to approximate the mean and variance of $\Tilde{q}_n(a|l)$ or any general sum of functions of the $\mathbb{W}_n(c)$ and independent error terms. When Conditions $D$ and $C2$ hold, these approximations will be sufficient as $n$ grows to prove that Equation \eqref{eq: L2-consistency} holds as we show after Lemma \ref{lemma: sub-clusterpos2}. 

For a given $n$, under sub-cluster positivity $C2$, we have that $\mathbb{W}_n(c)$ are only defined for $c \in \{1,\dots, m_n<n\}$. We want to use the central limit theorem of \citepSM[Theorem 27.4]{billingsley_probability_1995} to find estimates of the first two moments of the sum of functions of the sub-cluster sizes $\mathbb{W}_n(c)$. The CLT of \citepSM[Theorem 27.4]{billingsley_probability_1995} requires a stochastic process that is defined for sub-clusters $c$ beyond the original $m_n$. Hence, we define for each $n$, $\mathbb{W}_n(c)$ for $c\in \{m_n+1,\dots, 2n\}$. In particular, we let $\mathbb{W}_n(m_n + 1), \ldots, \mathbb{W}_n(n)$ be defined as random variables with degenerate distributions at $0$, and let $\mathbb{W}_n(n + 1), \ldots, \mathbb{W}_n(2n)$ be another independent set of sub-cluster sizes of a cluster of size $n$ with the same joint distribution as $\mathbb{W}_n(1), \ldots, \mathbb{W}_n(n)$. We define $\mathbb{W}_{n}(jn + 1),\ldots, \mathbb{W}_{n}(\{j+1\}n)$ analogously for $j=2,3,4,\dots$. This construction is used only to give us approximations of the mean and variance of a sum of functions of the sub-cluster sizes, but we will need further argumentation to show that Equation \eqref{eq: L2-consistency} holds.

\begin{lemma} \label{lemma: sub-clusterpos central limit}
Consider an asymptotic law $\mathbb{P}_0^F$ following conditions D and sub-cluster positivity condition C2. Suppose that we have a function $f_n'$ of $\mathbb{W}_n(c)$ and the $\epsilon_{\mathbb{W}_n,c}$, where the elements of $\{{\epsilon}_{\mathbb{w}_n, 1}, \dots, {\epsilon}_{\mathbb{w}_n, m_n}\}$ are mutually independent and jointly independent of $\mathbb{W}_n$. Then, for any $n^*$ there exist $\mu_{n^*}\in \mathbb{R}$ and $\sigma_{n^*} > 0$ such that

\begin{align}
    \sqrt{n'}\bigg(\sum\limits_{c = 1}^{n'} f_{n^*}'(\mathbb{W}_{n^*}(c), \epsilon_{\mathbb{W}_{n^*},c}) - \mu_{n^*} \bigg) \underset{n' \to \infty}{\to} \mathcal{N}(0, \sigma_{n^*}^2). \label{eq: intermediate asymptotic normality}
\end{align}
    
\end{lemma}
\begin{proof}
    We want to apply a central limit theorem for dependent variables \citepSM[Theorem 27.4]{billingsley_probability_1995}. This requires convergence of a different alpha mixing coefficient, namely the strong alpha mixing coefficient. Let $X_1, X_2, \ldots$ be some stationary stochastic process. Let $\alpha_{\mathbb{P}_0}$ be the limit of $\alpha_{\mathbb{P}_n}$. We define the strong alpha $k$-mixing coefficient of the $X_i$ as
\begin{align*}
    &\alpha_k := \alpha_{\mathbb{P}_0}((X_1, \ldots, X_{n'}), (X_{n' + k}, \ldots))= \sup_{\substack{A \in \sigma(X_1, \ldots, X_{n'}),\\ B \in \sigma(X_{n' + k}, \ldots)}} \mid \mathbb{P}_0^F(A \cap B) - \mathbb{P}_0^F(A)\mathbb{P}_0^F(B)\mid.
\end{align*}
The strong alpha $k$-mixing coefficient $\alpha_k$ is the alpha mixing coefficient of random variables $(X_1, \ldots, X_{n'})$ and $(X_{n'+k}, \ldots)$. This condition differs from Condition $C2.3$, which is not required for this result.

To apply the \citetSM{billingsley_probability_1995} central limit theorem, we need the stochastic process $\{X_i\}_{i = 1}^{\infty}$ to be stationary and $\alpha_k = O({k}^{-5})$. In particular, this holds if the stochastic process is $m$-dependent for any $m \in \mathbb{N}$, that is, $(X_1, \ldots, X_{n'})$ is independent of $(X_{n' + k}, \ldots, X_{n' + k + l})$ for any $l \in \mathbb{N}$ whenever $k \geq m$. We want this condition to hold for the stochastic process given by $\{f_n'(\mathbb{W}_n(c), \epsilon_{\mathbb{W}_n,c})\}_{c = 1}^{\infty}$. This sequence is $n$-dependent by construction, and thus verifies $\alpha_{k} = O({k}^{-5})$. Furthermore, the sequence is stationary by definition of the sub-clusters beyond $n$; a negligible proportion can be allowed to not be identically distributed, which we guarantee by sub-cluster positivity conditions $C2.1$ for the degenerate $\frac{n - m_n}{n}$ proportion of sub-clusters and $C2.2$ for the $m_n$ first sub-clusters.

\end{proof}

\begin{lemma} \label{lemma: sub-clusterpos2}
    Consider an asymptotic law $\mathbb{P}_0$ following conditions $D$ and sub-cluster positivity condition $C2$. 
    For each $n$, there exists a transformation $\boldsymbol{\epsilon}_{\mathbb{w}_n} \equiv \{{\epsilon}_{\mathbb{w}_n, 1}, \dots, {\epsilon}_{\mathbb{w}_n, m_n}\}$ of $\mathbf{O}_n$ and a fixed function $f_n$ such that the following properties hold:
        \begin{itemize}
            \item [(1)] For each $c\in \{1,\dots, m_n\}$, ${\epsilon}_{\mathbb{w}_n, c} \independent \mathbb{W}_n(c)$; 
            \item [(2)] The elements of $\boldsymbol{\epsilon}_{\mathbb{w}_n}$ are mutually independent; and
            \item [(3)] $\tilde{q}_n(a \mid l) = \frac{1}{m_n} \sum\limits_{c=1}^{m_n} \frac{m_n}{\mathbb{L}_n(l)}f_n(\mathbb{W}_n(c), {\epsilon}_{\mathbb{w}_n, c})$.
        \end{itemize}
\end{lemma}

\begin{proof}

Let $\mathbb{B}_{n,c}(a, l)$ denote the random variable for a cluster of size $n$ indicating the number of individuals in sub-cluster $c$ with treatment level $a$ and covariate level $l$, and let $\mathbf{L}_{n,c}$ denote the matrix of covariates for individuals in sub-cluster $c$. 

By definition then we have that: $$\tilde{q}_n(a \mid l) = \frac{1}{m_n}\sum\limits_{c=1}^{m_n}\frac{m_n}{\mathbb{L}_n(l)}\mathbb{B}_{n,c}(a, l).$$

We have by sub-cluster positivity condition $C2.4$ that $\mathbb{B}_{n}\equiv \mathbb{B}^{G_n+}_{n}$ for a sub-cluster regime $G_n\in \Pi^F_{n, m_n}(\mathbf{G}_m)$. We thus also have that $\mathbb{B}_{n,c}(a, l) = \mathbb{B}^{G_n+}_{n,c}(a, l)$. Note that by condition $C2.1$, sub-clusters have size bounded by $m$. A defining feature of sub-cluster regimes is that treatment in each cluster is assigned according to a regime in the $m\times m$ matrix of regimes $\mathbf{G}_m$ - the regime implemented in sub-cluster $c$ corresponds to the element in $\mathbf{G}_m$ in the row indicated by cluster size, $\mathbb{W}_n(c)$, and the column indicated by the number of treatment units available to that sub-cluster, $\kappa_{n, c}$. Therefore,  $\mathbb{B}^{G_n+}_{n,c}(a, l) \equiv \mathbb{B}^{G_{\mathbb{W}_n(c)}^*+}_{\mathbb{W}_n(c)}(a, l)$, with $G_{\mathbb{W}_n(c)}^* \equiv [\mathbf{G}_m]_{\mathbb{W}_n(c),\kappa_{n,c}}$. By definition, then we have that:

\begin{align}
    \mathbb{B}_{n,c}(a, l) = \sum\limits_{i=1}^{\mathbb{W}_n(c)}I\Bigg([\mathbf{G}_m]_{\mathbb{W}_n(c),\kappa_{n,c}, i}(\mathbf{L}_{n, c})=a, L_i=l\Bigg). \label{eq: sub-clustercounts}
\end{align}

First we prove property $(1)$ of Lemma \ref{lemma: sub-clusterpos2}. Note that $\mathbf{L}_{n, c}$ is a function of $\epsilon_{\mathbf{L}_n} \in \boldsymbol{\epsilon}_n$, and that $\mathbb{W}_n(c)$ is a function of $\epsilon_{\mathbf{W}_n}$. Take $\epsilon_{\mathbb{w}_n, c} \subset \{ \boldsymbol{\epsilon}_n, \kappa_{n, c}\}$ to be the subset of error terms that generate $\kappa_{n, c}$, $\mathbf{L}_{n, c}$, and $L_i$, which with $\mathbb{W}_n(c)$ comprise the random arguments to the function generating $\mathbb{B}_{n,c}(a, l)$, the right hand side of \eqref{eq: sub-clustercounts}. Due to sub-cluster positivity condition $C2.5$ and  $C2.6$ then $\mathbb{W}_n(c) \independent \epsilon_{\mathbb{w}_n, c}$.  Thus, we have property $(1)$ of Lemma \ref{lemma: sub-clusterpos2}.

Second, we prove property $(2)$ of Lemma \ref{lemma: sub-clusterpos2}. Note that each $L_i$ in the right hand side of \eqref{eq: sub-clustercounts} is an element of $\mathbf{L}_{n,c}\equiv \{f_L(\epsilon_{L_i}): W_i=c\}$ and that $\{\mathbf{L}_{n,1}, \dots, \mathbf{L}_{n,m_n}\}$ form a partition of $\mathbf{L}_n$. Let $\epsilon_{\mathbf{L}_{n, c}} \equiv \{\epsilon_{L_i}: W_i=c\}$ and so  $\epsilon_{\mathbb{w}_n, c} \equiv \Big\{ \epsilon_{\mathbf{L}_{n, c}}, \kappa_{n, c}\Big\}.$ Due to the mutual independence of the elements of $\epsilon_{\mathbf{L}_n}$, via condition $A1$ (conditional noninterference) for all $n$ (by condition $D$), and sub-cluster positivity condition $C2.5$, then the elements of $\{\epsilon_{\mathbb{w}_n, 1}, \dots, \epsilon_{\mathbb{w}_n, m_n}\}$ are mutually independent. Thus we have property $(2)$ of Lemma \ref{lemma: sub-clusterpos2}. 

Finally, we prove property $(3)$ of Lemma \ref{lemma: sub-clusterpos2}, where it suffices to show that $\mathbb{B}_{n,c}(a, l)\equiv f_n(\mathbb{W}_n(c), {\epsilon}_{\mathbb{w}_n, c})$. Note that by condition $A2$ (structural invariance) for all $n$ (by condition $D$), the elements of $\mathbf{L}_{n,c}$ are generated from $\epsilon_{\mathbf{L}_{n, c}}$ via the same structural equations $f_L$ that are invariant in $c$. Note also that $\mathbf{A}_{n, c}$ is equivalent to $\mathbf{A}^{G_{\mathbb{W}_n(c)}^*+}_{\mathbb{W}_n(c)}$ which is generated by the function $[\mathbf{G}_m]_{\cdot,\cdot}(\cdot)$ taking arguments, $\mathbb{W}_n(c)$, $\kappa_{n, c}$ and $\mathbf{L}_{n, c}\equiv \{f_L(\epsilon_{L_i}): W_i=c\}$. Composing $[\mathbf{G}_m]_{\cdot,\cdot}(\cdot)$ and the functions $f_L$ generating $\mathbf{L}_{n, c}$, we have a function of $\epsilon_{\mathbb{w}_n, c}$ and $\mathbb{W}_n(c)$ that is invariant in $c$. Since $\mathbb{B}_{n,c}(a, l)$ is a function of $\{\mathbf{A}_{n,c}, \mathbf{L}_{n, c}\}$, and since $\{\mathbf{A}_{n,c}, \mathbf{L}_{n, c}\}$ are each at most $c$-invariant functions of $\epsilon_{\mathbb{w}_n, c}$ and $\mathbb{W}_n(c)$, then we have that $\mathbb{B}_{n,c}(a, l)$ is a function of $\epsilon_{\mathbb{w}_n, c}$ and $\mathbb{W}_n(c)$ that is invariant in $c$, which we may denote by $f_n$. Thus we have property $(3)$ of Lemma \ref{lemma: sub-clusterpos2}. Thus concludes the proof of Lemma \ref{lemma: sub-clusterpos2}.
    
\end{proof}

Lemma \ref{lemma: sub-clusterpos2} shows that
\begin{align*}
    \Tilde{q}_n(a \mid l) = \frac{1}{m_n} \sum\limits_{c = 1}^{m_n} \frac{m_n}{\mathbb{L}_n(l)} f_n(\mathbb{W}_n(c), \epsilon_{\mathbb{W}_n,c}).
\end{align*}
Moreover, by dividing and multiplying by $n$,
\begin{align*}
    \Tilde{q}_n(a \mid l) &= \frac{1}{n} \sum\limits_{c = 1}^{m_n} \frac{n}{\mathbb{L}_n(l)} f_n(\mathbb{W}_n(c), \epsilon_{\mathbb{W}_n, c})\\
    &=  \sum\limits_{c = 1}^{m_n}  f_n'(\mathbb{W}_n(c), \epsilon_{\mathbb{W}_n, c})\\
    &=  \sum\limits_{c = 1}^{n}  f_n'(\mathbb{W}_n(c), \epsilon_{\mathbb{W}_n, c}),
\end{align*}
where $f'_n(\mathbb{W}_n(c), \epsilon_{\mathbb{W}_n, c}) := \frac{1}{\mathbb{L}_n(l)}f_n(\mathbb{W}_n(c), \epsilon_{\mathbb{W}_n, c}) = \frac{1}{\mathbb{L}_n(l)}\mathbb{B}_{n,c}(a,l) \leq 1$ for $c = 1, \ldots, m_n$ and $f'_n(\mathbb{W}_n(c), \epsilon_{\mathbb{W}_n, c}) := 0$ for $c = m_n + 1, \ldots, n$.

 Furthermore, by Lemma \ref{lemma: sub-clusterpos central limit}, for any $n^*$ there exist $\mu_{n^*}\in \mathbb{R}$ and $\sigma_{n^*} > 0$ such that

\begin{align}
    \sqrt{n'}\bigg(\sum\limits_{c = 1}^{n'} f'_{n^*}(\mathbb{W}_{n^*}(c), \epsilon_{\mathbb{W}_{n^*},c}) - \mu_{n^*} \bigg) \underset{n' \to \infty}{\to} \mathcal{N}(0, \sigma_{n^*}^2). 
\end{align}
In fact, assuming that the first sub-cluster is non-degenerate in the sense of Condition $C2.2$, we have that:
\begin{align}
    \sigma_{n^*}^2 &= \text{Var}_{\mathbb{P}_0}[f'_{n^*}(\mathbb{W}_{n^*}(1), \epsilon_{\mathbb{W}_{n^*},1})]\nonumber\\
    &+ 2 \sum\limits_{k = 1}^\infty \text{Cov}_{\mathbb{P}_0}[f'_{n^*}(\mathbb{W}_{n^*} (1), \epsilon_{\mathbb{W}_{n^*},1}), f'_{n^*}(\mathbb{W}_{n^*}(1 + k), \epsilon_{\mathbb{W}_{n^*},1 + k})] \nonumber\\
    &= \text{Var}_{\mathbb{P}_0}[f'_{n^*}(\mathbb{W}_{n^*}(1), \epsilon_{\mathbb{W}_{n^*},1})]\nonumber\\
    &+ 2 \sum\limits_{k = 1}^{n^* - 1} \text{Cov}_{\mathbb{P}_0}[f'_{n^*}(\mathbb{W}_{n^*} (1), \epsilon_{\mathbb{W}_{n^*},1}), f'_{n^*}(\mathbb{W}_{n^*}(1 + k), \epsilon_{\mathbb{W}_{n^*}, 1 + k})]\nonumber\\
    &= \text{Var}_{\mathbb{P}_0}[f'_{n^*}(\mathbb{W}_{n^*}(1), \epsilon_{\mathbb{W}_{n^*},1})]\nonumber\\
    &+ 2 \sum\limits_{k = 1}^{n^* - 1} \mathbb{E}_{\mathbb{P}_0}[f'_{n^*}(\mathbb{W}_{n^*} (1), \epsilon_{\mathbb{W}_{n^*}, 1}) f'_{n^*}(\mathbb{W}_{n^*}(1 + k), \epsilon_{\mathbb{W}_{n^*}, 1 + k})]\nonumber\\
    &- \mathbb{E}_{\mathbb{P}_0}[f'_{n^*}(\mathbb{W}_{n^*} (1), \epsilon_{\mathbb{W}_{n^*}, 1})]\mathbb{E}_{\mathbb{P}_0}[f'_{n^*}(\mathbb{W}_{n^*}(1 + k), \epsilon_{\mathbb{W}_{n^*}, 1 + k})] \nonumber\\
    &= \text{Var}_{\mathbb{P}_0}[f'_{n^*}(\mathbb{W}_{n^*}(1), \epsilon_{1})]+ 2 \sum\limits_{k = 1}^{n^* - 1} \sum\limits_{\substack{\mathbb{w}_{n^*}(j), \varepsilon_j,\\j \in \{1, 1 + k\}}} f'_{n^*}(\mathbb{w}_{n^*}(1), \varepsilon_1)f'_{n^*}(\mathbb{w}_{n^*}(1 + k), \varepsilon_{1 + k})\nonumber\\
    &\cdot(\mathbb{P}_0(\mathbb{W}_{n^*}(1) = \mathbb{w}_{n^*}(1), \epsilon_{\mathbb{W}_{n^*}, 1} = \varepsilon_1, \mathbb{W}_{n^*}(1 + k) = \mathbb{w}_{n^*}(1 + k), \epsilon_{\mathbb{W}_{n^*}, 1 + k} = \varepsilon_{1 + k})\nonumber\\
    &- \mathbb{P}_0(\mathbb{W}_{n^*}(1) = \mathbb{w}_{n^*}(1), \epsilon_{\mathbb{W}_{n^*},1} = \varepsilon_1)\mathbb{P}_0(\mathbb{W}_{n^*}(1 + k) = \mathbb{w}_{n^*}(1 + k), \epsilon_{\mathbb{W}_{n^*}, 1 + k} = \varepsilon_{1 + k}))\nonumber\\
    &= \text{Var}_{\mathbb{P}_0}[f'_{n^*}(\mathbb{W}_{n^*}(1), \epsilon_{\mathbb{W}_{n^*},1})] +  O(n^*)\nonumber\\
    &\cdot\sup_{\mathbb{w}_{n^*}(1), \mathbb{w}_{n^*}(1 + k), \varepsilon_1, \varepsilon_{1 + k}}(\mathbb{P}_0(\mathbb{W}_{n^*}(1) = \mathbb{w}_{n^*}(1), \epsilon_{\mathbb{W}_{n^*}, 1} = \varepsilon_1, \mathbb{W}_{n^*}(1 + k) = \mathbb{w}_{n^*}(1 + k), \epsilon_{\mathbb{W}_{n^*}, 1 + k} = \varepsilon_{1 + k})\nonumber\\
    &- \mathbb{P}_0(\mathbb{W}_{n^*}(1) = \mathbb{w}_{n^*}(1), \epsilon_{\mathbb{W}_{n^*}, 1} = \varepsilon_1)\mathbb{P}_0(\mathbb{W}_{n^*}(1 + k) = \mathbb{w}_{n^*}(1 + k), \epsilon_{\mathbb{W}_{n^*}, 1 + k} = \varepsilon_{1 + k}))\nonumber\\
    &= \text{Var}_{\mathbb{P}_0}[f'_{n^*}(\mathbb{W}_{n^*}(1), \epsilon_{\mathbb{W}_{n^*}, 1})] +  O(n^*) \cdot\alpha_{\mathbb{P}_0}(\mathbb{W}_{n^*}(1), \mathbb{W}_{n^*}(1 + k))\nonumber\\
    &- \mathbb{P}_0(\mathbb{W}_{n^*}(1) = \mathbb{w}_{n^*}(1))\mathbb{P}_0(\mathbb{W}_{n^*}(1 + k) = \mathbb{w}_{n^*}(1 + k))\nonumber\\
    &= \text{Var}_{\mathbb{P}_0}[f'_{n^*}(\mathbb{W}_{n^*}(1), \epsilon_{\mathbb{W}_{n^*}, 1})] +  o(n^*)\nonumber\\
    &= o(n^*). \label{eq: vanishing intermediate variance}
\end{align}

The first equality follows from the statement of Theorem 27.4 in \citetSM{billingsley_probability_1995}; the second equality follows from the $n$-dependence property; the third equality follows from the definition of covariance; the fourth equality follows from the definition of expectations, where we write $\varepsilon_j$ for an instantiation of $\epsilon_{\mathbb{W}_{n^*},j}$; the fifth equality follows from the definition of $\sup$, condition $C2.1$, and as $f_n' \leq 1$; the sixth equality follows from $(\epsilon_{\mathbb{W}_n,1}, \ldots, \epsilon_{\mathbb{W}_n, m_n}) \independent (\mathbb{W}_n(1), \ldots, \mathbb{W}_n(m_n))$, and $\mathbb{P}_0(\epsilon_{\mathbb{W}_n,1} = \epsilon_1')\mathbb{P}_0(\epsilon_{\mathbb{W}_n, 1 + k} = \epsilon_{1 + k}') \leq 1$; the seventh equality follows from condition $C2.3$; the last equality follows as $\underset{n \to \infty}{\lim}f_n'(\mathbb{W}_n(c), \epsilon_{\mathbb{W}_n, c}) = \underset{n \to \infty}{\lim} \frac{\mathbb{B}_{n,c}(a,l)}{\mathbb{L}_n(l)} = 0$ by condition $C2.1$.

Hence,
\begin{align*}
    \mathbb{E}_{\mathbb{P}_0^F}[(\Tilde{q}_n(a \mid l) - \Bar{q}_0(a \mid l))^2] &= \mathbb{E}_{\mathbb{P}_0^F}[(\Tilde{q}_n(a \mid l) - \mu_n + \mu_n - \Bar{q}_0(a \mid l))^2]\\
    &=  \mathbb{E}_{\mathbb{P}_0^F}[(\Tilde{q}_n(a \mid l) - \mu_n)^2 + (\mu_n - \Bar{q}_0(a \mid l))^2\\
    &+ 2(\Tilde{q}_n(a \mid l) - \mu_n)(\mu_n - \Bar{q}_0(a \mid l))]\\
    &= \frac{1}{n}(\mathbb{E}_{\mathbb{P}_0^F}[n(\Tilde{q}_n(a \mid l) - \mu_n)^2] - \sigma_n^2) + \frac{\sigma_n^2}{n} + (\mu_n - \Bar{q}_0(a \mid l))^2\\
    &+ 2(\mu_n - \Bar{q}_0(a \mid l))\mathbb{E}_{P_0^F}[\Tilde{q}_n(a \mid l) - \mu_n]\\
    &= \frac{1}{n} o(1) + o(1) + o(1)^2 + 2o(1)o(1)\\
    &= o(1),
\end{align*}
where the first equality comes from adding and subtracting $\mu_n$; the second equality comes from expanding the square; the third equality comes from linearity of expectation, multiplying and dividing $(\Tilde{q}_n(a \mid l) - \mu_n)^2$ by $n$, adding and subtracting $\sigma_n^2$, and taking constants out of expectations; the fourth equality comes from Equations \eqref{eq: intermediate asymptotic normality} and \eqref{eq: vanishing intermediate variance} and the continuous mapping theorem, and the fact that $\mu_n \to \Bar{q}_0(a \mid l)$ as $\Bar{q}_0(a \mid l) := \underset{n \to \infty}{\lim} \Tilde{q}_n(a \mid l)$ and $\mu_n = \mathbb{E}_{\mathbb{P}_0^F}[\Tilde{q}_n(a \mid l)] + o(1)$ by Equation \eqref{eq: intermediate asymptotic normality}; the last equality follows from the definition of $o(1) \underset{n \to \infty}{\to} 0$.
\end{proof}

\clearpage
\section{Additional remarks and supporting materials}\label{appsec: misc}

In this appendix, we present additional remarks and additional supporting materials for the main text. Section \ref{appsec: analysis} provides tables of numerical results for the analysis of ICU data presented in Section \ref{sec: icu example}. Section \ref{appsec: graphs} provides directed acyclic graphs illustrating the conditions defining model $\mathcal{M}^{AB}_n$. Section \ref{appsec: modelrobust} provides a remark illustrating a robustness property of optimal rank-preserving CRs that is not enjoyed by conventional optimal IRs for limited resource settings considered previously in the literature. Section \ref{appsec: opt} provides simple elaborations to definitions of optimal CRs (from Proposition \ref{lemma: optimallarge} in Section \ref{sec: largecluster}) to accommodate the more realistic settings in which treatment may be harmful for some non-empty $v$-subset of the population with positive probability.  

\subsection{Data Analysis} \label{appsec: analysis}

In this appendix, Tables \ref{tab: LargeTable} - \ref{tab: CountTable} provide numerical results supporting Figures \ref{fig: LargeComp} and \ref{fig: finite} for the data analysis in Section \ref{sec: icu example}. Figures \ref{fig: OptFullGridComp} and \ref{fig: OptTightGrid} illustrate the function that permits translation of the asymptotic optimal $V$-rank-preserving regime $\mathbf{G}_{0}^{\mathbf{opt}_1}$ into a individualized dynamic treatment regime, conditional on the cumulative density function of $\Delta_0(V_i)$, an individual's average treatment effect conditional on $V_i$. Figure \ref{fig: Candidate_full_nat} provides the analogous illustration for the candidate asymptotic $V$-rank-preserving regime $\mathbf{G}_{0}^{{\Lambda}_1}$, with the key distinction being that, under $\mathbf{G}_{0}^{{\Lambda}_1}$, individuals will continue to be assigned treatment so long as a treatment unit is available, whereas under $\mathbf{G}_{0}^{\mathbf{opt}_1}$, individuals with negative CATEs will no longer be assigned treatment.

\begin{table}[ht]
\centering
\footnotesize
\begin{tabular}{r| llll}
  \hline \hline
\normalsize{  $\kappa^*$}  & \normalsize{ $\mathbf{G}_{0}^{\mathbf{opt}_1}$ }  & \normalsize{  $\mathbf{G}_{0}^{\Lambda_1}$ }  & \normalsize{  $\mathbf{G}_{0}^{\mathbf{opt}_2}$ }  & \normalsize{  $\mathbf{G}_{0}^{\Lambda_2}$ }\\ 
  \hline
  0.00      & 0.695 (0.685, 0.705) & 0.695 (0.685, 0.705) & 0.695 (0.685, 0.705) & 0.695 (0.685, 0.705) \\ 
  0.05      & 0.700 (0.690, 0.710) & 0.684 (0.672, 0.696) & 0.696 (0.686, 0.706) & 0.695 (0.686, 0.704) \\ 
  0.10      & 0.711 (0.701, 0.721) & 0.683 (0.670, 0.696) & 0.697 (0.688, 0.707) & 0.695 (0.686, 0.705) \\ 
  0.15      & 0.703 (0.693, 0.714) & 0.679 (0.665, 0.692) & 0.699 (0.689, 0.709) & 0.696 (0.686, 0.706) \\ 
  0.20      & 0.705 (0.695, 0.716) & 0.678 (0.664, 0.692) & 0.702 (0.692, 0.712) & 0.699 (0.689, 0.708) \\ 
  0.25      & 0.711 (0.700, 0.722) & 0.673 (0.659, 0.687) & 0.705 (0.695, 0.715) & 0.701 (0.692, 0.710) \\ 
  0.30      & 0.714 (0.703, 0.725) & 0.676 (0.662, 0.690) & 0.707 (0.697, 0.718) & 0.703 (0.694, 0.712) \\ 
  0.35      & 0.715 (0.704, 0.727) & 0.675 (0.660, 0.690) & 0.707 (0.697, 0.718) & 0.705 (0.696, 0.715) \\ 
  0.40      & 0.716 (0.705, 0.728) & 0.675 (0.660, 0.690) & 0.708 (0.696, 0.719) & 0.708 (0.698, 0.718) \\ 
  0.45      & 0.722 (0.710, 0.734) & 0.674 (0.659, 0.690) & 0.709 (0.697, 0.720) & 0.710 (0.699, 0.721) \\ 
  0.50      & 0.718 (0.705, 0.730) & 0.671 (0.655, 0.687) & 0.711 (0.699, 0.723) & 0.712 (0.701, 0.724) \\ 
  0.55      & 0.718 (0.705, 0.730) & 0.675 (0.659, 0.692) & 0.712 (0.700, 0.725) & 0.713 (0.701, 0.726) \\ 
  0.60      & 0.718 (0.705, 0.730) & 0.673 (0.657, 0.690) & 0.712 (0.700, 0.725) & 0.711 (0.699, 0.723) \\ 
  0.65      & 0.718 (0.705, 0.730) & 0.676 (0.660, 0.693) & 0.713 (0.700, 0.725) & 0.708 (0.696, 0.720) \\ 
  0.70      & 0.718 (0.705, 0.730) & 0.674 (0.657, 0.691) & 0.713 (0.700, 0.725) & 0.705 (0.692, 0.717) \\ 
  0.75      & 0.718 (0.705, 0.730) & 0.673 (0.656, 0.691) & 0.713 (0.700, 0.725) & 0.701 (0.689, 0.714) \\ 
  0.80      & 0.718 (0.705, 0.730) & 0.677 (0.660, 0.695) & 0.713 (0.700, 0.725) & 0.698 (0.685, 0.712) \\ 
  0.85      & 0.718 (0.705, 0.730) & 0.680 (0.662, 0.697) & 0.713 (0.700, 0.725) & 0.695 (0.681, 0.709) \\ 
  0.90      & 0.718 (0.705, 0.730) & 0.685 (0.667, 0.702) & 0.713 (0.700, 0.725) & 0.692 (0.677, 0.707) \\ 
  0.95      & 0.718 (0.705, 0.730) & 0.685 (0.668, 0.703) & 0.713 (0.700, 0.725) & 0.689 (0.673, 0.705) \\ 
  1.00      & 0.718 (0.705, 0.730) & 0.685 (0.668, 0.702) & 0.713 (0.700, 0.725) & 0.686 (0.669, 0.703) \\ 
  $\widetilde{\mathbb{E}}[\overline{A}_0] \approx 0.27$ & 0.711 (0.700, 0.722) & 0.675 (0.661, 0.689) & 0.706 (0.696, 0.717) & 0.702 (0.693, 0.711) \\ 
   \hline \hline
\end{tabular}\normalsize
    \caption{Estimated values of the expected cluster-average potential outcome under large-cluster regimes, $\mathbb{E}[\overline{Y}^{\mathbf{G}_0}_0]$, with 95\% confidence intervals, under varying resource constraints ($\kappa^*$). Estimates for parameters defined by $\mathbf{G}_{0}^{\mathbf{opt}_1}$ and $\mathbf{G}_{0}^{\Lambda_1}$ use semiparametric efficient methods adapted from \citet{luedtke2016optimal}. Estimates for parameters defined by $\mathbf{G}_{0}^{\mathbf{opt}_2}$ and $\mathbf{G}_{0}^{\Lambda_2}$ use online-learning-based methods adapted from \citet{luedtke2016statistical}. See Appendix \ref{appsec: conditions} for details on inferential procedures.}
    \label{tab: LargeTable}
\end{table}

\begin{table}[ht]
\centering
\begin{tabular}{r|lll}
  \hline \hline
$\kappa_{n^*=20}$ & IPW$^{\dagger}$ & g-formula$^{\ddagger}$ & Doubly-robust$^{\circ}$ \\ 
  \hline
 0  & 0.700 (0.690, 0.709) & 0.705 (0.695, 0.714)  & 0.695 (0.685, 0.705)\\ 
 1  & 0.701 (0.692, 0.709) & 0.704 (0.695, 0.713)  & 0.695 (0.686, 0.704)\\ 
 2  & 0.702 (0.693, 0.710) & 0.703 (0.695, 0.712)  & 0.695 (0.686, 0.705)\\ 
 3  & 0.703 (0.694, 0.711) & 0.703 (0.694, 0.712)  & 0.696 (0.686, 0.70 6)\\ 
 4  & 0.703 (0.694, 0.711) & 0.703 (0.695, 0.712)  & 0.699 (0.689, 0.708)\\ 
 5  & 0.704 (0.695, 0.712) & 0.704 (0.695, 0.712)  & 0.701 (0.692, 0.710)\\ 
 6  & 0.704 (0.695, 0.713) & 0.704 (0.695, 0.713)  & 0.703 (0.694, 0.712)\\ 
 7  & 0.705 (0.695, 0.713) & 0.705 (0.695, 0.714)  & 0.705 (0.696, 0.715)\\ 
 8  & 0.705 (0.695, 0.714) & 0.705 (0.696, 0.714)  & 0.708 (0.698, 0.718)\\ 
 9  & 0.705 (0.695, 0.714) & 0.705 (0.695, 0.715)  & 0.710 (0.699, 0.721)\\ 
 10 & 0.704 (0.694, 0.714) & 0.704 (0.694, 0.715)  & 0.712 (0.701, 0.724)\\ 
 11 & 0.702 (0.692, 0.713) & 0.703 (0.692, 0.714)  & 0.713 (0.701, 0.726)\\ 
 12 & 0.700 (0.689, 0.710) & 0.701 (0.690, 0.712)  & 0.711 (0.699, 0.723)\\ 
 13 & 0.696 (0.685, 0.707) & 0.698 (0.688, 0.710)  & 0.708 (0.696, 0.720)\\ 
 14 & 0.692 (0.681, 0.703) & 0.695 (0.684, 0.707)  & 0.705 (0.692, 0.717)\\ 
 15 & 0.688 (0.676, 0.699) & 0.691 (0.679, 0.703)  & 0.701 (0.689, 0.714)\\ 
 16 & 0.683 (0.671, 0.695) & 0.687 (0.674, 0.699)  & 0.698 (0.685, 0.712)\\ 
 17 & 0.678 (0.666, 0.691) & 0.683 (0.670, 0.696)  & 0.695 (0.681, 0.709)\\ 
 18 & 0.674 (0.659, 0.688) & 0.679 (0.665, 0.693)  & 0.692 (0.677, 0.707)\\ 
 19 & 0.669 (0.653, 0.685) & 0.675 (0.660, 0.691)  & 0.689 (0.673, 0.705)\\ 
 20 & 0.665 (0.648, 0.681) & 0.671 (0.655, 0.688)  & 0.686 (0.669, 0.703)\\ 
   \hline\hline
\end{tabular}
    \caption{Estimated values of the expected cluster-average potential under regimes defined by the rank function, $\Lambda_2$, with 95\% confidence intervals, under varying resource constraints ($\kappa_{n^*}$). $\dagger$ Estimates of finite cluster parameters $\mathbb{E}[\overline{Y}^{G^{\Lambda_2}_{n^*=20}}_{n^*=20}]$ using a semi-parameteric IPW estimator of the individual-level g-formula, based on parametric models for $\overline{q}_0$ with 95\% confidence intervals computed using 500 nonparametric bootstrap samples; $\ddagger$ estimates of  $\mathbb{E}[\overline{Y}^{G^{\Lambda_2}_{n^*=20}}_{n^*=20}]$ using a parametric plug-in estimator of the individual-level g-formula based on parametric models for $Q_Y$ with 95\% confidence intervals computed using 500 nonparametric bootstrap samples; $\circ$ Estimates of the analogous large-cluster parameter $\mathbb{E}[\overline{Y}^{\mathbf{G}^{\Lambda_2}_{0}}_{0}]$ using online-learning-based methods adapted from \citet{luedtke2016statistical}.}
    \label{tab: FiniteTable}

\end{table}

\begin{table}
\centering
\tiny
\begin{tabular}{r|lllll}
 \hline
  \hline
 $\kappa_{n^*}$  & $x=$6 & $x=$7 & $x=$8 & $x=$9 & $x=$10   \\ 
  \hline
 0  & 0.000 (0.000, 0.000) & 0.001 (0.001, 0.001) & 0.004 (0.003, 0.005) & 0.012 (0.010, 0.015) & 0.031 (0.026, 0.037)  \\ 
 1  & 0.000 (0.000, 0.000) & 0.001 (0.001, 0.001) & 0.004 (0.003, 0.005) & 0.012 (0.010, 0.015) & 0.030 (0.026, 0.036)  \\ 
 2  & 0.000 (0.000, 0.000) & 0.001 (0.001, 0.001) & 0.004 (0.003, 0.005) & 0.012 (0.009, 0.014) & 0.030 (0.025, 0.035)  \\ 
 3  & 0.000 (0.000, 0.000) & 0.001 (0.001, 0.001) & 0.004 (0.003, 0.005) & 0.011 (0.009, 0.014) & 0.029 (0.025, 0.035)  \\ 
 4  & 0.000 (0.000, 0.000) & 0.001 (0.001, 0.001) & 0.004 (0.003, 0.005) & 0.011 (0.009, 0.014) & 0.029 (0.025, 0.034)  \\ 
 5  & 0.000 (0.000, 0.000) & 0.001 (0.001, 0.001) & 0.003 (0.003, 0.004) & 0.011 (0.009, 0.014) & 0.029 (0.024, 0.034)  \\ 
 6  & 0.000 (0.000, 0.000) & 0.001 (0.001, 0.001) & 0.003 (0.003, 0.004) & 0.011 (0.009, 0.013) & 0.028 (0.024, 0.034)  \\ 
 7  & 0.000 (0.000, 0.000) & 0.001 (0.001, 0.001) & 0.003 (0.003, 0.004) & 0.011 (0.009, 0.013) & 0.028 (0.024, 0.034)  \\ 
 8  & 0.000 (0.000, 0.000) & 0.001 (0.001, 0.001) & 0.003 (0.003, 0.004) & 0.011 (0.009, 0.013) & 0.028 (0.024, 0.034)  \\ 
 9  & 0.000 (0.000, 0.000) & 0.001 (0.001, 0.001) & 0.003 (0.003, 0.004) & 0.011 (0.008, 0.013) & 0.028 (0.023, 0.034)  \\ 
 10 & 0.000 (0.000, 0.000) & 0.001 (0.001, 0.001) & 0.003 (0.003, 0.004) & 0.011 (0.008, 0.014) & 0.028 (0.023, 0.034)  \\ 
 11 & 0.000 (0.000, 0.000) & 0.001 (0.001, 0.001) & 0.003 (0.003, 0.005) & 0.011 (0.009, 0.014) & 0.029 (0.024, 0.035)  \\ 
 12 & 0.000 (0.000, 0.000) & 0.001 (0.001, 0.001) & 0.004 (0.003, 0.005) & 0.012 (0.009, 0.015) & 0.030 (0.025, 0.037)  \\ 
 13 & 0.000 (0.000, 0.000) & 0.001 (0.001, 0.002) & 0.004 (0.003, 0.006) & 0.013 (0.010, 0.016) & 0.033 (0.026, 0.040)  \\ 
 14 & 0.000 (0.000, 0.000) & 0.001 (0.001, 0.002) & 0.005 (0.003, 0.006) & 0.014 (0.011, 0.018) & 0.035 (0.028, 0.043)  \\ 
 15 & 0.000 (0.000, 0.000) & 0.001 (0.001, 0.002) & 0.005 (0.004, 0.007) & 0.015 (0.012, 0.020) & 0.038 (0.031, 0.047)  \\ 
 16 & 0.000 (0.000, 0.001) & 0.002 (0.001, 0.002) & 0.006 (0.004, 0.008) & 0.017 (0.013, 0.022) & 0.041 (0.033, 0.051)  \\ 
 17 & 0.000 (0.000, 0.001) & 0.002 (0.001, 0.003) & 0.007 (0.005, 0.009) & 0.019 (0.014, 0.025) & 0.045 (0.036, 0.055)  \\ 
 18 & 0.001 (0.000, 0.001) & 0.002 (0.001, 0.003) & 0.008 (0.005, 0.011) & 0.021 (0.016, 0.028) & 0.048 (0.038, 0.060)  \\ 
 19 & 0.001 (0.000, 0.001) & 0.003 (0.002, 0.004) & 0.009 (0.006, 0.012) & 0.023 (0.017, 0.031) & 0.052 (0.041, 0.065)  \\ 
 20 & 0.001 (0.000, 0.001) & 0.003 (0.002, 0.005) & 0.010 (0.006, 0.014) & 0.026 (0.018, 0.035) & 0.056 (0.043, 0.070)  \\ 
  \hline
 $\kappa_{n^*}$ & $x=$11 & $x=$12 & $x=$13 & $x=$14 &$x=$15   \\ 
  \hline
 0  &  0.066 (0.057, 0.075) & 0.115 (0.105, 0.125) & 0.164 (0.157, 0.171) & 0.192 (0.191, 0.192) & 0.179 (0.170, 0.186) \\
 1  &  0.065 (0.057, 0.073) & 0.114 (0.104, 0.123) & 0.164 (0.156, 0.170) & 0.191 (0.191, 0.192) & 0.179 (0.172, 0.186) \\
 2  &  0.064 (0.056, 0.072) & 0.112 (0.103, 0.122) & 0.163 (0.156, 0.170) & 0.191 (0.190, 0.192) & 0.180 (0.172, 0.187) \\
 3  &  0.063 (0.056, 0.072) & 0.112 (0.103, 0.121) & 0.162 (0.155, 0.169) & 0.191 (0.190, 0.192) & 0.181 (0.173, 0.187) \\
 4  &  0.062 (0.055, 0.071) & 0.111 (0.102, 0.120) & 0.161 (0.154, 0.168) & 0.191 (0.190, 0.192) & 0.181 (0.174, 0.188) \\
 5  &  0.062 (0.055, 0.070) & 0.110 (0.101, 0.120) & 0.161 (0.154, 0.168) & 0.191 (0.190, 0.192) & 0.182 (0.174, 0.188) \\
 6  &  0.061 (0.054, 0.070) & 0.109 (0.100, 0.120) & 0.160 (0.153, 0.168) & 0.191 (0.190, 0.192) & 0.182 (0.175, 0.188) \\
 7  &  0.061 (0.053, 0.070) & 0.109 (0.100, 0.119) & 0.160 (0.153, 0.168) & 0.191 (0.189, 0.192) & 0.183 (0.175, 0.189) \\
 8  &  0.061 (0.053, 0.070) & 0.109 (0.099, 0.120) & 0.160 (0.152, 0.168) & 0.191 (0.189, 0.192) & 0.183 (0.175, 0.189) \\
 9  &  0.061 (0.053, 0.070) & 0.109 (0.099, 0.120) & 0.160 (0.152, 0.168) & 0.192 (0.190, 0.193) & 0.183 (0.174, 0.190) \\
 10 &  0.061 (0.053, 0.071) & 0.110 (0.099, 0.121) & 0.161 (0.153, 0.170) & 0.192 (0.190, 0.193) & 0.183 (0.174, 0.190) \\
 11 &  0.063 (0.054, 0.073) & 0.112 (0.101, 0.124) & 0.163 (0.154, 0.171) & 0.193 (0.191, 0.193) & 0.182 (0.172, 0.189) \\
 12 &  0.065 (0.056, 0.076) & 0.115 (0.103, 0.127) & 0.166 (0.156, 0.174) & 0.193 (0.192, 0.194) & 0.179 (0.169, 0.188) \\
 13 &  0.069 (0.058, 0.080) & 0.119 (0.107, 0.131) & 0.168 (0.159, 0.176) & 0.193 (0.191, 0.194) & 0.176 (0.166, 0.186) \\
 14 &  0.073 (0.062, 0.084) & 0.123 (0.111, 0.136) & 0.171 (0.163, 0.178) & 0.193 (0.190, 0.193) & 0.173 (0.161, 0.183) \\
 15 &  0.077 (0.066, 0.089) & 0.128 (0.116, 0.140) & 0.174 (0.166, 0.180) & 0.192 (0.188, 0.193) & 0.168 (0.156, 0.179) \\
 16 &  0.082 (0.070, 0.094) & 0.133 (0.120, 0.145) & 0.177 (0.169, 0.182) & 0.190 (0.185, 0.192) & 0.163 (0.150, 0.175) \\
 17 &  0.086 (0.073, 0.100) & 0.137 (0.124, 0.150) & 0.179 (0.171, 0.184) & 0.188 (0.182, 0.192) & 0.159 (0.145, 0.171) \\
 18 &  0.091 (0.077, 0.106) & 0.142 (0.127, 0.155) & 0.180 (0.173, 0.185) & 0.186 (0.178, 0.191) & 0.154 (0.138, 0.168) \\
 19 &  0.096 (0.080, 0.112) & 0.146 (0.131, 0.159) & 0.182 (0.175, 0.185) & 0.184 (0.174, 0.190) & 0.148 (0.131, 0.165) \\
 20 &  0.101 (0.084, 0.118) & 0.150 (0.134, 0.163) & 0.183 (0.176, 0.184) & 0.181 (0.170, 0.189) & 0.143 (0.125, 0.161) \\
  \hline
 $\kappa_{n^*}$  & $x=$16 & $x=$17 & $x=$18 & $x=$19 &$x=$20   \\ 
  \hline
 0  & 0.130 (0.119, 0.141) & 0.071 (0.062, 0.081) & 0.028 (0.023, 0.033) & 0.007 (0.005, 0.008) & 0.001 (0.001, 0.001) \\
 1  & 0.131 (0.120, 0.142) & 0.072 (0.064, 0.082) & 0.028 (0.024, 0.033) & 0.007 (0.006, 0.009) & 0.001 (0.001, 0.001) \\
 2  & 0.133 (0.121, 0.143) & 0.074 (0.064, 0.083) & 0.029 (0.024, 0.034) & 0.007 (0.006, 0.009) & 0.001 (0.001, 0.001) \\
 3  & 0.134 (0.122, 0.144) & 0.074 (0.065, 0.083) & 0.029 (0.025, 0.034) & 0.007 (0.006, 0.009) & 0.001 (0.001, 0.001) \\
 4  & 0.135 (0.124, 0.145) & 0.075 (0.066, 0.084) & 0.030 (0.025, 0.035) & 0.007 (0.006, 0.009) & 0.001 (0.001, 0.001) \\
 5  & 0.135 (0.124, 0.145) & 0.076 (0.067, 0.085) & 0.030 (0.025, 0.035) & 0.008 (0.006, 0.009) & 0.001 (0.001, 0.001) \\
 6  & 0.136 (0.124, 0.146) & 0.076 (0.067, 0.086) & 0.030 (0.025, 0.036) & 0.008 (0.006, 0.009) & 0.001 (0.001, 0.001) \\
 7  & 0.137 (0.125, 0.147) & 0.077 (0.067, 0.086) & 0.031 (0.025, 0.036) & 0.008 (0.006, 0.009) & 0.001 (0.001, 0.001) \\
 8  & 0.137 (0.124, 0.148) & 0.077 (0.067, 0.087) & 0.031 (0.025, 0.036) & 0.008 (0.006, 0.010) & 0.001 (0.001, 0.001) \\
 9  & 0.137 (0.124, 0.148) & 0.077 (0.066, 0.087) & 0.030 (0.025, 0.036) & 0.008 (0.006, 0.009) & 0.001 (0.001, 0.001) \\
 10 & 0.135 (0.123, 0.148) & 0.075 (0.065, 0.086) & 0.030 (0.024, 0.036) & 0.007 (0.006, 0.009) & 0.001 (0.001, 0.001) \\
 11 & 0.133 (0.120, 0.146) & 0.073 (0.063, 0.085) & 0.028 (0.023, 0.035) & 0.007 (0.005, 0.009) & 0.001 (0.001, 0.001) \\
 12 & 0.130 (0.116, 0.143) & 0.070 (0.060, 0.082) & 0.027 (0.022, 0.033) & 0.006 (0.005, 0.008) & 0.001 (0.001, 0.001) \\
 13 & 0.125 (0.112, 0.139) & 0.067 (0.056, 0.079) & 0.025 (0.020, 0.031) & 0.006 (0.005, 0.008) & 0.001 (0.000, 0.001) \\
 14 & 0.120 (0.106, 0.134) & 0.063 (0.052, 0.074) & 0.023 (0.018, 0.029) & 0.005 (0.004, 0.007) & 0.001 (0.000, 0.001) \\
 15 & 0.115 (0.101, 0.129) & 0.059 (0.049, 0.070) & 0.021 (0.017, 0.027) & 0.005 (0.004, 0.006) & 0.001 (0.000, 0.001) \\
 16 & 0.109 (0.095, 0.124) & 0.055 (0.045, 0.066) & 0.020 (0.015, 0.025) & 0.004 (0.003, 0.006) & 0.000 (0.000, 0.001) \\
 17 & 0.104 (0.089, 0.120) & 0.051 (0.042, 0.063) & 0.018 (0.014, 0.023) & 0.004 (0.003, 0.005) & 0.000 (0.000, 0.001) \\
 18 & 0.099 (0.083, 0.116) & 0.048 (0.037, 0.060) & 0.016 (0.012, 0.022) & 0.004 (0.002, 0.005) & 0.000 (0.000, 0.001) \\
 19 & 0.094 (0.077, 0.112) & 0.044 (0.034, 0.057) & 0.015 (0.011, 0.021) & 0.003 (0.002, 0.005) & 0.000 (0.000, 0.001) \\
 20 & 0.089 (0.072, 0.108) & 0.041 (0.031, 0.054) & 0.014 (0.010, 0.019) & 0.003 (0.002, 0.004) & 0.000 (0.000, 0.000) \\
   \hline
 \hline
\end{tabular}
\normalsize
    \caption{Estimated probabilities of exactly $x$ individuals surviving under the counterfactual regime $\Lambda_2$ in a finite cluster of $n^*=20$, that is, $\mathbb{P}_{n^*=20}\Big(\mathbb{Y}_{n^*=20}^{G^{\Lambda_2}_{n^*=20}}(1)=x\Big)$, under varying resource constraints ($\kappa_{n^*}$). Estimates are obtained using a parametric plug-in estimator of the compositional g-formula based on parametric models for $Q_Y$, with 95\% confidence intervals computed using 500 nonparametric bootstrap samples. Probabilities for $x<6$ are $0.000$ $(0.000, 0.000)$ for all $\kappa_{n^*}$.}
    \label{tab: CountTable}

\end{table}

\begin{figure}[ht]
    \centering
    \begin{tikzpicture}[scale=1]
        \node[anchor=north, scale=.5833333] at (0,0){%
            \pgfimage{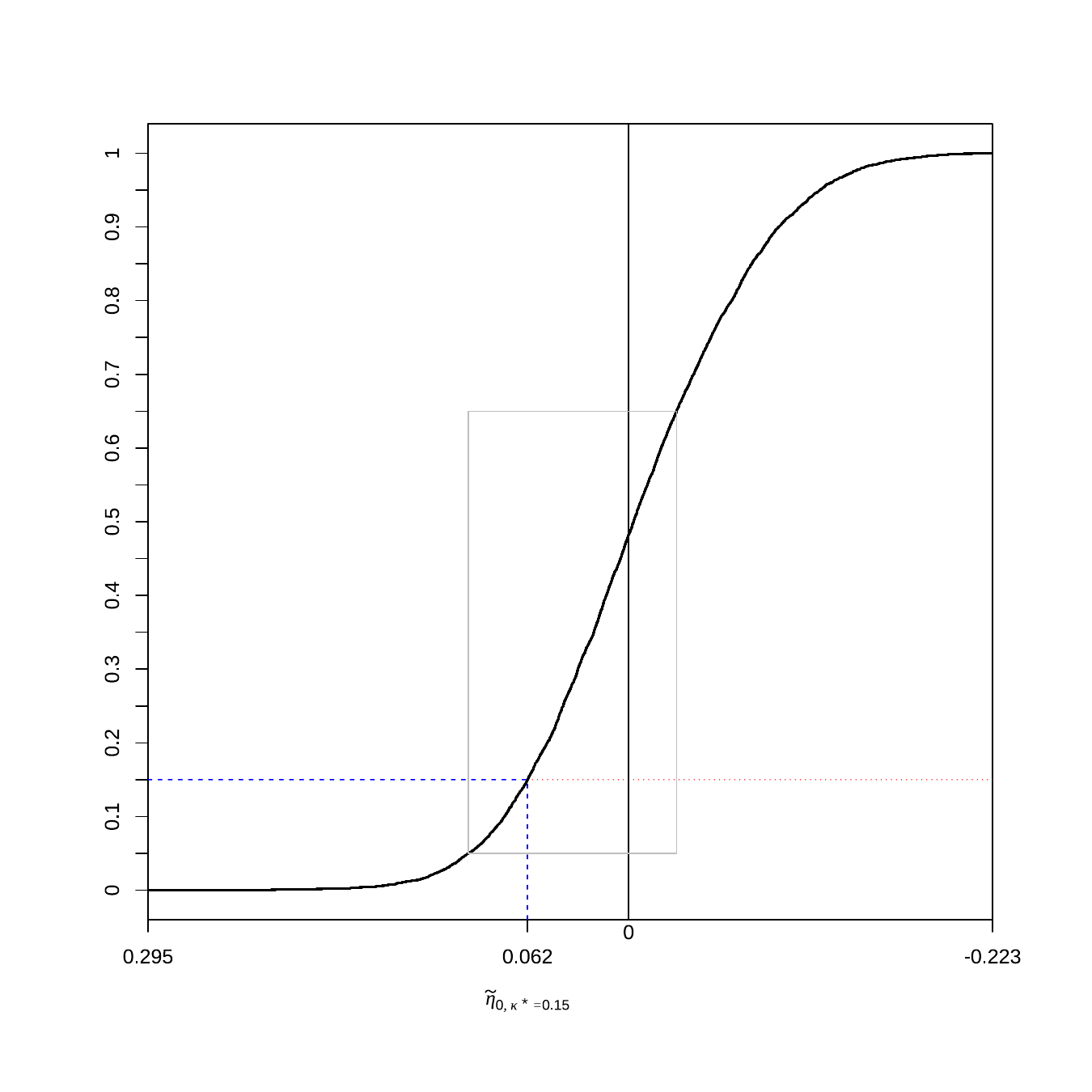}};

        \node[anchor = south east] at (-6,-7) {$\widetilde{\mathbb{P}}_0\Big(\overline{\Delta}_0(L_i)>x\Big)$};
        \node[anchor = south west] at (6,-7) {$\widetilde{\mathbb{E}}_{\mathbb{P}_0}\Big[\overline{A}^{\mathbf{G}_{0}^{\mathbf{opt}_1}}_0\Big]$};
        \node[anchor = south west] at (0,-13) {$x$};
\end{tikzpicture}
\caption{Function for the estimated probability that the proportion of individuals in a large cluster has a $V$-conditional average treatment effect (CATE) of prompt ICU admission on 90-day survival greater than $x$. Here, $V$ includes all individual covariates in $L_i$. The blue dashed line depicts the treatment assignment mechanism of the optimal $V$-rank-preserving regime, $\mathbf{G}_{0}^{\mathbf{opt}_1}$: for a given $\kappa^*$, the regime assigns treatment to individuals with $\Delta_0(L_i)$ greater than some number $\eta_{0, \kappa^*}$ such that precisely ${\mathbb{P}}_0\Big(\overline{\Delta}_0(L_i)>\eta_{0, \kappa^*}\Big) = \kappa^*$, or such that $\eta_{0, \kappa^*}=0$ and this probability is less than or equal to $\kappa^*$. The red dashed line indicates that when  ${\mathbb{P}}_0\Big(\overline{\Delta}_0(L_i)>\eta_{0, \kappa^*}\Big) = \kappa^*$, then so will be the proportion of treated individuals under the optimal regime in the large cluster, ${\mathbb{E}}_{\mathbb{P}_0}\Big[\overline{A}_0^{\mathbf{G}_{0}^{\mathbf{opt}_1}}\Big] = \kappa^*$. The black line depicts the CATE beyond which no individuals are treated under the optimal regime $\mathbf{G}_{0}^{\mathbf{opt}_1}$. Figure \ref{fig: OptTightGrid} focuses on the region of this plot within the gray rectangle.}
\label{fig: OptFullGridComp}
    \end{figure}

\begin{figure}[ht]
    \centering
    \begin{tikzpicture}[scale=1]
        \node[anchor=north, scale=.625] at (0,0){%
            \pgfimage{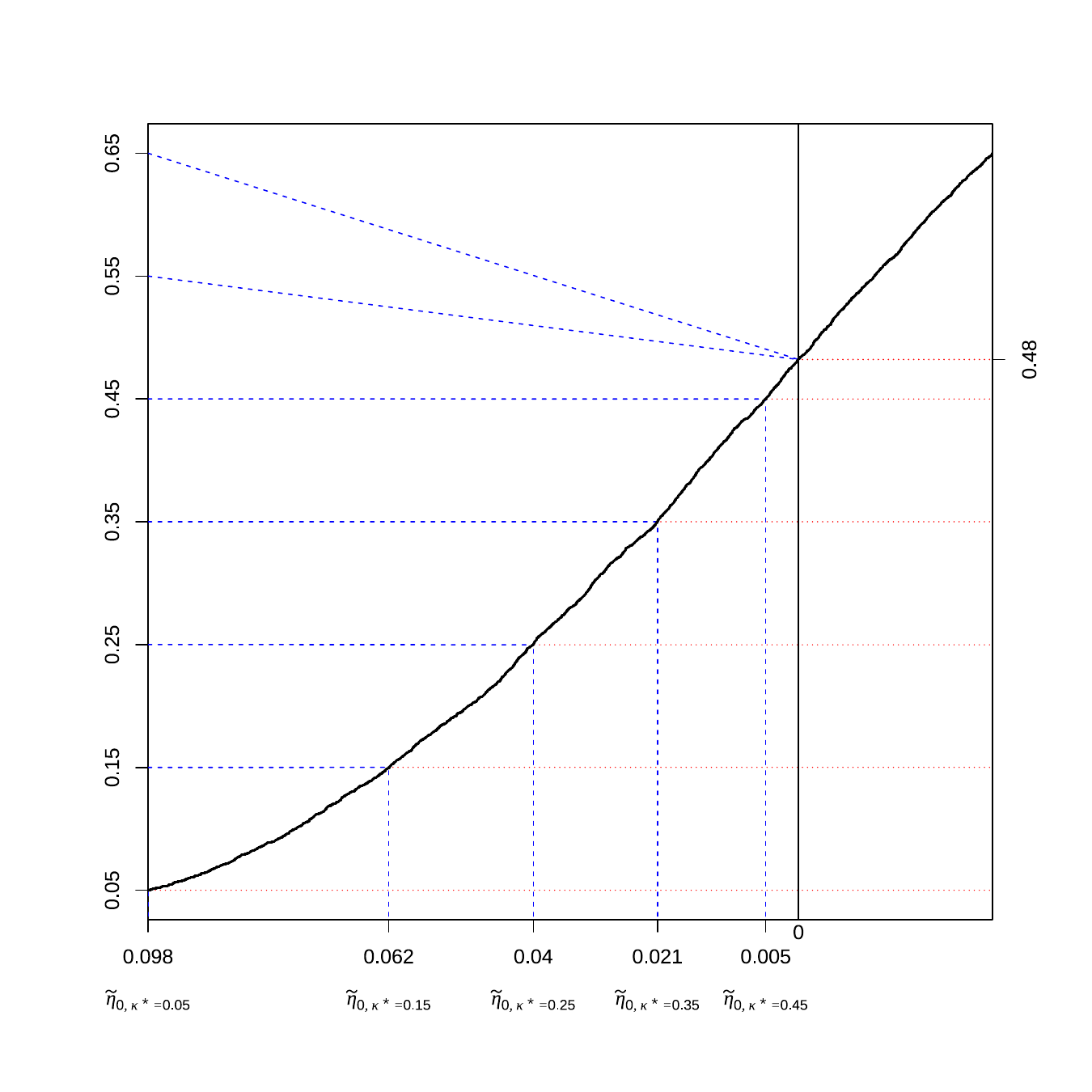}};

        \node[anchor = south east] at (-6,-7) {$\widetilde{\mathbb{P}}_0\Big(\overline{\Delta}_0(L_i)>x\Big)$};
        \node[anchor = south west] at (6,-7) {$\widetilde{\mathbb{E}}\Big[\overline{A}^{\mathbf{G}_{0}^{\mathbf{opt}_1}}_0\Big]$};
        \node[anchor = south west] at (0,-14) {$x$};
\end{tikzpicture}
\caption{Inset of Figure \ref{fig: OptFullGridComp}, with varying $\kappa^*$. The estimated expected proportion of treated individuals under the optimal regime in the large cluster is maximized at $\approx0.48$, the probability that the proportion of individuals in a large cluster has a positive $V$-conditional average treatment effect (CATE) of prompt ICU admission on 90-day survival.}
\label{fig: OptTightGrid}
    \end{figure}

\newpage
\begin{landscape}
\begin{figure}[ht]
    \centering
    \begin{tikzpicture}[scale=1]
        \node[anchor=north, scale=.5] at (0,0){%
            \pgfimage{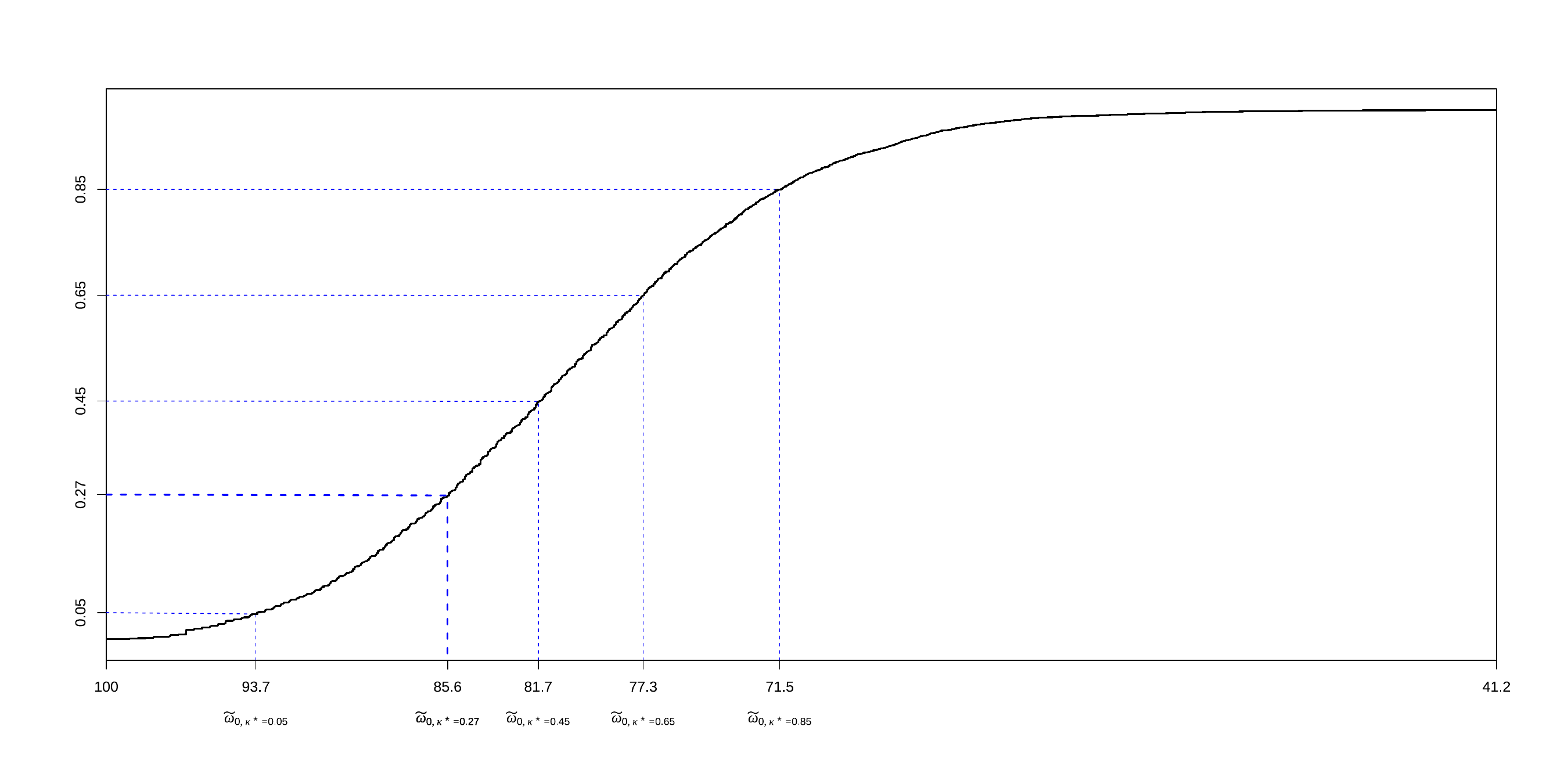}};

        \node[anchor = south east] at (-11,-7) {$\widetilde{\mathbb{P}}_0\Big({\Lambda}_1(L_i)>x\Big)$};
        \node[anchor = south west] at (0,-11.5) {$x$};
\end{tikzpicture}
\caption{Function for the estimated probability that the proportion of individuals in a large cluster has a ${\Lambda}_1$-rank-ordering greater than $x$. Here, ${\Lambda}_1(V_i)$ is an individual's weighted average of their clinical risk scores (NEWS, ICNARC, SOFA). In contrast to Figures \ref{fig: OptFullGridComp} and \ref{fig: OptTightGrid}, there is no black line indicating the value beyond which no individual is treated, regardless of resource availability; under $\mathbf{G}_{0}^{{\Lambda}_1}$, individuals are treated in order of their rank until the treatment resource is exhausted.}
\label{fig: Candidate_full_nat}
    \end{figure}

\end{landscape}
\newpage

\subsection{Directed acyclic graphs illustrating conditions sets A and B} \label{appsec: graphs}

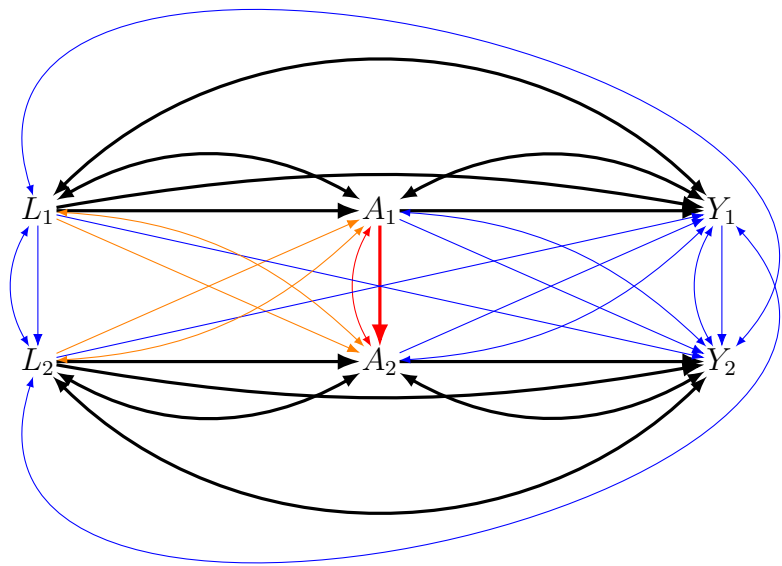
\begin{figure}[ht] 
\centering

\begin{tikzpicture}[inner sep=0.3, outer sep=0.3, scale=.5]
        \node (A1)    at  (4   *3, 4  ) {$A_1$};
        \node (A2)    at  (4   *3, 0   ) {$A_{2}$};
        
        \node (L1)    at  (1   *3, 4  )  {$L_1$};
        \node (L2)    at  (1   *3, 0   ) {$L_{2}$};

        \node (Y1)    at  (7   *3, 4   ) {$Y_{1}$};
        \node (Y2)    at  (7   *3, 0   ) {$Y_{2}$};
        
\begin{scope}

 
        \path (L1)          edge[bend left=10, very thick ]  (Y1);
        \path (L1)          edge[very thick]                  (A1);
        \path (A1)          edge[very thick]                   (Y1);
        
        \path (L2)          edge[bend right=10, very thick  ]  (Y2);
        \path (L2)          edge[very thick]                   (A2);
        \path (A2)          edge[very thick]                   (Y2);

\path[{latex}-{latex}] (L1)    edge[bend left=30, very thick, black ]                  (A1);
\path[{latex}-{latex}] (L1)    edge[bend left=45, very thick, black ]                  (Y1);
\path[{latex}-{latex}] (A1)    edge[bend left=30, very thick, black ]                  (Y1);

\path[{latex}-{latex}] (L2)    edge[bend right=30,  very thick, black]                  (A2);
\path[{latex}-{latex}] (L2)    edge[bend right=45,  very thick, black]                  (Y2);
\path[{latex}-{latex}] (A2)    edge[bend right=30,  very thick, black]                  (Y2);

\path[{latex}-{latex}] (L1)    edge[bend right=30, blue ]                  (L2);
\path[{latex}-{latex}] (A1)    edge[bend right=30, red]                  (A2);
\path[{latex}-{latex}] (Y1)    edge[bend right=30, blue ]                  (Y2);

\path (L1)    edge[ blue]                 (L2);
\path (A1)    edge [very thick, red ]                (A2);
\path (Y1)    edge[ blue]                  (Y2);               
        
\path (L1)    edge  [orange]             (A2);
\path (L2)    edge  [orange]             (A1);
\path (A1)    edge [ blue]                 (Y2);
\path (A2)    edge [ blue]                 (Y1);
\path (L1)    edge [ blue]                 (Y2);
\path (L2)    edge [ blue]                 (Y1);

\path[{latex}-{latex}] (L1)    edge[bend left=20, orange ]                  (A2);
\path[{latex}-{latex}] (L2)    edge[bend right=20, orange ]                  (A1);
\path[{latex}-{latex}] (A1)[ blue]    edge[bend left= 20 ]                  (Y2);
\path[{latex}-{latex}] (A2)[ blue]    edge[bend right=20 ]                  (Y1);

\path[{latex}-{latex}] (L1) [ blue]   edge[bend left=120,min distance=10cm ]                  (Y2);
\path[{latex}-{latex}] (L2) [ blue]   edge[bend right=120,min distance=10cm ]                  (Y1);

 \end{scope}       
        
\end{tikzpicture}
\caption{DAG for a cluster of size $n=2$ illustrating condition $A1$ (Conditional noninterference). Bi-directed arrows represent unmeasured common causes, and the bold red arrow represents the characteristic feature of the limited resource setting. This DAG includes all possible directed paths and all possible bi-directed paths between all pairs of nodes. Blue paths include all between-unit paths with the exception of those between covariates $\{L_1, L_2\}$ and treatments $\{A_1, A_2\}$ (in orange) and those among treatments $\{A_1, A_2\}$ (in red). The presence of any single blue path would in general contradict condition $A1$.} 
\label{fig: CondIll1}
\end{figure}
\clearpage

\begin{figure}[ht] 
\centering
\begin{tikzpicture}[inner sep=0.3, outer sep=0.3, scale=.5]
        \node (A1)    at  (4   *3, 4  ) {$A_1$};
        \node (A2)    at  (4   *3, 0   ) {$A_{2}$};
        
        \node (L1)    at  (1   *3, 4  )  {$L_1$};
        \node (L2)    at  (1   *3, 0   ) {$L_{2}$};

        \node (Y1)    at  (7   *3, 4   ) {$Y_{1}$};
        \node (Y2)    at  (7   *3, 0   ) {$Y_{2}$};
        
\begin{scope}

 
        \path (L1)          edge[bend left=10, very thick ]  (Y1);
        \path (L1)          edge[very thick]                  (A1);
        \path (A1)          edge[very thick]                   (Y1);
        
        \path (L2)          edge[bend right=10, very thick  ]  (Y2);
        \path (L2)          edge[very thick]                   (A2);
        \path (A2)          edge[very thick]                   (Y2);

\path[{latex}-{latex}] (L1)    edge[bend left=45, very thick, blue ]                  (Y1);

\path[{latex}-{latex}] (L2)    edge[bend right=45,  very thick, blue]                  (Y2);

\path[{latex}-{latex}] (A1)    edge[bend right=30, black]                  (A2);

\path (A1)    edge [very thick, red ]                (A2);
        
\path (L1)    edge  [black]             (A2);
\path (L2)    edge  [black]             (A1);

\path[{latex}-{latex}] (L1)    edge[bend left=30, very thick, orange ]                  (A1);
\path[{latex}-{latex}] (L2)    edge[bend right=30,  very thick, orange]                  (A2);
\path[{latex}-{latex}] (L1)    edge[bend right=20, orange ]                  (A2);
\path[{latex}-{latex}] (L2)    edge[bend left=20, orange ]                  (A1);
 \end{scope}       
        
\end{tikzpicture}
\caption{DAGs for a cluster of size $n=2$ illustrating condition $B$  (No unit-level confounding of outcomes). Bi-directed arrows represent unmeasured common causes, and the bold red arrow represents the characteristic feature of the limited resource setting. The DAG only includes arrows that are in general consistent with condition $A1$ (Conditional noninterference), i.e., it excludes the blue arrows in the DAG of figure \ref{fig: CondIll1}. Bi-directed arrows between $A_i$ and $Y_i$, which classically contradict $B1$, are excluded.  In this DAG, blue arrows represent un-measured common causes of $L_i$ and $Y_i$. Orange arrows represent un-measured common causes of $L_i$ and $A_i$, as well as those of $L_i$ and $A_j$ for $j\neq i$. In the iid setting, it is well-known that the presence of both $L_i \leftrightarrow Y_i$ and $L_i \leftrightarrow A_i$ will contradict condition $B$ due to the back-door path $A_i \leftrightarrow L_i \leftrightarrow Y_i$, which is open conditional on $L_i$. However, the presence of both $L_i \leftrightarrow Y_i$ and $L_i \leftrightarrow A_j$ will also in general contradict condition $B$. For example, the back-door path $A_2 \leftarrow A_1 \leftrightarrow L_2 \leftrightarrow Y_2$ would be open conditional on $L_2$. Thus, a model with either the blue set of arrows but not the orange set of arrows, or vice versa would be consistent with condition $B$, but a model with both sets will in general not.} 
\label{fig: CondIll2}
\end{figure}
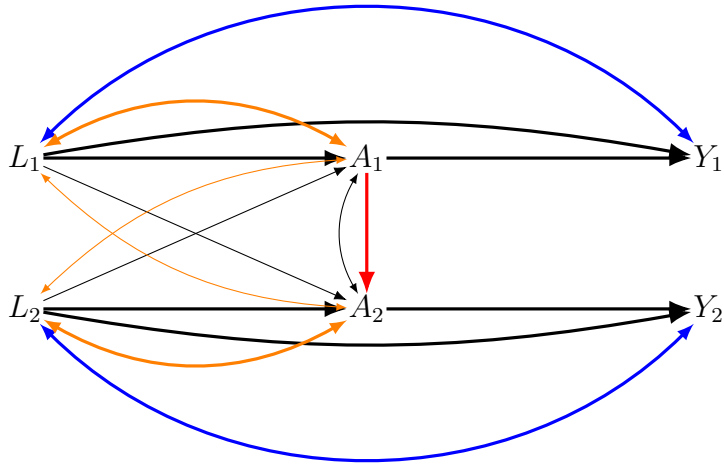
\clearpage

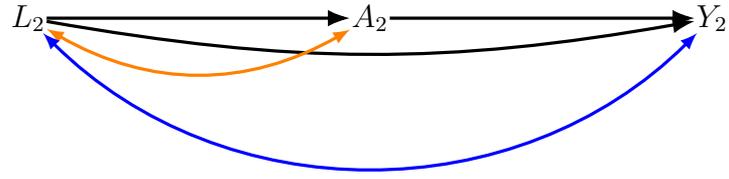
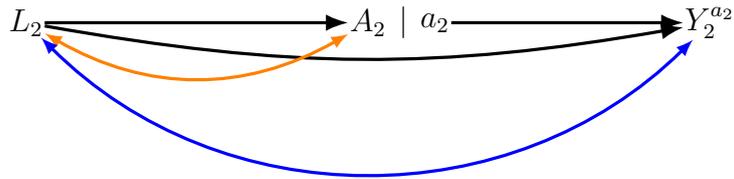
\begin{figure}[ht] 
\centering\
\subfloat[]{
\begin{tikzpicture}[inner sep=0.3, outer sep=0.3, scale=.5]
        \node (A2)    at  (4   *3, 0   ) {$A_{2}$};
        
        \node (L2)    at  (1   *3, 0   ) {$L_{2}$};

        \node (Y2)    at  (7   *3, 0   ) {$Y_{2}$};
        
\begin{scope}

 
        
        \path (L2)          edge[bend right=10, very thick  ]  (Y2);
        \path (L2)          edge[very thick]                   (A2);
        \path (A2)          edge[very thick]                   (Y2);
\path[{latex}-{latex}] (L2)    edge[bend right=30,  very thick, orange]                  (A2);
\path[{latex}-{latex}] (L2)    edge[bend right=45,  very thick, blue]                  (Y2);


        

 \end{scope}       
        
\end{tikzpicture}
}\\ \par\bigskip 

\subfloat[]{
\begin{tikzpicture}[inner sep=0.3, outer sep=0.3, scale=.5]
        \node (A2)    at  (4   *3, 0   ) {$A_{2}$};
        \node (midA2)[right = .125 of A2,    anchor=west] {$\mid$};
            \node (A2+)  [right = .125 of midA2, anchor=west] {$a_2$};
        \node (L2)    at  (1   *3, 0   ) {$L_{2}$};

        \node (Y2)    at  (7   *3, 0   ) {$Y_{2}^{a_2}$};
        
\begin{scope}

 
        
        \path (L2)          edge[bend right=10, very thick  ]  (Y2);
        \path (L2)          edge[very thick]                   (A2);
        \path (A2+)          edge[very thick]                   (Y2);
\path[{latex}-{latex}] (L2)    edge[bend right=30,  very thick, orange]                  (A2);
\path[{latex}-{latex}] (L2)    edge[bend right=45,  very thick, blue]                  (Y2);


        

 \end{scope}       
        
\end{tikzpicture}
}
\caption{Marginalized DAG and Single World Intervention Graph (SWIG) for individual $i=2$ in a cluster of size $n=2$, illustrating explicit evaluation of condition $B$ (No unit-level confounding of outcomes). Bi-directed arrows represent unmeasured common causes. DAG (a) is a marginalization of the DAG in figure \ref{fig: CondIll2}, obtained using standard graph-reduction rules. Notably, DAG (a) would remain unchanged even had the DAG in figure \ref{fig: CondIll2} omitted the bi-directed path $L_2 \leftrightarrow A_2$: the bi-directed path $A_2 \leftrightarrow L_2$ in DAG (a) would follow from marginalizing over $A_1$ in the causal path $A_2 \leftarrow A_1 \leftrightarrow L_2$ in the DAG in figure \ref{fig: CondIll2}. SWIG (b) is obtained from DAG (a) via a simple graphical algorithm described in \citet{richardson2013single}. Therein, it is apparent that $Y_2^{a_2} \not\independent A_2 \mid L_2$ via a classical M-bias structure, and thus there is a violation of condition $B$.} 
\label{fig: CondIll3}
\end{figure}

\newpage
\subsection{Model robustness of optimal cluster- vs. individualized regimes} \label{appsec: modelrobust}

In Section \ref{sec:policyclass} of the main text, we emphasized that the regime class considered in the constrained optimization literature,
$\Pi_{P^F}(\kappa)$, is necessarily indexed by a particular law $P^F$; the class of IR's guaranteed to satisfy a constraint $\kappa$ across all laws contains only the regimes that randomly assign treatment with some probability less than or equal to $\kappa$, including the trivial deterministic regime that assigns $a=0$ to all individuals. On the other hand, the class of CRs  $\Pi^F_n(\kappa_n)$ is not a function of $P^F$. Here we illustrate by example differences in the robustness and transportability of constrained optimal IR vs.\ CR regimes, that stems from differences in their dependence on $P^F$ vs. $\mathbb{P}_n^F$. 

Consider an iid model $\mathcal{M}^{iid}_{n}$ for $\mathbb{P}_n^F$. Let $P_{i}^F \equiv P_{i}^F(\mathbb{P}_{n}^F)$ be the subdistribution of $O^F_i$ induced by $\mathbb{P}_{n}^F$. For each $\mathbb{P}_n^F\in\mathcal{M}^{iid}_{n}$, there exists a $P^F$ such that  $P_{i}^F = P^F$ for all $i$, i.e., all individuals are identically distributed. Then consider the submodel $\mathcal{M}^{iid, F}_{n} (\Delta) \subset \mathcal{M}^{iid, F}_{n}$ which fixes the CATE function to $\Delta$. It can be easily shown that there is a unique regime $G_n\in \Pi_n(\kappa_n)$ that maximizes $\mathbb{E}_{\mathbb{P}_{n}^F}[Y_i^{G_n}]$ for all laws in $\mathcal{M}^{iid, F}_{n} (\Delta)$. In contrast, however, there will be a \textit{set} of policy classes $\Pi_{P^F}(\kappa)$ under $\mathcal{M}^{iid, F}_{n} (\Delta)$: one for each law $\mathbb{P}_{n}^F$ in the model, each of which is uniquely characterized by $P^F$. It can also easily be shown that there does not in general exist a single law $g$ that is a common to all $\Pi_{P^F}(\kappa)$, where $\kappa$ is fixed, under $\mathcal{M}^{iid, F}_{n} (\Delta)$ that maximizes  $\mathbb{E}_{\mathbb{P}_{n}^F}[Y_i^{g}]$ at all laws in the model. As a contradiction to the alternative, consider a setting with $L_i$ taking values in $\{0, 1\}$ and suppose $\Delta$ is such that $\Delta(1) > \Delta(0)>0$. Suppose $\kappa=0.5$, and consider two laws $P_1^F$ and $P_2^F\in \mathcal{M}^{iid, F}_{n}$ distinguished by $P_1(L_i=1)=1$ and $P_2(L_i=1)=0.5$. The optimal IR $g$ in $\Pi_{P_1^F}(\kappa)$ that maximizes $\mathbb{E}_{P_1^F}[Y_i^{g}]$ is one under which treatment is withheld from individuals with $L_i=0$ and is randomly assigned to individuals with $L_i=1$ with probability 0.5, whereas the  optimal IR $g$ in $\Pi_{P_2^F}(\kappa)$ maximizing $\mathbb{E}_{P_2^F}[Y_i^{g}]$ is the distinct regime in which treatment is withheld from individuals with $L_i=0$ and is deterministically provided to individuals with $L_i=1$. Thus, in order for the constrained optimal IR to remain fixed across all laws in the model, further restrictions on the model must be made. One example of such a model is an  $\mathcal{M}^{iid, F}_{n} (\Delta, Q_L) \subset \mathcal{M}^{iid, F}_{n} (\Delta)$, in which the distribution of $L_i$ is fixed across all laws to $Q_L$. This is obviously a highly constrained model, in which the candidate laws are only distinguished by the treatment propensities.  

The purpose of this example is to emphasize that, while constrained optimal IRs and CRs are in general law-dependent parameters, they may become fixed (i.e. law-independent) when the space of possible laws is constrained by a model. This property is important whenever an investigator is interested in transporting an optimal regime learned in a particular observed data setting $\mathbb{P}^F_{n, 0}$ to some other future setting which may not share exactly the same law, say $\mathbb{P}^F_{n, 1}$. An investigator could justify such transportation by adopting a model for  $\mathbb{P}^F_{n, 0}$ and $\mathbb{P}^F_{n, 1}$, in which optimal regime to be fixed across all laws. The example illustrates that an investigator must make strictly stronger assumptions to validly transport a $\kappa$-constrained IR $g$ than the assumptions that would be necessary to validly transport a $\kappa_n$-constrained CR $G_n$, thus compromising the robustness of the transportation task for IRs. This is especially critical in optimal regime learning where the application of the learned parameter in future settings is the principal motivation.

\subsection{Optimal regimes with unrestricted conditional average treatment effects} \label{appsec: opt}

To simplify the presentation in the main text, Proposition \ref{lemma: optimallarge} provides results under the restricted class of regime $\Pi_n(\kappa_n)$, in which exactly $\kappa_n$ individuals are treated. We remarked that this regime will not in general be the optimal regime in the unrestricted class of constrained CRs $\Pi^F_n(\kappa_n)$ in which \textit{at most} $\kappa_n$ individuals are treated. In this appendix we define a sub-class of the generalized constrained CRs, $\Pi^F_n(\kappa_n)$ introduced in Appendix \ref{appsec: gencdtrs}, that generalizes the class of $L$-rank-preserving CRs, $\Pi_n(\kappa_n, L_i, \Lambda)$. In this definition, we leverage the rank-and-treat formulation of generalized CRs presented in Appendix \ref{appsec: gencdtrs}. Then we find that, for laws in $\mathcal{M}_n^{AB}$, the optimal such CR is the optimal CR in the unrestricted class of constrained CRs,  $\Pi^F_n(\kappa_n)$.

\begin{definition}[$L$-rank-preserving generalized CRs]\label{def: LRPregimesgen}

The class of CRs $\Pi^{F,d}_n(\kappa_n, L_i, \Lambda, g)$ is the subset of regimes $G_n$ in  $\Pi^F_n(\kappa_n)$ with rank-and-treat formulation characterized by $\{G^*_{n. \mathbf{R}}, g_1, \dots, g_n\}$ such that: 
\begin{enumerate}
    \item $G^*_{n. \mathbf{R}}$ characterizes an $L$-rank-preserving regime $G^{\dagger}_{n} \in \Pi_n(\kappa_n, L_i, \Lambda)$; and
    \item for all $i$, $g_i$ is a fixed function $g$ of at most $L_i$ and $\delta_{A_i}$. 
\end{enumerate}
\end{definition}

Now consider condition $E^{\ast}$, defined as a modification of condition $E$, wherein condition $E1$ is exchanged with the following:

\begin{itemize}
    \item [$E1^*.$] For each $n$, $G_n\in \Pi^{F,d}_n(\kappa_n, L_i, \Lambda, g)$.
\end{itemize}

\begin{proposition}[Optimal generalized CR] \label{lemma: optimallargegen}
Consider a law $\mathbb{P}_n^F\in\mathcal{M}_n^{AB}$. The CR  $G^{\mathbf{opt}, F}_{n}\equiv G_n\in\Pi^F_n(\kappa_n)$ that maximizes $\mathbb{E}_{\mathbb{P}_n^F}[\overline{Y}_n^{G_n}]$ is that $V$-rank-preserving generalized CR $G_n\in\Pi^{F,d}_n(\kappa_n, L_i, \Lambda, g)$ characterized by $\Lambda \equiv \Delta_0$ and $g$ such that $g(v) \equiv I\big(\Delta_0(v)>0\big)$.

Furthermore, consider an asymptotic law $\mathbb{P}_0^F$ following condition $D$, and an asymptotic regime $\mathbf{G}_0$ following condition $E^*$. If $G_{n^*}\in\mathbf{G}_0$ is such an optimal regime for $\mathbb{P}_{n^*}^F(\mathbb{P}_0^F)$ for some $n^*>1$ then $G_{n}\in\mathbf{G}_0$ is the optimal such regime for $\mathbb{P}_{n}^F(\mathbb{P}_0^F)$ for all $n$. Let $q^{\mathbf{opt}}_0$ denote the intervention density under the asymptotic regime $\mathbf{G}^{\mathbf{opt}, F}_{0} \equiv \Big(G^{\mathbf{opt},F}_{1}, G^{\mathbf{opt},F}_{2}, G^{\mathbf{opt},F}_{3},\dots\Big)$. Then, $\mathbb{E}_{\mathbb{P}_n^F}[\overline{Y}_0^{\mathbf{G}^{\mathbf{opt}, F}_{0}}]$ is identified as in Theorem \ref{theorem: YbarIDlarge} and its intervention density $q^{\mathbf{opt}, F}_0$ is identified as in \eqref{eq: gstar_large} of Proposition \ref{lemma: largeint} where we take $\Lambda = \Delta_0$ and $\omega_0 =\eta_0 \coloneqq \max\Big\{\inf\{ c  \mid \mathbb{P}_0(\Delta_0(L_i) > c) \leq \kappa^*\}, 0 \Big\}.$

\end{proposition}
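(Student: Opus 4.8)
The plan is to reduce the cluster-average objective to a per-individual selection problem and then recognize it as a fractional knapsack problem in which only non-negatively-rewarded items are packed, exactly as in Proposition \ref{lemma: optimallarge} but with the ``at most $\kappa_n$'' relaxation characteristic of $\Pi^F_n(\kappa_n)$. First I would note that, by linearity of expectation and the individual-level part of Theorem \ref{theorem: CDTRID}, $\mathbb{E}_{\mathbb{P}_n^F}[\overline{Y}_n^{G_n}] = \frac{1}{n}\sum_{i=1}^n \sum_{a,l} \mathbb{E}[Y_i \mid a, l]\, q^*_i(a \mid l)\, Q_L(l)$, so the objective depends on $G_n$ only through the marginal intervention densities $\{q^*_i\}$. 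Writing $q^*_i(0\mid l) = 1 - q^*_i(1 \mid l)$ and collecting terms, this equals a regime-independent constant plus $\frac{1}{n}\sum_{i=1}^n \sum_l Q_L(l)\, q^*_i(1\mid l)\, \Delta_0(l)$. Because $Y_i^{\mathbf{a}_n}=Y_i^{a_i}$ and the counterfactual outcomes are conditionally mutually independent under $\mathcal{M}^{AB}_n$ (Corollary \ref{cor: ciidcount}), this reduction is exact even for fully-deterministic CRs that couple treatment assignments across individuals.

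For the finite-cluster claim I would argue pointwise in the covariate configuration. Conditional on $\mathbf{L}_n = \mathbf{l}_n$, any fully-deterministic CR in $\Pi^F_n(\kappa_n)$ assigns a treatment vector $\mathbf{a}_n$ with $\lVert \mathbf{a}_n \rVert \leq \kappa_n$, and the conditional expected cluster-average outcome is a constant plus $\frac{1}{n}\sum_{i=1}^n a_i\, \Delta_0(l_i)$. Maximizing $\sum_i a_i \Delta_0(l_i)$ over $\mathbf{a}_n \in \{0,1\}^n$ with $\sum_i a_i \leq \kappa_n$ is a unit-weight knapsack whose solution selects the (at most) $\kappa_n$ individuals with the largest $\Delta_0(l_i)$ \emph{among those with} $\Delta_0(l_i) > 0$, leaving any surplus units idle \citet{dantzig1957discrete}. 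This is precisely the assignment induced by the rank-and-treat regime of Definition \ref{def: LRPregimesgen} with $\Lambda \equiv \Delta_0$ and acceptance rule $g(v) = I(\Delta_0(v) > 0)$, i.e. $G^{\mathbf{opt},F}_n$. Since the bound holds for every $\mathbf{l}_n$, taking expectation over $\mathbf{L}_n$ shows $G^{\mathbf{opt},F}_n$ attains the supremum over deterministic CRs; and since any stochastic CR is a mixture of deterministic ones (Appendix \ref{appsubsubsec: stochanddet}), its value is a convex combination bounded by the pointwise maximum, so $G^{\mathbf{opt},F}_n$ is optimal over all of $\Pi^F_n(\kappa_n)$. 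Ties at the threshold value of $\Delta_0$ may be broken arbitrarily, which fixes $G^{\mathbf{opt},F}_n$ only up to the equivalence class of its intervention density.

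For the asymptotic claims, condition $D$ fixes $Q_L$ and $Q_Y$, hence $\Delta_0$, across all $n$; therefore the optimal characterization $(\Lambda, g) = (\Delta_0, I(\Delta_0 > 0))$ is itself $n$-invariant. Consequently, if a single element $G_{n^*}$ of an asymptotic regime $\mathbf{G}_0$ obeying $E^*$ is optimal for $\mathbb{P}_{n^*}^F$, then every $G_n \in \mathbf{G}_0$ shares this same $\Lambda$ and $g$ and is thus optimal for $\mathbb{P}_n^F$ by the finite-cluster result, establishing the propagation statement. Identification of $\mathbb{E}_{\mathbb{P}_n^F}[\overline{Y}_0^{\mathbf{G}^{\mathbf{opt},F}_0}]$ then follows by substituting $q^{\mathbf{opt},F}_0$ into the large-cluster g-formula of Theorem \ref{theorem: YbarIDlarge}, once $q^{\mathbf{opt},F}_0$ is itself identified.

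The one step requiring genuinely new bookkeeping---and the step I expect to be the main obstacle---is identifying $q^{\mathbf{opt},F}_0$, the asymptotic intervention density of a \emph{generalized} rank-and-treat regime whose acceptance rule $g$ is nontrivial and which therefore lies outside the exactly-$\kappa_n$ class for which Proposition \ref{lemma: largeint} was proved. I would first compute the finite-cluster density $q^{\mathbf{opt},F}_{i,\mathbb{l}_n}$ directly from the rank-and-treat equations \eqref{eq: RplusF}--\eqref{eq: AplusF}: individuals are offered units in decreasing $\Delta_0$-rank and each accepts iff $\Delta_0 > 0$, so allocation halts at whichever comes first, exhaustion of the $\kappa_n$ units or arrival at the first rank group with $\Delta_0 \leq 0$. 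This yields a closed form identical to Proposition \ref{prop: intdens_Lrankpres} except that the effective last-treated rank is the maximum of $\omega_{\mathbb{l}_n}$ and the smallest positive $\Delta_0$-rank. Taking $n \to \infty$ exactly as in the proof of Proposition \ref{lemma: largeint}, using $\mathbb{L}_n(l)/n \to Q_L(l)$ and $\kappa_n/n \to \kappa^*$ (condition $E2$), the effective threshold converges to $\omega_0 = \eta_0 = \max\{\inf\{c : \mathbb{P}_0(\Delta_0(L_i) > c) \leq \kappa^*\}, 0\}$, delivering expression \eqref{eq: gstar_large} with $\Lambda = \Delta_0$ and the truncation at $0$. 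This truncation is exactly the analytic footprint of the non-negative-reward knapsack: when fewer than a $\kappa^*$ fraction of the population has positive CATE, surplus units are left idle rather than assigned to harmful treatment, so the threshold saturates at $0$.
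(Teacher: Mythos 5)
Your proposal is correct and takes essentially the same route as the paper's own (very terse) proof: the paper establishes Proposition~\ref{lemma: optimallargegen} in a single remark appended to the proof of Proposition~\ref{lemma: optimallarge}, namely that it follows from the fractional knapsack problem of \citet{dantzig1957discrete} restricted to non-negative rewards, realized by a generalized rank-and-treat regime with $\Lambda \equiv \Delta_0$ and acceptance rule $g(v) = I\big(\Delta_0(v)>0\big)$, with the asymptotic claims following from the $n$-invariance of $\Delta_0$ under condition $D$ and substitution into the limit formula of Proposition~\ref{lemma: largeint} with the threshold truncated at $0$. Your write-up simply supplies the details the paper leaves implicit --- the linearity/individual-level g-formula reduction, the pointwise-in-$\mathbf{l}_n$ unit-weight knapsack bound, the mixture argument extending optimality from deterministic to stochastic CRs, and the explicit limit computation yielding $\eta_0$ --- and is consistent with the paper's argument throughout.
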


\clearpage

\bibliographystyleSM{unsrtnat-init}
\bibliographySM{Limited1.bib}

\end{appendices}

\end{document}